\title{Classical Obfuscation of Quantum Circuits \\ via Publicly-Verifiable QFHE}
\author{James Bartusek\thanks{\texttt{bartusek.james@gmail.com}}\\Columbia \and Aparna Gupte\thanks{\texttt{agupte@mit.edu}}\\MIT \and Saachi Mutreja\thanks{\texttt{sm5540@columbia.edu}}\\Columbia \and Omri Shmueli\thanks{\texttt{omri.shmueli1@gmail.com}}\\NTT Research}
\date{\today}
\begin{document}

\maketitle
\begin{abstract}

A classical obfuscator for quantum circuits is a classical program that, given the classical description of a quantum circuit $Q$, outputs the classical description of a functionally equivalent quantum circuit $\widetilde{Q}$ that hides as much as possible about $Q$. Previously, the \emph{only} known feasibility result for classical obfuscation of quantum circuits (Bartusek and Malavolta, ITCS 2022) was limited to ``null'' security, which is only meaningful for circuits that always reject. On the other hand, if the obfuscator is allowed to compile the quantum circuit $Q$ into a \emph{quantum state} $\ket{\widetilde{Q}}$, there exist feasibility results for obfuscating much more expressive classes of circuits: All pseudo-deterministic quantum circuits (Bartusek, Kitagawa, Nishimaki and Yamakawa, STOC 2023, Bartusek, Brakerski and Vaikuntanathan, STOC 2024), and even all unitaries (Huang and Tang, FOCS 2025).

\vspace{1mm} 

We show that (relative to a classical oracle) there exists a \emph{classical} obfuscator for all pseudo-deterministic quantum circuits. As our main technical step, we give the first construction of a compact quantum fully-homomorphic encryption (QFHE) scheme that supports public verification of (pseudo-deterministic) quantum evaluation, relative to a classical oracle. 

\vspace{1mm}

To construct our QFHE scheme, we improve on an approach introduced by Bartusek, Kitagawa, Nishimaki and Yamakawa (STOC 2023), which previously required ciphertexts that are both \emph{quantum} and \emph{non-compact} due to a heavy use of quantum coset states and their publicly-verifiable properties. As part of our core technical contribution, we introduce new techniques for analyzing coset states that can be generated ``on the fly'', by proving new cryptographic properties of the one-shot signature scheme of Shmueli and Zhandry (CRYPTO 2025). Our techniques allow us to produce QFHE ciphertexts that are purely classical, compact, and publicly-verifiable. This additionally yields the first classical verification of quantum computation protocol for BQP that simultaneously satisfies blindness and public-verifiability.
\end{abstract}


\newpage
\tableofcontents
\newpage

\section{Introduction}  \label{section:introduction}

Program obfuscation disguises the inner workings of an algorithm while leaving its behavior unchanged. This concept traces back to the roots of modern cryptography \cite{diffie2022new}, and ever since the formalization of \emph{indistinguishability} obfuscation (iO) \cite{barak2001possibility}, constructing iO has been one of cryptography's overarching goals. It took no less than a decade to come up with the first candidates for secure iO \cite{garg2016candidate}, and almost an additional decade to base classically-secure iO on well-studied assumptions \cite{jain2021indistinguishability}. Current efforts continue along this trajectory, with an active line of work devoted to developing quantum-secure iO, eventually grounded in similarly well-understood assumptions (e.g.\ \cite{gentry2015graph, bartusek2018return, chen2018ggh15, SP20, brakerski2023candidate, wee2021candidate},...).

The original notion of program obfuscation implicitly considers computation as classical, and accordingly, seeks to obfuscate classical algorithms. Now, as quantum computation increasingly emerges as a standard model of computation, it is natural to revisit the problem in this broader context: Can quantum algorithms be obfuscated as well? 

\paragraph{Ideal Program Obfuscation: From Classical to Quantum.}
Our understanding of obfuscation in the quantum setting is still at a very early stage: In particular, we still don't know how to base indistinguishability obfuscation for expressive classes of quantum circuits on (post-quantum) indistinguishability obfuscation for classical circuits.\footnote{The only existing result from indistinguishability obfuscation is limited to quantum circuits with logarithmically-many non-Clifford gates \cite{BK21}.} Instead, recent work has turned to exploring the relationship between \emph{idealized} notions of obfuscation in the classical and quantum worlds \cite{BM22,bartusek2023obfuscation, bartusek2024quantum, huang2025obfuscation}. In such an idealized model, 
an obfuscation of a circuit $C$ is considered secure if everything that can be learned from the obfuscation $\widetilde{C}$ can be simulated given only \emph{oracle access} to $C$. 

Of course, ideal obfuscation cannot be achieved in full generality, as there are well-known counterexamples demonstrating its impossibility \cite{barak2001possibility}. Nevertheless, studying the connection between classical and quantum ideal obfuscation remains valuable for at least two reasons. First, it isolates a fundamental question: Does the existence of an ideal obfuscator for classical computation imply one for quantum? Second, indistinguishability obfuscation is widely regarded as a strong heuristic for ideal obfuscation (e.g., it is the best-possible obfuscation \cite{goldwasser2007best}). Thus, establishing that ideal classical obfuscation gives rise to ideal quantum obfuscation results in 
candidates for quantum obfuscation by instantiating the black-box with (post-quantum) indistinguishability obfuscation.
 
Prior work \cite{bartusek2023obfuscation, bartusek2024quantum, huang2025obfuscation} has established that, assuming a classical ideal obfuscator for classical circuits, one can construct a \emph{quantum} ideal obfuscator for quantum circuits. In this setting, the obfuscator itself is a quantum algorithm, and on input a quantum circuit $Q$ it outputs a quantum state $\ket{\widetilde{Q}}$ serving as the obfuscated program.\footnote{We note that \cite{bartusek2024quantum, huang2025obfuscation} are able to do something slightly stronger---they obfuscate quantum circuits that also have an auxiliary quantum state. This is particularly relevant in some cryptographic applications, e.g. copy-protection.} A distinctive feature of these constructions is that the obfuscated object consists of \emph{unclonable} states: Once $\ket{\widetilde{Q}}$ is produced, it cannot be duplicated. As a consequence, it is impossible to create a single obfuscation of a quantum circuit and distribute it broadly. Moreover, these obfuscations cannot be transmitted over the classical channels that comprise the internet today.


However, the goal of constructing a classical obfuscator for quantum circuits (with security for non-null circuits\footnote{Bartusek and Malavolta~\cite{BM22} give a feasibility result for classical obfuscation of null quantum circuits, that is, circuits that always reject.}), even in the ideal setting, has so far remained out of reach. This raises our central motivating question:
\begin{quote}
\centering
    \textit{Is a quantum obfuscator necessary to obfuscate expressive quantum circuits?\\
    Or can a classically-described quantum program be obfuscated by a \emph{classical} obfuscator?}
\end{quote}
We address this question by constructing a classical obfuscator for all pseudo-deterministic circuits.

\subsection{Classical Obfuscation via Publicly-Verifiable QFHE}
Previous work \cite{bartusek2023obfuscation} identified \emph{pseudo-deterministic} quantum circuits as a particularly useful class for obfuscation. A pseudo-deterministic quantum circuit $Q$ from $n$ to $m$ qubits is one that, up to negligible error, computes a classical function: there exists a function $F: \{0,1\}^n \to \{0,1\}^m$ such that for every input $x \in \{0,1\}^n$, the output of $Q(x)$ equals $F(x)$ with probability $1 - \negl(n)$, where the probability is taken over the measurement of its output. This class is both expressive—capturing, for example, all quantum algorithms that solve problems in $\NP$, including Shor’s factoring algorithm \cite{shor1999polynomial}—and structurally well-suited for cryptographic applications like obfuscation. Indeed, pseudo-deterministic circuits have already served as the foundation for obfuscating increasingly general classes of quantum computation \cite{bartusek2024quantum, huang2025obfuscation}. Motivated by these properties, our work focuses on the obfuscation of pseudo-deterministic circuits. Crucially, we consider obfuscation through only classical means.

As also observed in previous work \cite{bartusek2023obfuscation}, a natural route toward obfuscating pseudo-deterministic circuits relies on constructing quantum fully-homomorphic encryption (QFHE) with public verification of homomorphic evaluation for such circuits.
\begin{itemize}
    \item 
    A QFHE scheme is defined by algorithms $\left( \Gen,\Enc,\Eval,\Dec \right)$, where all but the evaluation algorithm $\Eval$ can be regarded as classical. On input a ciphertext $\ct = \Enc(x)$ and a quantum circuit $Q$, the evaluation algorithm $\Eval$ produces an evaluated ciphertext $\widetilde{\ct}$ which encrypts the classical value $Q(x)$.
    
    \item
    In the publicly-verifiable variant of QFHE, $\Eval$ additionally outputs a succinct classical proof $\pi$, and correctness can be certified by an efficient classical verification algorithm $\Ver$, which checks that the homomorphic evaluation has been honestly computed, and outputs $\Ver(\ct, Q, \widetilde{\ct}, \pi) \in \{\top, \bot\}$.
\end{itemize}

It is straightforward to see that, given such a powerful QFHE, obfuscation of pseudo-deterministic quantum circuits follows naturally. This can be accomplished by obfuscating the verifier and applying the standard “verify-then-decrypt” paradigm. Concretely, the obfuscation of a circuit $Q$ consists of encrypting its classical description, i.e., $\ct_Q \gets \Enc(Q)$. On input $x \in \{0,1\}^n$ and the encrypted circuit $\ct_Q$, the evaluator computes $(\widetilde{\ct}, \pi) \gets \Eval(\ct_Q, U_x)$, where $U_x$ denotes the universal quantum circuit with input $x$ hard-wired. To recover the (classical) output $Q(x)$, the obfuscation makes use of a classical oracle implementing the circuit

$$
C(x, \widetilde{\ct}, \pi) := 
\begin{cases}
\Dec(\widetilde{\ct}) & \text{if } \Ver(\ct_Q, U_x, \widetilde{\ct}, \pi) = \top \\
\bot & \text{otherwise} 
\end{cases}
$$
\noindent
Thus, only honestly evaluated ciphertexts are ever decrypted, and the obfuscation securely reproduces the functionality of $Q$.

\paragraph{The central role of Pauli Functional Commitments.} However, it turns out that verification of QFHE evaluations is a particularly delicate problem in quantum cryptography.
\begin{itemize}
    \item
    On the one hand, by definition, evaluating QFHE ciphertexts is a quantum \emph{sampling} task. The quantum QFHE evaluation gets the classical $\ct$ and outputs a quantumly computed classical sample $\widetilde{\ct}$, which is going to have high entropy \emph{even if} the underlying evaluated circuit $Q$ is deterministic. In fact, this entropy was shown to be somewhat inherent in \cite{shmueli2022public, shmueli2022semi}. This seemingly presents a barrier, since it remains open how to classically verify quantum sampling problems with negligible soundness error (\cite{chung2022constant} has shown how to achieve inverse-polynomial error, which, if used in the template above, would allow a polynomial-time attack on the obfuscation scheme).

    \item 
    On the other hand, the special structure of certain QFHE schemes, most notably Mahadev’s construction \cite{mahadev2020classical}, enables a partial solution. As shown in \cite{bartusek2021secure}, the use of Mahadev’s measurement protocol \cite{mah18} on Mahadev's QFHE scheme allows for verification of QFHE evaluations with negligible soundness error. However, since the measurement protocol of Mahadev is privately verifiable, then so is the resulting QFHE scheme.
\end{itemize}

The technical centerpiece behind Mahadev’s measurement protocol was later abstracted by \cite{bartusek2023obfuscation} as a Pauli functional commitment (PFC). Roughly speaking, a PFC allows a classical verifier to issue a public classical commitment key $\CK$. A quantum prover can then commit to a quantum state in such a way that, later, measurements of Pauli observables (in particular, standard or Hadamard basis measurements) on the state can be verified through classical communication. Viewed through this lens, the state of the art can be summarized as follows: Mahadev \cite{mah18} constructs a privately verifiable PFC; Bartusek \cite{bartusek2021secure} demonstrates that applying Mahadev's PFCs to QFHE homomorphic evaluation yields privately verifiable evaluations; and finally, \cite{bartusek2023obfuscation} show how to construct, relative to a classical oracle, PFCs with public verification, albeit with a \emph{quantum} commitment key $\ket{\CK}$.

Measurement protocols, or in their formal abstraction as PFCs, were originally devised to allow a classical client to force a quantum server to perform some prescribed computation. Existing results on public verification, however, require the commitment keys to remain quantum, which prevents them from being generated by a classical verifier. This leads us to what we view as a fascinating technical question in its own right: can one design classical measurement protocols that are publicly verifiable? As we have explained, resolving this question would not only advance the theory of quantum cryptography but would also translate directly to our applications, enabling public verification of QFHE evaluations and, ultimately, classical obfuscation of quantum circuits.

\subsection{Results}
We summarize our main results in this section.
\begin{theorem}[Informal]
    Assuming the quantum hardness of the Learning With Errors problem, there exists classical obfuscation for (classically-described) pseudo-deterministic quantum circuits in the classical oracle model.
\end{theorem}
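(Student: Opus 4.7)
The plan is to reduce the theorem to the construction of a publicly-verifiable QFHE (PV-QFHE) scheme whose ciphertexts and verification data are entirely classical, and then apply the verify-then-decrypt paradigm sketched in the introduction. Concretely, to obfuscate a pseudo-deterministic circuit $Q$, I would encrypt its classical description as $\ct_Q \gets \Enc(Q)$ and publish $\ct_Q$ together with a classical oracle implementing the circuit $C(x, \widetilde{\ct}, \pi)$ that outputs $\Dec(\widetilde{\ct})$ when $\Ver(\ct_Q, U_x, \widetilde{\ct}, \pi) = \top$ and $\bot$ otherwise. Functionality follows from the completeness of $\Eval$ and from pseudo-determinism of $Q$, and VBB security reduces to semantic security of the PV-QFHE: a simulator with oracle access to $Q$ can answer each query $(x, \widetilde{\ct}, \pi)$ by first running $\Ver$, and in the verifying case returning $Q(x)$ (queried through the oracle), since PV-QFHE soundness ensures the decrypted value agrees with $Q(x)$ up to negligible error.

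The substantive step is therefore constructing a compact PV-QFHE with classical ciphertexts and negligible soundness error, relative to a classical oracle. I would follow the BKNY (STOC 2023) blueprint: apply a Pauli Functional Commitment (PFC) to the intermediate quantum state produced by Mahadev-style homomorphic evaluation, so that the standard-basis and Hadamard measurements used to extract the output ciphertext are forced to be consistent with a single committed state and are thus publicly verifiable. The core obstruction in BKNY is that their PFC commitment key is a \emph{quantum} coset state $\ket{\CK}$, which in turn makes the QFHE ciphertexts quantum and non-compact. My approach is to replace this quantum key by the purely classical, oracle-backed verification key of the Shmueli--Zhandry (CRYPTO 2025) one-shot signature scheme, and have the prover generate the coset superposition required for committing \emph{on the fly}, by internally running the one-shot signature prover and purifying its measurements before exposing a classical commitment string.

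The hard part will be the soundness analysis once coset states are no longer honestly sampled but rather generated by a potentially malicious prover. In the BKNY PFC the information-theoretic structure of a genuine coset state, in particular dual binding together with Hadamard-basis measurability, is crucial, and this no longer holds on-the-nose for on-the-fly generated coset states. To bridge this gap I would prove a ``computational coset extraction'' lemma for the Shmueli--Zhandry scheme, stating that any efficient prover passing PFC verification can, after purification, be replaced by one whose commitment register is computationally indistinguishable from an idealized coset superposition, at the cost of breaking either one-shot signature unforgeability or the hardness of LWE. The natural route is a hybrid argument that first lifts to a purified prover, then swaps the signature-generated coset structure for an idealized one via indistinguishability of the signature's internal distributions, and finally invokes the BKNY PFC soundness theorem in a black-box fashion. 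Composing the resulting classical-key PFC with Mahadev-style QFHE along the lines of BKNY and \cite{bartusek2021secure} yields the desired compact, classical-ciphertext PV-QFHE, and plugging this into the verify-then-decrypt wrapper above completes the construction of the classical obfuscator for pseudo-deterministic quantum circuits.
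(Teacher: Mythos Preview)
Your high-level plan---classical-key PFC built from one-shot signatures, plugged into the BKNY compiler to obtain publicly-verifiable QFHE, then verify-then-decrypt---matches the paper. The gap is in your PFC soundness analysis. The ``computational coset extraction'' route you sketch (argue the committed register is computationally indistinguishable from an idealized coset superposition, then invoke BKNY's PFC soundness black-box) does not fit the actual obstacle. The adversary's committed state need not be close to any particular coset superposition; it is only required to be \emph{supported} on a coset, and can be an arbitrary (adversarially chosen) superposition over it. Moreover, BKNY's PFC theorem is stated in a setting where the verifier samples the coset and holds the subspace description, which is structurally different from the OSS setting where the prover samples; there is no black-box invocation available.

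What the paper actually proves is a new \emph{OSS collapse-binding} property of the Shmueli--Zhandry scheme: an adversary with full access to $(P,P^{-1},D)$ prepares a state supported on some coset $S_y$, then the $D$ oracle is \emph{revoked}, and the adversary must distinguish whether the state was measured in the standard basis. This is strictly stronger than unclonability and is established by a non-black-box analysis of the SZ construction (small-range reduction, bloating the dual, re-expressing the oracles via a coset partition function) that ultimately reduces to a clean information-theoretic ``coset-collapsing'' game, for which the paper supplies a new direct proof. Equally important is how $D$-revocation is enforced in the PFC construction itself: there is a \emph{second, outer} OSS layer, and access to the inner $D$ oracle is gated on presenting an outer-OSS signature of the bit $0$; the commitment string instead carries a signature of $1$, so outer-OSS strong unforgeability kills the adversary's $D$-access after commitment. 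Your proposal identifies neither this two-level OSS construction nor the ``revoke $D$, then test collapsing'' shape of the binding argument, and without them there is no route from OSS unforgeability (or LWE) to the required PFC binding.
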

Along the way, as a tool for our obfuscation scheme, we build the first publicly-verifiable QFHE scheme for pseudo-deterministic circuits.
\begin{theorem}[Informal]
    Assuming the quantum hardness of the Learning With Errors problem (resp. plus an appropriate circular-security assumption), there exists a leveled (resp. unleveled) quantum fully-homomorphic encryption in the classical oracle model, with the following properties.
    \begin{itemize}
        \item Classical encryption of classical messages.
        \item Succinct, classical, and public verification of evaluation for pseudo-deterministic circuits. 
    \end{itemize}
\end{theorem}


\paragraph{Main Technical Contribution.}
Put plainly, our main technical contribution is the first publicly-verifiable quantum measurement protocol in the classical oracle model. Before our work, a quantum prover could prove the integrity of a measurement (standard-basis or Hadamard) on a quantum state only to a single verifier holding a trapdoor. 
More formally, we construct the first publicly-verifiable Pauli functional commitment scheme with \emph{classical keys}. As a central part of our construction and security proof, we construct one-shot signatures (and hash functions) with what we call \emph{delayed collapsing}. Delayed collapsing was implicitly introduced in \cite{bartusek2023obfuscation}, in the context of random coset states (and we intuitively explain how it relates to Pauli functional commitments, in Section \ref{subsec:techniques_1}). We identify delayed collapsing as a fundamental property for hash functions; a delayed collapsing hash implies both collapsing hash functions and non-collapsing collision-resistant hash functions. We prove that the one-shot signatures construction of \cite{shmueli2025one} induces not only a non-collapsing collision-resistant hash function (which was previously known), but a delayed collapsing hash function. To prove our results, we show a new stronger analysis of one-shot signatures, and prove new quantum information theoretical properties of random coset states, which may be of independent interest. See more details in our Technical Overview (Section \ref{subsec:techniques}).


\vspace{2mm}

As an additional contribution that may be of independent interest, we show that post-quantum \emph{succinct} ideal obfuscation for classical Turing machines exists in the classical oracle model (Section \ref{sec:input-succinct}). We do this via a natural ``FHE + SNARK'' approach, and combine techniques from \cite{chiesa2019snark, zhandry2019record} in order to establish post-quantum security.


Finally, we note that our quantum obfuscation scheme is in fact \emph{succinct}, meaning that the size of the obfuscated program grows only with the \emph{description size} of the plaintext quantum program, as opposed to the number of physical gates in the circuit. We emphasize that, even though our construction includes a classical oracle, the classical circuit implemented by the oracle is itself succinct. This means that succinctness is preserved when the classical oracle is heuristically instantiated with post-quantum indistinguishability obfuscation.

As an immediate corollary, we obtain the first \emph{succinct randomized encoding} scheme for pseudo-deterministic quantum computation. A succinct randomized encoding allows for compiling a circuit $Q$ and an input $x$ into a succinct representation $\widehat{Q(x)}$ in time much less than evaluating $Q$, such that $\widehat{Q(x)}$ reveals only $Q(x)$ and nothing else about $x$. Succinct randomized encodings have had several important applications in classical cryptography \cite{BGLPT15}, and we expect them to be widely useful in the quantum setting as well.

\subsection{Discussion}

\paragraph{Comparison with prior work on CVQC.} Our publicly-verifiable QFHE protocol naturally yields a new type of classical verification of quantum computation (CVQC) protocol for BQP. Indeed, given some quantum computation $Q$, the verifier simply encrypts their input $x$ as $\Enc(x)$, and the prover responds with $\Enc(Q(x))$ along with a proof $\pi$ of correct evaluation. 

This protocol has several desirable properties: The prover remains blind to the verifier's input $x$, the proof $\pi$ is publicly-verifiable (implying for example that the prover cannot break security by repeatedly querying the verifier and learning whether they accept or reject), and the client's work does not grow with the size of $Q$. In particular, we obtain the first CVQC protocol that is both \emph{blind} and \emph{publicly-verifiable}, and on top of that it is both non-interactive and succinct. We summarize a comparison with prior work in Table \ref{tab:comparison}.

\paragraph{Open problems.} Our work highlights the utility of classical-client publicly-verifiable QFHE, both for its applications in CVQC and in obfuscation of quantum computation. While we construct this primitive in a manner that supports pseudo-deterministic evaluation, we leave open some interesting questions for future work.
\begin{itemize}
    \item Can we construct classical-client publicly-verifiable QFHE for \emph{all} possible functionalities (i.e.\ quantum circuits that yield an arbitrary distribution over classical outputs)? This would be interesting to achieve even with non-negligible soundness (achieving negligible soundness for sampling problems is a well-known and difficult open problem), and even without compactness.
    \item Can we prove security of classical-client publicly-verifiable QFHE from post-quantum indistinguishability obfuscation? Unfortunately, it still remains open to establish the security of publicly-verifiable CVQC under iO even forgetting about blindness and compactness.
\end{itemize}

Another natural direction is to expand the class of quantum programs that can be classically obfuscated. While our approach builds on \cite{bartusek2023obfuscation}, an alternative approach of \cite{bartusek2024quantum,huang2025obfuscation} has yielded obfuscation for all \emph{unitaries}, but with a quantum obfuscator. The reason that we build on \cite{bartusek2023obfuscation} as opposed to \cite{bartusek2024quantum,huang2025obfuscation} is that the quantum component of the \cite{bartusek2023obfuscation} obfuscated program contains ``only'' coset states, while the quantum component of the \cite{bartusek2024quantum,huang2025obfuscation} obfuscated programs contain more complicated states, in particular magic states encoded in the so-called coset state authentication code \cite{bartusek2024quantum}. It is a fascinating open question whether this approach can also be de-quantized, which may yield classical obfuscation of unitaries and beyond.

\begin{table}
\begin{tabularx}{\textwidth}{|X||X|X|X|X|X|}
\hline
\textbf{} &  \textbf{Blind} & \textbf{Verification}  & \textbf{Succinct} & \textbf{Interactive} & \textbf{Heuristic} \\
\hline\hline
\textbf{\cite{mah18}} & {Yes*} & {\textcolor{red}{Private}} & \textcolor{red}{No} & {\textcolor{red}{Yes}} & {No} \\
\hline
\textbf{\cite{bartusek2022succinct}} & {Yes*} & {\textcolor{red}{Private}} & {Yes} & {\textcolor{red}{Yes}} & {No} \\
\hline
\textbf{\cite{BM22}} & {\textcolor{red}{No}} & {Public} & {Yes} & {No} & \textcolor{red}{Yes}  \\
\hline

\textbf{This work} & {Yes} & {Public} & Yes & {No} & \textcolor{red}{Yes}  \\
\hline
\end{tabularx}
\caption{A comparison among classical verification of quantum computation protocols for BQP. The asterisk in the Blind column refer to the fact that these protocols aren't necessarily presented as blind, but can be made blind by running under QFHE. This cannot be done for \cite{BM22} without sacrificing public verification. The Heuristic column refers to provable security (i.e.\ reduction to an assumption) vs. heuristic security (concretely, argued in the classical oracle model). Finally, we note that while the protocols listed above are the first to achieve the listed guarantees, there have been several other protocols proposed that improve along other metrics such as the assumption \cite{NZ23,MNZ24,GKNV25} and circuit class \cite{chung2022constant}.}
\label{tab:comparison}
\end{table}

\section{Technical Overview} \label{subsec:techniques}
As mentioned above, the main technical workhorse behind our results can be seen as a \emph{publicly-verifiable} measurement protocol with a classical verifier. To this end, we construct, relative to a classical oracle, \emph{publicly-verifiable} Pauli functional commitments with classical commitment keys. We next recall what PFCs  are, discuss previous work, and then present our new techniques.

\subsection{Publicly-Verifiable PFCs and Previous Work} \label{subsec:techniques_1}
A publicly-verifiable PFC with classical keys (Definition \ref{def:PFC}) is given by algorithms $(\Gen, \allowbreak \mathsf{Com}, \allowbreak\mathsf{OpenZ},\allowbreak \mathsf{OpenX}, \allowbreak\mathsf{DecZ}, \allowbreak\mathsf{DecX})$, where the classical algorithms here are $\Gen$, $\mathsf{DecZ}, \mathsf{DecX}$. Our standard scenario is a classical verifier $\Ver$ that wants to make a Pauli measurement on some unknown $n$-qubit state $\ket{\psi}$ held by a quantum prover $\P$. In this instance, the flow of a PFC goes as follows.
\begin{enumerate}
    \item 
    $\Ver$ samples classical keys $\left( \CK, \dk \right) \gets \Gen\left( 1^\lambda \right)$. We think of $\CK$ as a public commitment key in the form of an ideally-obfuscated classical circuit, and $\dk$ is a secret decoding key. $\Ver$ sends $\CK$ to the prover.

    \item  \label{overview:protocol_commit}
    $\P$ commits to its state by computing $\left( c, \calB, \calU \right) \gets \mathsf{Com}^{\mathsf{CK}}\left( \ket{\psi}_{\calB} \right)$, where $c$ is a classical commitment string, $\calB := \left( \calB_i \right)_{i \in [n]}$ is the original $n$-qubit register that held $\ket{\psi}$ and $\calU := \left( \calU_i \right)_{i \in [n]}$ is an additional quantum register with some $\poly(n)$-qubits, now entangled with $\calB$. $\P$ sends $c$ to the verifier.

    \item 
    $\Ver$ chooses some measurement basis $h \in \{ 0, 1 \}^n$, where $h_{i} = 0$ denotes measuring qubit $i$ of $\ket{\psi}$ in the standard basis and $h_i = 1$ denotes Hadamard basis. $\Ver$ sends $h$ to the prover.

    \item \label{overview:protocol_DEcommit}
    $\P$ decommits to the Pauli observables of its state by applying $u_i \gets \mathsf{OpenZ}\left( \calB_i, \calU_i \right)$ for every $h_i = 0$ and $u_i \gets \mathsf{OpenX}\left( \calB_i, \calU_i \right)$ otherwise. $\P$ sends $u := \left( u_i \right)_{i \in [n]}$ to the verifier.

    \item $\Ver$ can now privately decode the measurement outcomes by applying the corresponding decoding procedures $\mathsf{DecZ}\left( \dk, \cdot, \cdot \right)$ and $\mathsf{DecX}\left( \dk, \cdot, \cdot \right)$. For each $i \in [n]$, the corresponding algorithm takes as input the commitment $c$ and the value $u_i$, and outputs a measurement result $b_{i} \in \{ 0, 1 \}$.
\end{enumerate}
In words, what happens above is that at Step \ref{overview:protocol_commit}, the prover generates an \emph{encoding} of the state $\ket{\psi}_{\calB}$, by embedding it into a larger space $(\calB,\calU)$. Then, at Step \ref{overview:protocol_DEcommit}, the prover generates encodings of standard basis measurements $u_i \gets \mathsf{OpenZ}\left( \calB_i, \calU_i \right)$ or Hadamard basis measurements $u_i \gets \mathsf{OpenX}\left( \calB_i, \calU_i \right)$, for each $i \in [n]$. The logical values of these encoded measurements are later extracted by $\Ver$  using $\dk$.

\paragraph{PFC Security: Private vs Public Verification.}
Intuitively, security of a PFC ensures that the logical standard basis measurements cannot be swapped. A bit more formally, for each part $i \in [n]$ of the committed state on register $\left( \calB_i, \calU_i \right)$ at the end of Step \ref{overview:protocol_commit}, if we project the state onto encodings of standard basis measurements for some fixed value $b \in \{ 0, 1 \}$ (i.e., strings $u_i$ such that $\mathsf{DecZ}\left( \dk, c, u_i \right) = b$), then no adversary can map this state to encodings of standard basis measurements for value $1-b$ (i.e., strings $u_i$ such that $\mathsf{DecZ}\left( \dk, c, u_i \right) = 1-b$). In fact, this is just one way to describe the traditional notion of post-quantum binding for classical messages (equivalent to collapse-binding and sum-binding).

In a privately-verifiable PFC, the verifier's decoding key $\dk$ must remain completely inaccessible to the prover throughout the protocol, and such a PFC was (implicitly) constructed in \cite{mah18}. In a publicly-verifiable scheme, as considered by \cite{bartusek2023obfuscation}, we aim to give the prover access to the verifier's decoding functionalities without compromising the binding security stated above. A natural attempt at such a definition would require the scheme to remain secure against a prover given quantum oracle access to the classical functions $\mathsf{DecZ}\left( \dk, \cdot, \cdot \right)$, $\mathsf{DecX}\left( \dk, \cdot, \cdot \right)$.

It turns out that, for our application (as well as for \cite{bartusek2023obfuscation}), the security of publicly-verifiable PFC need not hold\footnote{In fact, there is good reason to believe that this definition is not satisfiable at all, since a Hadamard basis decoding functionality intuitively gives the committer the ability to perform a projection in the Hadamard basis and thus change their value in the standard basis.} if the prover is given \emph{uninterrupted} access to both $\mathsf{DecZ}\left( \dk, \cdot, \cdot \right)$, $\mathsf{DecX}\left( \dk, \cdot, \cdot \right)$. Instead, we consider the relaxation where the prover $\P$ is given quantum access to $\mathsf{DecZ}\left( \dk, \cdot, \cdot \right)$, $\mathsf{DecX}\left( \dk, \cdot, \cdot \right)$ only up to the collapse of the encoding register $\left( \calB, \calU \right)$. That is, we consider a game where a malicious prover $\P^*$ prepares the commitment $\left( c, \calB, \calU \right) \gets \mathsf{Com}^{\mathsf{CK}}\left( \ket{\psi}_{\calB} \right)$ as in Step \ref{overview:protocol_commit}, with additional access to both $\mathsf{DecZ}\left( \dk, \cdot, \cdot \right)$, $\mathsf{DecX}\left( \dk, \cdot, \cdot \right)$. Then, the PFC challenger "removes" access to the Hadamard decoding functionality $\mathsf{DecX}\left( \dk, \cdot, \cdot \right)$, and collapses registers $\left( \calB_i, \calU_i \right)_{i \in [n]}$ to encodings of classical measurement results $\{ b_i \}_{i \in [n]}$. Then, given access to only $\mathsf{DecZ}\left( \dk, \cdot, \cdot \right)$, the prover $\P^*$ is required to map these encodings to encodings of classical measurement results for the opposing values $\{1-b_i \}_{i \in [n]}$. To re-iterate, the prover has full access to the verification functionalities while preparing its \emph{commtiment}, but \emph{loses} access to the Hadamard decoding functionality while attempting to map between standard basis openings. To satisfy the above notion of binding it is sufficient for our scheme to formally satisfy a property called \emph{single-bit binding with public decodability}, which is made formal in Definition \ref{def:binding}.

\paragraph{Public Verification via Delayed Collapsing.}
The only previous result on publicly-verifiable PFCs is due to the work of Bartusek, Kitagawa, Nishimaki and Yamakawa \cite{bartusek2023obfuscation}, which achieves the above (in a classical oracle model), though where the commitment key $\ket{\CK}$ is \emph{quantum}, and has size proportional to the number of qubits $n$ in the state $\ket{\psi}$ to be committed.

Without going too deep into their construction, we remark that it makes use of a common tool in quantum cryptography: quantum coset states \cite{coladangelo2021hidden}. A quantum coset state is defined by a random subspace $S \subset \bbZ_{2}^{\secp}$ and shift $v \in \bbZ_2^{\secp}$, and is defined as
\[\ket{S + v} := \frac{1}{\sqrt{|S|}} \sum_{x \in S} \ket{x + v} \enspace .\]

As a major part of their security proof, \cite{bartusek2023obfuscation} prove a delicate information-theoretic statement about coset states, which we refer to as the \textbf{delayed collapsing} property in this work. In a nutshell, they consider a classical oracle $\calO$ such that quantum access to $\calO$ enables implementing the projector onto the coset state, but when access to $\calO$ is revoked, the coset state becomes indistinguishable from the coset state measured in the standard basis.

To elaborate, the ability to verify cosets is given by quantum oracle access to the classical membership check for the sets $S + v$ and $S^{\bot}$, where $S^{\bot} \subseteq \bbZ_{2}^{\secp}$ is the dual subspace of $S$ (i.e., vectors that have an inner product of $0$ with all vectors inside $S$). It is a seminal result \cite{aaronson2012quantum} and common knowledge by now that quantum access to $S + v$ and $S^{\bot}$ enable implementation of the projector $\ketbra{S+v}{S+v}$, while still preserving unclonability of the state $\ket{S+v}$. However, the delayed collapsing property required by \cite{bartusek2023obfuscation} is \emph{stronger} than unclonability of $\ket{S+v}$.

This delayed collapsing property is established in \cite[Claim 4.10]{bartusek2023obfuscation},\footnote{The statement in \cite[Claim 4.10]{bartusek2023obfuscation} is slightly different, but implies the statement we discuss here.} by building on the inner-product adversary method of \cite{aaronson2012quantum}. To be precise, the statement is as follows. A quantum adversary, given $\ket{S + v}$ and quantum oracle access to both $S + v$, $S^{\bot}$, outputs some state on registers $\calR_{1}$, $\calR_{2}$.
The first part $\calR_{1}$ holds a superposition of elements in the coset $S + v$. The second part $\calR_{2}$ is possibly entangled with the first part $\calR_{1}$, and holds arbitrary quantum information that was learned from the oracle access to $S + v$ and $S^{\bot}$.
Then, the challenger "takes away" oracle access to the membership check $S^{\bot}$, flips a coin $b \gets \{0,1\}$, and either measures $\calR_{1}$ or not depending on the bit $b$. The task is for the adversary to now guess $b$. Note that this would have been easy given access to $S^{\bot}$. That is, during the first stage, the coset state is publicly-verifiable and non-collapsing, and during the second stage, access to $S^\bot$ is revoked and the state becomes indistinguishable from collapsed.

Circling back to our publicly-verifiable PFCs, our desired security notion (again, formalized in Definition \ref{def:binding}) is reminiscent of the delayed collapsing property of coset states. Observe that the binding of PFCs can be interpreted as "revocable equivocality" as follows: The adversary has quantum access to $\mathsf{DecZ}\left( \dk, \cdot, \cdot \right)$, $\mathsf{DecX}\left( \dk, \cdot, \cdot \right)$ and makes a commitment. Having access to both oracles (and in particular the Hadamard decoding oracle $\mathsf{DecX}\left( \dk, \cdot, \cdot \right)$), it may be possible to equivocate from a state that has a superposition of encodings of measurement result $b \in \{0, 1\}$, to encodings of the opposite measurement result $1 - b$. However, the security definition ensures that once access to $\mathsf{DecX}\left( \dk, \cdot, \cdot \right)$ is removed, equivocation becomes computationally intractable.

Indeed, the above connection between delayed collapsing and public verification of PFCs was observed and formalized in \cite{bartusek2023obfuscation}. However, as mentioned, the previous work deals only with quantum (and long) keys. In our work we seek to construct a scheme with classical (and potentially short) keys, and to this end we develop new techniques.

\subsection{Our Work: One-Shot Signatures with Delayed Collapsing} \label{subsec:techniques_2}
Our work begins with the natural idea to employ \emph{one-shot signatures} (OSS) in order to generate the coset states required by $\ket{\CK}$. OSS is a powerful primitive in quantum cryptography, first defined in \cite{amos2020one} and recently constructed for the first time in \cite{shmueli2025one}. The OSS of \cite{shmueli2025one} is defined with respect to public parameters that enable anyone to sample their own \emph{unclonable} coset states, and to publicly verify these states. That is, the sampled coset states cannot be cloned even by the sampler that produced them. Moreover, the public parameters can be used to sample an unbounded polynomial number of such unclonable states.

\paragraph{From Quantum to Classical Keys (and, from long to short keys).}
In principle, the use of OSS may eliminate two unwanted properties of the \cite{bartusek2023obfuscation} PFC in a single swipe: OSS can enable commitment keys that are both classical and short. However, it turns out that the previous results on OSS are insufficient for our needs. 

The OSS of \cite{shmueli2025one} is given by three classical oracles $\left( P, P^{-1}, D \right)$. In our context here, $P$, $P^{-1}$ can be thought of as the oracles that allow for generation of coset states, and $D$ is the oracle that allows for verification of the quantum tokens. In particular, $D$ can be used to detect collapsing and distinguish between a uniform superposition over the coset and a collapsed classical state. \cite{shmueli2025one} prove that the oracles $\left( P, P^{-1}, D \right)$ allow to generate coset states that are unclonable,\footnote{In fact, an even stronger statement is proven by \cite{shmueli2025one} showing the one cannot produce two classical coset elements derived from the same token.} and are verifiable (in particular, non-collapsing) given the oracle $D$. For our PFCs, we will want the stronger, delayed collapsing property: We will further ask that once we revoke access to $D$, the generated states become indistinguishable from collapsed. More formally, we consider a game in which an adversary, given full access to $\left( P, P^{-1}, D \right)$, first samples a quantum coset state (and potentially some side information). Then, a challenger either measures the state or not in the standard basis, and eliminates access to the $D$ oracle. Then, the adversary is required to tell which of the options occured (this is formalized in Definition \ref{def:oss-collapse-binding}).

Proving the delayed collapsing property for OSS requires a careful reexamination of the \cite{shmueli2025one} proof strategy, which we do in Section \ref{subsec:binding}. Our proof technique rests upon a new and simpler proof of the delayed collapsing property of coset states that implicitly formed the heart of the \cite{bartusek2023obfuscation} proof strategy. In fact, the inner-product adversary method approach of \cite{bartusek2023obfuscation} appears ill-suited for use in the context of OSS,\footnote{This was the approach taken in the original but flawed attempt at proving security of OSS \cite{amos2020one}.} and hence we develop a different approach to establishing delayed collapsing that interfaces better with the \cite{shmueli2025one} proof technique.

\vspace{3.5mm}
\noindent
\textit{Roadmap of our proofs.}
In the remainder of the overview, we explain the high-level structure for our proofs. Recall that our main goal is to construct a scheme that has single-bit binding with public decodability (Definition \ref{def:binding}). We next go over the main steps for achieving this, how this connects to OSS with delayed collapsing, and the proof of the delayed collapsing of the OSS construction of \cite{shmueli2025one}.

\paragraph{Reducing single-bit binding PFCs to collapse binding PFCs.}
The first observation we make is that, in order to construct publicly-verifiable PFC with single-bit binding, it suffices to construct PFC with a property we refer to as collapse binding (Definition \ref{def:collapse-binding}). The scheme satisfies our notion of collapse binding if the following experiment has negligible advantage.
\begin{itemize}
    \item The verifier samples $\left( \CK, \dk \right) \leftarrow \mathsf{Gen}(1^\secp)$.
    \item The adversary, with access to $\CK, \DecZ[\dk], \DecX[\dk]$, outputs a commitment $c$ and a state on registers $\left( \mathcal{B}, \mathcal{U} \right)$.
    \item The verifier samples a bit $b \gets \{0,1\}$ and either measures the registers $\left( \mathcal{B}, \mathcal{U} \right)$ in the standard basis or not depending on the value of $b$.
    \item The adversary, given $\CK, \DecZ[\dk]$, tries to guess the value of $b$.
\end{itemize}

\noindent To show the implication, we pass through a sequence of related notions of commitment binding. There are two main steps. 
\begin{enumerate}
    \item
    First, we show that any commitment scheme satisfying collapse binding (Definition \ref{def:collapse-binding}) also satisfies \emph{single-bit} collapse binding.\footnote{Note the names slightly defer, and "single-bit collapse binding" is not the exact same as our end goal security of "single-bit binding with public decodability."} Roughly, the setting of single-bit collapse binding is similar to that of collapse binding, but rather than measuring the registers $\mathcal{B}, \mathcal{U}$ in the computational basis, only the \emph{first bit} of $(\mathcal{B}, \mathcal{U})$ is measured. 
    
    \item
    Second, it was established in \cite{bartusek2023obfuscation} that single-bit collapse binding implies a notion called "unique message" binding, which further implies a notion called string binding. Finally, string binding implies single-bit binding with public decodability.
\end{enumerate}

\paragraph{Reducing collapse binding PFCs to OSS with delayed collapsing.}
It remains to construct a publicly-verifiable PFC with collapse binding. Our construction is given in Section \ref{sec:construction-functional-commitment} and is based on the OSS construction of \cite{shmueli2025one}, which we denote by $\overline{\OSS}$. In the body of the paper, we show that our construction is indeed a publicly-verifiable PFC with collapse binding, as long as $\overline{\OSS}$ satisfies the property we informally discussed earlier, referred to as delayed collapsing. This game is formalized in \Cref{def:oss-collapse-binding}.

Delayed collapsing is captured by the advantage of query-bounded adversaries $\mathcal{A}=(\mathcal{A}_1,\mathcal{A}_2)$ in following experiment:\footnote{The following elaboration necessitates some familiarity with the the inner workings of $\overline{\OSS}$, which can be found in Section \ref{sec:oss-sz}.}
\begin{enumerate}
    \item Sample $\left( P, P^{-1}, D \right) \leftarrow \overline{\OSS}.\Setup(1^\lambda)$. Let $\{S_y\}_{y \in \{0,1\}^r}$ be the cosets defined by $P, P^{-1}$.
    
    \item Run $\calA_1^{P, P^{-1}, D}(1^\lambda)$ until it outputs a string $y \in \{0,1\}^r$ and a state on registers $\left( \calU, \calR \right)$. Here, $\calU$ is a $k$-qubit register and $\calR$ holds the internal state of $\calA$.
    
    \item Sample $b \leftarrow \{0,1\}$. If $b=0$, do nothing, and if $b=1$ measure $\calU$ in the standard basis.
    
    \item Run $\calA_2^{P, P^{-1}}\left( \calU, \calR \right)$ until it outputs a bit $b'$. The experiment outputs $1$ if $b=b'$ and $0$ otherwise.
\end{enumerate}
Our reduction from collapse binding security of PFC to the delayed collapsing of $\overline{\OSS}$ requires an \emph{additional} use of OSS,\footnote{In \cite{bartusek2023obfuscation}, this role was fulfilled by a signature token scheme with quantum keys.} though for this we only require standard unforgeability of OSS, as already shown by \cite{shmueli2025one}.

\paragraph{Reducing OSS with delayed collapsing to a coset collapsing game.}

This reduction captures the heart of our technical contribution, as it proves the delayed collapsing property of $\overline{\OSS}$. This is done in three main steps: (1) We propose a novel information-theoretic property of coset states, which we formalize as the \emph{coset collapsing game}. (2) We perform a careful analysis of $\overline{\OSS}$, which deviates from the previous security analysis of \cite{shmueli2025one} and ultimately reduces the delayed collapsing property of $\overline{\OSS}$ to the hardness of the coset collapsing game. (3) We prove the information-theoretic hardness of the coset collapsing game.

The coset collapsing game (formalized in Definition \ref{def:coset-collapsing-dual access}) is a game between a challenger and a two-stage adversary $\left( \mathcal{A}_1,\mathcal{A}_2 \right)$. There is a public subspace $S$ known to all parties and the game executes as follows:
\begin{itemize}
    \item
    The challenger samples a random subspace $T \subset S$, and grants $\mathcal{A}_1$  access to $O_{T^{\perp}}$, the membership checking oracle for $T^{\perp}$, which is a random superspace of $S^{\bot}$. 
    
    \item
    $\mathcal{A}_1^{O_{T^{\perp}}}$ generates a state on register $\mathcal{R}$, to which the challenger applies a projective measurement onto a coset of $T$ in $S$. That is, the left-over state will be in \emph{some} superposition over \emph{some} coset parallel to $T$ (i.e., a set of the form $T + x$ for some $x$).
    
    \item
    With equal probability, the challenger either
    \begin{itemize}
        \item Does nothing further to $\mathcal{R}$, or
        \item Measures $\mathcal{R}$ in the standard basis. 
    \end{itemize}

    \item
    Finally $\mathcal{A}_2$ is given $\mathcal{R}$ but no access to $O_{T^{\perp}}$.
\end{itemize}
We say that "cosets of $T$ are collapsing" if no $\left( \mathcal{A}_1,\mathcal{A}_2 \right)$ has non-negligible advantage in guessing whether the final measurement was performed or not. In the proof of Theorem \ref{lem:oss-collapsing}, we show that any adversary for the delayed collapsing of $\overline{\OSS}$ can be turned into an adversary for the coset collapsing game. As mentioned above, this proof follows the security analysis of \cite{shmueli2025one} for some of the steps, but deviates in others. For example, while \cite{shmueli2025one} work towards a reduction to the collision-resistance of some ``coset partition function'', we reduce to the (non-delayed) collapsing property of the same affine partition function.

\snote{I added details to this reduction in the previous version, would it make sense to bring some of it back? Not sure how high level it is supposed to be. Feel free to use/delete all or part of this:}

We will now give a proof sketch of the hybrids used in the reduction from the delayed collapsing property of $\overline{\OSS}$ to the hardness of the coset collapsing game.
\paragraph{Step 1: Bloating the dual.} Like \cite{shmueli2025one}, we first bloat the dual oracle $D$ and move from $(P, P^{-1}, D)$ (see Definition \ref{def:OSS}) to $(P,P^{-1},D')$, where $D'$ checks membership for a random superspaces $T^{\perp}_y$ of $S^{\perp}_y$. A query bounded adversary will not notice the difference, since points in $T^{\perp}_y \setminus S^{\perp}_y$ are hard to find.
Concretely, we bloat the dual oracle as follows \jnote{mismatch between $\vecv$ and $\vecz$ below?}\snote{fixed}
\begin{align*}
D'(y, \vecv) &= \begin{dcases}
              1 &\text{if } \vecv^\top \matA_y^{(1)} = \mathbf{0}^{n-r-s}\\
              0 &\text{otherwise.}
          \end{dcases}
\end{align*}
Here $\matA^{(1)}_y$ refers to only the last $n-r-s$ columns of the matrix $\matA_y$, which defines $S^\bot_y$.
Identically to \cite{shmueli2025one}, the superspaces $T^{\perp}_y$ are chosen such that they are sparse subsets of the entire space, and each $S^{\perp}_y$ is a sparse subset of $T^{\perp}_y$. 


\paragraph{Step 2: Simulating the dual to introduce a collapsing coset partition function.} Like in \cite{shmueli2025one}, we now eliminate the dual oracle altogether by embedding a rerandomization of the cosets defined by a smaller instance of the $(\tilde{P}, \tilde{P}^{-1}, \tilde{D})$ into the cosets defined by the actual oracles $(P, P^{-1}, D)$. This allows us to simulate \textit{without access to $\tilde{D}$} the corresponding random superspace $T_y^\perp$ of the dual $S_y^\perp$. By the previous step, the adversary cannot tell the difference. Once we have eliminated the dual oracles, the adversary effectively has access only to $\tilde{P}, \tilde{P}^{-1}$ of these smaller instances, which we are now free to simulate using a collapsing coset-partition function $Q$.
 
Concretely, by combining Lemmas 31 and 34 of \cite{shmueli2025one}, we simulate the oracles $P, P^{-1}, D'$ \footnote{The following elaboration necessitates some familiarity with the the inner workings of $\overline{\OSS}$, which can be found in Section \ref{sec:oss-sz}.} using a coset partition function $Q:\{0,1\}^{r+s}\rightarrow \{0,1\}^s$ (see Definition \ref{def:cosetpartfunc}) as follows:\\
\begin{enumerate}
\item Sample a random permutation $\Gamma: \{0,1\}^n \rightarrow \{0,1\}^n$ and for every $y$, choose a random full rank matrix $\matC_y \in \mathbb{Z}_2^{k \times n}$, and a random vector $\vecd_y \in \mathbb{Z}_2^k$. 
    \item Simulating $P(x \in \{0,1\}^n)$. 
    \begin{itemize}
        \item $(x_0 \in \{0,1\}^{r+s}, x_1 \in \{0,1\}^{n-r-s})\gets \Gamma(x)$. 
        \item $y \gets Q(x_0)$. 
        \item $\vecu \gets (\matC_y\cdot \Gamma (x)+\vecd_y)$. 
        \item Output $(y,\vecu)$. 
    \end{itemize}
    \item Simulating $P^{-1}(y\in \{0,1\}^r,\vecu\in \mathbb{Z}_2^{k})$. 
    \begin{itemize}
        \item $z \gets \matC_y^{-1}\cdot (\vecu-\vecd_y)$. 
        \item $(z_0 \in \{0,1\}^{r+s}, z_1\in \{0,1\}^{n-r-s}):=z$.
        \item If $Q(z_0)=y$, output $\Gamma^{-1}(z)$. Otherwise, output $\bot$. 
    \end{itemize}
    \item Simulating $D'(y \in \{0,1\}^r,\vecv \in \mathbb{Z}_2^k)$
    \begin{itemize}
        \item $\matA_y^{(1)\bot}:=$ last $n-r-s$ columns of $\matC_y$. 
        \item Output $1$ iff $\vecv^{T}\cdot \matA^{(1)}_y=0^{n-r-s}$. 
    \end{itemize}
\end{enumerate}
\paragraph{Step 3: Utilizing the collapsing of coset partition function.}
In this part of the hybrid argument, we utilize the fact that our coset partition function $Q$ is a collapsing hash function, which essentially means that, for any superposition of inputs (corresponding to a particular output) to $Q$ that the adversary can produce, it is computationally infeasible to tell whether a challenger measured the superposition in the computational basis, or didn't measure the superposition at all.  Now more concretely, we show that, for a fixed $y$,  $\mathcal{A}_1^{Q}$ produces the following state on registers $(\mathcal{U}, \mathcal{R})$,  
\[
    \sum_{\overline{\Gamma(x)}:Q(\overline{\Gamma(x)})=y, \widetilde{\Gamma(x)}
    }\matC_y \begin{bmatrix}
        \overline{\Gamma(x)}\\
        \widetilde{\Gamma(x)}
    \end{bmatrix}+\vecd_y
    \]
    Since $Q$ is collapsing, this state is indistinguishable from the state 
    \[
    \sum_{ \widetilde{\Gamma(x)}
    }\matC_y \begin{bmatrix}
        \overline{\Gamma(x)}\\
        \widetilde{\Gamma(x)}
    \end{bmatrix}+\vecd_y
    \]
    where $\overline{\Gamma(x)}$ is some fixed value such that $Q(\overline{\Gamma(x)})=y.$
    
\paragraph{Step 4: Unsimulating the bloated dual.}
Since $Q$ is a $(r+s,r,s)$ coset partition function, the preimage set $Q^{-1}(y)$ has size $2^s$ and is a coset of a linear space of dimension $s$.  
Using this fact,  crucially, we observe that collapsing the preimages of $Q$ essentially projects the $\mathcal{U}$ register to  a superposition over a coset of colspan($\matA_y^{(1)}$) in colspan($\matA_y$).
Therefore, using knowledge of $\matA_y $, the reduction can discard $Q$, and actively collapse to cosets of  colspan$(\matA_y^{(1)})$.
Thus, through a series of hybrids, we show that the state output by $\mathcal{A}_1$ is a superposition over a coset of a \emph{specific} subspace of colspan($\matA_y$), namely colspan($\matA^{(1)}_y$). This is the coset collapsing game with $T=\text{colspan}(\matA^{(1)}_y)$, and $S=\text{colspan}(\matA_y)$. Now, the final few hybrids of our reduction serve to randomize $T$, at which point the game is exactly reduced to the coset collapsing game.\jnote{changed several $\mathcal{A}_y$ to $\matA_y$ above, which I think was what was meant?}\snote{yes, thanks!}

\paragraph{Information-theoretic hardness of the coset collapsing game.} It remains to show hardness of the coset collapsing game. Intuitively, since $T$ is a random subspace of $S$ chosen by the challenger, it should be hard for any adversary to distinguish whether or not a measurement on a coset of $T$ in $S$ was performed, \emph{after} access to $T^{\perp}$ has been revoked. This is established in three steps.
\begin{itemize}
    \item Un-bloat the dual from $T^{\perp}$ to $S^{\perp}$. It is standard practice that access to $O_{T^{\perp}}$ for a random superspace $T^{\bot}$ can be shown to be indistinguishable from access to $O_{S^{\perp}}$. However, this only holds if $T^\perp$ is random and \emph{independent of the rest of the experiment}. In our case, note that the challenger also depends on $T$ to perform its intermediate coset measurement. Our saving grace is that we only need to argue about indistinguishability from $O_{S^{\perp}}$ \emph{before} the challenger's measurement, since the adversary's access to this oracle is revoked afterwards. In fact, this is the \textbf{only} place in our proof where we make use of the crucial fact that the adversary's access to the dual oracle is revoked after the challenge phase.
    
    \item Since $S$ is public, access to $O_{S^{\perp}}$ gives no additional information to the adversary, so we can ignore it.
    
    \item Re-viewing our game in this modified setting, the adversary's success in this game is exactly the diamond distance between the standard basis measurement channel and the coset-measurement channel over a random choice of subspace. This is something that can be directly analyzed, and we formally show this diamond distance to be exponentially small in Lemma \ref{lem:change-of-basis-coset-collapsing}.
\end{itemize}

\subsection{From PFCs to QFHE and Obfuscation} \label{subsec:techniques_3}

Once we are equipped with publicly-verifiable PFCs with classical keys, our main results follow naturally from the outline proposed in \cite{bartusek2023obfuscation}, with one notable exception. In particular, we use PFCs to upgrade a privately-verifiable QFHE scheme satisfying certain properties to a publicly-verifiable scheme. This compiler is presented in Section \ref{sec:pvQFHE}. Then, given a publicly-verifiable QFHE (in the classical oracle model) we use a straightforward ``verify-then-decrypt'' construction to obtain classical obfuscation for pseudo-deterministic quantum circuits in the classical oracle model (Section \ref{sec:obfuscation}).

The notable departure from \cite{bartusek2023obfuscation} mentioned above is necessitated by our requirement of \emph{succinctness}. While neither the privately-verifiable QFHE scheme we start with, nor the PFC-based compiler are themselves succinct, we show how to restore succinctness with a generic approach based on (post-quantum) succinct ideal obfuscation of \emph{classical} circuits, in the classical oracle model. More concretely, given a Turing machine $P$ (which may take in a long input, but has a short output), a post-quantum succinct ideal obfuscator outputs a circuit $\tilde{P}$ of size bounded by $|P|$, which is the size of the $P$'s \emph{code}, as opposed it its \emph{run-time}. We provide the first construction of this object by combining (classical) FHE and post-quantum non-interactive succinct arguments of knowledge in the quantum random oracle model \cite{chiesa2019snark}. The proof establishing ideal obfuscation of our construction requires a careful application of techniques from \cite{chiesa2019snark}, which we formalize in Section \ref{sec:input-succinct}.\\\snote{added details  on succint ideal obf:} \\
We describe some high level ideas below:

\paragraph{Succint Ideal Obfuscation}
Classically, an input succinct obfuscator can be constructed through fully homomorphic encryption (FHE) and SNARKs (succinct non-interactive arguments of knowledge) as follows: the obfuscated circuit $\tilde{P}$ consists of the FHE encryption of $P$, $\FHE.\Enc(P)$, along with an oracle $O_{\sk}$, which has the FHE encryption secret key $\sk$ hardcoded. $O_{\sk}$ takes as input an evaluated ciphertext and a SNARK of its honest evaluation. It verifies the SNARK and outputs the decryption of the ciphertect (using $\sk$) iff the verification procedure outputs 1.  

In the setting where an adversary only makes classical queries to $O_{\sk}$, prior work (\cite{classicalsuccintobf}) establishes a simulation strategy (and therefore establishes ideal obfuscation). However, in our setting, the adversary makes \emph{superposition queries} to $O_{\sk}$. In order to show a simulation strategy in this setting, we utilize the knowledge extractor for Micali's \jnote{cite} SNARK construction \cite{micalisnark} in the quantum random oracle model (QROM). In more detail, \cite{chiesa2019snark} showed that the SNARK construction of Micali, which maps any probabilistically checkable proof (PCP) into a corresponding non-interactive argument,  is unconditionally secure in the QROM. Along the way, they also prove that the Micali construction inherits proof of knowledge properties from the underlying PCP, by constructing an extractor $\mathcal{E}^*$, which, given a valid proof $\pi$ (one that passes SNARK verification), outputs a valid witness with high probability. 
Our simulation strategy, which we call the $\mathbf{FindWitness}$ procedure, is essentially the witness extractor $\mathcal{E}^*$, with slight changes required to handle superposition query access. In more detail, on input a SNARK proof $\pi$ and some evaluated ciphertext $\ct$, our  simulator runs the $\mathbf{FindWitness}$ procedure \emph{coherently} \jnote{added coherently} iff $\pi$ is a valid proof. Then, by utilizing the guarantees provided by $\mathcal{E}^*$, as well as the compressed oracle method of Zhandry \cite{zhacomporacle},  we show that the simulator can extract the intended (potentially large) input $x$ and query it's oracle, establishing ideal obfuscation security.

\section{Preliminaries}
\subsection{Notation}

\begin{definition}[Pseudo-deterministic quantum circuit]
\label{def: deterministic}
A family of psuedo-deterministic quantum circuits $\{Q_\secp\}_{\secp \in \mathbb{N}}$ is defined as follows. The circuit $Q_\secp$ takes as input a classical string $x \in \{0,1\}^{n(\secp)}$ and outputs a bit $b \gets Q_\secp(x)$. The circuit is pseudo-deterministic if for every sequence of classical inputs $\{x_\secp\}_{\secp \in \mathbb{N}}$, there exists a sequence of outputs $\{b_\secp\}_{\secp \in \mathbb{N}}$ such that \[\Pr[Q_\secp(x_\secp) \to b_\secp] = 1-\negl(\secp).\] We will often leave the dependence on $\secp$ implicit, and just refer to pseudo-deterministic circuits $Q$ with input $x$. In a slight abuse of notation, we will denote by $Q(x)$ the bit $b$ such that $\Pr[Q(x) \to b] = 1-\negl(\secp)$.

Sometimes, we will consider (potentially succinct) \emph{descriptions} of pseudo-deterministic circuits, which we write as $P_Q$, and refer to as (pseudo-deterministic) \emph{programs}. That is, while $Q$ is understood to consist of the list of gates in the circuit, $P_Q$ is understood to be an arbitrary (potentially short) description from which the entire list of gates $Q$ can be derived. Thus, in general it may be the case that $|P_Q| \ll |Q|$.
\end{definition}

\subsection{Fully-homomorphic encryption}

We will use fully-homomorphic encryption for classical circuits, defined as follows.

\begin{definition}[FHE]\label{def:FHE}
    A fully-homomorphic encryption scheme consists of the following algorithms.
    \begin{itemize}
        \item $\Gen(1^\secp,D) \to (\pk,\sk)$: The p.p.t. key generation algorithm takes as input the security parameter $1^\secp$ and a circuit depth $D$, and returns a public key $\pk$ and secret key $\sk$.
        \item $\Enc(\pk,x) \to \ct$: The p.p.t. encryption algorithm takes as input the public key $\pk$ and a plaintext $x$, and outputs a ciphertext $\ct$.
        \item $\Eval(\ct,C) \to \widetilde{\ct}$: The classical deterministic evaluation algorithm takes as input a ciphertext $\ct$ and the description of a classical deterministic computation $C$, and outputs an evaluated ciphertext $\widetilde{\ct}$.
        \item $\Dec(\sk,\ct) \to x$: The classical deterministic decryption algorithm takes as input the secret key $\sk$ and a ciphertext $\ct$ and outputs a plaintext $x$.
    \end{itemize}
    It should satisfy the following properties.
    \begin{itemize} 
        \item \textbf{Correctness.} For any $(\pk, \sk) \in \Gen(1^\secp, D)$, $\ct \in \Enc(\pk, x)$ and circuits $C$ of depth at most $D$
        \begin{align*}
            \P\left[\Dec(\sk, \ct') = C(x) : \begin{array}{c}
                 (\pk, \sk) \leftarrow \Gen(1^\secp)\\
                  \ct \leftarrow \Enc(\pk, x)\\
                  \ct' \leftarrow \Eval(\ct, C)
            \end{array} \right] = 1 - \negl(\secp).
        \end{align*}
        \item \textbf{Privacy.} For any QPT adversary $\calA = \{\calA_\secp\}_{\secp \in \bbN}$, polynomial $D$ and messages $x_0, x_1$,
        \begin{align*}
            \Bigl|
            \Pr\left[ \calA(\pk, \ct) = 1 : (\pk, \sk) \gets 
            \begin{array}{r}
                (\pk, \sk) \gets \Gen(1^\secp, D)\\
                \ct \gets \Enc(\pk, x_0)
            \end{array} \right]
            \\
            - \Pr\left[\calA(\pk, \ct) = 1 : (\pk, \sk) \gets \begin{array}{c}
                (\pk, \sk) \gets \Gen(1^\secp, D)\\
                \ct \gets \Enc(\pk, x_1)
            \end{array} \right]
            \Bigr| = \negl(\secp).
        \end{align*}
        \item \textbf{Compactness.} There exists a fixed polynomial $p$ such that any classical circuit $C$ of depth at most $D$ and it holds that
        \begin{itemize}
            \item $|\pk|, |\ct|, |\ct'| \le p(\secp, D, |x|)$ and
            \item the runtimes $\Gen(1^\secp, D, |x|)$, $\Enc(\pk, x)$, $\Dec(\sk, \ct')$ are bounded by $p(\secp, D, |x|)$.
        \end{itemize}
    \end{itemize}
\end{definition}

\subsection{One-shot signatures (OSS)}\label{sec:oss-sz}

\begin{definition}[One-Shot Signature Scheme]\label{def:OSS} 
A \emph{one-shot signature (OSS)} scheme is a tuple of algorithms $\OSS = (\mathsf{Setup}, \mathsf{Gen}, \mathsf{Sign}, \mathsf{Ver})$  satisfying the following:
\begin{itemize}
  \item $\mathsf{Setup}(1^\lambda) \to \O$: \emph{A classical probabilistic polynomial-time algorithm that given the security parameter $1^\secp$, outputs a classical deterministic circuit $\O$.}

  \item $\Gen^\O(\ell) \to (\vk, \ket{\sk})$: \emph{A quantum polynomial-time algorithm that takes as input a message length $\ell$, has oracle access to $\O$, and samples a classical public key $\vk$ and quantum secret key $\ket{\sk}$.}

  \item $\mathsf{Sign}^\O(\ket{\sk}, m) \to \sigma$: \emph{A quantum polynomial-time algorithm that has oracle access to $\O$, and, given the quantum key $\ket{\sk}$ and message $m \in \{0,1\}^\ell$ produces a classical signature $\sigma$.}

  \item $\Ver^\O(\vk, m, \sigma) \to \{\bot, \top\}$: \emph{A classical deterministic polynomial-time algorithm that has oracle access to $\O$ and verifies a message $m$ and signature $\sigma$ relative to a public key $\vk$.}
\end{itemize}

\textbf{Correctness:} For any $\ell$ and $m \in \{0,1\}^\ell$, 
\[
\Pr_{\substack{
\O \leftarrow \mathsf{Setup}(1^\lambda,1^\ell) \\
(\vk, \ket{\sk}) \leftarrow \mathsf{Gen}^\O \\
\sigma \leftarrow \mathsf{Sign}^\O(\ket{\sk}, m)
}}[
\mathsf{Ver}^\O(\vk, m, \sigma) = \top
] \geq 1 - \mathsf{negl}(\lambda).
\]

\textbf{Security (Strong unforgeability):} For any $\ell$ and QPT algorithm $\mathcal{A}$,
\[
\Pr_{\substack{
\O \leftarrow \mathsf{Setup}(1^\lambda,1^\ell) \\
(\vk, m_0, m_1, \sigma_0, \sigma_1) \leftarrow \mathcal{A}^\O
}}[
\mathsf{Ver}^\O(\vk, m_0, \sigma_0) = \top \land \mathsf{Ver}^\O(\vk, m_1, \sigma_1) = \top
\land \sigma_0 \neq \sigma_1] \leq \mathsf{negl}(\lambda).
\]
\end{definition}
Note that an adversary breaks the strong unforgeability security guarantee even if it produces two different signatures $\sigma_0 \neq \sigma_1$ of the same message $m_0 = m_1$.

We now describe the construction of one-shot signatures in \cite{shmueli2025one}. We will later use this construction in a non-black-box way in \Cref{sec:construction-functional-commitment}.

Parameters: Let $\secp \in \bbN$ be the security parameter. Let $s = 16 \secp$ and let $n, r, k \in \mathbb{N}$ be such that $r = s \cdot (\secp - 1)$, $n = r + \frac{3}{2} \cdot s$ and $k = n$.
\begin{itemize}
  \item $\OSS.\Setup(1^\secp)$
  \begin{itemize}
      \item Sample a random permutation $\Pi: \{0,1\}^n \rightarrow \{0,1\}^n$. Let $H(x)$ be the first $r$ output bits of $\Pi(x)$ and let $J(x)$ denote the remaining $n-r$ bits. Interpret $J(x) \in \mathbb{Z}_2^{n-r}$.
      \begin{align*}
          \Pi(x) = H(x) \| J(x).
      \end{align*}
      \item Sample a random function $F: \{0,1\}^r \rightarrow \{0,1\}^{k \cdot (n-r+1)}$. 
      Interpret $F(y) = (\matA_y, \vecb_y)$ where $\matA_y \in \mathbb{Z}_2^{k \times (n-r)}$ is a full-rank matrix and $\vecb \in \mathbb{Z}_2^{n}$.

      Let $S_y = \{\matA_y \vecx + \vecb_y \mid \vecx \in \bbZ_2^{n-r} \}$ be the coset defined by $(\matA_y, \vecb_y)$. Let $S_{y,b} := \{\vecu \in S_y \mid u_1 = b\}$ be the affine subspace of $S_y$ whose first coordinate entry is $b$.
      \item Define the functions $P, P^{-1}, D$ as follows
      \begin{align*}
          P(x) &= (y, \matA_y J(x) + \vecb_y) \in \{0,1\}^n \rightarrow \{0,1\}^r \times \bbZ_2^{k} \text{ where } y = H(x)\\
          P^{-1}(y, \vecu) &= \begin{dcases}
              \Pi^{-1} (y, \vecz) &\text{if }\exists \vecz \in \bbZ_2^{n-r} \text{ such that } \matA_y \vecz + \vecb_y = \vecu\\
              \bot &\text{otherwise.}
          \end{dcases}\\
          D(y, \vecv) &= \begin{dcases}
              1 &\text{if } \vecv^\top \matA_y = \mathbf{0}^{n-r}\\
              0 &\text{otherwise.}
          \end{dcases}
      \end{align*}
      Let $S_y^\bot := \{\vecv \mid D(y, \vecv) = 1\}$ be the subspace dual to $S_y$.
      \item Output $\O \coloneqq (P, P^{-1}, D)$.
  \end{itemize}
  \item $\OSS.\Gen^\O(1^\secp)$
  \begin{itemize}
      \item Parse $\O = (P, P^{-1}, D)$.\footnote{In the construction of Pauli functional commitments in \Cref{sec:construction-functional-commitment}, it will be relevant that $\OSS.\Gen$ does not actually need access to the $D$ oracle, the oracles $P$ and $P^{-1}$ suffice.}
      \item Prepare a uniform superposition over all strings $x \in \{0,1\}^n$ on register $\calX$, and using the $P$ oracle, compute $P(x):= (y, \vecu)$ onto registers $\calY, \calU$. Measure register register $\calY$ to get $y$.
      \item Use the $P^{-1}$ oracle on registers $(\calY, \calU)$ to uncompute the $\calX$ register.
      \item Output $\vk := y$ and the state on $\calU$ as $\ket{\sk}$.
  \end{itemize}
  \item $\OSS.\Sign^\O(\ket{\sk}, m \in \{0,1\})$
  \begin{itemize}
        \item Parse $\O = (P, P^{-1}, D)$.
      \item Let $\calU$ be the register that $\ket{\sk}$ lives on. Measure the first qubit $\widehat{m}$ of the $\calU$ register in the standard basis. If the result is $\widehat{m} = m$, then measure the entire $\calU$ register in the standard basis to get $\vecu$ and output $\vecu$.
      \item If $\widehat{m} \neq m$, apply the rotation $H^{\otimes n} \circ \mathsf{Phase}^{D} \circ H^{\otimes n}$ to the $\calU$ register, where $\mathsf{Phase}^D$ is the map $\ket{\vecv} \mapsto (-1)^{D(\vecv)} \ket{\vecv}$. Measure the register $\calU$ to get $\vecu$. Abort if the first bit of $\vecu$ is $0$, and otherwise output $\vecu$.
  \end{itemize}
  \item $\OSS.\Ver^\O(\vk, m, \sigma)$
  \begin{itemize}
      \item Parse $\vk = y$ and $\sigma = \vecu$.
      \item If the first bit of $\vecu$ is not $m$ and reject.
      \item Compute $P^{-1}(y, \vecu)$ and reject if the output is $\bot$. Accept otherwise.
  \end{itemize}
\end{itemize}

\subsection{Measure and re-program}

\begin{theorem}[Measure and re-program \cite{DFMS19,DFM20}\footnote{This theorem was stated more generally in \cite{DFMS19,DFM20} to consider the drop in expectation for each specific $a^* \in A$, and also to consider a more general class of quantum predicates. }]\label{thm:measure-and-reprogram}
Let $A,B$ be finite non-empty sets, and let $q \in \bbN$. Let $\calA$ be an oracle-aided quantum circuit that makes $q$ queries to a uniformly random function $H: A \to B$ and then outputs classical strings $(a,z)$ where $a \in A$. There exists a two-stage quantum circuit $\Sim[\calA]$ such that for any predicate $V$, it holds that 
\begin{align*}\Pr\left[V(a,b,z) = 1: \begin{array}{r} (a,\state) \gets \Sim[\calA] \\ b \gets B \\ z \gets \Sim[\calA](b,\state) \end{array}\right] \geq \frac{\Pr\left[V(a,H(a),z) = 1 : (a,z) \gets \calA^H\right]}{(2q+1)^2}.\end{align*}

Moreover, $\Sim[\calA]$ operates as follows.

\begin{itemize}
    \item Sample $H: A \to B$ as a $2q$-wise independent function and $(i,d) \gets (\{0,\dots,q-1\} \times \{0,1\}) \cup \{(q,0)\}$.
    \item Run $\calA$ until it has made $i$ oracle queries, answering each query using $H$. 
    \item When $\calA$ is about to make its $(i+1)$'th oracle query, measure its query registers in the standard basis to obtain $a$. In the special case that $(i,d) = (q,0)$, the simulator measures (part of) the final output register of $\calA$ to obtain $a$.
    \item The simulator receives $b \gets B$.
    \item If $d = 0$, answer $\calA$'s $(i+1)$'th query using $H$, and if $d=1$, answer $\calA$'s $(i+1)$'th query using $H[a \to b]$, which is the function $H$ except that $H(a)$ is re-programmed to $b$.
    \item Run $\calA$ until it has made all $q$ oracle queries. For queries $i+2$ through $q$, answer using $H[a \to b]$.
    \item Measure $\calA$'s output $z$.
\end{itemize}

Note that the running time of $\Sim[\calA]$ is at most $\poly(q,\log|A|,\log|B|)$ times the running time of $\calA$.

\end{theorem}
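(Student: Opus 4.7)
The plan is to prove this via the measure-and-reprogram technique, viewing the final check $V(a, H(a), z)$ in the real experiment as an implicit $(q+1)$'th oracle query at position $a$. The simulator's job is then to ``catch'' the position at which the value at $a$ is being effectively consumed by measuring $\calA$'s query register at a uniformly random explicit query index $i$ and then running the remainder of $\calA$ under a suitably reprogrammed oracle $H[a \to b]$. The bit $d$ disambiguates two subcases: whether the measured explicit query itself should be answered with the original $H$ (so the implicit query happens later) or with the reprogrammed value $b$ (so the measured query coincides with the implicit one). The special case $(i,d) = (q, 0)$ handles the event that $a$ is never explicitly queried and appears only in $\calA$'s output register.

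I would first set up notation in the purified-oracle picture, with $\ket{\psi_i}$ denoting the joint state of $\calA$ and the oracle register right before the $(i+1)$'th query, and $O_H$, $O_{H[a \to b]}$ the standard and reprogrammed oracle unitaries. Writing $p = \Pr[V(a, H(a), z) = 1 : (a, z) \gets \calA^H]$ for the real probability, I would expand $p$ as a sum over the position at which the implicit query at $a$ is ``resolved''; more precisely, a telescoping identity expressing $O_H - O_{H[a \to b]}$ as a sum of localized operators lets one write the difference between running everything under $O_H$ and running it under the reprogrammed variant as a sum of $q$ hybrid differences, one per query position. A standard-basis measurement of the query register at position $i$ picks out the $a$-component of $\ket{\psi_i}$ that contributes to exactly one term of this telescoping, up to controllable interference.

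The main technical step would be to combine this telescoping with Cauchy-Schwarz and uniform averaging over the $2q+1$ choices of $(i, d)$ to obtain the claimed $(2q+1)^{-2}$ factor: one power of $(2q+1)$ comes from guessing the correct query index at which the implicit query resolves, while the second power arises from the Cauchy-Schwarz step that converts an amplitude inequality into a probability inequality (equivalently, from separating the $d = 0$ and $d = 1$ trajectories whose coherent sum realizes the unmeasured behavior). The hard part will be handling quantum interference between different query positions: because $\calA$'s queries can be in coherent superposition across positions, the inner-product bounds must hold for arbitrary predicates $V$, and the Cauchy-Schwarz step must remain tight even when the $q+1$ effective query positions all carry comparable amplitude. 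Finally, replacing the truly random $H$ by a $2q$-wise independent function is justified because the simulator's entire experiment queries and reprograms $H$ at at most $2q$ distinct inputs, so the joint distribution of all observable quantities is information-theoretically identical to that under a uniformly random $H$.
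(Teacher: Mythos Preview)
The paper does not prove this theorem; it is stated in the preliminaries (Section~2) as a cited result from \cite{DFMS19,DFM20} and is used as a black box in the soundness proof of the publicly-verifiable QFHE scheme. There is therefore no ``paper's own proof'' to compare against.

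Your sketch is a reasonable high-level outline of the original Don--Fehr--Majenz--Schaffner argument: treating the final predicate check as an implicit $(q{+}1)$'th query, telescoping across query positions, and using Cauchy--Schwarz together with uniform guessing over the $2q{+}1$ choices of $(i,d)$ to obtain the $(2q{+}1)^{-2}$ loss. The replacement of a truly random $H$ by a $2q$-wise independent function is also the standard derandomization used there. One minor caution: your attribution of the two factors of $(2q{+}1)$ to ``guessing the index'' and ``Cauchy--Schwarz'' is morally right but somewhat loose; in the actual proof the loss arises from an operator-norm bound on a sum of $2q{+}1$ projected branches, and the square appears when passing from amplitudes to probabilities for arbitrary predicates $V$. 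If you were to write this out in full you would need to be careful that the bound holds uniformly over $V$, which is where the original papers do the real work.
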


\subsection{The Compressed Oracle Technique 
\label{sec:comporacle}}
The compressed random oracle \cite{zhandry2019record} acts on query and database registers $(\calQ, \calD)$ as follows:
\def\CStO{\mathsf{CStO}}
\def\Decomp{\mathsf{Decomp}}
\begin{align*}
    \CStO = \Decomp \circ \CStO' \circ \Decomp \circ \mathsf{Increase},
\end{align*}
where $\mathsf{Increase} \ket{x, y} \otimes \ket{D} = \ket{x, y} \otimes \ket{D} \ket{(\bot, 0^n)}$ essentially adds another point in the database, and where $\Decomp$ and $\CStO'$ are defined as follows:
\begin{itemize}
    \item $\Decomp$ is defined by $\ket{x, u}_\calQ \otimes \ket{D}_\calD \mapsto \ket{x, u}_{\calQ} \otimes \Decomp_x\ket{D}_{\calD}$, where $\Decomp_x$ is defined by its action on basis states as follows. Let $D$ be such that $D(x) = \bot$:
    \begin{align*}
        \ket{D} &\mapsto \frac{1}{\sqrt{|\calY|}} \sum_{y \in \calY} \ket{D \cup \{(x,y)\}}\\
        \frac{1}{\sqrt{|\calY|}} \sum_{y \in \calY} \ket{D \cup \{(x,y)\}} &\mapsto \ket{D}\\
        \frac{1}{\sqrt{|\calY|}} \sum_{y \in \calY} (-1)^{y \cdot u} \ket{D \cup \{(x,y)\}} &\mapsto \frac{1}{\sqrt{|\calY|}} \sum_{y \in \calY} (-1)^{y \cdot u} \ket{D \cup \{(x,y)\}} \quad \text{ for } u \neq 0
    \end{align*}
    In words, $\Decomp_x$ swaps the states $\ket{D}$ and $\frac{1}{\sqrt{|\calY|}} \sum_{y \in \calY} \ket{D \cup \{(x,y)\}}$, and acts as the identity on the space orthogonal to these two states.
    \item $\CStO'$ maps $\ket{x, u}_\calQ \otimes \ket{D}_\calD \mapsto \ket{x, u \oplus D(x)} \otimes \ket{D}$.
\end{itemize}

\subsection{Useful Lemmas and Definitions}
\begin{definition}[Coset Partition Functions]\label{def:cosetpartfunc}
\cite{shmueli2025one}
For $n, \ell \in \mathbb{N}$, such that $\ell\leq n$, we say a function $Q:\{0,1\}^n\rightarrow \{0,1\}^m$ is a $(n,m, \ell)$ coset partition function if, for each $y$ in the image of $Q$, the preimage set $Q^{-1}(y)$ has size $2^{\ell}$ and is a coset of a linear space of dimension $\ell$. Different preimage sets are allowed to be cosets of different linear spaces. 
\end{definition}
\begin{lemma}[Lemma 3.6 of \cite{bartusek2023obfuscation}]\label{lem:evasive-oracles}
    For each $\secp \in \mathbb{N}$, let $\mathcal{K}_\secp$ be a set of keys, and let $\{\ket{\psi_k}, O_k^0, O_k^1, B_k\}_{k \in \mathcal{K}_\secp}$ be a set of state $\ket{\psi_k}$, classical functions $O_k^0, O_k^1$, and sets of inputs $B_k$. Suppose that the following properties holds.
    \begin{enumerate}
        \item The oracles $O_k^0, O_k^1$ are identical on inputs outside of $S_k$.
        \item For any oracle-aided unitary $U$, with $q = q(\secp)$ queries, there is some $\epsilon = \epsilon(\secp)$ such that\footnote{$\Pi[B_k]$ is the projector onto subspace spanned by $B_k$. Informally what this means is that the probability of that $U$ given query access to ${\cal O}^0_k$ outputs $b\in B_k$ is at most $\epsilon$.} $$\mathbb{E}_{k \gets \mathcal{K}} \left[ \| \Pi[B_k] U^{O_k^0} \ket{\psi_k}\|^2\right] \le \epsilon$$
    \end{enumerate}
    Then, for any oracle-aided unitary $U$ with $q = q(\secp)$ queries and for every distinguisher $D$,
    \begin{align*}
        \left|\Pr_{k \gets \mathcal{K}} \left[D\left(k, U^{O_k^0} \ket{\psi_k}\right) = 0 \right] - \Pr_{k \gets \mathcal{K}} \left[D\left(k, U^{O_k^1} \ket{\psi_k}\right) = 0 \right]\right| \le 4 q \sqrt{\epsilon}.
    \end{align*}
\end{lemma}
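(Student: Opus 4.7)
The plan is to prove this pointwise in $k$ using a BBBV / one-way-to-hiding style argument, and then average over $k$ via Jensen's inequality. I read Property~1 as saying the oracles differ only on $B_k$ (so the $S_k$ in the stated condition is $B_k$, which is the case in the intended applications). Fix $k$ for the moment, and let $|\phi_i^k\rangle$ denote the state of $U$ (together with $\ket{\psi_k}$) just before its $i$-th oracle query when running with $O_k^0$. Set $p_i^k := \|\Pi[B_k]|\phi_i^k\rangle\|^2$, the weight of the $i$-th query on the differing set. The standard BBBV lemma then gives
\begin{align*}
\bigl\|U^{O_k^0}\ket{\psi_k} - U^{O_k^1}\ket{\psi_k}\bigr\| \leq 2\sqrt{q \cdot \sum_{i=1}^q p_i^k},
\end{align*}
which in turn upper bounds the distinguishing advantage of any downstream $D$ between the two output states (up to the constant factor between state-norm distance and trace distance).

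The crux is bounding the total query weight $\sum_i p_i^k$ using the output-evasiveness hypothesis (Property~2). For each $i \in [q]$, I would define a modified oracle-aided unitary $U_i$ that runs $U$ for its first $i-1$ queries against $O_k^0$, and then, in place of the $i$-th query, coherently copies the $i$-th query-address register into a fresh output register via CNOTs (this is a valid unitary). Then $U_i$ makes at most $q-1 \leq q$ queries, and the probability that measuring its output register yields an element of $B_k$ is exactly $p_i^k$. Applying Property~2 to $U_i$ therefore gives $\mathbb{E}_k[p_i^k] \leq \epsilon$. Summing over $i \in [q]$, taking the expectation over $k$, and invoking Jensen,
\begin{align*}
\mathbb{E}_k\!\left[2\sqrt{q \cdot \sum_{i=1}^q p_i^k}\right] \leq 2\sqrt{q \cdot \mathbb{E}_k\!\left[\sum_{i=1}^q p_i^k\right]} \leq 2\sqrt{q \cdot q \epsilon} = 2q\sqrt{\epsilon},
\end{align*}
and the remaining factor of $2$ in the claimed $4q\sqrt{\epsilon}$ absorbs the standard norm-to-advantage conversion.

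The main obstacle is this second step: the hypothesis constrains only the \emph{output} of $U$, whereas BBBV needs a bound on \emph{intermediate query} weights. The ``CNOT-the-query-address-into-the-output'' trick bridges this gap, but relies crucially on two things that should be pinned down carefully when writing the full proof: that Property~2 is quantified over \emph{any} oracle-aided unitary with $q$ queries (so it applies to the derived $U_i$), and that $B_k$ can legitimately be viewed both as a set of oracle inputs (for BBBV) and as a set of allowed output-register strings on which $\Pi[B_k]$ is defined (for Property~2). Both are standard identifications, but they are what allow an output-only evasiveness hypothesis to be leveraged against intermediate queries, which is the whole content of the lemma.
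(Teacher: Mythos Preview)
The paper does not prove this lemma; it is stated in the preliminaries as Lemma~3.6 of \cite{bartusek2023obfuscation} and used as a black box. So there is no in-paper proof to compare against.

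Your argument is the standard one and is correct. A couple of minor remarks: your reading of Property~1 (that $S_k$ should be $B_k$) is indeed what is needed and what is used in all applications in the paper. Your BBBV hybrid should be set up so that the pre-query state $|\phi_i^k\rangle$ is defined with respect to $O_k^0$ (which it is, the way you wrote it), since that is the oracle for which Property~2 gives the evasiveness bound on the derived $U_i$. The ``copy-the-query-address-into-the-output'' construction is exactly the right bridge between an output-evasiveness hypothesis and an intermediate-query bound, and you have correctly noted the two identifications it relies on. Finally, your accounting actually yields $2q\sqrt{\epsilon}$ (distinguishing advantage $\le$ trace distance $\le$ Euclidean distance between the pure outputs), so the extra factor of $2$ in the stated $4q\sqrt{\epsilon}$ is slack, not something you need to ``absorb.''
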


\paragraph{Small-range distribution.} \cite{zhandry2021construct} 
Given a distribution $D$ on $\mathcal{Y}$, define $\mathsf{SR}^{D}_{r}(\mathcal{X})$ as the following distribution
on functions from $\mathcal{X}$ to $\mathcal{Y}$:

\begin{itemize}
    \item For each $i \in [r]$, choose a random value $y_i \in \mathcal{Y}$ according to the distribution $D$.
    \item For each $x \in \mathcal{X}$, pick a random $i \in [r]$ and set $O(x) = y_i$.
\end{itemize}




\begin{lemma}
    [Corollary~7.5 of \cite{zhandry2021construct}]\label{lem:zhandry-small-range}
The output distributions of a quantum algorithm making $q$ quantum queries to an oracle either drawn from
$\mathsf{SR}^{D}_{r}(\mathcal{X})$ or $D^{\mathcal{X}}$ are $\ell(q)/r$-close, where $\ell(q) = \pi^{2} (2q)^{3}/3 < 27q^{3}$.
\end{lemma}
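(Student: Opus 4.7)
The plan is to use Zhandry's polynomial method for quantum query complexity, applied to the random oracle setting. First, I would invoke the fact that for any $q$-query quantum algorithm $\calA^O$, the probability of producing any particular classical output $z$ can be written as a multilinear polynomial $p_z$ of degree at most $2q$ in the indicator variables $x_{a,b} := \mathbf{1}[O(a) = b]$ indexed by pairs $(a,b) \in \calX \times \calY$. This reduces the problem of bounding the total-variation distance between the two output distributions to uniformly bounding $\left| \mathbb{E}_{O \sim D^\calX}[p_z] - \mathbb{E}_{O \sim \mathsf{SR}^D_r(\calX)}[p_z] \right|$ across all outputs $z$.

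I would then expand $p_z$ into monomials and analyze each expectation. A monomial supported on a set of distinct inputs $a_1, \ldots, a_k$ with $k \leq 2q$ (paired with target outputs $b_1, \ldots, b_k$) has expectation $\prod_j D(b_j)$ under the truly random oracle, because oracle values at distinct points are independent. Under the small-range distribution, express $O$ as $O(a) = y_{\pi(a)}$ for $y_1, \ldots, y_r \stackrel{\text{iid}}{\sim} D$ and a uniformly random $\pi: \calX \to [r]$. Conditioned on the event that $\pi(a_1), \ldots, \pi(a_k)$ are all distinct, the oracle values $O(a_1), \ldots, O(a_k)$ are themselves iid from $D$ and the two expectations match exactly; the entire discrepancy comes from ``collision'' configurations of $\pi$.

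The main obstacle, and the step driving the cubic-in-$q$ scaling, is controlling the aggregate collision contribution across all monomials without paying multiplicatively for the (potentially huge) number of monomials in $p_z$. Following Zhandry's approach, rather than bounding each monomial separately I would express the full expectation symmetrically over the joint distribution of $(\pi(a_1), \ldots, \pi(a_{2q}))$ and bound the total-variation distance between this joint distribution and that of $2q$ iid uniform indices in $[r]$ using a birthday-style argument. A careful combinatorial count of the collision events, together with tight estimates on the hypergeometric-type sums that arise when summing $\binom{k}{2}/r$ contributions against the normalization of $p_z$, yields the bound $\ell(q)/r = \pi^2 (2q)^3 / (3r)$, with the $\pi^2$ constant arising from the analytic inequalities used to tighten Zhandry's original counting.

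Finally, since the polynomial-expectation bound applies uniformly to every output $z$, summing over outputs --- equivalently, taking the supremum over Boolean predicates on the output register --- immediately yields the stated bound on statistical distance between the two output distributions. The argument is non-adaptive in a useful sense: we never needed to track the internal state of $\calA$ beyond the fact that its output probabilities admit a low-degree polynomial representation in the oracle's indicator variables, which is precisely what makes the polynomial method such a clean tool here.
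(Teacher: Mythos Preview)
The paper does not prove this lemma: it is stated as Corollary~7.5 of \cite{zhandry2021construct} and used as a black box, with no proof given in the present paper. So there is no in-paper argument to compare your proposal against.

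That said, your sketch is broadly aligned with Zhandry's original proof via the polynomial method: represent the acceptance probability of a $q$-query algorithm as a degree-$2q$ polynomial in the oracle's indicator variables, and then bound the difference in expectation of that polynomial under $D^{\calX}$ versus $\mathsf{SR}^D_r(\calX)$ by a birthday-type collision analysis on the index map $\pi$. One imprecision worth flagging: in your final paragraph you write ``summing over outputs --- equivalently, taking the supremum over Boolean predicates on the output register.'' These are not equivalent; summing individual per-output bounds would blow up with the output space. The correct move (and the one Zhandry makes) is to apply the polynomial representation directly to the probability of an arbitrary \emph{event} on the output, which is itself a single degree-$2q$ polynomial, and then take the supremum over events to get total-variation distance. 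With that correction your outline matches the intended argument.
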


\begin{lemma}[\cite{zhandry2013note}, Corollary~2 of \cite{zhandry2019record}]\label{lem:collision-resistance-RO}
    After making $q$ quantum queries to a random oracle, the probability of finding a collision is at most $O(q^3/2^n)$.
\end{lemma}

\section{Pauli Functional Commitments with Classical Keys}\label{sec:pfc}

\subsection{Definition}
We recall the definition of Pauli functional commitments from \cite{bartusek2023obfuscation}. The difference here is that, while \cite{bartusek2023obfuscation} allow the commitment key to be a quantum state, we require the commitment key to be classical.

\begin{definition}[Syntax]\label{def:PFC} A Pauli functional commitment consists of six algorithms $(\mathsf{Gen},\mathsf{Com}, \mathsf{OpenZ}, \mathsf{OpenX},$ $\mathsf{DecZ}, \mathsf{DecX})$ with the following syntax.

\begin{itemize}
    \item \textbf{Key Generation:} $\mathsf{Gen}(1^\lambda) \rightarrow (\CK, \dk)$ is a p.p.t.~algorithm that takes as input the security parameter $1^\lambda$ and outputs a classical deterministic circuit $\CK$ and a classical decoding key $\dk$.

    \item \textbf{Commitment:} $\mathsf{Com}^{\mathsf{CK}}(b) \rightarrow (\calU, c)$ is a QPT algorithm that takes as input a bit $b$ and has oracle access to $\mathsf{CK}$. It outputs registers $(\calU, \calC)$ and then measures $\calC$ in the standard basis to obtain a classical string $c \in \{0,1\}^*$ and a left-over state on register $\calU$. We then write
    \[
    \mathsf{Com}^{\mathsf{CK}} := \ketbra{0}{0} \otimes \Com^\CK(0) + \ketbra{1}{1} \otimes \Com^\CK(1)\]    
    to refer to the map that applies $\mathsf{Com}^{\mathsf{CK}}(\cdot)$ classically controlled on a single-qubit register $\mathcal{B}$ to produce a state on registers $(\mathcal{B}, \calU, \calC)$, and then measures $\calC$ in the standard basis to obtain a classical string $c$ along with a left-over quantum state on registers $(\mathcal{B}, \calU)$.

    \item \textbf{Opening in $\Z, \X$ bases:} q.p.t. measurements on registers $(\mathcal{B}, \calU)$ that outputs a classical string $u$:
    \begin{align*}
        \mathsf{OpenZ}(\calB, \calU) \rightarrow u \quad \text{ and } \quad
        \mathsf{OpenX}(\calB, \calU) \rightarrow u.
    \end{align*}

    \item \textbf{Decoding in $\Z, \X$ bases:} classical deterministic polynomial-time algorithm that takes as input the decoding key $\dk$, a string $c$, and a string $u$, and outputs either a bit $b$ or a $\bot$ symbol:
    \begin{align*}
        \mathsf{DecZ}(\dk,c, u) \rightarrow \{0,1\} \cup \{\bot\} \quad \text{and} \quad \mathsf{DecX}(\dk,c, u) \rightarrow \{0,1\} \cup \{\bot\}.
    \end{align*}
\end{itemize}
\end{definition}

\begin{definition}[Correctness]\label{def:pfc-correctness}
A Pauli functional commitment scheme $(\mathsf{Gen}, \mathsf{Com}, \mathsf{OpenZ}, \mathsf{OpenX}, \mathsf{DecZ}, \mathsf{DecX})$ is correct if for any single-qubit (potentially mixed) state on register $\mathcal{B}$, it holds that
\[
\mathsf{TV}\left(\mathsf{Z}(\mathcal{B}), \mathsf{PFCZ}(1^\lambda, \mathcal{B})\right) = \mathsf{negl}(\lambda), \quad \text{and} \quad
\mathsf{TV}\left(\mathsf{X}(\mathcal{B}), \mathsf{PFCX}(1^\lambda, \mathcal{B})\right) = \mathsf{negl}(\lambda),
\]
\textit{where the distributions are defined as follows.}

\begin{itemize}[leftmargin=2em]
    \item $\mathsf{Z}(\mathcal{B})$ measures $\mathcal{B}$ in the standard basis.
    \item $\mathsf{X}(\mathcal{B})$ measures $\mathcal{B}$ in the Hadamard basis.
    \item $\mathsf{PFCZ}(1^\lambda, \mathcal{B})$ samples $(\CK, \dk) \leftarrow \mathsf{Gen}(1^\lambda)$, $(\mathcal{B}, \calU, c) \leftarrow \mathsf{Com}^{\CK}(\mathcal{B})$, $u \leftarrow \mathsf{OpenZ}(\mathcal{B}, \calU)$, and outputs $\mathsf{DecZ}(\dk,c, u)$.
    \item $\mathsf{PFCX}(1^\lambda, \mathcal{B})$ samples $(\CK, \dk) \leftarrow \mathsf{Gen}(1^\lambda)$, $(\mathcal{B}, \calU, c) \leftarrow \mathsf{Com}^{\mathsf{CK}}(\mathcal{B})$, $u \leftarrow \mathsf{OpenX}(\mathcal{B}, \calU)$, and outputs $\mathsf{DecX}(\dk,c, u)$.
\end{itemize}
\end{definition}

\begin{definition}[Single-bit Binding with public decodability]\label{def:binding}
A Pauli functional commitment $(\mathsf{Gen},\allowbreak \mathsf{Com}, \mathsf{OpenZ}, \allowbreak\mathsf{OpenX}, \allowbreak \mathsf{DecZ}, \allowbreak \mathsf{DecX})$ satisfies single-bit binding with public decodability if the following holds. Given $\dk$, $c$, and $b \in \{0,1\}$, define the following projective measurement
\begin{align*}
\Pi_{\dk,c, b} := \sum_{d : \mathsf{DecZ}(\dk,c, d) = b} |d\rangle \langle d|.
\end{align*}

Consider any adversary $\mathcal{A} = \{(\mathcal{A}_1, \mathcal{A}_2)_\lambda\}_{\lambda \in \mathbb{N}}$, where each $\calA_1$ is an oracle-aided quantum operation, each $\calA_2$ is an oracle-aided unitary, and each $(\mathcal{A}_1, \mathcal{A}_2)_\lambda$ make at most $\poly(\lambda)$ oracle queries. Then for any $b \in \{0,1\}$,
\begin{align*}
\mathbb{E} \left[
\left\| 
\Pi_{\dk,c, 1-b} \, \calA_2^{\mathsf{CK}, \DecZ[\dk]} \, \Pi_{\dk,c, b} \, |\psi\rangle 
\right\| : (|\psi\rangle, c) \leftarrow \calA_1^{\mathsf{CK}, \mathsf{DecZ}[\dk], \mathsf{DecX}[\dk]}(1^\lambda)
\right]
= \mathsf{negl}(\lambda),
\end{align*}
where the expectation is taken over $(\CK, \dk) \leftarrow \mathsf{Gen}(1^\lambda)$. Here, $\DecZ[\dk]$ is the oracle implementing the classical functionality $\DecZ(\dk,\cdot,\cdot)$, and $\DecX[\dk]$ is the oracle implementing the classical functionality $\DecX(\dk,\cdot,\cdot)$.
\end{definition}

Our application to publicly-verifiable QFHE will in fact rely on the following notion of string binding, which was shown in \cite{bartusek2023obfuscation} to be implied by single-bit binding.

\begin{definition}[String binding with public decodability]\label{def:string-binding} A Pauli functional commitment $(\Gen,\allowbreak\Com,\allowbreak\OpenZ,\allowbreak\OpenX,\allowbreak\DecZ,\allowbreak\DecX)$ satisfies \emph{string binding with public decodability} if the following holds for any polynomial $m = m(\secp)$ and two disjoint sets $W_0,W_1 \subset \{0,1\}^m$ of $m$-bit strings. Given a set of $m$ decoding keys $\bdk = (\dk_1,\dots,\dk_m)$, $m$ strings $\bc = (c_1,\dots,c_m)$, and $b \in \{0,1\}$, define \[\Pi_{\bdk,\bc,W_b} \coloneqq \sum_{w \in W_b} \left(\bigotimes_{i \in [m]}\Pi_{\dk_i,c_i,w_i} \right).\] 
 
Consider any adversary $\{(\calA_1, \calA_2)_\secp\}_{\secp \in \bbN}$, where each $\calA_1$ is an oracle-aided quantum operation, each $\calA_2$ is an oracle-aided unitary, and each $(\calA_1, \calA_2)_\secp$ make at most $\poly(\secp)$ oracle queries.  Then,

\begin{align*}\E\left[\bigg\|\Pi_{\bdk,\bc,W_1}\calA_2^{\bCK,\DecZ[\bdk]}\Pi_{\bdk,\bc,W_0}\ket{\psi}\bigg\|: (\ket{\psi},\bc) \gets \calA_1^{\bCK,\DecZ[\bdk],\DecX[\bdk]}(1^\secp)\right] = \negl(\secp), \end{align*}
where the expectation is over $\{\CK_i,\dk_i \gets \Gen(1^\secp)\}_{i \in [m]}$. Here, $\bCK$ is the collection of oracles $\CK_1,\dots,\CK_m$, $\DecZ[\bdk]$ is the collection of oracles $\DecZ[\dk_1],\dots,\DecZ[\dk_m]$, and $\DecX[\bdk]$ is the collection of oracles $\DecX[\dk_1],\dots,\DecX[\dk_m]$. 
\end{definition}

\begin{lemma}[Lemma 4.5 of \cite{bartusek2023obfuscation}]
    Any Pauli functional commitment that satisfies single-bit binding with public decodability (\Cref{def:binding}) also satisfies string binding with public decodability (\Cref{def:string-binding}).
\end{lemma}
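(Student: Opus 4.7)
The plan is to prove the contrapositive via a hybrid reduction from string binding to single-bit binding. I would construct a single-bit adversary $\calB = (\calB_1, \calB_2)$ that samples an index $i^* \leftarrow [m]$ uniformly, receives the external oracles $(\CK^*, \DecZ[\dk^*], \DecX[\dk^*])$, internally generates the remaining keys $\{(\CK_j, \dk_j) \gets \Gen(1^\secp)\}_{j \neq i^*}$, and plants the external oracles into the $i^*$-th slot. Then $\calB_1$ simulates $\calA_1$ on the combined oracle interface (it can answer every query to $\bCK$, $\DecZ[\bdk]$, and $\DecX[\bdk]$ using the internal keys together with the external access), obtains $(\ket{\psi}, \bc)$, and outputs a state together with $c_{i^*}$; $\calB_2$ analogously simulates $\calA_2$ with access to $\bCK$ and $\DecZ[\bdk]$.

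The key algebraic step is a ``Hamming-distance'' observation that exploits the disjointness of $W_0$ and $W_1$. For any state $\ket{\chi}$ in the range of $\Pi_{\bdk, \bc, W_1}$ and any fixed $w^{(0)} \in W_0$, decomposing $\ket{\chi} = \sum_{w \in W_1} \alpha_w \ket{\chi_w}$ in the orthogonal basis induced by the $\Pi_w$'s, I would use that $d_H(w, w^{(0)}) \geq 1$ for every $w \in W_1$ to conclude
\[
\sum_{i \in [m]} \left\|\Pi_{\dk_i, c_i, 1-w^{(0)}_i}\ket{\chi}\right\|^2 \;=\; \sum_{w \in W_1} |\alpha_w|^2 \cdot d_H(w, w^{(0)}) \;\geq\; \|\ket{\chi}\|^2.
\]
Applying this to $\ket{\chi} = \Pi_{\bdk, \bc, W_1}\calA_2\Pi_{\bdk, \bc, W_0}\ket{\psi}$ and using that each $\Pi_{\dk_i, c_i, b}$ commutes with $\Pi_{\bdk, \bc, W_1}$ (all are diagonal in the joint decoded-value basis) bounds the string-binding norm squared by $\sum_i \|\Pi_{\dk_i, c_i, 1-w^{(0)}_i}\calA_2\Pi_{\bdk, \bc, W_0}\ket{\psi}\|^2$, reducing the multi-bit transition to a single-coordinate flip.

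The main obstacle is that $\Pi_{\bdk, \bc, W_0}$ is a sum of potentially exponentially many orthogonal projectors $\Pi_w$, so a naive triangle inequality to ``pick a specific $w^{(0)}$'' would lose an $|W_0|$ factor. My plan for avoiding this blow-up is to have $\calB_1$, after running $\calA_1$, coherently measure the decoded value on every coordinate $j \neq i^*$ using its internally-known $\dk_j$'s; this collapses the register to some $\Pi_{w^{(0)}}\ket{\psi}$ for a random $w^{(0)} \in W_0$ (sampled proportionally to its amplitude) with the $i^*$-th coordinate left unprojected. Combined with the single-bit binding guarantee on coordinate $i^*$, this yields $\E[\|\Pi_{\dk_{i^*}, c_{i^*}, 1-w^{(0)}_{i^*}}\calA_2\Pi_{w^{(0)}}\ket{\psi}\|] = \negl(\secp)$ for the sampled $w^{(0)}$, and summing over $i^* \in [m]$ with the Hamming-distance bound absorbs only a $\poly(m)$ overhead. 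The delicate part will be ensuring that the coherent side-measurement performed by $\calB_1$ interacts correctly with the Hamming-distance inequality so that the random-coordinate loss is additive across $i^* \in [m]$ rather than multiplicative across $W_0$ or $W_1$; I expect this to follow by comparing the coherent experiment to its fully-measured variant (where the analysis becomes a classical union bound over the $m$ coordinates in which some $w \in W_0$ and $w' \in W_1$ must disagree).
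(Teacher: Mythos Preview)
The paper does not give its own proof of this lemma; it is stated as a citation of \cite[Lemma 4.5]{bartusek2023obfuscation}, so there is nothing in this paper to compare against directly. Let me assess your plan on its own terms.

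Your high-level strategy (plant the external single-bit instance into a random coordinate $i^*$, simulate the other $m-1$ slots internally, and exploit the Hamming-disjointness of $W_0,W_1$) is natural, and the Hamming inequality you wrote is correct. However, the step you yourself flag as ``delicate'' is a genuine gap rather than a bookkeeping detail. Once $\calB_1$ measures the decoded values, the single-bit game only controls (after summing over $i^*$ and $b$) the \emph{incoherent} quantity
\[
\sum_{w^{(0)}\in W_0}\|\Pi_{w^{(0)}}\ket{\psi}\|\cdot\Bigl\|\Pi_{\dk_{i^*},c_{i^*},1-w^{(0)}_{i^*}}\,\calA_2\,\Pi_{w^{(0)}}\ket{\psi}\Bigr\|,
\]
whereas the string-binding target $\bigl\|\Pi_{W_1}\,\calA_2\,\sum_{w^{(0)}}\Pi_{w^{(0)}}\ket{\psi}\bigr\|$ is a \emph{coherent} sum in which the cross-terms between distinct $w^{(0)}$ do not vanish (the vectors $\calA_2\Pi_{w^{(0)}}\ket{\psi}$ are orthogonal, but $\Pi_{W_1}\calA_2\Pi_{w^{(0)}}\ket{\psi}$ are not). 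Triangle or Cauchy--Schwarz relates the two only up to a $\sqrt{|W_0|}$ loss, so your hoped-for ``additive across $i^*\in[m]$'' conclusion does not follow from the measured quantity alone. What would make the coherent and measured experiments close is exactly a (string) collapse-binding statement on those registers --- but that is equivalent to what you are trying to prove, so invoking it here is circular.

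The route actually taken in \cite{bartusek2023obfuscation} (alluded to in this paper's appendix) sidesteps this by passing through collapse-binding: single-bit binding is equivalent to single-bit collapse-binding, and single-bit collapse-binding implies \emph{string} collapse-binding via a coordinate-by-coordinate hybrid that costs only a factor of $m$, never $|W_0|$. String collapse-binding then yields string binding directly, with no need to expand $\Pi_{W_0}$ into its constituent $\Pi_{w^{(0)}}$. The missing idea in your plan is therefore that the hybrid should run over the $m$ coordinates inside the \emph{collapsing} experiment (``measure one more coordinate''), not over the elements of $W_0$ inside the binding experiment.
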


Since string-binding when $m = 1$ is exactly the definition of single-bit binding, we have the following corollary.
\begin{corollary}\label{cor:string-single-equiv}
    The single-bit binding definition (\Cref{def:binding}) and string binding definition (\Cref{def:string-binding}) are equivalent.
\end{corollary}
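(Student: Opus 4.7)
The plan is to verify both implications of the claimed equivalence separately, and observe that one direction is already supplied while the other is a direct specialization.

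First, the forward implication (single-bit binding $\Rightarrow$ string binding) is exactly the content of Lemma 4.5 of \cite{bartusek2023obfuscation}, cited in the excerpt immediately above the corollary. So nothing further needs to be shown in that direction.

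For the reverse implication (string binding $\Rightarrow$ single-bit binding), the plan is to instantiate Definition~\ref{def:string-binding} at $m = 1$ and check that it literally collapses to Definition~\ref{def:binding}. Concretely, when $m=1$, a ``set of decoding keys'' $\bdk$ is just a single decoding key $\dk_1 = \dk$, a vector of commitments $\bc$ is a single string $c_1 = c$, and the disjoint sets $W_0, W_1 \subset \{0,1\}$ can only be $W_0 = \{0\}, W_1 = \{1\}$ (or vice versa). Under this instantiation, the projector $\Pi_{\bdk,\bc,W_b} = \sum_{w \in W_b}\bigotimes_{i\in[1]} \Pi_{\dk_i,c_i,w_i}$ reduces to $\Pi_{\dk,c,b}$, the collections $\bCK,\DecZ[\bdk],\DecX[\bdk]$ reduce to the single oracles $\CK,\DecZ[\dk],\DecX[\dk]$, and the expectation is over $(\CK,\dk)\gets\Gen(1^\secp)$. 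The resulting inequality is therefore exactly the single-bit binding bound of Definition~\ref{def:binding}.

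Since both directions are immediate — one by direct citation of prior work, the other by a syntactic specialization — there is no substantive technical obstacle; the only care needed is to confirm that the ``disjoint subsets $W_0,W_1$'' formulation of string binding indeed forces the $m=1$ case to match the ``opposite bit'' formulation of single-bit binding, which it does by taking $W_b = \{b\}$. Therefore the two definitions are equivalent, and the corollary follows.
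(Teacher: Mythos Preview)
Your proposal is correct and follows essentially the same approach as the paper: the forward direction is the cited Lemma 4.5 of \cite{bartusek2023obfuscation}, and the reverse direction is exactly the observation (stated immediately before the corollary) that string binding with $m=1$ specializes to single-bit binding. Your additional remark that the only nontrivial choice of disjoint $W_0,W_1 \subset \{0,1\}$ is $\{0\},\{1\}$ (in either order) makes the specialization explicit, but adds nothing beyond what the paper intends.
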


\subsection{Construction}\label{sec:construction-functional-commitment}
In this section, we will present our construction of a Pauli Functional Commitment scheme. Let $n = n(\lambda) \ge \lambda$. Our construction makes use of the following ingredients:
\begin{enumerate}
    \item A general one-shot signature scheme $\OSS= (\OSS.\Setup, \OSS.\Gen, \OSS.\Sign, \OSS.\Ver)$.
    \item The particular construction of one-shot signatures from \cite{shmueli2025one}, which we will denote as $\overline{\OSS}$.
    \item PRF $F$.
\end{enumerate}

\noindent
Now, we are ready to present the construction.
\begin{itemize}
  \item $\mathsf{Gen}(1^\lambda;R)$: 
  \begin{itemize}
        \item Parse randomness $R = (R_1, R_2)$.
        \item Run $\O \leftarrow \OSS.\Setup(1^\secp;R_1)$.
        \item Sample a random PRF key $k$ using randomness $R_2$.
        \item Define oracle $\CK_{P}(\vk, x)$ as follows:
            \begin{itemize}
            \item Run $\overline{\O} \leftarrow \overline{\OSS}.\Setup(1^\secp; F_k(\vk))$ and parse $\overline{\O} = (P, P^{-1}, D)$.
            \item Output $P(x)$.
            \end{itemize}
        \item Define oracle $\CK_{P^{-1}}(\vk, \overline{\vk}, u)$ as follows:
            \begin{itemize}
            \item Run $\overline{\O} \leftarrow \overline{\OSS}.\Setup(1^\secp; F_k(\vk))$ and parse $\overline{\O} = (P, P^{-1}, D)$.
            \item Output $P^{-1}(\overline{\vk}, u)$.
            \end{itemize}
        \item Define the oracle $\CK_D(\vk, \sigma, \overline{\vk}, u)$ as follows:
            \begin{itemize}
            \item Run $\overline{\O} \leftarrow \overline{\OSS}.\Setup(1^\secp; F_k(\vk))$ and parse $\overline{\O} = (P, P^{-1}, D)$.
            \item If $\OSS.\Ver^\O(\vk, 0, \sigma) = \bot$, output $\bot$ and abort.
            \item Otherwise, output $0$ if $D(\overline{\vk}, u) = 1$ and $1$ if $D(\overline{\vk}, u) = 0$.
            \end{itemize}
        \item Output $\CK = (\O, \CK_{P}, \CK_{P^{-1}}, \CK_D)$ and $\dk = (R_1, k)$.
  \end{itemize}

  \item $\mathsf{Com}^{\mathsf{CK}}(b)$:
  \begin{itemize}
    \item Parse $\CK = (\O, \CK_{P}, \CK_{P^{-1}}, \CK_{D})$.
    \item Run $(\vk, \ket{\sk}) \leftarrow \OSS.\Gen^{\O}(1^\secp)$. Coherently apply $\OSS.\Sign^\O(\ket{\sk}, 0)$ to get a superposition over signatures $\sigma$ on bit $0$ on register $\calS$. This signature register $\calS$ will allow us to (coherently) access the oracle $\CK_D$.
    \item Run $(\overline{\vk}, \ket{\overline{\sk}}) \leftarrow \overline{\OSS}.\Gen^{\CK_P(\vk, \cdot ),\CK_{P^{-1}}(\vk, \cdot, \cdot)}(1^\secp)$,\footnote{Note that $\overline{\OSS}.\Gen$ does not need the entire $\O$ oracle in the OSS construction of SZ25, the oracles $P$ and $P^{-1}$ suffice.} where $\overline{\vk} = y$ and $\ket{\overline{\sk}} = \ket{S_y}$ on register $\calS'$ for some $y$ as determined by $P$. Measure the first qubit of $\ket{S_y}$ in the standard basis to get result $b'$. If $b = b'$ then the state has collapsed to $\ket{S_{y, b}}$ and we continue.

    Otherwise, perform a rotation from $\ket{S_{y, b'}}$ to $\ket{S_{y, b}}$ by applying the operation $$(\H^{\otimes n} \otimes \mathsf{I}) \circ \mathsf{Phase}^{\CK_D(\vk, \cdot, \cdot)} \circ (\H^{\otimes n} \otimes \mathsf{I})$$ to register $(\calS', \calS)$, where $\mathsf{Phase}^{\CK_D(\vk, \cdot, \cdot)}$ is the map
    \begin{align*}
        \ket{u}_{\calS'} \ket{\sigma}_{\calS} \mapsto (-1)^{\CK_D(\vk, \sigma, u)} \ket{u}_{\calS'} \ket{\sigma}_{\calS}.
    \end{align*}
    \item Reverse the $\OSS.\Sign^{\O}(\cdot, 0)$ operation on $\calS$ to recover $\ket{\sk}$.
    \item Sample $\sigma \leftarrow \OSS.\Sign^{\O}(\ket{\sk}, 1)$, and output $c= (\vk, \sigma, \overline{\vk})$ along with the final state on register $\mathcal{U} := \calS'$.
  \end{itemize}

  \item $\mathsf{OpenZ}(\mathcal{B}, \mathcal{U})$: Measure all registers in the standard basis.

  \item $\mathsf{OpenX}(\mathcal{B}, \mathcal{U})$: Measure all registers in the Hadamard basis.

  \item $\mathsf{DecZ}(\dk, c, d)$:
  \begin{itemize}
    \item Parse $\dk = (R, k)$, $c = (\vk, \sigma, \overline{\vk})$ and $d = (b, u)$, where $\vk = y$, $b \in \{0,1\}$ and $u \in \{0,1\}^n$.
    \item Run $\O \leftarrow \OSS.\Setup(1^\secp; R)$.
    \item Run $\overline{\O} \leftarrow \overline{\OSS}.\Setup(1^\secp; F_k(\vk))$ and parse $\overline{\O} = (P, P^{-1}, D)$.
    \item Check that $\OSS.\Ver^{\O}(\vk, 1, \sigma) = \top$. Otherwise output $\bot$.
    \item If $u \in S_{y, b}$, output $b$, and otherwise output $\bot$. This can be checked given oracle access to $\overline{\O}$, by checking that $P^{-1}(y, u) \neq \bot$.
  \end{itemize}

  \item $\mathsf{DecX}(\dk, c, d)$:
  \begin{itemize}
    \item Parse $\dk = (R, k)$, $c = (\vk, \sigma, \overline{\vk})$ and $d = (b', u)$, where $\overline{\vk} = y$, $b' \in \{0,1\}$ and $u \in \{0,1\}^n$.
    \item Run $\O \leftarrow \OSS.\Setup(1^\secp; R)$.
    \item Run $\overline{\O} \leftarrow \overline{\OSS}.\Setup(1^\secp; F_k(\vk))$ and parse $\overline{\O} = (P, P^{-1}, D)$.
    \item Check that $\OSS.\Ver^\O(vk, 1, \sigma) = \top$, and if not output $\bot$.
    \item Compute the bit $r$ as follows and abort if $r = \bot$
    \begin{align*}
        r:= \begin{dcases}
            0 &\text{if } u \in S_{y}^\perp\\
            1 &\text{if } u \oplus (1, 0, \ldots, 0) \in S_y^\perp\\
            \bot &\text{otherwise.}
        \end{dcases}
    \end{align*}
    This sets $r = 0$ if $u \in S_y^\perp$ and $r = 1$ if $u \in (S_{y, 0})^\perp \setminus S_y^\perp$. Output $b' \oplus r$. This can be computed using $\overline{\O}$.
    \end{itemize}
\end{itemize}

\subsection{Correctness}
\begin{theorem}\label{thm:pfc-correctness}
    The Pauli Functional commitment scheme described in \Cref{sec:construction-functional-commitment} satisfies correctness as in \Cref{def:pfc-correctness}.
\end{theorem}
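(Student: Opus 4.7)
The plan is to analyze $\Com^\CK$ carefully and show that, on input a qubit $\ket{\psi}_\calB = \alpha\ket{0}+\beta\ket{1}$, the procedure outputs (except with $\negl(\secp)$ probability) a classical commitment $c = (\vk, \sigma, \overline{\vk})$ together with a post-commitment state on $(\calB, \calU)$ equal to
\[
\alpha \ket{0}_\calB \ket{S_{y,0}}_\calU + \beta \ket{1}_\calB \ket{S_{y,1}}_\calU,
\]
where $y = \overline{\vk}$ and $\ket{S_{y,b}}$ is the uniform superposition over the affine subspace $S_{y,b}$. Given this canonical form, the correctness claims follow by direct calculation: $\OpenZ$ composed with $\DecZ$ implements a standard-basis measurement of $\calB$ since the two cosets $S_{y,0}$ and $S_{y,1}$ are distinguishable by their first coordinate, while $\OpenX$ composed with $\DecX$ implements a Hadamard-basis measurement of $\calB$ because $H^{\otimes n}\ket{S_{y,b}}$ is supported on $S_y^\perp \cup (S_y^\perp + e_1)$ with relative phase $(-1)^b$ between these two cosets.

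\paragraph{Analyzing the commit procedure.}
I would trace through Com step by step. Steps 1--3 (sampling $\vk$, coherently applying $\OSS.\Sign^\O(\ket{\sk}, 0)$, and running $\overline{\OSS}.\Gen$) are independent of the input register $\calB$, and together produce classical values $\vk$ and $\overline{\vk}=y$, the coset state $\ket{S_y}$ on register $\calS'$, and a superposition over valid signatures of $0$ on register $\calS$. Two observations then drive the analysis. First, by the coset-partition structure of the $\overline{\OSS}$ construction, with overwhelming probability over the PRF-derived oracle the matrix $\matA_y$ has a nonzero first row, so $|S_{y,0}|=|S_{y,1}|=2^{n-r-1}$ and measuring the first qubit of $\ket{S_y}$ gives $b' \in \{0,1\}$ uniformly with post-measurement state $\ket{S_{y,b'}}$. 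Second, $\CK_D(\vk, \sigma, \overline{\vk}, u)$ depends on $\sigma$ only through a signature-validity check that accepts on the entire support of $\calS$, so $\mathsf{Phase}^{\CK_D(\vk, \cdot, \cdot)}$ applies the effective phase $(-1)^{1-D(y,u)}$ to $\ket{u}_{\calS'}$, independent of $\calS$. A direct Fourier calculation then shows that the rotation $U := (H^{\otimes n}\otimes I) \circ \mathsf{Phase}^{\CK_D} \circ (H^{\otimes n}\otimes I)$ satisfies $U\ket{S_{y,0}}=\ket{S_{y,1}}$ and $U\ket{S_{y,1}}=\ket{S_{y,0}}$ with no residual phase, so the classically-controlled rotation (applied whenever $\calB \oplus b' = 1$) produces, in both branches of $b'$, the target state $\alpha\ket{0}\ket{S_{y,0}}+\beta\ket{1}\ket{S_{y,1}}$. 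Inverting the coherent sign disentangles $\calS$ and restores $\ket{\sk}$, and $\sigma \gets \OSS.\Sign^\O(\ket{\sk},1)$ yields a classical signature that passes $\OSS.\Ver^\O(\vk,1,\sigma)$ with overwhelming probability by OSS correctness.

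\paragraph{Open/Decode analysis.}
Given the canonical post-commitment state, both correctness claims reduce to elementary computations. For $\mathsf{PFCZ}$, standard-basis measurement of $(\calB,\calU)$ yields $(b,u)$ with $u\in S_{y,b}$ and marginal $(|\alpha|^2, |\beta|^2)$; $\DecZ$ passes the signature check (by OSS correctness) and the membership check $P^{-1}(y,u)\neq\bot$, outputting $b$, which matches $\mathsf{Z}(\calB)$. For $\mathsf{PFCX}$, computing the joint amplitudes on $(b',u)$ after $H^{\otimes (1+n)}$ is applied to $(\calB,\calU)$ and using the support/phase description of $H^{\otimes n}\ket{S_{y,b}}$, the decoded output $b' \oplus r$ (where $r=0$ if $u\in S_y^\perp$ and $r=1$ if $u\in S_y^\perp + e_1$) equals $0$ with probability $|\alpha+\beta|^2/2$ and $1$ with probability $|\alpha-\beta|^2/2$, matching $\mathsf{X}(\calB)$. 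Mixed and entangled single-qubit inputs extend by linearity.

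\paragraph{Main obstacle.}
The most delicate point is pinning down the global phase of the rotation $U$. Our $\CK_D$ is the logical negation of the SZ25 predicate $D$ on valid signatures, so $U = -U_{\mathsf{SZ}}$, where $U_{\mathsf{SZ}}$ is the rotation used inside the SZ25 one-shot signature; the standard Fourier analysis gives $U_{\mathsf{SZ}}\ket{S_{y,0}} = -\ket{S_{y,1}}$, and the two minus signs cancel to produce the clean identity $U\ket{S_{y,0}}=\ket{S_{y,1}}$. If either sign were different, the classically-controlled rotation would imprint a relative Pauli-$Z$ error on $\calB$ that flips every Hadamard decoding and destroys $\mathsf{PFCX}$ correctness. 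A secondary subtlety is that the rotation must not entangle $\calS'$ with $\calS$, which holds precisely because $\CK_D$'s output on valid signatures depends only on $(y,u)$, so the effective phase is independent of the specific signature in $\calS$ and the subsequent inverse-sign step cleanly removes that register.
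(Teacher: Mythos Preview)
Your proposal is correct and follows essentially the same route as the paper's proof (deferred to Appendix~A): obtain the canonical post-commitment state $\alpha\ket{0}\ket{S_{y,0}}+\beta\ket{1}\ket{S_{y,1}}$ via the Fourier calculation showing $H^{\otimes n}\,\mathsf{Phase}^{\CK_D}\,H^{\otimes n}$ swaps $\ket{S_{y,0}}\leftrightarrow\ket{S_{y,1}}$, then verify $\OpenZ/\DecZ$ and $\OpenX/\DecX$ directly. Your explicit attention to the global phase (the negation of $D$ in $\CK_D$ cancelling the sign from the SZ25 rotation) and to the disentangling of the signature register $\calS$ are points the paper leaves implicit, but the argument is otherwise the same.
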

The proof of this statement is almost identical to that of Theorem 4.6 of \cite{bartusek2023obfuscation}, except that we are using one-shot signatures instead of signature tokens. We defer the proof to \Cref{sec:appendix-proofs-pfc}.

\subsection{Binding}\label{subsec:binding}
\begin{theorem}
    The construction in \Cref{sec:construction-functional-commitment} satisfies single-bit binding with public decodability as in \Cref{def:binding}.
\end{theorem}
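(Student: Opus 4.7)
The plan is to reduce the binding game for the PFC to the OSS collapsing property of the Shmueli--Zhandry scheme $\overline{\OSS}$ established in this subsection, together with a complementary single-qubit coset-collapsing statement. Unpacking the construction, the commitment $c = (\vk, \sigma, \overline{\vk})$ bundles an outer OSS verification key $\vk$, a valid bit-$1$ signature $\sigma$ on $\vk$, and the inner OSS key $\overline{\vk} = y$ which determines the coset $S_y \subseteq \{0,1\}^n$. After the projection $\Pi_{\dk,c,b}$ the residual state is supported on pairs $(b,u)$ with $u \in S_{y,b}$, so $\calA_2$'s task reduces to pushing coherent amplitude across the $u_1 = b$ versus $u_1 = 1-b$ partition of $S_y$. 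This is a Hadamard-basis operation with respect to the coset and therefore morally requires access to the dual subspace $S_y^\perp$, which is not given by $P$ or $P^{-1}$ alone.

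The first reduction eliminates $\calA_2$'s only source of such dual-basis access, namely the $\CK_D$ oracle restricted to the committed key $\vk^*$. Any query to $\CK_D(\vk^*, \cdot, \cdot, \cdot)$ that does not simply return $\bot$ must carry, on its signature register, amplitude on a string $\sigma'$ that is a valid bit-$0$ signature on $\vk^*$; since $\calA_2$ already holds the valid bit-$1$ signature $\sigma^* \neq \sigma'$ output in $c$, a measure-and-reprogram style extractor instantiated via \Cref{thm:measure-and-reprogram} would yield a strong-unforgeability forgery against the outer OSS. Hence I would replace $\CK_D(\vk^*, \cdot, \cdot, \cdot)$ by the constant-$\bot$ oracle throughout $\calA_2$, and use PRF security on $k$ to swap $F_k(\vk^*)$ for a uniformly random string, so that the inner OSS instance attached to $\vk^*$ becomes an honest, independent $\overline{\OSS}$ setup whose randomness is unknown to the reduction. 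Queries made through $\CK_D$ for $\vk \neq \vk^*$ concern independent PRF images and thus only expose unrelated cosets.

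After these changes $\calA_2$ effectively acts using only $P$ and $P^{-1}$ access to the inner coset $S_{y^*}$ (the $\DecZ$ oracle specialized to $\vk^*$ and $\sigma^*$ is a thin wrapper around $P^{-1}$), and the residual game becomes: starting from a state supported in $S_{y^*,b}$, map it into $S_{y^*,1-b}$ using only standard-basis access to $S_{y^*}$. This is precisely the single-qubit coset-collapsing statement established for the SZ coset family, which in turn is derived from the OSS collapsing game analyzed in this subsection. The main obstacle, and the technical heart of the argument, is accounting for the side information that $\calA_1$ could have extracted about $S_{y^*}^\perp$ in the \emph{pre-commitment} phase, when it still had access to $\CK_D$ and $\DecX$. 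The key observation is that both of those oracles, specialized to $\vk^*$, are gated by validity of a one-shot signature: $\DecX$ requires a bit-$1$ signature and $\CK_D$ requires a bit-$0$ signature on $\vk^*$, and producing both together with the committed bit-$1$ signature contradicts strong unforgeability. Making this rigorous amounts to fitting $\calA_1$'s interaction into a single ``$D$-access epoch'' of the OSS collapsing game, which I expect to require a combination of \Cref{lem:evasive-oracles} to isolate the relevant oracles from the irrelevant ones, and the measure-and-reprogram machinery of \Cref{thm:measure-and-reprogram} to extract candidate signatures from $\calA_1$'s coherent query pattern; everything else is bookkeeping on top of the OSS and coset collapsing lemmas.
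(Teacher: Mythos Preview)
Your high-level plan matches the paper's: remove the $\CK_D(\vk^*,\cdot,\cdot,\cdot)$ access from $\calA_2$ via strong unforgeability of the outer $\OSS$, then reduce what remains to the OSS collapse-binding game for a single inner $\overline{\OSS}$ instance, and finally invoke the coset-collapsing analysis. The removal of $\CK_D$ on $\vk^*$ is done in the paper exactly via \Cref{lem:evasive-oracles} (no measure-and-reprogram is used anywhere in this proof).

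The substantive gap in your proposal is the step ``use PRF security on $k$ to swap $F_k(\vk^*)$ for a uniformly random string.'' The key $\vk^*$ is chosen by $\calA_1$ \emph{after} it has had quantum oracle access to $\CK_P,\CK_{P^{-1}},\CK_D,\DecZ,\DecX$, all of which internally evaluate $F_k$ at many points in superposition, including (with overwhelming probability) $\vk^*$ itself. A single-point PRF-to-random swap at an adversarially-and-adaptively-chosen input does not follow from PRF security, and even if you first replace $F_k$ by a random oracle, you still cannot embed the external $\overline{\OSS}$ challenge at a position you do not know in advance. The paper handles this by (i) replacing $F_k$ with a random oracle, (ii) applying Zhandry's small-range distribution lemma so that the effective randomness for the inner $\overline{\OSS}$ instances takes only $R = \poly(q,1/\epsilon)$ values, and (iii) guessing uniformly which of those $R$ values corresponds to the committed $\vk^*$, losing a $1/R$ factor. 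Only after this can the reduction plant the challenger's $(P^*,(P^{-1})^*,D^*)$ at the guessed bucket and faithfully simulate all other buckets itself. This guessing mechanism is the missing idea in your outline.

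Your discussion of the ``main obstacle'' (pre-commitment side information about $S_{y^*}^\perp$) is also off-target. In the first phase, $\calA_1$ \emph{is} allowed to query $\CK_D$ on $\vk^*$ (that is precisely how the honest committer works), and there is no need to argue that $\calA_1$ cannot hold both bit-$0$ and bit-$1$ signatures simultaneously. The OSS collapse-binding game (\Cref{def:oss-collapse-binding}) already grants $\calA_1$ full access to $D$ and only revokes it for $\calA_2$; all of the delicate work concerning pre-commitment dual-basis leakage is absorbed into the separate hardness proof of that game (\Cref{lem:oss-collapsing}). So once the embedding is done correctly, the pre-commitment phase requires no further signature-extraction argument of the kind you sketch.
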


\begin{proof}
It will be convenient for us to work with a different but sufficient (in fact, equivalent) definition of binding. By \Cref{lem:collapsing-single-bit}, it suffices to show that the scheme satisfies the following definition of binding.

\begin{definition}[Collapse-binding]\label{def:collapse-binding}
For any adversary $\mathcal{A} := \{(\calA_1, \calA_2)_\secp\}_{\secp \in \mathbb{N}}$, where each of $\calA_1$ and $\calA_2$ are oracle-aided quantum operations that make at most $\poly(\secp)$ oracle queries, define the collapse-binding experiment $\mathsf{CollapseBindingExpt}^{\mathcal{A}}(1^\secp)$ as follows.
\begin{itemize}
  \item Sample $(\CK, \dk) \leftarrow \mathsf{Gen}(1^\secp)$.

  \item Run $\calA_1^{\mathsf{CK}, \mathsf{DecZ}[\dk], \mathsf{DecX}[\dk]}(1^\lambda)$ until it outputs a commitment $c$ and a state on registers $(\mathcal{B}, \mathcal{U}, \mathcal{R})$. Here $\calB$ is a single-qubit register, $\calU$ is the opening register and $\calR$ holds the internal state of $\calA$.

  \item Sample $b \leftarrow \{0,1\}$. If $b = 0$, do nothing, and if $b=1$ measure $(\mathcal{B}, \mathcal{U})$ in the computational basis.

  \item Run $\calA_2^{\mathsf{CK}, \mathsf{DecZ}[\dk]}(\mathcal{B}, \mathcal{U}, \mathcal{R})$ until it outputs a bit $b'$. 
  
  \item The experiment outputs 1 if $b = b'$.
\end{itemize}

We say that adversary $\mathcal{A}$ is \emph{valid} if the state on $(\mathcal{B}, \mathcal{U})$ output by $\calA_1$ is in the image of $|0\rangle\langle0| \otimes \Pi_{\dk, c,0} + |1\rangle\langle1| \otimes\Pi_{\dk, c,1}$. Then, we say that a Pauli functional commitment $(\mathsf{Gen}, \mathsf{Com}, \mathsf{OpenZ}, \mathsf{OpenX}, \mathsf{DecZ}, \mathsf{DecX})$ satisfies publicly-decodable collapse-binding if it holds that for all valid adversaries $\mathcal{A}$,
\[
\left| \Pr\left[ \mathsf{CollapseBindingExpt}^{\mathcal{A}}(1^\lambda) = 1 \right] - \frac{1}{2} \right| = \mathsf{negl}(\lambda).
\]
\end{definition}

We will now show, through a sequence of hybrids, that our construction satisfies the collapse-binding with public decodability property.
For any valid $q$-query\footnote{The bound $q$ on the number of queries made by $\calA$ will be important in Hybrid 3, when we apply the small-range distribution argument of~\cite{zhandry2021construct}.} adversary $\calA$, let $\Hyb_i^{\calA}$ be the advantage achieved by $\calA$ in Hybrid $i$.

\paragraph{Hybrid 0.}
This hybrid is the collapse-binding game defined above.

\paragraph{Hybrid 1.} We now move to a hybrid where we \red{replace the PRF $F_k$ in the scheme in \Cref{sec:construction-functional-commitment} with a random oracle $H: \calX \to \calR$}, so that we are effectively choosing a random instance of the one-shot signature oracles $\overline{\O}$ independently for every choice of verification key $\vk$.

Hybrid 0 and Hybrid 1 are indistinguishable because of the post-quantum security of the PRF, so for every valid $\calA$, $|\Hyb_0^\calA - \Hyb_1^\calA| \le \negl(\lambda)$.

\paragraph{Hybrid 2.} Observe that once $\calA_1$ has committed to $c^* = (\vk^*, \sigma^*_1, \overline{\vk})$, where $\sigma^*_1$ is a valid signature of $1$ under $\vk^*$, we can replace the $\CK_D$ oracle given to the second-stage adversary $\calA_2$ with $\CK_D\{\vk^*\}$, which works as follows.
\newline

\noindent
$\CK_D\{\vk^*\}(\vk, \sigma, \overline{\vk}, u)$:
\begin{itemize}
    \item If $\vk = \vk^*$, output $\bot$.
    \item Otherwise, output $\CK_D(\vk, \sigma, \overline{\vk}, u)$.
\end{itemize}
\noindent
Formally, the experiment in this hybrid is as follows:
\begin{itemize}
  \item Sample $(\CK, \dk) \leftarrow \mathsf{Gen}(1^\secp)$.

  \item Run $\calA_1^{\mathsf{CK}, \mathsf{DecZ}[\dk], \mathsf{DecX}[\dk]}(1^\lambda)$ until it outputs a commitment $c^* = (\vk^*, \sigma_1^*, \overline{\vk})$ and a state on registers $(\mathcal{B}, \mathcal{U}, \mathcal{R})$. Here $\calB$ is a single-qubit register, $\calU$ is the opening register and $\calR$ holds the internal state of $\calA$.

  \item Sample $b \leftarrow \{0,1\}$. If $b = 0$, do nothing, and if $b=1$ measure $(\mathcal{B}, \mathcal{U})$ in the computational basis.

  \item Run $\calA_2^{\red{\CK\{\vk^*\}}, \mathsf{DecZ}[\dk]}(\mathcal{B}, \mathcal{U}, \mathcal{R})$ until it outputs a bit $b'$, where $\red{\CK\{\vk^*\}} = (\O, \CK_P, \CK_{P^{-1}}, \red{\CK_D\{\vk^*\}})$.
  
  \item The experiment outputs 1 if $b = b'$.
\end{itemize}

\begin{claim}
    For every valid adversary $\calA$, $|\Hyb_1^\calA - \Hyb_2^\calA| = \negl(\secp)$.
\end{claim}
\begin{proof}
By \Cref{lem:evasive-oracles}, if an adversary is able to distinguish between these two hybrids, there must be an adversary $\calA'$ that produces an input on which $\CK$ and $\CK\{\vk^*\}$ differ. The only points on which $\CK_D$ and $\CK_D\{\vk^*\}$ differ are of the form $(\vk^*, \sigma_0, \cdot, \cdot)$, where $\sigma_0$ is a valid signature of $0$ under $\vk^*$. By the strong unforgeability of the one-shot signatures scheme $\OSS$, once the adversary has committed to a valid signature $\sigma^*_1$ of $1$ under $\vk^*$, it cannot produce a valid $\sigma_0$, and therefore cannot notice the difference between the oracles $\CK_D$ and $\CK_D\{\vk^*\}$. 
\end{proof}

\paragraph{Hybrid 3.} This hybrid is the same as Hybrid 2, except that the we define the punctured $\CK_D$ oracle differently, as follows.
\noindent
$\CK_D\{H(\vk^*)\}(\vk, \sigma, \overline{\vk}, u)$:
\begin{itemize}
    \item \red{If $H(\vk) = H(\vk^*)$, output $\bot$.}
    \item Otherwise, output $\CK_D(\vk, \sigma, \overline{\vk}, u)$.
\end{itemize}
\noindent
Formally, the experiment in this hybrid is as follows:
\begin{itemize}
  \item Sample $(\CK, \dk) \leftarrow \mathsf{Gen}(1^\secp)$.
  \item Run $\calA_1^{\mathsf{CK}, \mathsf{DecZ}[\dk], \mathsf{DecX}[\dk]}(1^\lambda)$ until it outputs a commitment $c^* = (\vk^*, \sigma_1^*, \overline{\vk})$ and a state on registers $(\mathcal{B}, \mathcal{U}, \mathcal{R})$. Here $\calB$ is a single-qubit register, $\calU$ is the opening register and $\calR$ holds the internal state of $\calA$.
  \item Sample $b \leftarrow \{0,1\}$. If $b = 0$, do nothing, and if $b=1$ measure $(\mathcal{B}, \mathcal{U})$ in the computational basis.
  \item Run $\calA_2^{\red{\CK\{H(\vk^*)\}}, \mathsf{DecZ}[\dk]}(\mathcal{B}, \mathcal{U}, \mathcal{R})$ until it outputs a bit $b'$, where $\red{\CK\{H(\vk^*)\}} =\allowbreak (\O, \CK_P, \CK_{P^{-1}}, \allowbreak\red{\CK_D\{H(\vk^*)\}})$.
  \item The experiment outputs 1 if $b = b'$.
\end{itemize}
\begin{claim}
    For every valid adversary $\calA$, $|\Hyb_2^\calA - \Hyb_{3}^\calA| = \negl(\secp)$.
\end{claim}
\begin{proof}
By \Cref{lem:evasive-oracles}, if an adversary is able to distinguish between these two hybrids, there must be an adversary $\calA'$ that produces an input on which $\CK\{\vk^*\}$ and $\CK\{H(\vk^*)\}$ differ. The only points on which $\CK_D$ and $\CK_D\{\vk^*\}$ differ are of the form $(\vk, \cdot, \cdot, \cdot)$, where $H(\vk) = H(\vk^*)$ but $\vk \neq \vk^*$. By the collision-resistance of random oracles (\Cref{lem:collision-resistance-RO}), no query-efficient adversary should be able to find such a collision with greater than negligible probability.
\end{proof}

\paragraph{Hybrid 4.} This hybrid is the same as Hybrid 3, except that use a small-range distribution argument to move to a hybrid where we \red{replace the random oracle $H$ introduced in Hybrid 1 with a random function $H_\SR$ of bounded range}. This will allow us to rephrase the collapse-binding game in terms of the oracles of the inner one-shot signature scheme $\overline{\OSS}$, which we then later analyze.

Let $R = 300 q^3/\epsilon$, where $\epsilon(\secp) = \Hyb_3^\calA$. We sample $R$ independent and uniformly random values $r_i \gets \calR$, for $i \in [R]$, and for each $x \in \calX$, we sample a random $i_x \gets [R]$ and set $H_\SR(x) := r_{i_x}$.

\begin{claim}\label{clm:small-range}
    For every valid $q$-query  adversary $\calA$, $\Hyb_4^\calA \ge \frac{4}{5}\Hyb_3^\calA$.
\end{claim}
\begin{proof}
By \Cref{lem:zhandry-small-range}, we know that the output distributions of $\calA$ in Hybrids 3 and 4 are at most $27q^3/R$-far in statistical distance, and by our choice of $R$, this means that $|\Hyb_3^\calA - \Hyb_4^\calA| \le \epsilon/5$, which in turn implies that $\Hyb_4^\calA \ge 4\epsilon/5$.
\end{proof}

We now show that no query-efficient adversary can win the game in Hybrid 4 with non-negligible advantage. Towards that end, consider the following game, which is similar in spirit to the collapse-binding game in \Cref{def:collapse-binding}, but rephrased in terms of the underlying oracles of $\overline{\OSS}$ in the \cite{shmueli2025one} construction.

\begin{definition}[OSS delayed collapsing experiment]\label{def:oss-collapse-binding}
Let $\overline{\OSS}$ be the one-shot signature scheme in \Cref{sec:oss-sz}. For any adversary $\calA = \{(\calA_1, \calA_2)_\lambda\}_{\lambda \in \bbN}$, where each of $\calA_1$ and $\calA_2$ are quantum operations that make at most $\poly(\lambda)$ many oracle queries, define the experiment $\mathsf{OSSDelayedCollapsingExpt}^{\calA}(1^\lambda)$ as follows.
\begin{itemize}
    \item Sample $P, P^{-1}, D \leftarrow \overline{\OSS}.\Setup(1^\lambda)$. Let $\{S_y\}_{y \in \{0,1\}^r}$ be the cosets defined by $P, P^{-1}$.
    \item Run $\calA_1^{P, P^{-1}, D}(1^\lambda)$ until it outputs a string $y \in \{0,1\}^r$ and a state on registers $(\calU, \calR)$. Here $\calU$ is $k$-qubit register and $\calR$ holds the internal state of $\calA$.
    \item Sample $b \leftarrow \{0,1\}$. If $b=0$, do nothing, and if $b=1$ measure $\calU$ in the computational basis.
    \item Run $\calA_2^{P, P^{-1}}(\calU, \calR)$ until it outputs a bit $b'$. The experiment outputs $1$ if $b=b'$ and $0$ otherwise.
\end{itemize}
The adversary is valid if the state on $\calU$ output by $\calA_1$ is in the image of $\Pi_{y} = \sum_{u \in S_y} |u\rangle \langle u|$. We say that the adversary $\calA$ wins the experiment with advantage
\begin{align*}
\left|\Pr[\mathsf{OSSDelayedCollapsingExpt}^\calA(1^\lambda) = 1] - \frac{1}{2}\right|.
\end{align*}
\end{definition}

Given an adversary $\calA = (\calA_1, \calA_2)$ that wins the game in Hybrid 4 with non-negligible advantage, we give an adversary $\calB = (\calB_1, \calB_2)$ that wins the OSS delayed collapsing game with non-negligible advantage. 

\noindent
The first-stage adversary $\calB_1$ works as follows:
\begin{itemize}
    \item $\calB_1$ receives oracles $P^*, (P^{-1})^*, D^*$ from the challenger.
    \item Let $\CK_{4} = (\O_4, \CK_{P, 4}, \CK_{P^{-1}, 4}, \CK_{D,4}), \DecZ_{4}, \DecX_{4}$ be oracles as defined in Hybrid 4 (using $H_\SR$ in place of the PRF), and let $\calR_\SR \subseteq \calR$ be the range of $H_\SR$, such that $|\calR_{\SR}| = R$. 
    \item Sample a random $r^* \gets \calR_\SR$.
    \item Define oracles $\CK^*_P, \CK_{P^{-1}}^*, \CK_D^*$ and $\DecZ^*, \DecX^*$ to be identical to $\CK_{P, 4}, \CK_{P^{-1}, 4}, \CK_{D, 4}$ and $\DecZ_4, \allowbreak \DecX_4$, except that on inputs $\vk$ such that $H_\SR(\vk) = r^*$, we use oracles $P^*, (P^{-1})^*, D^*$ instead of the oracles you get by running $\overline{\OSS}.\Setup(1^\secp; H_\SR(\vk))$. For example, $\CK_P^*(\vk, x)$ is defined as follows:
    \begin{itemize}
        \item[$\circ$] If $H_\SR(\vk) = r^*$, output $P^*(x)$.
        \item[$\circ$] Otherwise, output $\CK_P(\vk, x)$.
    \end{itemize}
    The oracles $\CK_{P^{-1}}^*, \CK_D^*$ and $\DecZ^*, \DecX^*$ are similarly defined.
    \item Run $\calA_1$ with $\CK^* = (\O_4, \CK^*_P, \CK^*_{P^{-1}}, \CK^*_D)$.
    \item Receive $c = (\widetilde{\vk}, \sigma, \overline{\vk})$ and registers $\calB, \calU$ from $\calA_1$.
    \item If $H_\SR(\widetilde{\vk}) = r^*$, send $y = \overline{\vk}$ and $\calU$ to the challenger.\footnote{For valid commitments the state on $(\calB, \calU)$ is supported on basis states $\ket{b, u}$ where $b = u[1]$, so measuring $\calU$ (or not) is the same as measuring both $\calB, \calU$.}
    \item Otherwise output a random $y$, where $(y, \calU) \leftarrow \OSS.\Gen^{P^*, (P^{-1})^*, D^*}(1^\secp)$.
\end{itemize}
The second stage $\calB_2$ adversary works as follows.
\begin{itemize}
    \item $\calB_2$ receives the oracles $P^*, (P^{-1})^*$ from the challenger.
    \item If $H_\SR(\widetilde{\vk}) \neq r^*$ then output $b' \gets \{0,1\}$ uniformly at random.
    \item If $H_\SR(\widetilde{\vk}) = r^*$, run $\calA_2$ with oracles $\CK^*_D\{H(\widetilde{\vk})\}$ and $\DecZ^*$, where $\CK^*_D\{H(\widetilde{\vk})\}(\vk, \sigma, \overline{\vk}, u)$ is defined just like $\CK_D\{H(\widetilde{\vk})\}$ from Hybrid 4
    \begin{itemize}
        \item[$\circ$] If $H(\vk) = H(\widetilde{\vk})$, then output $\bot$ and abort.
        \item[$\circ$] Otherwise, output $\CK^*_{D}(\vk, \sigma, \overline{\vk}, u)$.
    \end{itemize}
    \item Output the output of $\calA_2$.
\end{itemize}

Observe that the oracles $\CK^*, \DecZ^*, \DecX^*$ and $\CK^*\{\cdot\}$ given to $\calA_1, \calA_2$ will have the same distribution as in Hybrid 4. Since $|\calR_\SR| \le R$, $\calB_1$ correctly guesses $r^* = H_\SR(\widetilde{\vk})$ with probability at least $1/R$. So $\calB$ wins the OSS delayed collapsing game in \Cref{def:oss-collapse-binding} with advantage at least
\begin{align*}
    \frac{1}{R} \cdot \Hyb_4^\calA \ge \Omega\left(\frac{\epsilon^2}{q^3}\right),
\end{align*}
where $\epsilon = \Hyb_3^\calA$, and the inequality follows from \Cref{clm:small-range}. Finally, \Cref{lem:oss-collapsing} shows that no query-efficient adversary can win the OSS delayed collapsing game with non-negligible advantage, which completes the proof.
\end{proof}

\subsubsection{Hardness of the OSS delayed collapsing game}
\begin{theorem}\label{lem:oss-collapsing}
Let $\lambda \in \mathbb{N}$ denote the security parameter. Let $q,s,n,r$ be polynomials in $\lambda$, and let $s \leq n-r$ such that $\frac{k^9\cdot q^7\cdot \frac{1}{\epsilon^4}}{\sqrt{2^{n-r-s}}}\leq o(1)$.  Then, for every $q$ query adversary $\calA$,
\begin{align*}
\left|\Pr[\mathsf{OSSDelayedCollapsingExpt}^\calA(1^\lambda) = 1] - \frac{1}{2}\right| = \negl(\lambda).
\end{align*}
\end{theorem}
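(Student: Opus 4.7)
The plan is to reduce hardness of the OSS collapse-binding game to the coset-collapsing property of subspace states established in Section \ref{coset_collapsing_subsec}, while handling the additional complications introduced by (i) the adaptive choice of $y$, (ii) the auxiliary structure of the \cite{shmueli2025one} oracles $(P, P^{-1})$ beyond plain coset membership, and (iii) the phase-1 access to the dual-space oracle $D$.

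First I would simulate the oracles in such a way that the coset $S_y$ can be held in superposition until the adversary commits to $y$. Using compressed-oracle techniques on the random permutation $\Pi$ and the random function $F$, one tracks which cosets have been ``activated'' by the adversary through queries to $P$, $P^{-1}$, or $D$, and leaves all others unfixed. I would then apply measure-and-reprogram (\Cref{thm:measure-and-reprogram}) to the query at which $\calA_1$ first touches the eventually-committed $y$: this yields a simulator that extracts $y$ while $S_y$ is still a uniformly random coset in the database, and then re-programs the oracle to replace $S_y$ with a freshly sampled coset $S_y^*$ provided externally by a coset-collapsing challenger. The re-programming incurs a $(2q+1)^2$ loss in advantage, which will contribute to the $q^7$ factor in the final bound.

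The main obstacle is to argue that the phase-1 adversary's $D$-oracle access does not implicitly leak $S_y^\perp$-information to $\calA_2$, who retains only $P, P^{-1}$. The worry is that $\calA_1$ could entangle its internal register $\calR$ with $D$-queries on cosets $y' \neq y$, and $\calA_2$ could then exploit this to effectively simulate $D$ on $y$. To tame this, I would apply the small-range distribution lemma (\Cref{lem:zhandry-small-range}) to $F$, replacing the random function with one whose range has size at most $2^s$. This limits the number of distinct cosets to $2^s$, at the cost of an error term proportional to $q^3/2^{n-r-s}$ coming from the probability that the sampled $S_y^*$ collides with some already-activated $S_{y'}$. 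The evasive-oracle lemma (\Cref{lem:evasive-oracles}) can then be used to argue that, with the number of distinct cosets bounded, no phase-1 auxiliary information about $S_y^{*\perp}$ survives into $\calA_2$'s state, apart from a negligible error.

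Finally, I would invoke the coset-collapsing property on the reduced game: conditioned on $\calA_1$'s output passing the validity projector $\Pi_y$ (controlled via a Gentle Measurement argument, which is what produces the $k^9/\epsilon^4$ overhead, since we must decouple a weight-$\epsilon$ verifier acceptance event from a $k$-dimensional coset structure), the state on $\calU$ is close to a coset state $\ket{S_y^*}$ tensored with classical side information. Given then only $P, P^{-1}$ access---i.e., $S_y^*$-membership and preimage structure but \emph{not} $S_y^{*\perp}$-membership---no query-efficient adversary can detect whether $\calU$ was measured, by the coset-collapsing lemma. Combining all losses---$\poly(q)$ from measure-and-reprogram, $\poly(k, q)/\epsilon$ from Gentle Measurement, and $1/\sqrt{2^{n-r-s}}$ from the small-range truncation and the subsequent evasive-oracle argument---yields the quantitative bound stated in the theorem.
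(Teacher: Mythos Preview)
Your proposal has a genuine gap: it misses the key technical move that makes the reduction go through, namely \emph{bloating the dual} together with the \emph{coset partition function simulation} from \cite{shmueli2025one}. The paper does not use compressed oracles or measure-and-reprogram here at all.

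The central difficulty is exactly the one you identify: $\calA_1$ has oracle access to $D(y,\cdot)$, which checks membership in $S_y^\perp$, and this access must somehow be rendered harmless before the reduction to the coset-collapsing game can fire. Your plan is to re-program $S_y$ to a fresh $S_y^*$ and hope that $\calA_1$'s prior $D$-queries did not leak anything about $S_y^{*\perp}$. But even after re-programming, $\calA_1$ must still be given access to $D(y,\cdot)$ for the \emph{new} coset---it is entitled to query the dual oracle throughout phase 1---and the coset-collapsing challenger does not supply a dual-membership oracle. So the simulation you describe cannot answer those queries. The evasive-oracle lemma does not help here: the problem is not that $\calA_1$ rarely touches $S_y^{*\perp}$, it is that $\calA_1$ is \emph{supposed} to be able to query $D$ on $(y,\cdot)$ freely.

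The paper's route is quite different. After a small-range step on $F$ (reducing to $R = \poly(q,k,1/\epsilon)$ distinct cosets), it replaces $D$ by a \emph{bloated} dual oracle $D'$ that accepts a random superspace $T_y^\perp \supset S_y^\perp$ of codimension $s$; this is indistinguishable by a lemma from \cite{shmueli2025one} and is where the $\sqrt{2^{n-r-s}}$ denominator enters. The point of bloating is that $D'(y,\cdot)$ now depends only on the last $n-r-s$ columns of $\matA_y$, which can be made \emph{public} in the reduced game. The paper then re-expresses $P,P^{-1},D'$ via an $(r+s,r,s)$ coset partition function $Q$, uses the \emph{collapsing} of $Q$ (as a composition of random $2$-to-$1$ functions) to argue that the challenger may additionally project $\calU$ onto a random coset of the small subspace $T_y$ inside $S_y$, and finally reduces (by guessing which of the $R$ cosets is committed) to the coset-collapsing game with dual access. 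Because the dual access is to $T^\perp$ and $S$ is public, one last un-bloating step removes it, landing on the plain coset-collapsing game of Section~\ref{coset_collapsing_subsec}. The quantitative factors $k^9 q^7/\epsilon^4$ arise from the product of $R$ (set to absorb the small-range and bloating losses) with the $1/R$ guessing loss and the per-coset bloating error, not from Gentle Measurement.
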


\begin{proof}
Suppose for contradiction that there exists an adversary $\calA=(\calA_1,\calA_2)$ with non negligible advantage in the OSS delayed collapsing game. Then, 
\begin{align*}
\left|\Pr[\mathsf{OSSDelayedCollapsingExpt}^\calA(1^\lambda) = 1] - \frac{1}{2}\right| \ge \epsilon(\lambda).
\end{align*}
for some non-negligible function $\epsilon$. Consider the following sequence of hybrids.

\paragraph{H$_1$:} 
This corresponds to $\mathsf{OSSDelayedCollapsingExpt}^\calA(1^\lambda)$

\paragraph{H$_2$:}
In this hybrid, we use the standard small-range distribution technique to simulate the oracles $P, P^{-1}, D$ using a polynomially bounded number of underlying cosets $(\matA_y, \vecb_y)$ sampled using a function $F':\{0,1\}^r\rightarrow \{0,1\}^{k \cdot (n-r+1)}$ as follows:
\begin{itemize}
    \item Let $R:=(300\cdot q^3)\cdot \frac{2^7\cdot k^2}{\epsilon}$
    \item for every $y \in \mathbb{Z}_2^r, $ sample $i_y \leftarrow [R]$.
    \item Sample a random function $F':\{0,1\}^r\rightarrow \{0,1\}^{k\cdot (n-r+1)}$. For every $i \in [R]$, interpret $F(i)=(\matA_i\in \mathbb{Z}_2^{k \times (n-r)},\vecb_i\in \mathbb{Z}_2^k)$.
    \item For $y \in \mathbb{Z}_2^r, $ define $F'(y):= (\matA_{i_y}, \vecb_{i_y})$.
\end{itemize}
\begin{claim}
    For any $q$ query quantum algorithm $\mathcal{A}$, 
    \[
    |\Pr[\textbf{H}^{\mathcal{A}}_2=1]-\Pr[\textbf{H}^{\mathcal{A}}_1=1]|\leq \frac{\epsilon}{8}
    \]
\end{claim}
\begin{proof}
    From Theorem A.6 in \cite{ananth2022pseudorandom}, it follows that for every quantum algorithm making at most $q$ queries, the distinguishing advantage  between $F$ and $F'$ is bounded by $\frac{300 \cdot q^3}{R}<\frac{\epsilon}{8}$.
\end{proof}

\paragraph{H$_3$: Bloating the dual}
Same as $\textbf{H}_2$, except the oracle $D$ is replaced with $D'$, which on input $y$,  accepts an evasive superspace of the dual subspace corresponding to the coset of $y$,
\begin{align*}
D'(y, \vecv) &= \begin{dcases}
              1 &\text{if } \vecv^\top \matA_y^{(1)} = \mathbf{0}^{n-r-s}\\
              0 &\text{otherwise.}
          \end{dcases}
\end{align*}, 
where $\matA_y^{(1)}\in \mathbb{Z}_2^{n-r-s}$ denotes the last $n-r-s$ columns of $\matA_y \in \mathbb{Z}_2^{k \times (n-r)}$.
\begin{claim}\label{claim:bloatdual}
For any $q$ query quantum algorithm $\cal{A}$, 
    \[
    |\Pr[\textbf{H}_2^{\calA}(1^{\lambda})=1]-\Pr[\textbf{H}_3^{\calA}(1^{\lambda})=1]\leq \frac{\epsilon}{8}.
    \]
\end{claim}
\begin{proof}
    By Lemma 19 in \cite{shmueli2025one}, due to $k-(k-(n-r))-s=n-r-s$ for every $i \in [R]$, changing $D$ to check for membership in $\matA_y^{(1)\bot}$ instead of $\matA_y^{\bot}$ is $O\Big(\frac{q\cdot s}{\sqrt{2^{n-r-s}}}\Big)$ indistinguishable, for any $q$ query algorithm. Since we use the above indistinguishability $R$ times, it follows that the distinguishing advantage in the current hybrid is  $:=\Pr[\textbf{H}_2^{\calA}(1^{\lambda})=1]-R\cdot O\Big(\frac{q\cdot s}{\sqrt{2^{n-r-s}}}\Big)\geq \frac{7 \epsilon}{8}-O\Big(\frac{k^2\cdot q^4\cdot s\cdot \frac{1}{\epsilon}}{\sqrt{2^{n-r-s}}}\Big)\geq \frac{3\epsilon}{4}$, when $\frac{k^9\cdot q^7\cdot \frac{1}{\epsilon^4}}{\sqrt{2^{n-r-s}}}\leq o(1)$. This implies the distinguishing advantage $\leq \frac{\epsilon}{8}$.
\end{proof}
\paragraph{H$_4$:} In this hybrid, we simulate the oracles $P, P^{-1}, D'$ from $\textbf{H}_3$ using a coset partition function (Definition \ref{def:cosetpartfunc}) that operates on a smaller input domain.  
Formally, given oracle access to any $(r + s, r, s)$ coset partition function $Q:\{0,1\}^{r+s}\rightarrow \{0,1\}^r$, the oracles $P,P^{-1}, D'$ are simulated as follows:
\begin{enumerate}
\item Sample a random permutation $\Gamma: \{0,1\}^n \rightarrow \{0,1\}^n$ and for every $y$, choose a random full rank matrix $\matC_y \in \mathbb{Z}_2^{k \times n}$, and a random vector $\vecd_y \in \mathbb{Z}_2^k$. 
    \item Simulating $P(x \in \{0,1\}^n)$. 
    \begin{itemize}
        \item $(x_0 \in \{0,1\}^{r+s}, x_1 \in \{0,1\}^{n-r-s})\gets \Gamma(x)$. 
        \item $y \gets Q(x_0)$. 
        \item $u \gets (\matC_y\cdot \Gamma (x)+\vecd_y)$. 
        \item Output $(y,u)$. 
    \end{itemize}
    \item Simulating $P^{-1}(y\in \{0,1\}^r,u\in \{0,1\}^{k})$. 
    \begin{itemize}
        \item $z \gets \matC_y^{-1}\cdot (u-\vecd_y)$. 
        \item $(z_0 \in \{0,1\}^{r+s}, z_1\in \{0,1\}^{n-r-s}):=z$.
        \item If $Q(z_0)=y$, output $\Gamma^{-1}(z)$. Otherwise, output $\bot$. 
    \end{itemize}
    \item Simulating $D'(\vecv,y)$
    \begin{itemize}
        \item $\matA_y^{(1)\bot}:=$ last $n-r-s$ columns of $\matC_y$. 
        \item Output $1$ iff $\vecv^{T}\cdot \matA^{(1)}_y=0^{n-r-s}$. 
    \end{itemize}
\end{enumerate}
\begin{claim}
 
    \[
    \Pr[\textbf{H}_4^{\calA}(1^{\lambda})=1]=\Pr[\textbf{H}_3^{\calA}(1^{\lambda})=1]
    \]    
\end{claim}
\begin{proof}
    We will argue that $\Gamma, \{\matC_y,\vecd_y\}_{y}$ and $ Q $ perfectly simulate $P,P^{-1}, D'$. Observe that simulating $P,P^{-1},D'$ using $Q$ is implicitly setting the following parameters:
    \begin{itemize}
       
        \item $\matA_y$: The preimage set $Q^{-1}(y)$ is a coset, which can be described as the set $\{\matB_y\cdot r+r_y\}$ as $r$ ranges over $\mathbb{Z}_2^{s}$. Thus $\matA_y:=\matC_y\cdot K_y$, where $K_y = \begin{pmatrix}\matB_y\in \mathbb{Z}^{(r+s)\times s}\\ &\ddots && \mathbf{I}_{n-r-s}
        \end{pmatrix}$. Since $K_y$ is full rank and $\matC_y$ is random, this implies that $\matA_y$ is random.
        \item $\vecb_y:= \vecd_y+\matC_y \cdot (r_y||0^{n-r-s})$. This follows by construction, now since $d_y$ is random, this implies that $b_y$ is also random. 
         \item $\Pi$: Define an augmented function $Q':\{0,1\}^n\rightarrow \{0,1\}^n$.  On input $\mathbf{z}$, the first $r$ bits of $Q'(\mathbf{z})$ are set to $y=Q(\mathbf{z})$. Define the function $\bar{J}(z)$ that outputs the unique vector in $\mathbb{Z}_2^{n-r}$ such that $\mathbf{z}=K_y\cdot \bar{J}(z)+r_y$. Then define $Q'(\mathbf{z})=(Q(\mathbf{z},\bar{J}(\mathbf{z}))$. This implies that $\Pi= Q'\circ \Gamma$. 
    \end{itemize}
\end{proof}
\paragraph{H$_5$:}
This hybrid describes the following experiment
\begin{itemize}
    \item Simulate $P, P^{-1}, D'$ given oracle access to $Q:\{0,1\}^{r+s}\rightarrow \{0,1\}^r$ as in $\textbf{H}_3$. 
    \item Run $\calA_1^{Q}(1^\lambda)$ until it outputs a string $y \in \{0,1\}^r$ and a state on registers $(\calU, \calR)$. Here $\calU$ is $k$-qubit register and $\calR$ holds the internal state of $\calA$.
    \item Initialize a register $T$ in the state $\ket{0}^n$ and  apply the map $(u,0)\rightarrow (u,\matC_y^{-1}(u-\vecd_y))$  on the joint system $\calU\otimes T$.
    \item Measure the first $r+s$ bits of register $T$ in the computational basis.
    \item Apply the map $(x,0) \rightarrow (x,\matC_y\cdot x +\vecd_y)$ on the resulting state in register $T \otimes \calU$.
    \item Uncompute the register $T$.
    \item Sample $b \leftarrow \{0,1\}$. If $b=0$, do nothing, and if $b=1$ measure $(\calB, \calU)$ in the computational basis.
    \item Run $\calA_2^{Q}(\calU, \calR)$ until it outputs a bit $b'$. The experiment outputs $1$ if $b=b'$ and $0$ otherwise.
\end{itemize}
\begin{claim}
For all QPT algorithms $\cal{A}$,
  \[
    |\Pr[\textbf{H}_5^{\calA}(1^{\lambda})=1]-\Pr[\textbf{H}_4^{\calA}(1^{\lambda})=1]\leq \text{negl}(\lambda).
    \]

\end{claim}
\begin{proof}
   Observe that $\mathcal{P}(x) = \left(y = Q\left(\overline{\Gamma(x)}\right), \vecu = \matC_y \begin{bmatrix}
        \overline{\Gamma(x)}\\
        \widetilde{\Gamma(x)}
    \end{bmatrix} + \vecd_y\right)$ where $\overline{\Gamma(x)}$ picks out the first $r+s$ bits and $\widetilde{\Gamma(x)}$ picks out the last $(n-r-s)$ bits. Now, for a fixed $y$, the state on register $\mathcal{U}$ has the following form:
    \[
    \sum_{\overline{\Gamma(x)}:Q(\overline{\Gamma(x))}=y, \widetilde{\Gamma(x)}
    }\matC_y \begin{bmatrix}
        \overline{\Gamma(x)}\\
        \widetilde{\Gamma(x)}
    \end{bmatrix}+\vecd_y
    \]
    Measuring the first $r+s$ bits of register $T$ in the computational basis and then applying the map $(x,0) \rightarrow (x,C_y(x)+d_y)$ on the resulting state in register $T \otimes \calU$ yields the following state:
      \[
    \sum_{ \widetilde{\Gamma(x)}
    }\matC_y \begin{bmatrix}
        \overline{\Gamma(x)}\\
        \widetilde{\Gamma(x)}
    \end{bmatrix}+\vecd_y
    \], where $\overline{\Gamma(x)}$ is such that $Q(\overline{\Gamma(x)})=y.$\\
    Now, \cite{shmueli2025one} construct collision resistant coset partition functions by composing  $2$ to $1$ functions together. More concretely, let $H:\{0,1\}^n \rightarrow \{0,1\}^{n-1}$ be a random $2$ to $1$ function. Given this, the coset partition function $H^{\ell}:\{0,1\}^{n\ell}\rightarrow \{0,1\}^{(n-1)\ell}$ is the following:
    \[H^{\ell}(x_1, \dots x_{\ell}):=H(x_1), \dots H(x_{\ell})\]
    In particular, constructing $Q$ in this manner, we can conclude that, since it is a composition of $2$ to $1$ functions which are known to be collapsing from \cite{zhandrycollapse} and composition preserves the collapsing property,  
    for any computationally bounded adversary $\cal{A}$, the above state is indistinguishable from  \[
    \sum_{\overline{\Gamma(x)}:Q(\overline{\Gamma(x))}=y, \widetilde{\Gamma(x)}
    }\matC_y \begin{bmatrix}
        \overline{\Gamma(x)}\\
        \widetilde{\Gamma(x)}
    \end{bmatrix}+\vecd_y
    \] and the claim then follows immediately. 
\end{proof}

\paragraph{H$_6$:} In this hybrid, we ``unsimulate'' the oracles, and revert to defining the oracles $P, P^{-1},D'$ as in $\textbf{H}_3$.
This hybrid describes the following experiment:
\begin{itemize}
    \item Sample $P, P^{-1}, D'\leftarrow \overline{\OSS}.\Setup(1^\lambda)$. 
  \item Run $\calA_1^{P,P^{-1},D'}(1^\lambda)$ until it outputs a string $y \in \{0,1\}^r$ and a state on registers $(\calU, \calR)$. Here $\calU$ is $k$-qubit register and $\calR$ holds the internal state of $\calA$.
    \item Initialize a register $T$ in the state $\ket{0}^n$ and  apply the map $(u,0)\rightarrow (u,(\matA_y^{-1}(u-\vecb_y))$ on the joint system $\calU\otimes T$.
    \item Uncompute $\calU$
    and measure the first $s$ bits of register $T$ in the computational basis, and then apply the map $(x,0) \rightarrow (x,\matA_y\cdot x+\vecb_y)$ on the resulting state in register $T \otimes \calU$. 
    \item Uncompute the register $T$. 
    \item Sample $b \leftarrow \{0,1\}$. If $b=0$, do nothing, and if $b=1$ measure $(\calB, \calU)$ in the computational basis.
    \item Run $\calA_2^{P, P^{-1}}(\calU, \calR)$ until it outputs a bit $b'$. The experiment outputs $1$ if $b=b'$ and $0$ otherwise.
\end{itemize}
\begin{claim}
    \[
    \Pr[\textbf{H}_6^{\calA}(1^{\lambda})=1]=\Pr[\textbf{H}_5^{\calA}(1^{\lambda})=1]
    \] 
\end{claim}
\begin{proof}
    This hybrid distributes identically to the previous hybrid by the simulation of $P,P^{-1},D'$ by $Q$ in $\textbf{H}_4$. 
\end{proof}
\paragraph{H$_{7}$:} Same as \textbf{H$_{6}$} except,  $\forall y\in [R], \matA_y$ is sampled as follows: sample a random function $F: \{0,1\}^r \rightarrow \{0,1\}^{k \cdot (n-r+1)}$ and a random permutation $M:\{0,1\}^{n-r}\rightarrow \{0,1\}^{n-r}$.  Interpret $F(y) = (\matA'_y, \vecb_y)$ where $\matA'_y \in \mathbb{Z}_2^{k \times (n-r)}$ is a full-rank matrix and $\vecb \in \mathbb{Z}_2^{n}$, and interpret $M(y)=M_y$, where $M_y\in \mathbb{Z}_2^{(n-r) \times (n-r)}$ is a full-rank matrix. Set $\matA_y= \matA'_y \circ M_y$.
\begin{claim}
     \[
    \Pr[\textbf{H}_7^{\calA}(1^{\lambda})=1]=\Pr[\textbf{H}_6^{\calA}(1^{\lambda})=1]
    \] 
\end{claim}
\begin{proof}
    Ths distribution of $\matA_y$ is identical in both hybrids, $\textbf{H}_7$ and $\textbf{H}_8$.
\end{proof}
\paragraph{H$_{8}$:} Same as $\textbf{H}_7$ except we change how $\Pi: \{0,1\}^n \rightarrow \{0,1\}^n$ is sampled in $\overline{\OSS}$:
\begin{enumerate}
   \item Sample a random permutation $\Pi':\{0,1\}^n \rightarrow \{0,1\}^n$. 
    \begin{itemize}
        \item  Let $\sigma: \{0,1\}^{n}\rightarrow \{0,1\}^{n}$ be the following permutation: $\sigma(x)$: Parse input as $(y \in \{0,1\}^{r}, z \in \{0,1\}^{n-r})$, and output $(y, M_y^{-1}(z))$.  
    \end{itemize}
    \item $\Pi(x)=\Pi'\circ \sigma(x)$. 
\end{enumerate}
\begin{claim}
 \[
    \Pr[\textbf{H}_8^{\calA}(1^{\lambda})=1]=\Pr[\textbf{H}_7^{\calA}(1^{\lambda})=1]
    \]    
\end{claim}
\begin{proof}
    The 2 hybrids are distributed identically.
\end{proof}

\begin{claim}
  If there exists an adversary $(\mathcal{A}_1, \mathcal{A}_2)$ that wins  the experiment in \textbf{H$_{8}$} with non negligible advantage, then there exists an adversary which wins the coset-collapsing game with non negligible advantage.
\end{claim}
\begin{proof}
    First, observe that the oracles $P,P^{-1},D'$ in $\textbf{H}_8$ are the following:
    \begin{itemize}
   \item Sample a random permutation $\Pi':\{0,1\}^n \rightarrow \{0,1\}^n$. 
    \begin{itemize}
        \item  Let $\sigma: \{0,1\}^{n}\rightarrow \{0,1\}^{n}$ be the following permutation: $\sigma(x)$: Parse input as $(y \in \{0,1\}^{r}, z \in \{0,1\}^{n-r})$, and output $(y, M_y^{-1}(z))$.  
    \end{itemize}
    \item $\Pi(x)=\Pi'\circ \sigma(x)$. Let $H(x)$ be the first $r$ output bits of $\Pi(x)$, and let $J'(x)$ denote the remaining $n-r$ bits.
    
    \item Sample a random function  $F: \{0,1\}^r \rightarrow \{0,1\}^{k \cdot (n-r+1)}$ and a random permutation $M:\{0,1\}^{n-r}\rightarrow \{0,1\}^{n-r}$.  For every $y \
    \in [R]$, Interpret $F(y) = (\matA'_y, \vecb_y)$ where $\matA'_y \in \mathbb{Z}_2^{k \times (n-r)}$ is a full-rank matrix and $\vecb \in \mathbb{Z}_2^{n}$, and interpret $M(y)=M_y$, where $M_y\in \mathbb{Z}_2^{(n-r) \times (n-r)}$ is a full-rank matrix. Set $\matA_y= \matA'_y \circ M_y$. Let  $S_y = \{\matA'_y \vecx + \vecb_y \mid \vecx \in \bbZ_2^{n-r} \}$ be the coset defined by $(\matA'_y, \vecb_y)$.
    \item Now,  the functions $P,P^{-1},D'$ are defined as follows:
    \begin{align*}
          P(x) &= (y, \matA'_y\cdot  J(x) + \vecb_y) \in \{0,1\}^n \rightarrow \{0,1\}^r \times \bbZ_2^{k} \text{ where } y = H(x)\\
          P^{-1}(y, \vecu) &= \begin{dcases}
              \Pi' (y, \vecz) &\text{if }\exists \vecz \in \bbZ_2^{n-r} \text{ such that } \matA'_y \vecz + \vecb_y = \vecu\\
              \bot &\text{otherwise.}
          \end{dcases}\\
          D'(y, \vecv) &= \begin{dcases}
              1 &\text{if } \vecv^\top \matA^{(1)}_y = \mathbf{0}^{n-r-s}\\
              0 &\text{otherwise.}
          \end{dcases}
      \end{align*}
\end{itemize}
The experiment in $\textbf{H}_8$ is the following:
\begin{itemize}
\item Run $\calA_1^{P,P^{-1},D'}(1^\lambda)$ until it outputs a string $y \in \{0,1\}^r$ and a state on registers $(\calU, \calR)$. Here $\calU$ is $k$-qubit register and $\calR$ holds the internal state of $\calA$.
    \item Initialize a register $T$ in the state $\ket{0}^n$ and  apply the map $(u,0)\rightarrow (u,(\matA^{-1}_y(u-\vecb_y))$ on the joint system $\calU\otimes T$.
    \item Measure the first $s$ bits of register $T$ in the computational basis, and then apply the map $(x,0) \rightarrow (x,\matA_y\cdot x+\vecb_y)$ on the resulting state in register $T \otimes \calU$. 
    \item Uncompute the register $T$. 
    \item Sample $b \leftarrow \{0,1\}$. If $b=0$, do nothing, and if $b=1$ measure $(\calB, \calU)$ in the computational basis.
    \item Run $\calA_2^{P, P^{-1}}(\calU, \calR)$ until it outputs a bit $b'$. The experiment outputs $1$ if $b=b'$ and $0$ otherwise.
    \end{itemize}
    Now, consider the following experiment. 
    \begin{definition}[Coset-collapsing game with Dual Access]
\label{def:coset-collapsing-dual access}
This is a game between a challenger and a two-stage adversary $(\calA_1, \calA_2)$. There is a public subspace $S \subset \F_2^n$ that is known to all parties.
    \begin{enumerate}
         \item The challenger samples a random subspace $T \subset S$, and sends to $A_1$ the oracle $O_{T^\perp}$. Note that $T^\perp$ is a random superspace of $S^{\perp}$. 
        \item The adversary $\calA_1^{O_{T^\perp}}$ outputs a quantum state on two registers $\mathcal{R}, \mathcal{U}$, and sends $\mathcal{U}$ to the challenger and $\mathcal{R}$ to the adversary $\calA_2$.
        \item The challenger does the following:
        \begin{enumerate}
            
            \item \label{itm:subspace-check} Challenger checks in superposition $O_S(\mathcal{U}) = \mathsf{Accept}$; if the check fails, the adversary automatically loses the game.
            \item Perform a measurement that projects $\mathcal{U}$ to a coset of $T$ in $S$. Concretely, let $\co(T)$ be the set of coset representatives of $T$. Then, the measurement is described by the projectors $\{\Pi_u\}_{\co(T)}$, where $\Pi_u := \sum_{t \in T} \ketbra{t+u}{t+u}$ for $u \in \co(T)$.
            \item The challenger samples random $b \leftarrow \{0,1\}$. If $b=0$, do nothing and send the register $\mathcal{U}$ to $\calA_2$. If $b=1$, measure the $\mathcal{U}$ register in the computational basis and then send it to $\calA_2$.
        \end{enumerate}
        \item The second stage adversary, $\calA_2$, does not receive access to $O_{T^{\perp}}$, and outputs a bit $b'$ and wins if $b' = b$.
    \end{enumerate}
\end{definition}

Now by construction, $\textbf{H}_8$ is exactly the coset-collapsing game with dual access when  $T= \text{Colspan}(\matA^{(1)}_y)$, $S= \text{Colspan}(\matA'_y)$. Assume for contradiction that there exists an adversary $(\mathcal{B}_1, \mathcal{B}_2)$ which has distinguishing advantage $\epsilon$ in $\textbf{H}_8$, where $\epsilon$ is a non negligible function.  
    Now consider an adversary $(\mathcal{A}_1, \mathcal{A}_2)$ for the coset collapsing game with dual access, with knowledge of the public subspace $S= \text{ColSpan}(\matA'_y)$  for $y \in [R]$ and access to a membership oracle for $T^{\bot}=\text{Colspan}(\matA^{(1)\perp}_y)$. 
    With probability $1/R$, $\calA$ guesses correctly the value of $y$ that $\calB$ commits to in the game. Conditioned on guessing correctly,
    $(\mathcal{A}_1, \mathcal{A}_2)$ can run $(\mathcal{B}_1, \mathcal{B}_2)$, so the distinguishing advantage of $(\mathcal{A}_1, \mathcal{A}_2)$ is $\frac{1}{R}\cdot \epsilon$, which is still non negligible. 
    We will establish next that the coset collapsing game with dual access is indistinguishable from the coset-collapsing game (Definition  \ref{def:coset-collapsing}), and prove that the coset-collapsing game is information theoretically hard (Theorem \ref{thm:coscollapse}).Assuming this, we can complete the proof by invoking Theorem \ref{thm:coscollapse}.
\end{proof}

\end{proof}

\subsubsection{Coset-collapsing game} \label{coset_collapsing_subsec}
We will first consider the following hybrids.\\ 
\noindent$\textbf{H}_0:$ This is the Coset collapsing game with dual access (Definition \ref{def:coset-collapsing-dual access}).\\
\noindent$\textbf{H}_1:$ Same as $\textbf{H}_0$, but $\mathcal{A}_1$ is given access to $O_{S^{\perp}}$ instead of $O_{T^\perp}$. 
\begin{claim} For any $q$ query algorithm $(\mathcal{A}_1, \mathcal{A}_2)$, 
    \[\Pr[1 \gets \textbf{H}^{(\mathcal{A}_1, \mathcal{A}_2)}_0]-\Pr[1 \gets \textbf{H}^{(\mathcal{A}_1, \mathcal{A}_2)}_1]\leq \text{negl}(\lambda)\]
\end{claim}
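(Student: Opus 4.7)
My plan is to apply the evasive oracles lemma (Lemma~\ref{lem:evasive-oracles}), with random key $T$ sampled uniformly from the $\dim(T)$-dimensional subspaces of $S$, oracle $O^0 := O_{S^\perp}$ (which is independent of $T$), and oracle $O^1 := O_{T^\perp}$. Because $T \subseteq S$ implies $S^\perp \subseteq T^\perp$, the two oracles agree on $S^\perp$ (both accept) and on the complement of $T^\perp$ (both reject), so the set on which they differ is exactly $B_T := T^\perp \setminus S^\perp$. It remains to bound $\E_T\|\Pi[B_T]\, U^{O^0}\ket{\psi_T}\|^2$, where $U$ captures $\calA_1$'s computation.

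The key ingredient is a random-subspace estimate. For any fixed $v \in \F_2^n \setminus S^\perp$, the event $v \in T^\perp$ is equivalent to $T \subseteq v^\perp \cap S$. Because $v \notin S^\perp$ means $v|_S$ is nonzero, $v^\perp \cap S$ has codimension one in $S$, and a standard Gaussian-binomial calculation, combined with the elementary bound $\tfrac{2^{\dim(S)-1} - 2^i}{2^{\dim(S)} - 2^i} \leq \tfrac{1}{2}$ for all $0 \leq i < \dim(S)$, yields
\[\Pr_T[v \in T^\perp] \;=\; \prod_{i=0}^{\dim(T)-1} \frac{2^{\dim(S)-1} - 2^i}{2^{\dim(S)} - 2^i} \;\leq\; 2^{-\dim(T)}.\]
Since $\calA_1$'s computation depends only on $O_{S^\perp}$ and is therefore independent of $T$, linearity of expectation gives, for any query-register state $\sum_v \alpha_v\ket{v}$ arising during $\calA_1$'s execution, the bound $\E_T\bigl[\sum_{v \in B_T} |\alpha_v|^2\bigr] \leq 2^{-\dim(T)}$.

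Plugging this into Lemma~\ref{lem:evasive-oracles} with $\epsilon = 2^{-\dim(T)}$ yields a distinguishing advantage of at most $4q \cdot 2^{-\dim(T)/2}$, which in the instantiation coming from Hybrid~$\textbf{H}_7$ (where $\dim(T) = n-r-s$) becomes $O(q \cdot 2^{-(n-r-s)/2})$. This is negligible by the parameter assumption $\tfrac{k^9 q^7 \epsilon^{-4}}{\sqrt{2^{n-r-s}}} = o(1)$ of Theorem~\ref{lem:oss-collapsing}. The main subtlety I expect is in the bookkeeping for Lemma~\ref{lem:evasive-oracles}, which is phrased using a projector on an ``output register'' while our claim concerns a multi-stage game with a bit-valued output: one must treat $\calA_1$ as a unitary whose effective ``output register'' incorporates its final state on $(\calU, \calR)$ together with its final query register, so that $\Pi[B_T]$ acts nontrivially. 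Indistinguishability of this joint state then propagates automatically to the bit $b'$ produced later by $\calA_2$ via the data-processing inequality, since neither the challenger's operations nor $\calA_2$ access the oracle that is being swapped.
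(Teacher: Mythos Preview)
Your proposal is correct and takes essentially the same approach as the paper: both arguments are the standard ``(un)bloating the dual'' subspace-hiding step. The paper simply refers back to Claim~\ref{claim:bloatdual}, which invokes Lemma~19 of \cite{shmueli2025one} as a black box, whereas you unpack that same step via Lemma~\ref{lem:evasive-oracles} together with an explicit Gaussian-binomial bound $\Pr_T[v\in T^\perp]\le 2^{-\dim(T)}$, arriving at the (slightly sharper) advantage $4q\cdot 2^{-(n-r-s)/2}$.
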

\begin{proof}
   We unbloated the dual in $\textbf{H}_1$, and the proof is identical to the proof of Claim \ref{claim:bloatdual}. 
\end{proof}
\noindent$\textbf{H}_2:$ Same as $\textbf{H}_1$, but $(\mathcal{A}_1, \mathcal{A}_2)$ are not given access to $O_{S^{\perp}}$.
\begin{claim}
    \[\Pr[1 \gets \textbf{H}^{(\mathcal{A}_1, \mathcal{A}_2)}_1]=
\Pr[1 \gets \textbf{H}^{(\mathcal{A}_1, \mathcal{A}_2)}_2]\]
\end{claim}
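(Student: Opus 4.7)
The plan is to exploit the fact that in the coset-collapsing game with dual access (Definition \ref{def:coset-collapsing-dual access}), the subspace $S \subseteq \mathbb{F}_2^n$ is a \emph{public} parameter known to all parties at the outset. Consequently, the dual subspace $S^{\perp}$ is publicly computable from a basis of $S$ by standard linear algebra, and the membership-check function $O_{S^{\perp}}$ --- which on input $v$ returns $1$ if and only if $v \cdot s = 0$ for every $s$ in a basis of $S$ --- can be evaluated efficiently (and reversibly, as a classical function) by any party given only the description of $S$. Thus the oracle $O_{S^{\perp}}$ provides no information beyond what the public parameter $S$ already supplies.

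First I would formalize this via a two-sided simulation argument. In one direction, given any adversary $(\mathcal{A}_1', \mathcal{A}_2')$ playing $\textbf{H}_2$ (without access to $O_{S^{\perp}}$), the corresponding adversary for $\textbf{H}_1$ simply ignores its external $O_{S^{\perp}}$ oracle and runs $(\mathcal{A}_1', \mathcal{A}_2')$ verbatim, so its winning probability is unchanged. In the other direction, given any adversary $(\mathcal{A}_1, \mathcal{A}_2)$ for $\textbf{H}_1$, we construct a corresponding adversary for $\textbf{H}_2$ that internally implements the coherent oracle $v \mapsto v \oplus \mathbf{1}[v \in S^{\perp}]$ using the publicly-known description of $S$, and otherwise behaves identically to $(\mathcal{A}_1, \mathcal{A}_2)$. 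Because the membership check is a classical deterministic function of $(v, S)$, its coherent implementation is \emph{perfect}: the joint state produced after each simulated oracle query matches exactly the state that $(\mathcal{A}_1, \mathcal{A}_2)$ would hold in $\textbf{H}_1$ after a real query.

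Since both simulations preserve the entire transcript of the game --- including the challenger's random choices, the projective measurement onto a coset of $T$ in $S$, the hidden bit $b$, and the adversary's final output $b'$ --- the two hybrids induce identical distributions on $b'$. In particular $\Pr[1 \gets \textbf{H}_1^{(\mathcal{A}_1, \mathcal{A}_2)}] = \Pr[1 \gets \textbf{H}_2^{(\mathcal{A}_1, \mathcal{A}_2)}]$, as claimed. I do not anticipate any substantive obstacle; the only point that requires a line of care is that the simulation of $O_{S^{\perp}}$ must be implemented coherently on superposition inputs, but since membership in $S^{\perp}$ is a classical function fully determined by the public subspace $S$, the standard reversible-computation compiler gives a perfect (not just statistically close) coherent implementation, which is why the claim holds with exact equality rather than up to a negligible term.
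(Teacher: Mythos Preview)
Your proposal is correct and takes essentially the same approach as the paper: the paper's proof is a single sentence observing that since $S$ is a public subspace known to $(\mathcal{A}_1,\mathcal{A}_2)$, the oracle $O_{S^\perp}$ can be computed locally and hence grants nothing. Your write-up simply unpacks this into an explicit (and perfectly sound) two-sided simulation argument, including the observation that the coherent implementation of the classical membership check is exact, which justifies the equality rather than a negligible gap.
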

\begin{proof}
    This claim follows because $(\mathcal{A}_1, \mathcal{A}_2)$ know the description of $S$, since its a public subspace.
\end{proof}
Now, $\textbf{H}_2$ is the coset collapsing game, which we define more concretely below.
\begin{definition}[Coset-collapsing game]
\label{def:coset-collapsing}
This is a game between a challenger and a two-stage adversary $(\calA_1, \calA_2)$. There is a public subspace $S \subset \F_2^n$ that is known to all parties. Let $n \ge k$.
    \begin{enumerate}
        \item The adversary $\calA_1$ outputs a quantum state on two registers $\mathcal{R}, \mathcal{U}$, and sends $\mathcal{U}$ to the challenger and $\mathcal{R}$ to the adversary $\calA_2$.
        \item The challenger does the following:
        \begin{enumerate}
            \item Sample a random subspace $T \subset S$ of $\dim(T) = k$.
            \item \label{itm:subspace-check} Challenger checks in superposition $O_S(\mathcal{U}) = \mathsf{Accept}$; if the check fails, the adversary automatically loses the game.
            \item Perform a measurement that projects $\mathcal{U}$ to a coset of $T$ in $S$. Concretely, let $\co(T)$ be the set of coset representatives of $T$. Then, the measurement is described by the projectors $\{\Pi_u\}_{\co(T)}$, where $\Pi_u := \sum_{t \in T} \ketbra{t+u}{t+u}$ for $u \in \co(T)$.
            \item The challenger samples random $b \leftarrow \{0,1\}$. If $b=0$, do nothing and send the register $\mathcal{U}$ to $\calA_2$. If $b=1$, measure the $\mathcal{U}$ register in the computational basis and then send it to $\calA_2$.
        \end{enumerate}
        \item Adversary $\calA_2$ outputs a bit $b'$ and wins if $b' = b$.
    \end{enumerate}
\end{definition}

\begin{theorem}
\label{thm:coscollapse}
    Every (potentially unbounded) quantum adversary wins the coset-collapsing game in \Cref{def:coset-collapsing} with advantage at most ${2^k}/{2^n}$.
\end{theorem}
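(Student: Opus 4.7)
The plan is to directly compute the (subnormalized) quantum states $\sigma_0$ and $\sigma_1$ sent to $\calA_2$ when $b = 0$ and $b = 1$ respectively, averaged over the challenger's random choice of $T$, and bound $\|\sigma_0 - \sigma_1\|_1$. Since the adversary loses automatically when the subspace check in step (b) fails, we may restrict attention to the post-check state $\sigma := (P_S \otimes I_\calR)\,\rho\,(P_S \otimes I_\calR)$, where $\rho$ is the state output by $\calA_1$ and $P_S$ is the projector onto $S$; by construction $\sigma$ is supported on $S \otimes \calR$ with $\mathsf{Tr}(\sigma) \leq 1$.

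The main computational insight for $\sigma_1$ is that it is independent of $T$: each coset projector $\Pi_u = \sum_{t \in T}\ketbra{t+u}{t+u}$ is diagonal in the computational basis, so the $\{\Pi_u\}$-measurement followed by a computational-basis measurement is equivalent to directly measuring in the computational basis. Hence $\sigma_1 = \sum_{v \in S}\ketbra{v}{v}\,\sigma\,\ketbra{v}{v}$. For $\sigma_0$, reparametrizing $v = t + u$ and $w = t' + u$ with $t,t' \in T$ and $u \in \co(T)$ rewrites the coset measurement (at a fixed $T$) as
\[
\sum_{u \in \co(T)} \Pi_u\, \sigma\, \Pi_u \;=\; \sum_{v, w \in S}\mathbf{1}[v - w \in T]\, \ketbra{v}{v}\,\sigma\,\ketbra{w}{w},
\]
and averaging over $T$ then gives $\sigma_0 = \sum_{v,w \in S} \Pr_T[v - w \in T]\,\ketbra{v}{v}\,\sigma\,\ketbra{w}{w}$.

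A standard Gaussian-binomial count then shows that for every fixed nonzero $x \in S$,
\[
\Pr_T[x \in T] \;=\; \frac{2^k - 1}{2^{\dim S} - 1} \;=:\; p,
\]
while trivially $\Pr_T[0 \in T] = 1$. Splitting the sum over $v = w$ versus $v \neq w$ and using that $\sum_{v,w \in S} \ketbra{v}{v}\,\sigma\,\ketbra{w}{w} = \sigma$ (since $\sigma$ is supported on $S \otimes \calR$), we arrive at the clean identity
\[
\sigma_0 - \sigma_1 \;=\; p\,(\sigma - \sigma_1).
\]
Since $\|\sigma - \sigma_1\|_1 \leq 2\,\mathsf{Tr}(\sigma) \leq 2$, we conclude $\tfrac{1}{2}\|\sigma_0 - \sigma_1\|_1 \leq p \leq 2^k/2^n$ (the final inequality using $k \leq n$ and taking $\dim S = n$, which is the relevant regime). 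The argument is purely information-theoretic and involves no quantum tricks beyond the observation that the computational-basis measurement commutes through $\Pi_u$; the only ``obstacle'' is the bookkeeping required to recognize the off-diagonal expansion of $\sigma_0$ and pair it with the Gaussian-binomial computation, after which the trace-distance bound is immediate.
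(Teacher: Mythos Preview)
Your proof is correct and takes a genuinely different, more elementary route than the paper's. The paper first recasts the game as distinguishing two channels $\Phi_1$ (average coset projection) and $\Phi_2$ (full standard-basis measurement), then invokes a teleportation argument: because both channels absorb Pauli-$X$ and commute with Pauli-$Z$, the diamond norm $\|\Phi_1-\Phi_2\|_\diamond$ is attained on half of a maximally entangled pair, which reduces to evaluating both channels on the single input $\ket{+}^{\otimes n}$; it then computes that trace distance by recognizing the difference as $\beta(\mathbf{1}\mathbf{1}^\top - I)$ and reading off the eigenvalues.

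You bypass all of this machinery by working directly with the arbitrary (bipartite) post-check state $\sigma$ and deriving the clean identity $\sigma_0-\sigma_1 = p(\sigma-\sigma_1)$. The underlying symmetry observation is the same one the paper uses --- after averaging over $T$, every off-diagonal block carries the uniform weight $p=(2^k-1)/(2^{\dim S}-1)$ --- but you apply it \emph{before} specializing to any particular input, so no teleportation reduction or eigenvalue computation is needed. The payoff is a noticeably shorter argument with no auxiliary lemmas; the paper's route, by contrast, packages the diamond-norm bound as a standalone lemma (which could be reused elsewhere) at the cost of more overhead here. Your caveat that the stated bound $2^k/2^n$ requires $\dim S = n$ matches exactly what the paper does implicitly via its change of basis.
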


\begin{proof}
    Since the subspace $S$ is public and known to the adversary, we can assume that the subspace-check in \Cref{itm:subspace-check} of \Cref{def:coset-collapsing} always passes, and can perform a change-of-basis so that $S = \F_2^{n'}$ where $n' = \dim(S)$.

    After this change of basis, if the challenger samples $b = 1$, the channel it applies is simply $\Phi_1$ as defined in \Cref{lem:change-of-basis-coset-collapsing}, and if $b = 0$, the challenger applies the $\Phi_2$ channel. Then, the theorem follows from \Cref{lem:change-of-basis-coset-collapsing}.
\end{proof}

\def\co{\mathsf{co}}
\begin{lemma}\label{lem:change-of-basis-coset-collapsing}
    The diamond distance between the following two channels $\Phi_1, \Phi_2$ is $$\| \Phi_1 - \Phi_2 \|_\diamond \le \frac{2^k}{2^n}.$$
    \begin{itemize}
        \item $\Phi_1$ is defined as follows: Sample a random subspace $T \subset \mathbb{F}^n_2$ of $\dim(T) = k$, and project onto the cosets of $T$. Concretely, let $\co(T)$ be the set of coset representatives of $T$, and let $\# T$ be the number of subspaces $T$ such that $\dim(T) = k$.
        \begin{align*}
            \Phi_1(\rho) := \frac{1}{\# T} \sum_T \sum_{u \in \co(T)} \Pi_u \rho \Pi_u,
        \end{align*}
        where $\Pi_u := \sum_{t \in T} \ketbra{t+u}{t+u}$ for any $u \in \co(T)$. 
        \item $\Phi_2$ is defined as simply measuring in the computational basis
        \begin{align*}
            \Phi_2(\rho) := \sum_{x \in \F_2^n} \Pi_x \rho \Pi_x,
        \end{align*}
        where $\Pi_x := \ketbra{x}{x}$.
    \end{itemize}
\end{lemma}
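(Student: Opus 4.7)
The plan is to compute both $\Phi_1$ and $\Phi_2$ as Schur multipliers in the operator basis $\{\ketbra{x}{y}\}_{x,y \in \mathbb{F}_2^n}$, and then bound the diamond distance of the algebraically simple difference.

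First, I would determine the action of each channel on basis operators. Since $\Pi_u\ket{x}$ equals $\ket{x}$ when $x \in u+T$ and vanishes otherwise, summing over the coset representatives gives $\sum_{u \in \co(T)} \Pi_u \ketbra{x}{y} \Pi_u = \ketbra{x}{y}\cdot\mathbf{1}[x-y \in T]$, so averaging over the uniformly random $k$-dimensional subspace $T$ yields $\Phi_1(\ketbra{x}{y}) = \Pr_T[x-y \in T] \cdot \ketbra{x}{y}$, while $\Phi_2(\ketbra{x}{y}) = \delta_{x,y} \ketbra{x}{x}$. A standard counting argument---e.g., extending any fixed nonzero vector to a basis and comparing cardinalities of Gaussian binomials---shows $\Pr_T[v \in T] = (2^k-1)/(2^n-1) \leq 2^k/2^n$ for every nonzero $v$. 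Writing $\alpha := (2^k-1)/(2^n-1)$, we obtain the clean identity $\Phi_1 - \Phi_2 = \alpha \cdot (\mathrm{id} - D)$, where $D$ is the completely dephasing channel in the computational basis (which kills all off-diagonal entries).

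Next, I would bound $\|\mathrm{id} - D\|_\diamond$ by evaluating on a general purified input $\ket{\psi}_{AB} = \sum_x \sqrt{p_x} \ket{x}_A \ket{\phi_x}_B$, which suffices for the diamond norm. Since $\{\ket{x\phi_x}\}_x$ forms an orthonormal set (different $A$-subspaces), the output
\[ ((\mathrm{id}-D)\otimes I)(\ketbra{\psi}{\psi}) = \ketbra{\psi}{\psi} - \sum_x p_x \ketbra{x\phi_x}{x\phi_x} \]
is supported in the span of $\{\ket{x\phi_x}\}$, where its matrix takes the form $\ketbra{v}{v} - \mathrm{diag}(p_x)$ with $\ket{v} = \sum_x \sqrt{p_x}\ket{x}$. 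A direct spectral analysis---most transparently in the uniform case $p_x = 1/2^n$, where the matrix becomes $(J-I)/2^n$ with eigenvalues $(2^n-1)/2^n$ (once) and $-1/2^n$ (with multiplicity $2^n-1$)---shows the trace norm is at most $2(1-1/2^n)$, maximized at the uniform distribution.

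Combining the two steps gives $\|\Phi_1 - \Phi_2\|_\diamond \leq 2\alpha(1 - 1/2^n) = 2(2^k - 1)/2^n$, from which the stated bound follows (modulo the standard normalization convention for the diamond distance). The main obstacle is verifying that the uniform distribution indeed maximizes the trace norm of $\ketbra{v}{v} - \mathrm{diag}(p_x)$ over distributions $p$ on $2^n$ points. The cleanest route is to observe that $\Phi_1$ and $\Phi_2$ are both Pauli channels (mixtures of conjugations by diagonal $Z^w$ operators), so by the twirl symmetry the maximally entangled state is known to be optimal for computing the diamond norm; then the exact value $2\alpha(1-1/2^n)$ can be read off directly from the Choi matrix of $\Phi_1-\Phi_2$, which in the $\{\ket{xx}\}_x$ subspace equals $\alpha(J-I)$ whose trace norm is $2\alpha(2^n - 1)$, yielding the claimed bound after dividing by $2^n$.
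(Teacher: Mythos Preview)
Your proof is correct and takes a genuinely different, more direct route than the paper. Your key observation --- that $\Phi_1 - \Phi_2 = \alpha(\mathrm{id} - D)$ with $\alpha = (2^k-1)/(2^n-1)$ --- does not appear in the paper; instead, the paper computes $\Phi_1(\ketbra{+}{+}^{\otimes n})$ from scratch via a symmetry/averaging argument to determine the diagonal and off-diagonal entries, arriving at the same matrix $\beta(J-I)$ by a longer path. For the reduction to the maximally entangled input, you invoke the standard fact that Pauli-covariant channels attain their diamond norm on the Choi state; the paper proves exactly this fact ad hoc via a teleportation trick (their claim that $\Phi_i$ absorbs Pauli $X$ corrections and commutes with Pauli $Z$ corrections), showing that applying $\Phi_i$ to half an EPR pair and then teleporting equals teleporting first and then applying $\Phi_i$, so that data processing bounds the diamond norm by the Choi-state trace distance. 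The two reductions are equivalent in content, but yours is shorter and names the structural reason (covariance) explicitly, while the paper's version is self-contained.

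One remark on the factor of $2$ you flag: your bound $2(2^k-1)/2^n$ is in fact the correct value of $\|\Phi_1 - \Phi_2\|_\diamond$. The paper's final step mis-states the eigenvalues of $\mathbf{1}\mathbf{1}^\top - I$ as $(0,-1,\ldots,-1)$ when they are $(2^n-1,-1,\ldots,-1)$, so the paper's claimed bound $(2^k-1)/2^n$ is off by the same factor you noticed. This is harmless for the downstream coset-collapsing theorem, which only needs the quantity to be negligible; your hedging on ``normalization convention'' is therefore unnecessary --- you simply have the right constant.
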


\begin{proof}
First let us define the following pairs of channels:
\begin{itemize}
    \item Let $\Phi'_1$ (resp. $\Phi'_2$) be channels defined as follows: append an EPR pair $\ket{\EPR}_{\calC, \calD} \propto \sum_{x} \ket{x}_{\calC} \otimes \ket{x}_{\calD}$ on registers $\calC, \calD$ and teleport the state on input register $\calA$ to $\calD$. Then, apply $\Phi_1$ (resp. $\Phi_2$) on register $\calD$ and output the resulting state. Trace out registers $\calA, \calC$. 
    \item Let $\Phi_1''$ (resp. $\Phi_2''$) be channels defined as follows: append an EPR pair on registers $\calC, \calD$. Apply the channel $\Phi_1$ (resp. $\Phi_2$) to register $\calD$. Perform the "teleportation step": apply a Bell basis measurement on registers $\calA, \calC$ and then apply the $\Z$-gate teleportation corrections on register $\calD$. Trace out registers $\calA, \calC$.
\end{itemize}

Since $\Phi_1', \Phi_2'$ simply teleports the input state and then applies $\Phi_1, \Phi_2$, it is clear that $$\|\Phi_1 - \Phi_2\|_\diamond = \|\Phi'_1 - \Phi'_2\|_\diamond.$$

The teleportation step consists of a Bell basis measurement on registers $\calA, \calC$ followed by a Pauli correction on register $\calD$. \Cref{clm:measurement-Pauli-commute} shows that $\Phi_1$ and $\Phi_2$ absorb the Pauli $\X$ corrections and commute with the Pauli $\Z$ corrections, which implies that $$\|\Phi_1 - \Phi_2\|_\diamond = \|\Phi''_1 - \Phi''_2\|_\diamond.$$

\begin{claim}\label{clm:measurement-Pauli-commute}
    Let $\Phi_1', \Phi_2'$ and $\Phi_1'', \Phi_2''$ be channels as defined above. Then, $\Phi_1' = \Phi_1''$ and $\Phi_2' = \Phi_2''$.
\end{claim}
\begin{proof}
    Observe that $\Phi_1, \Phi_2$ are both channels of the form $\Phi(\rho) = \sum_i p_i \Pi_i \rho \Pi_i$ for some $\{(p_i, \Pi_i, S_i)\}_{i \in \mathcal{I}}$ such that
    $$\Pi_i = \sum_{x \in S_i} \ketbra{x}{x},$$
    where $\calI$ is a finite index set, $\sum_i p_i = 1$ are probability masses, $\Pi_i$ are projections\footnote{Note that it is not true in the case of $\Phi_1$ that $\sum_i \Pi_i = I$.} and $S_i \in \{0,1\}^n$ are (not necessarily disjoint) sets that cover all of $\{0,1\}^n$. When convenient, we will alternate between the following two notations: $\Pi_i = \Pi_{S_i}$.

    \begin{itemize}
        \item For $\Phi_1$, the index set is $\calI = \{(T, u)\}$ where $T$ is a $k$-dimensional subspace of $\F_2^n$ and $u \in \co(T)$ is a representative of one of the cosets of $T$. The probabilities $p_i = \frac{1}{\# T}$ and $S_{T, u} = T+u$.
        \item For $\Phi_2$, the index set is particular simple $\calI = \{x \mid x \in \F_2^n\}$ and $S_x = \{x\}$.
    \end{itemize}

    The channels $\Phi_1, \Phi_2$ have the further property that the probability distribution $\{p_i\}_{i \in \calI}$ is uniform, and for every $\barx \in \{0,1\}^n$, shifting by $\barz$ simply rotates the sets $S_i$. That is, $\{S_i\}_{i \in \calI} = \{S_i + \barx\}_{i \in \calI}$.

    For any Pauli corrections $\overline{x}, \overline{z}$, and for any input state $\rho$, we have that
    \begin{align*}
        \Phi(\X^{\barx} \Z^{\barz} \rho \X^{\barx} \Z^{\barz}) &= \sum_i p_i \, \Pi_i \, \X^{\barx} \Z^{\barz} \, \rho \, \X^{\barx} \Z^{\barz} \Pi_i = \sum_i p_i \Pi_{S_i + \barx} \Z^{\barz} \rho \Z^{\barz} \Pi_{S_i + \barx}\\
        &= \Z^{\barz} \sum_i p_i \Pi_i \rho \Pi_i \Z^{\barz} = \Z^{\barz} \Phi(\rho) \Z^{\barz}.
    \end{align*}
\end{proof}

Since these channels are first applying $\Phi_1$ (resp. $\Phi_2$) to half of the EPR pair and the performing some additional measurements as part of the teleportation step (which can only decrease the distance), we only need to worry about how $\Phi_1, \Phi_2$ act on halves of EPR pairs. That is, for every bipartite state $\rho_{\calA, \calB}$,
\begin{align*}
    \|(\Phi''_1 \otimes I) \rho_{\calA, \calB} - (\Phi''_2 \otimes I) \rho_{\calA, \calB}\|_1
    &\le \| (I \otimes \Phi_1) \ketbra{\EPR}{\EPR} - (I \otimes \Phi_2) \ketbra{\EPR}{\EPR}\|_1.
\end{align*}
Therefore,
\begin{align*}
    \|\Phi_1 - \Phi_2 \|_\diamond &=  \|\Phi''_1 - \Phi''_2 \|_\diamond \\
    &= \max_{\rho_{\calA, \calB}} \|(\Phi''_1 \otimes I) (\rho_{\calA, \calB}) - (\Phi''_2 \otimes I) (\rho_{\calA, \calB})\|_1\\
    &\le \| (I \otimes \Phi_1) \ketbra{\EPR}{\EPR} - (I \otimes \Phi_2) \ketbra{\EPR}{\EPR}\|_1.
\end{align*}
For channels $\Phi$ of the form in the proof of \Cref{clm:measurement-Pauli-commute},
\begin{align*}
    I \otimes \Phi (\ketbra{\EPR}{\EPR}) &= \sum_i \sum_{x, x'} p_i (I \otimes \Pi_i) \ketbra{x,x}{x',x'} (I \otimes \Pi_i)\\
    &= \sum_{i} \sum_{x, x' \in S_i} p_i \ketbra{x, x}{x', x'}\\
    &= \CNOT(\Phi(\ketbra{+}{+}^{\otimes}) \otimes \ketbra{0^n}{0^n}).
\end{align*}
Since the map $\CNOT(\cdot \otimes \ketbra{0^n}{0^n})$ preserves the trace distance, we have that
\begin{align*}
    \| (I \otimes \Phi_1) \ketbra{\EPR}{\EPR} - (I \otimes \Phi_2) \ketbra{\EPR}{\EPR}\|_1 = \| \Phi_1(\ketbra{+}{+}^{\otimes n}) - \Phi_2(\ketbra{+}{+}^{\otimes n})\|_1,
\end{align*}
and \Cref{lem:diamond-distance-plus-state} completes the proof.
\end{proof}

\begin{lemma}\label{lem:diamond-distance-plus-state}
\begin{align*}
    \left\|\Phi_1\left(\ketbra{+}{+}^{\otimes n}\right) - \Phi_2\left(\ketbra{+}{+}^{\otimes n}\right)\right\|_1 \le \frac{2^k}{2^n}.
\end{align*}
\end{lemma}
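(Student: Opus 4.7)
The plan is to compute both output states explicitly in the computational basis and observe that their difference is a scalar multiple of a very simple matrix whose trace norm is read off from its eigenvalues. Since $\Phi_2$ is full computational dephasing, we immediately have $\Phi_2(\ketbra{+}{+}^{\otimes n}) = I/2^n$. For $\Phi_1$, I would first analyze the channel for a fixed $k$-dimensional subspace $T$: the map $\rho \mapsto \sum_{u \in \co(T)} \Pi_u\, \rho\, \Pi_u$ acts in the computational basis as ``partial dephasing,'' preserving matrix entries $(x,y)$ with $x - y \in T$ and zeroing out all others. Applied to $\ketbra{+}{+}^{\otimes n}$, whose entries are all $1/2^n$, this yields matrix entries $\frac{1}{2^n}\,[x - y \in T]$. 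Averaging over a uniformly random $k$-dimensional subspace $T$, the $(x,y)$-entry of $\rho_1 := \Phi_1(\ketbra{+}{+}^{\otimes n})$ becomes $\frac{1}{2^n}\,\Pr_T[x - y \in T]$.

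A standard double-counting argument (every nonzero vector in $\mathbb{F}_2^n$ lies in the same number of $k$-dimensional subspaces, by $GL_n(\mathbb{F}_2)$-symmetry) gives $\Pr_T[v \in T] = (2^k - 1)/(2^n - 1)$ for every nonzero $v$. Consequently the diagonal of $\rho_1$ matches that of $\rho_2 := I/2^n$, while every off-diagonal entry of $\rho_1 - \rho_2$ equals $(2^k - 1)/(2^n(2^n - 1))$. Equivalently,
\begin{equation*}
\rho_1 - \rho_2 \;=\; \frac{2^k - 1}{2^n(2^n - 1)}\,(J - I),
\end{equation*}
where $J := \sum_{x,y}\ketbra{x}{y} = 2^n\,\ketbra{+}{+}^{\otimes n}$ is the all-ones matrix.

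The spectrum of $J - I$ is transparent: eigenvalue $2^n - 1$ on $\ket{+}^{\otimes n}$ and eigenvalue $-1$ on its $(2^n - 1)$-dimensional orthogonal complement, so $\|J - I\|_1 = 2(2^n - 1)$. Plugging this in yields $\|\rho_1 - \rho_2\|_1 = 2(2^k - 1)/2^n$, matching the stated bound up to constants (depending on the standard factor-of-two convention between trace norm and trace distance). I do not expect any serious obstacle: the substantive content is the subspace-containment probability, which is a one-line orbit argument, and everything else is direct linear algebra in the computational basis.
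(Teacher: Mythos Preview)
Your approach is essentially identical to the paper's: compute $\Phi_1(\ketbra{+}{+}^{\otimes n})$ entrywise via symmetry over subspaces, subtract the maximally mixed state $\Phi_2(\ketbra{+}{+}^{\otimes n})=I/2^n$, recognize the difference as a scalar multiple of $J-I$, and read off the trace norm from its eigenvalues. Your eigenvalue computation for $J-I$ (spectrum $\{2^n-1,-1,\ldots,-1\}$, hence $\|J-I\|_1=2(2^n-1)$) is in fact the correct one; the factor-of-two discrepancy you flag is due to a small slip in the paper's own eigenvalue list, not a convention issue, and is immaterial to the downstream use of the lemma.
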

\begin{proof}
For a fixed subspace $T$ of $\dim(T) = k$,
\begin{align*}
\sum_{u \in \co(T)} \Pi_u \ketbra{+}{+}^{\otimes n} \Pi_u &= \sum_{u \in \co(T)} \left(\sum_{t \in T} \ketbra{t+u}{t+u}\right) \left(\frac{1}{2^n}\sum_{x, x' \in \F_2^n} \ketbra{x}{x'}\right) \left(\sum_{t' \in T} \ketbra{t'+u}{t'+u}\right)\\
&= \frac{1}{2^n} \sum_{u \in \co(T)} \sum_{t, t' \in T} \sum_{x, x'} \ket{t+u}\!\!\braket{t+u|x}\!\!\braket{x'|t'+u} \!\!\bra{t'+u}\\
&= \frac{1}{2^n} \sum_{u \in \co(T)} \sum_{t, t' \in T} \ketbra{t+u}{t'+u}.
\end{align*}
Grouping the cosets of $T$ together, this is a block diagonal matrix containing $2^{n-k}$ blocks, where each block is $\frac{1}{2^n} \mathbf{1} \mathbf{1}^\top 2^k \times 2^k$, where $\mathbf{1}\mathbf{1}^\top$ is the all ones matrix. 

Average over all possible subspaces $T$, we can argue that by symmetry, every diagonal entry must be the same value, say $\alpha$, and every off-diagonal entry must also be the same value $\beta$
\begin{align*}
    \Phi_1 \left(\ketbra{+}{+}^{\otimes n}\right) = \sum_{x} \alpha \ketbra{x}{x} + \sum_{x \neq x'} \beta \ketbra{x}{x'}.
\end{align*}
By an averaging argument, we get that $\alpha = \frac{1}{2^n}$ and $\beta = \frac{2^k - 1}{2^{2n} - 2^n}$.

Channel $\Phi_2$ is simply the full standard basis measurement, so
\begin{align*}
    \Phi_2 \left(\ketbra{+}{+}^{\otimes n}\right) &= \frac{1}{2^n} \sum_{x \in \F_2^n} \ketbra{x}{x}.
\end{align*}
Since $\mathbf{1}_d \mathbf{1}_d^\top - I_d$ has eigenvalues $(0, -1, -1, \ldots, -1)$, we know that $\|\mathbf{1}_d \mathbf{1}_d^\top - I_d\|_1 = d-1$, and
\begin{align*}
    \left\|\Phi_1 \left(\ketbra{+}{+}^{\otimes n}\right) - \Phi_2 \left(\ketbra{+}{+}^{\otimes n}\right)\right\|_1 &= \beta \cdot (2^n -1) = \frac{2^k - 1}{2^n}.
\end{align*}
\end{proof}
\section{Post-quantum Succinct Ideal Obfuscation}\label{sec:input-succinct}

\begin{definition}[Post-quantum succinct ideal obfuscation]\label{def:succinct-obf}
    A (post-quantum) succinct ideal obfuscator in the oracle model consists of the following algorithms.
    \begin{itemize}
        \item $\Obf(1^\secp,P,T) \to (O[P], \pp)$: The PPT obfuscator takes as input the security parameter $1^\secp$, the description of a classical Turing machine $P$ with single-bit output, and a time bound $T = \poly(\secp)$, and outputs the description of a \emph{circuit} $O[P]$ and public parameters $\pp$.
        \item $\Eval^{O[P]}(\pp, x) \to y$: The PPT evaluation algorithm has oracle access to $O[P]$, takes an input $x$, along with public parameters $\pp$, and produces an output $y$.
    \end{itemize}
    It should satisfy the following properties.
    \begin{itemize}
        \item \textbf{Correctness.} For any $1^\secp,P,T,x$, and $(O[P], \pp) \in \Obf(1^\secp,P,T)$, it holds that $\Eval^{O[P]}(\pp, x) = P_T(x)$, where $P_T(x)$ is the bit on the output tape of $P$ after $T$ steps.
        \item \textbf{Succinctness.} There exists a fixed polynomial $p$ such that for any program $P$, time bound $T$, and $(O[P], \pp) \gets \Obf(1^\secp,P,T)$, it holds that $|(O[P], \pp)| = p(\secp,|P|)$.
        \item \textbf{Ideal obfuscation.} There exists a QPT simulator $\calS$ such that for any QPT adversary $\calA$, program $P$, time bound $T$, and auxiliary information $z$,
        \[\left\{\calA^{O[P]}(\pp, z) : O[P],\pp \gets \Obf(1^\secp,P,T)\right\} \approx \left\{\calS^{P_T}(z)\right\}.\]
    \end{itemize}
\end{definition}

Classically, the above follows from FHE and SNARKs in the oracle model in a straightforward manner. In this section, we prove that the same construction is also post-quantum secure. Before describing the construction, we will introduce some relevant prelimnaries. 
\subsection{SNARKs}
\begin{definition}\label{def:snark}
    A succinct non-interactive argument of knowledge (SNARK) for NP in the quantum random oracle model (QROM) consists of the following algorithms. Let $H : \{0,1\}^{2\secp} \to \{0,1\}^\secp$ be a fixed-size random oracle.
    \begin{itemize}
        \item $\Prove^H(1^\secp,C,x) \to (z,\pi)$: The p.p.t. prove algorithm takes as input the security parameter $1^\secp$, the description of a classical deterministic computation $C$, and an input $x$, and produces an output $z$ and a proof $\pi$.
        \item $\Ver^H(1^\secp,C,z,\pi) \to \{\top,\bot\}$: The deterministic classical verification algorithm takes as input the security parameter $1^\secp$, the description of a classical deterministic computation $C$, an output $z$, and a proof $\pi$, and either accepts or rejects.
    \end{itemize}
    It should satisfy the following properties.
    \begin{itemize}
        \item \textbf{Completeness.} For any $\secp,C,x,z$ such that $C(x) = z$,
        \[\Pr_H\left[\Ver^H(1^\secp,C,z,\pi) = \top : (z,\pi) \gets \Prove^H(1^\secp,C,x)\right] = 1.\]
        \item \textbf{Succinctness.} There exists a fixed polynomial $p$ such that for any time $T$ computation $C$, the running time of $\Ver^H(1^\secp,C,z,\pi)$ is $p(\secp,\log T,|C|,|z|)$. Note that $|x|$ and the time $T$ might be much larger than $p(\secp,\log T,|C|,|z|)$.
        \item \textbf{Argument of Knowledge.} The argument $(\Prove, \Ver)$ is an argument of knowledge with extraction probability $\kappa$ if there exists a polynomial time quantum extractor $\mathcal{E}$ such that, for every input $x$ and $t$ query quantum oracle algorithm $\Tilde{P}$, if, over a random oracle $H$, for $\pi:= \Tilde{P}^{H}$, it holds that $\Ver^{H}(z,\pi)=1$ with probability $\mu$, the probability that $\mathcal{E}^{\Tilde{P}}(1^{t},1^{\lambda})$ outputs a valid witness for $\pi$ is at least $\kappa(t, \mu, \lambda)$. Here the notaion $\mathcal{E}^{\Tilde{P}}$ denotes that $\mathcal{E}$ has black box access to $\Tilde{P}$. 
    \end{itemize}
\end{definition}
\subsection{Probabilistically Checkable Proofs (PCPs)}
\begin{definition}
    A probabilistically checkable proof (PCP) for a relation $\cal{R}$ with soundness error $\epsilon$, proof length $\ell$ and alphabet $\Sigma$ is a pair of poly-time algorithms $(\mathbf{P}, \mathbf{V})$ for which the following holds:
    \begin{itemize}
        \item \textbf{Completeness:} For every instance-witness pair $(x,w)\in \cal{R}$, $\mathbf{P}(x,w)$ outputs a proof string $\Pi: [\ell]\rightarrow \Sigma$ such that $\Pr[\mathbf{V}^{\Pi}(x)=1]=1.$
        \item \textbf{Soundness:} For every instance $x \notin \cal{L}(\cal{R})$ and proof string $\Pi:[\ell]\rightarrow \Sigma$, $\Pr[\mathbf{V}^{\Pi}(x)=1]\leq \epsilon.$
        \item \textbf{Proof of Knowledge:} The PCP $(\mathbf{P}, \mathbf{V})$ has knowledge error $k$ if there exists a polynomial time extractor $\mathbf{E}$ such that for every witness $x$ and proof string $\Pi:[\ell]\rightarrow \Sigma$, if $\Pr[\mathbf{V}^{\Pi}(x)=1]>k$, then $\mathbf{E}(x,\Pi)$ outputs  a valid witness $w$. 
    \end{itemize}
\end{definition}
\subsection{The construction of Micali}
Let $(\textbf{P,V})$ be the corresponding PCP with prover $\textbf{P}$ and verifier $\textbf{V}$.
Before describing the construction formally, we will define some notation relevant to Merkle trees. 
Given a domain $Z$ and $\ell=2^d$, a \emph{Merkle tree} on a list $\mathbf{v}=(v_i)_{i \in \ell}\in Z^{\ell}$ with respect to the function $H: Z \times Z \rightarrow Z$ is a list of values $(z_{k,i})_{k \in \{0,\dots d\}, i \in [2^k]}\in Z^{2\ell}$, where 
\[
\forall i \in \ell, z_{d,i}=v_i
\]
\[
\forall k \in \{0,\dots d-1\}, \forall i \in [2^k], z_{k,i}=H(z_{k+1,2i-1}, z_{k+1,2i})
\]
The root of the merkle tree, denoted $\mathsf{rt}$ is $z_{0,1}$.
Given $i \in [\ell],$ where $[\ell]=2^d$, and $d$ is the depth of the tree, the authentication path for the $i^{th}$ leaf is 
\[
ap=(a_k)_{k \in d},
\] where $a_k$ is the sibling of the $k^{th}$ node on the path from $v_i$ (the $i^{th}$ node) to $\mathsf{rt}$. \\
To check whether an authentication path is valid, we define 
\[
\text{CheckPath}^{H}( \mathsf{rt},i,v,ap)=1 \text{ iff } (\bf{x}, \bf{y})\leftarrow \text{ Expand}^{H}( \mathsf{rt},i,v,ap) \text{ has } y_0= \mathsf{rt},
\] 
where $\text{ Expand}^{H}( \mathsf{rt},i,v,ap)$ outputs all the input-output pairs, $(\bf{x}, \bf{y})$, that arise when validating an authentication path. 
\paragraph{Extracting Trees:} We will recall the algorithm $\mathsf{Extract}$ that is used to obtain a Merkle tree rooted at a chosen entry in the database corresponding to the compresssed oracle for $H$. This algorithm is used in \cite{chiesa2019snark}.
In more detail, $\mathsf{Extract}$ receives a database $D: Z\times Z\rightarrow Z$, a root value $\mathsf{rt}\in Z$, and a height bound $d$, and outputs a rooteed binary tree $T=(V,E)$ of depth at most $d$ where $V \subseteq Z \times \{0,1\}^{\leq d}$. \\

\noindent Extract$(D,\mathsf{rt},d):$
\begin{enumerate}
    \item Initialize the vertex set  with the root $V:=\{(\mathsf{rt}, \emptyset)\}$ and the edge set to be empty $E:=\emptyset$.
    \item While there exists an unmarked vertex $(u,s)\in V$ such that $u \in \text{Im}(D)$ and $s \in \{0,1\}^{<d}$:
    \begin{itemize}
        \item Mark the vertex $(u,s)$.
        \item If $\exists x\neq x':D(x)=D(x')=u$, return $\bot$.
        \item Let $x_0, x_1$ be the unique values such that $D(x_0,x_1)=u$.
        \item Update the vertex set $V :=V \cup \{(x_0, s||0), (x_1, s||1)\}$. 
        \item Update the edge set $E :=E \cup \{((u,s),(x_0, s||0)),((u,s), (x_1, s||1))\}$. 
    \end{itemize}
Given a tree $T:= \mathsf{Extract}(D, \mathsf{rt},d) $we can define a string $\mathsf{leaves}(T)$ where for each $i \in \{0,1\}^d$, if there exists a vertex $(x,i)\in T$ then $\mathsf{leaves}(T)_i:=x$, else $\mathsf{leaves}(T)_i:=\bot$.
 \end{enumerate}
   The formal construction of the Micali SNARK is now as follows:
 
\begin{itemize}
    \item $\SNARK.\Prove(x,\mathrm{w})\rightarrow \pi:$ 
    \begin{itemize}
        \item Prover $P$ uses $H$ to commit to the proof string $\Pi\leftarrow \textbf{P}(x,\mathrm{w})$ via a Merkle tree, obtaining a corresponding root $\mathsf{rt}$. 
        \item Randomness $r$ is derived by applying $H$ to $\mathsf{rt}$.
        \item $P$ simulates $\textbf{V}(\Pi, x,r)$ to deduce the queried locations  $(v_i)_{i \in [q]}$ in $\Pi$.
        \item Output $\pi=\{\mathsf{rt}, (v_i)_{i \in [q]}, (ap_i)_{i \in [q]}\} $.
    \end{itemize}
    \item $\SNARK.\Verify(\pi,x)\rightarrow 0/1:$
    \begin{itemize}
        \item Using $H$ (by making queries to it), check that, for $i \in [q]$, 
        \[
        \text{CheckPath}^{H}( \mathsf{rt},i,v,ap)=1
        \]
        \item Let $r= H(\mathsf{rt})$.
        \item Run $\textbf{V}(x,r)$:
        \begin{itemize}
            \item For every query $j_i \in [\ell]$, where $j_i$ is the is the location of the $i^{th}$ query made by $\textbf{V}$, $V$ runs $\text{CheckPath}^{H}( \mathsf{rt},ap_i,v,j)$, and returns $v_i$ if $\text{CheckPath}^{H}( \mathsf{rt},ap_i,v,j)=1$. 
        \end{itemize}
        \item Output  the same answer as $\textbf{V}(x,r)$. 
    \end{itemize}
    
\end{itemize} 
\begin{theorem}\label{thm:micalisound}\cite{chiesa2019snark}
    Let $(\mathbf{P},\mathbf{V})$ be a PCP for relation $\cal{R}$ that has soundness error $\epsilon$, proof length $\ell$ and query complexity $q$. The Micali construction when based on $(\mathbf{P},\mathbf{V})$ is a non-interactive argument for $\cal{R}$ that has soundness error $O(t^2\epsilon+t^3/2^{\lambda})$ against quantum attackers $\Tilde{P}$ that make at most $t-O(q \log \ell)$ queries to $H$.
\end{theorem}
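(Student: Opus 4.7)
The plan is to follow the compressed-oracle strategy of \cite{chiesa2019snark}. I will fix a quantum prover $\widetilde{P}$ making $t - O(q\log\ell)$ queries to $H$ and produce an accepting transcript $\pi = (\mathsf{rt}, (v_i)_{i\in[q]}, (ap_i)_{i\in[q]})$ with some probability $\mu$. To bound $\mu$ when $x \notin \mathcal{L}(\mathcal{R})$, I first replace $H$ with its compressed-oracle simulation $\CStO$ acting on the query and database registers $(\calQ, \calD)$, which is perfectly indistinguishable from a truly random $H$. At the end of $\widetilde{P}$'s execution, I measure $\pi$ in the computational basis, and I additionally run the $\mathsf{Extract}$ procedure on the recorded database $D$ with root $\mathsf{rt}$ and depth $d = \log \ell$ to produce a candidate PCP proof string $\Pi^* := \mathsf{leaves}(\mathsf{Extract}(D, \mathsf{rt}, d))$.

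The core of the argument is to show that, conditioned on $\widetilde{P}$'s transcript being accepting and $\mathsf{Extract}$ not returning $\bot$, the PCP verifier $\mathbf{V}$ with randomness $r = H(\mathsf{rt})$ accepts the extracted string $\Pi^*$. The first step is to bound the probability that $\mathsf{Extract}$ aborts: a two-preimage in the database exists only if $\widetilde{P}$ found a collision in $H$, and by the quantum collision-finding lower bound applied to the compressed oracle, this happens with probability $O(t^3/2^\lambda)$. Conditioned on no collision, the values along each authentication path $ap_i$ revealed by $\widetilde{P}$ must coincide with the corresponding leaves of the extracted tree: since $\mathsf{CheckPath}^H(\mathsf{rt}, i, v_i, ap_i) = 1$ implies (via the $\mathsf{Expand}$ algorithm) that every intermediate preimage found during verification appears in the database, and uniqueness of preimages forces these to agree with the unique tree produced by $\mathsf{Extract}$. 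Hence $\Pi^*[i] = v_i$ at every queried location, and $\mathbf{V}^{\Pi^*}(x, r) = \mathbf{V}^{(v_i)}(x, r) = 1$.

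Next I have to handle the fact that the randomness $r = H(\mathsf{rt})$ is adaptively chosen by the adversary through its choice of $\mathsf{rt}$. This is the main obstacle, and it is where the $O(t^2\epsilon)$ factor appears. The idea is to argue that for any database $D$ produced by the adversary's queries, $\mathsf{rt}$ is effectively ``committed to'' after at most $t$ queries, so that conditioned on any particular extracted PCP string $\Pi^*$, the value $r = H(\mathsf{rt})$ behaves as a fresh uniformly random string from the adversary's point of view until it is queried. Concretely, one can use the measure-and-reprogram framework (\Cref{thm:measure-and-reprogram}) or, as done in \cite{chiesa2019snark}, the compressed-oracle lazy-sampling analysis: one argues that each of the (at most $t$) database entries the adversary could plausibly use as $\mathsf{rt}$ contributes an independent PCP-soundness trial, and a union bound yields a total soundness loss of $O(t \cdot \epsilon)$ from PCP soundness, times another factor of $t$ from the quantum-to-classical conversion in the reprogramming step, giving the $O(t^2\epsilon)$ term.

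Combining everything, the probability that $\widetilde{P}$ produces an accepting proof for $x \notin \mathcal{L}(\mathcal{R})$ is bounded by the collision probability $O(t^3/2^\lambda)$ plus the soundness error $O(t^2\epsilon)$, matching the claimed bound. The hardest part of this proof is the quantitative analysis of the reprogramming / compressed-oracle step in the middle paragraph: classically a union bound over the $t$ queries to $H(\mathsf{rt})$ suffices, but in the quantum setting one needs to carefully track the amplitude on each candidate root $\mathsf{rt}$ inside the compressed database while also accounting for the fact that $\mathsf{rt}$ itself is a quantum-measured outcome at the end of the execution; this is exactly the technical heart of \cite{chiesa2019snark} that I would reuse essentially verbatim.
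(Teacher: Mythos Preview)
The paper does not give its own proof of this theorem: it is stated as an imported result with the citation \cite{chiesa2019snark} attached to the theorem header, and no argument follows. Your proposal is a faithful high-level sketch of the actual proof in \cite{chiesa2019snark} (compressed-oracle simulation, extraction of a Merkle tree from the recorded database, collision bound for the $t^3/2^\lambda$ term, and the adaptive-root analysis for the $t^2\epsilon$ term), so it is consistent with what the paper relies on; there is simply nothing in the paper to compare against beyond the citation.
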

Now we will describe the knowledge extractor for the Micali SNARK in the QROM for any $t'$ query algorithm $\tilde{P}$.\\

\noindent $\mathcal{E*}^{\Tilde{P}}(1^{t'}, 1^{\lambda})$:
\begin{enumerate}
\item Set $t=t'+O(q\log \ell)$.
    \item Compute the quantum state $\ket{\text{Sim}^*(\tilde{P})}$ by simulating $\tilde{P}$, and recording its queries to the random oracle $H$ with compressed oracles. 
    \item Measure the database register to get a database $D$.
     \item For each $\mathsf{rt}\in D$:
    \begin{itemize}
        \item Run $T\leftarrow \text{Extract}(D,\mathsf{rt},d)$.  
        \item Set $\Pi_T$ equal to $\text{leaves}(T)$ on all points where $\text{leaves}(T)\neq \bot$, and $0$ otherwise. 
        \item Run $x\leftarrow \textbf{E} (\Pi_{T})$.
     \item Output $x$.
    \end{itemize}

\end{enumerate}

\begin{lemma}\label{lem:cmsextract}\cite{chiesa2019snark}  The Micali construction, when based on a PCP of knowledge with knowledge error $k$, has the following guarantee: 
\begin{itemize}
    \item For any quantum attacker $\tilde{P}$ that makes at most $t'$ queries to $H$, the following holds with probability $\Omega(\mu-t^2 k-t^3/2^{\lambda})$:
    \begin{itemize}
    \item For all inputs $(ct', \pi)$ such that  $\SNARK.\Verify(\pi)=1$ with probability $\geq \mu$, $\mathcal{E*}^{\tilde{P}}(1^{t'},1^{\lambda})$ extracts a valid witness $x$.
    \end{itemize}
\end{itemize}

\end{lemma}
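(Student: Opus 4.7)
The plan is to mirror the soundness analysis of Theorem~\ref{thm:micalisound} while tracking the contents of the extracted database, and then to invoke the PCP knowledge extractor $\mathbf{E}$ on the reconstructed proof string. The extractor $\mathcal{E}^*$ already spells out the algorithmic template; the work is in arguing that it succeeds with the claimed probability.

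First, I would simulate $\tilde{P}$ against the compressed random oracle described in Section~\ref{sec:comporacle}, replacing each oracle call to $H$ with $\CStO$. By the standard equivalence between the QROM and the compressed oracle (\cite{zhandry2019record}), this does not change the output distribution of $\tilde{P}$, so $\SNARK.\Verify$ still accepts $\pi = (\mathsf{rt}, \{v_i, ap_i\}_{i\in[q]})$ with probability $\mu$. I would fold the verifier's $\text{CheckPath}$ and the query $H(\mathsf{rt})$ into the simulation itself; these cost $O(q\log \ell)$ additional coherent queries, so the total query budget is bounded by $t = t' + O(q\log\ell)$ as set in step~1 of $\mathcal{E}^*$.

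Second, once the verifier has accepted coherently, measure the database register to obtain $D$. The key collapsing-type claim (implicit in \cite{chiesa2019snark}) is that, up to an additive loss of $O(t^3/2^\lambda)$, the resulting $D$ is collision-free on the queries made during verification and consistent with every authentication path $ap_i$ leading to $\mathsf{rt}$. Consequently, running $T \gets \text{Extract}(D,\mathsf{rt},d)$ succeeds (does not return $\bot$) on this good event, and $\text{leaves}(T)$ agrees at every queried index $j_i$ with the value $v_i$ that the verifier read. Setting $\Pi_T$ to replace $\bot$ leaves with $0$ then yields a well-defined proof string such that $\mathbf{V}^{\Pi_T}(x,r)$ reproduces the exact accept/reject pattern of $\SNARK.\Verify$ for $r = H(\mathsf{rt})$.

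Third, I would re-run the quantum soundness argument of Theorem~\ref{thm:micalisound}, but substituting the PCP's \emph{knowledge error} $k$ in place of its soundness error $\epsilon$. That analysis upper-bounds, over the coins $r$ derived from $H(\mathsf{rt})$ via the measure-and-reprogram simulator of Theorem~\ref{thm:measure-and-reprogram}, the probability that a \emph{fixed} proof string $\Pi_T$ produced by the extraction step causes $\mathbf{V}^{\Pi_T}(x,r)$ to accept with probability $\le k$. Conditioning on the $\SNARK.\Verify$ acceptance event of weight $\mu$ and on the good database event from the previous step, this yields that with probability $\Omega(\mu - t^2 k - t^3/2^\lambda)$ the reconstructed $\Pi_T$ satisfies $\Pr_r[\mathbf{V}^{\Pi_T}(x,r)=1] > k$. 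By the PCP's proof-of-knowledge property, $\mathbf{E}(x,\Pi_T)$ then outputs a valid witness.

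The main obstacle I expect is the second step: justifying that the measurement of the compressed database after the accepting verification really does produce a database that (a) is consistent with every published Merkle path with overwhelming probability, and (b) lets $\text{Extract}$ recover a collision-free subtree rooted at $\mathsf{rt}$. This is where the compressed-oracle machinery must be combined with a careful union bound over the at most $t$ recorded points to get the $O(t^3/2^\lambda)$ slack, and it is the point at which I would lean most heavily on the ``efficiency of $\text{Extract}$'' technology of \cite{chiesa2019snark} rather than re-deriving it from scratch.
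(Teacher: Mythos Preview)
The paper does not prove this lemma; it is stated with the citation \cite{chiesa2019snark} and no proof is given in the body. So there is no ``paper's own proof'' to compare against here: the authors simply import the knowledge-soundness of the Micali construction in the QROM from Chiesa--Manohar--Spooner.

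That said, your sketch is a faithful outline of the CMS argument and would be an acceptable expansion of the cited result. The three steps you identify --- simulate prover plus verifier against the compressed oracle (absorbing the verifier's $O(q\log\ell)$ queries into the budget $t$), measure the database and run $\mathsf{Extract}$ to recover a PCP string consistent with all checked authentication paths, then replay the soundness analysis with the PCP knowledge error $k$ in place of $\epsilon$ to conclude $\Pr_r[\mathbf{V}^{\Pi_T}(x,r)=1]>k$ and invoke $\mathbf{E}$ --- are exactly the CMS template. You are also right that the delicate part is the second step, and that the $O(t^3/2^\lambda)$ slack comes from bounding collisions in the compressed database; this is precisely the content of the recording/extraction lemmas in \cite{chiesa2019snark} that you would need to cite rather than re-derive.
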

\subsection{Construction}
Let $U_T(\cdot, \cdot)$ be the universal circuit that takes as input a Turing machine description $P$ and an input $x$ and outputs $P_T(x)$ (i.e.\ the output of the Turing machine after running for $T$ steps). Let $U_{T,x} := U_T(\cdot, x)$ be the same universal circuit except that the input is hardcoded and fixed to $x$, so it only takes as input a Turing machine description $P$.

Let $H$ be a random oracle. We describe the construction of succinct ideal obfuscation in \Cref{fig:input-succinct-obf}.

\begin{figure}[h]
\centering
\fbox{%
\parbox{0.96\textwidth}{%
\textbf{Succinct Ideal Obfuscation}
\begin{itemize}
    \item $\Obf^H(1^\secp,P,T)$:
    \begin{itemize}
        \item $(\pk, \sk) \leftarrow \FHE.\Gen(1^\secp)$
        \item $\ct \leftarrow \FHE.\Enc(\pk, P)$
        \item Define the circuit $O^H_{\ct, \sk}(\ct', \pi)$ to take as input a ciphertext $\ct'$ and proof $\pi$. It runs $\SNARK.\Verify^H((\ct, \ct'), \pi)$ to check that $\pi$ is a valid proof of the following NP statement:
        \begin{align*}
            \exists x : \ct' = \FHE.\Eval(\pk, U_{T,x}, \ct).
        \end{align*}
        If $\SNARK.\Verify^H((\ct, \ct'), \pi)$ accepts, output $y \leftarrow \FHE.\Dec(\sk, \ct')$ and otherwise output $\bot$.
        \item Output the obfuscation $(O[P], \pp)$ where $O[P] := O^H_{\ct, \sk}$ and $\pp := \ct$.
    \end{itemize}
    \item $\Eval^{O[P], H}(\pp, x)$:
    \begin{itemize}
        \item Parse $\pp = \ct$ and evaluate $\ct' \leftarrow \FHE.\Eval(\pk, U_x, \ct)$.
        \item Prove that $\ct'$ was honestly generated: run $\pi \leftarrow \SNARK.\Prove^H((\ct, \ct'), x)$ to get a SNARK $\pi$ of the statement
        \begin{align*}
            \exists x: \ct' = \FHE.\Eval(\pk, U_{T,x}, \ct).
        \end{align*}
        \item Output $O[P](\ct', \pi)$.
    \end{itemize}
\end{itemize}
}
}
\caption{Succinct Ideal Obfuscation using Classical Oracles and FHE}\label{fig:input-succinct-obf}
\end{figure}

\subsection{Correctness}
\begin{theorem}
    The construction in \Cref{fig:input-succinct-obf} satisfies correctness as defined in \Cref{def:succinct-obf}.
\end{theorem}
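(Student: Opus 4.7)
The plan is to verify correctness by composing the individual correctness/completeness guarantees of the FHE and SNARK building blocks; no step involves simulation or cryptographic reductions, so this is a routine compositional argument. Fix any $\secp, P, x$, and any $(O[P], \pp)$ produced by $\Obf^H(1^\secp, P)$. Here $\pp = \ct$ with $\ct \gets \FHE.\Enc(\pk, P)$, and $O[P] = O^H_{\ct, \sk}$ is the oracle circuit that, on input $(\ct', \pi)$, verifies $\pi$ against the NP statement $\exists x : \ct' = \FHE.\Eval(\pk, U_x, \ct)$ and, if accepted, outputs $\FHE.\Dec(\sk, \ct')$.

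First I would unwind the evaluator: on input $x$, $\Eval^{O[P], H}(\pp, x)$ computes $\ct' \gets \FHE.\Eval(\pk, U_x, \ct)$ and $\pi \gets \SNARK.\Prove^H((\ct, \ct'), x)$, and outputs $O[P](\ct', \pi)$. Next, I invoke SNARK completeness (Definition \ref{def:snark}), which guarantees that for any $H$, the proof $\pi$ produced on a true statement with a valid witness satisfies $\SNARK.\Verify^H((\ct, \ct'), \pi) = \top$; the statement is indeed true because $x$ itself is a witness by construction of $\ct'$. Hence the inner check inside $O[P]$ accepts, and the circuit returns $\FHE.\Dec(\sk, \ct')$. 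Finally, FHE correctness (Definition \ref{def:FHE}) gives $\FHE.\Dec(\sk, \ct') = U_x(P) = P(x)$ with probability $1 - \negl(\secp)$ over the randomness of key generation and encryption. Chaining these equalities yields $\Eval^{O[P], H}(\pp, x) = P(x)$ with overwhelming probability.

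There is no real obstacle in this proof; the only subtlety worth flagging is that the definitional statement of correctness in Definition \ref{def:succinct-obf} is phrased as an equality, whereas the underlying FHE is only correct up to negligible failure. This is standard and can be addressed either by interpreting the correctness clause as holding with probability $1 - \negl(\secp)$ (as is standard for FHE-based constructions), or by boosting to perfect correctness via a padding/repetition trick on the FHE layer. Either adjustment is cosmetic and does not affect succinctness or security.
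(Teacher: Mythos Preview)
Your proof is correct and takes essentially the same approach as the paper, which simply states in one line that correctness follows from the completeness of the SNARK and the correctness of the FHE scheme. Your flagged subtlety about perfect versus statistical correctness is a valid observation that the paper glosses over, but it does not change the argument.
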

\begin{proof}
    Correctness of the succinct obfuscation scheme in \Cref{fig:input-succinct-obf} follows from the completeness of the SNARK and the correctness of the FHE scheme.
\end{proof}

\subsection{Succinctness}
\begin{theorem}
    The construction in \Cref{fig:input-succinct-obf} satisfies succinctness as defined in \Cref{def:succinct-obf}.
\end{theorem}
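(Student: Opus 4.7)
The goal is to bound $|(O[P], \pp)| = |O[P]| + |\pp|$ by a fixed polynomial in $\secp$ and $|P|$, and the plan is to handle the two summands independently using only the compactness of the underlying FHE scheme. The SNARK plays no role in this bound, since its succinctness only constrains the length of inputs $(\ct', \pi)$ passed to $O[P]$ at query time, not the description size of $O[P]$ itself.

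The first step is to bound $|\pp| = |\ct|$, where $\ct = \FHE.\Enc(\pk, P)$ is a fresh encryption of the Turing machine description $P$ under $(\pk, \sk) \gets \FHE.\Gen(1^\secp)$. FHE compactness (\Cref{def:FHE}) bounds $|\ct|$ (and $|\pk|$) by a fixed polynomial in $\secp, D, |P|$, and one can derive the analogous bound on $|\sk|$ from the bound on the runtime of $\FHE.\Gen$. I would instantiate with an unleveled FHE scheme so that the depth parameter $D$ does not enter these bounds, which is the appropriate choice here because we are obfuscating Turing machines whose circuit depth on a given input is not fixed in advance. This gives $|\pp|, |\sk| \le \poly(\secp, |P|)$.

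The second step is to bound $|O[P]|$, which is the description of the circuit $O^H_{\ct, \sk}$ from \Cref{fig:input-succinct-obf}. This description consists of (i) the hardcoded values $\ct$ and $\sk$, whose total size is $\poly(\secp, |P|)$ by the previous paragraph, and (ii) fixed code that, on input $(\ct', \pi)$, invokes the oracle-aided routine $\SNARK.\Verify^H((\ct, \cdot), \cdot)$ and the decryption routine $\FHE.\Dec(\sk, \cdot)$. This code has size $\poly(\secp)$, independent of $P$. Summing yields $|O[P]| \le \poly(\secp, |P|)$, so $|(O[P], \pp)| \le p(\secp, |P|)$ for a fixed polynomial $p$, as required.

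The main subtlety (rather than a genuine obstacle) is interpreting the description size of $O[P]$ correctly: at honest evaluation time $O[P]$ will be queried on inputs $(\ct', \pi)$ whose combined length, through the $\log T$ factor in the SNARK proof size, can grow with the runtime of $P$ on $x$, and one must check that this does not feed back into $|O[P]|$. I would resolve this by taking the natural reading of \Cref{def:succinct-obf} in which $\Eval^{O[P]}$ has oracle access to $O[P]$, so that the description of $O[P]$ comprises only the hardcoded constants and fixed code above while inputs are supplied per-query. Under this reading, no further argument is needed.
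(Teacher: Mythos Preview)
Your proof is correct and follows the same overall structure as the paper's: bound $|\pp| = |\ct|$ via FHE compactness, then bound $|O[P]|$ by inspecting its components (hardwired constants plus verification and decryption code). The one substantive difference is your claim that SNARK succinctness plays no role. The paper's proof \emph{does} invoke SNARK succinctness, because it treats $O[P]$ literally as a circuit whose gate count equals the running time of $\SNARK.\Verify^H$ plus $\FHE.\Dec$; without the succinctness bound on the verifier, that gate count would scale with the computation time $T$ rather than with $\log T$. Your alternative is to read $|O[P]|$ as the size of a uniform description (hardwired $\ct,\sk$ plus fixed-size code with oracle access to $H$), under which SNARK succinctness indeed only constrains query-time inputs. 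Both readings are defensible given the informality in the definition, and your final paragraph is actually more explicit than the paper about the residual $\log T$ dependence in the input length, which neither argument fully eliminates for Turing machines of unbounded runtime. Your remark about needing unleveled FHE to drop the depth parameter is also a point the paper leaves implicit.
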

\begin{proof}
    For a Turing machine with description $P$, its obfuscation consists of a circuit $O[P]$ that's accessible as an oracle and a FHE ciphertext $\ct \leftarrow \FHE.\Enc(\pk, P)$. The ciphertext $\ct$ encrypts the description of $P$, so its size is bounded $|\ct| \le \poly(\secp, |P|)$, which is succinct. The circuit $O[P](\ct', \pi)$ consists of two circuits: (1) the SNARK verification circuit and (2) the FHE decryption circuit. By the succinctness of the SNARK scheme and compactness of the FHE scheme, both these circuits have size bounded by $\poly(\secp, |P|)$. Therefore the construction in \Cref{fig:input-succinct-obf} satisfies succinctness.
\end{proof}
\subsection{Security}
\begin{theorem}
    The construction in \Cref{fig:input-succinct-obf} satisfies ideal obfuscation security as defined in \Cref{def:succinct-obf}.
\end{theorem}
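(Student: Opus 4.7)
The plan is to construct a simulator $\calS^P(z)$ and argue indistinguishability via a three-step hybrid. The simulator samples $(\pk, \sk) \gets \FHE.\Gen(1^\secp)$, sets $\ct \gets \FHE.\Enc(\pk, 0^{|P|})$, and then runs $\calA(\ct, z)$, simulating both the random oracle $H$ with the compressed oracle of \Cref{sec:comporacle} and the classical oracle $O[P]$. On a query $(\ct', \pi)$ to $O[P]$, the simulator first runs $\SNARK.\Verify^H((\ct, \ct'), \pi)$ and outputs $\bot$ on rejection; on acceptance, it invokes the Micali quantum extractor $\calE^*$ of \Cref{lem:cmsextract} on the current compressed database to recover a candidate witness $x$ satisfying $\ct' = \FHE.\Eval(\pk, U_x, \ct)$, queries its $P$-oracle at $x$, and returns $P(x)$.

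For the analysis, I would consider the hybrids $\Hyb_0$ (the real experiment with $\Obf^H(1^\secp,P)$), $\Hyb_1$ (identical to $\Hyb_0$ but with the oracle modified to return $P(x)$ for the extracted witness rather than $\FHE.\Dec(\sk, \ct')$ whenever verification accepts), and $\Hyb_2$ (identical to $\Hyb_1$ except $\ct \gets \FHE.\Enc(\pk, 0^{|P|})$, which matches $\calS^P(z)$). The step $\Hyb_0 \approx \Hyb_1$ reduces to the argument-of-knowledge of the Micali SNARK: by FHE correctness, any witness $x$ extracted from a valid proof satisfies $\FHE.\Dec(\sk, \ct') = U_x(P) = P(x)$, so the two oracles agree on every accepting input unless extraction fails, which by \Cref{lem:cmsextract} happens with only negligible probability per query. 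The step $\Hyb_1 \approx \Hyb_2$ is a direct reduction to FHE semantic security, since in $\Hyb_1$ the simulation of the adversary's view no longer consults $\sk$ anywhere.

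The main technical obstacle is the $\Hyb_0 \approx \Hyb_1$ transition: the oracle $O[P]$ is queried in quantum superposition, whereas the Micali extractor $\calE^*$ is naturally a global operation that measures the compressed random-oracle database after the prover finishes. To perform extraction online at each of the $\poly(\secp)$ oracle queries, I would follow a standard QROM online-extractability blueprint: at each query, treat the pair of $\calA$ together with its query register as an implicit SNARK prover that has just output $(\ct',\pi)$, apply \Cref{lem:cmsextract} to this prover, and combine with the measure-and-reprogram theorem (\Cref{thm:measure-and-reprogram}) to handle the simultaneous measurement of the query register without disturbing the adversary's state beyond what the compressed oracle already permits. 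Making this precise requires a careful union bound over queries and a polynomial accounting of the extraction success probability in terms of $\calA$'s query complexity and success probability, together with an argument that the adversary's post-query state is statistically close to what it would have been with the honest decryption oracle. This is where the bulk of the technical work lies.
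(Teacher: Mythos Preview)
Your high-level structure matches the paper exactly: build a simulator that encrypts $0^{|P|}$, simulates the random oracle with a compressed oracle, and on each $O[P]$ query runs verification, extracts a witness from the database, and returns $P(x)$; then close the argument with FHE semantic security once $\sk$ is no longer used. The three hybrids you list correspond to the paper's Hybrids 0--5, collapsed.

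Where you diverge is the resolution of the obstacle you correctly identify. You propose to invoke measure-and-reprogram (\Cref{thm:measure-and-reprogram}) to handle the superposition query to $O[P]$, treating the adversary-plus-query-register as a SNARK prover and measuring. This does not fit: measure-and-reprogram extracts a single classical point from an oracle algorithm's transcript at the cost of a $1/\poly(q)$ loss in success probability, which is useless for an indistinguishability claim across polynomially many superposition queries, and it does nothing to restore the adversary's post-query state.

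The paper instead exploits a structural fact about the Micali extractor: the procedure $\mathsf{FindWitness}(D)$, which walks the Merkle tree(s) rooted in the database and runs the PCP knowledge extractor, is a \emph{deterministic function of the classical database $D$} (the tree-walk aborts on any collision in $D$, so the output is unique). Hence it can be applied coherently as a unitary map $\ket{\ct',\pi}\ket{D}\mapsto\ket{\ct',\pi}\ket{D}\ket{\mathsf{FindWitness}(D)}_{\calW}$, after which one evaluates $P$ on $\calW$ into the answer register and uncomputes $\calW$. No measurement of the database or the query register is needed, and the entire simulator preserves superpositions. The only place the CMS extraction guarantee (\Cref{lem:cmsextract}) enters is to argue that the branches where verification accepts but $\mathsf{FindWitness}(D)$ fails to yield a valid witness carry negligible amplitude; the paper isolates this with two intermediate hybrids that insert and then remove an abort on invalid witnesses. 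Once you know $\mathsf{FindWitness}$ is coherently computable, your $\Hyb_0\approx\Hyb_1$ step goes through by a straightforward amplitude argument, with no appeal to measure-and-reprogram.
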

\begin{proof}
We will start by formally describing the simulator $\cal{S}$.\\

Before describing our simulator, we need some terminology. 
We define a procedure $\mathsf{FindWitness}$. Very roughly, $\mathsf{FindWitness}$ is the witness extractor from Lemma 7.4 in \cite{chiesa2019snark}, with the crucial difference being that  the database register is not measured by $\mathsf{FindWitness}$. \\\begin{figure}[h]
\centering
\fbox{%
\parbox{0.96\textwidth}{%
\textbf{FindWitness}
\begin{itemize}
    \item $\textbf{Input}$:
    \begin{itemize}
        \item Database $D$. 
    \end{itemize}
    \item $\textbf{Procedure:}$
    \begin{itemize}
   \item  Run $x \leftarrow \textbf{E}(0^{\ell})$, where $0^{\ell}$ denotes the all $0$s proof, and $\ell=2^d$ for some $d \in \mathbb{N}$ is the proof length. If $x$ is a valid witness, output $x$, otherwise continue. 
    \item For each $\mathsf{rt}\in D$:
    \begin{itemize}
        \item Run $T\leftarrow \text{Extract}(D,\mathsf{rt},d)$.  
        \item Set $\Pi_T$ equal to $\text{leaves}(T)$ on all points where $\text{leaves}(T)\neq \bot$, and $0$ otherwise. 
        \item Run $x\leftarrow \textbf{E} (\Pi_{T})$.
     \item Output $x$.
    \end{itemize}
    \end{itemize}
    
\end{itemize}
}
}
\caption{FindWitness}\label{fig:findwitness}
\end{figure}


Now we are ready to define our simulator.  On (superposition) query $(ct', \pi)$ to $O[P]$, simulator $\cal{S}$ does the following: 
\begin{itemize}
    \item $\cal{S}$ outputs $\pp= \FHE.\Enc(pk,0)$. 
   \item  The joint state of the distinguisher $\cal{D}$ and $H$ is given by a superposition over $\ket{ct',\pi}\ket{y}\otimes \ket{D}\ket{\text{aux}}$, where $\ket{ct',\pi}, \ket{y}$ are the query input and query output registers respectively, and $\ket{D},\ket{\text{aux}}$ are the database $D$ (recording the  queries to $H$) and $\cal{D}$'s auxiliary registers respectively. $\ket{D}$ will update according to the Ctso procedure.  
   \item On query $(ct', \pi)$ to $O[P]$, $\cal{S}$ performs the following operation on the state $\ket{ct'}\ket{\pi}\otimes \ket{D}$:
   \begin{itemize}
    \item Initilialize ancilla register $\ket{0}_b $. Run $\SNARK.\Verify^{H}(\pi)$ and XOR the output in register $b$.  If output is $0$, output  $\bot$. 
     \item Perform the map $\ket{ct'}\ket{\pi}\otimes \ket{D}\rightarrow \ket{ct'}\ket{\pi}\otimes \ket{D}\ket{ \mathsf{FindWitness}(D)}_{\cal{W}}$. 
    \end{itemize}
    \item Evaluate $P$ on the input from register $\cal{W}$, and XOR in the query output register.  
    \item Uncompute the $\cal{W}$ register by evaluating $\mathsf{FindWitness}(D)$ and XOR-ing in the $\cal{W}$ register.
\end{itemize}

    \noindent\textbf{Hybrid 0}: Sample $(O[P],\text{pp})\leftarrow \Obf^H(1^\secp,P)$.\\
    
    \noindent \textbf{Hybrid 1}: Same as \textbf{Hybrid 0}, but the random oracle $H$ is simulated as a compressed standard oracle for a database $D$. 
    \begin{claim}
        $\Pr[\textbf{Hybrid 0}=1]=\Pr[\textbf{Hybrid 1}=1]$.
    \end{claim}
    \begin{proof}
        This follows directly from Lemma 4 in \cite{zhandry2019record}.
    \end{proof}
    \noindent\textbf{Hybrid 2}: Same as \textbf{Hybrid 1},  but we add an abort condition for $\mathsf{FindWitness}(D)$ outputting invalid witnesses. More precisely, 
    \begin{itemize}
        \item On query $(ct', \pi)$ to $O[P]$, the joint state of the distinguisher $\cal{D}$ and $H$ is $\ket{ct',\pi}\ket{y}\otimes \ket{D}\ket{\text{aux}}$
        \begin{itemize}
            \item Initilialize ancilla register $\ket{0}_b $. Run $\SNARK.\Verify^{H}(\pi)$ and XOR the output in register $b$.  If output is $0$, output  $\bot$.\item Measure the database register $D$, and run $\mathsf{FindWitness}(D)$. 
            \item If $x \leftarrow \mathsf{FindWitness}(D)$ is not a valid witness, output $\bot$.
            \item Otherwise, compute  $y \leftarrow \FHE.\Dec(sk,ct')$ and XOR into the query output register of $\cal{D}$. 
        \end{itemize}
    \end{itemize}
    \begin{claim}\label{claim:extfailcms}
        $\Pr[\textbf{Hybrid 2}=1]-\Pr[\textbf{Hybrid 1}=1]\leq \text{negl}(\lambda)$
    \end{claim}
    \begin{proof}
Assume otherwise, then there must exist some query made by $\cal{D}$ with noticeable amplitude on $\ket{ct,\pi}\otimes \ket{D}$ such that $\mathsf{FindWitness}(D)$ does not output a valid witness, but $\SNARK.\Verify^{H}(\pi)=1$. Formally, define the following set $B$ of inputs of the form 
\[B= \{(ct', \pi, D, b):  x\leftarrow\mathsf{FindWitness}(D) \text{ is not valid but } b=1 \}\]
Let $\ket{\psi}$ be the joint state of $\cal{D}$ and $H$ right before measuring the database $D$. Then our assumption implies that
\[
||\Pi[B]\ket{\psi}||^2=\epsilon(\lambda)
\]
for some non negligible function $\epsilon$.
For a database $D$, define $B[D]:=\{(.,D): (.,D)\in B\}$. We can conclude that there exists a database $D'$ such that 
\[
||\Pi[B[D']]\ket{\psi}||^2=\epsilon'(\lambda),
\]where $\epsilon'(.)$ is also a non negligible function.
Now define the state 
\[
\ket{\psi'}=\bra{x}\mathsf{FindWitness}(D')\ket{\psi}, \ket{\psi^*}:=\frac{\ket{\psi'}}{||\ket{\psi'}||}
\]
So $\ket{\psi^*}$ is the result of running $\mathsf{FindWitness}(D')$ coherently, and post selecting on the witness being invalid. 
From the gentle measurement lemma, we can conclude that, 
the probability that $\mathsf{FindWitness}(D')$ finds  an invalid witness on $\Pi[B[D']]\ket{\psi^*}$ is non negligible as well,  but this is a contradiction to Lemma \ref{lem:cmsextract} .  
    \end{proof}
   \noindent \textbf{Hybrid 3}: Same as \textbf{Hybrid 2}, but queries to $O[P]$ are answered  as follows:
   \begin{itemize}
      \item On query $(ct', \pi)$ to $O[P]$, $\cal{S}$ performs the following operation on the state $\ket{ct'}\ket{\pi}\otimes \ket{D}$:
   \begin{itemize}
     \item Initilialize ancilla register $\ket{0}_b $. Run $\SNARK.\Verify^{H}(\pi)$ and XOR the output in register $b$.  If output is $0$, output  $\bot$.
       \item  Perform the map $\ket{ct'}\ket{\pi}\otimes \ket{D}\rightarrow \ket{ct'}\ket{\pi}\otimes \ket{D}\ket{ \mathsf{FindWitness}(D)}_{\cal{W}}$.
     \item If $x\leftarrow \mathsf{FindWitness}(D)$ is a not a valid witness, that is, if $P(x)\neq \FHE.\Dec(sk,ct')$, output $\bot$. 
     \item Otherwise, evaluate $P$ on the input from register $\cal{W}$, and XOR in the query output register.
     
    \end{itemize}
    \item Uncompute the $\cal{W}$ register by evaluating $\mathsf{FindWitness}(D)$ and XOR-ing in the $\cal{W}$ register.
    \end{itemize}
   \begin{claim}
       $\Pr[\textbf{Hybrid 2}=1]=\Pr[\textbf{Hybrid 3}=1]$
   \end{claim}

   
    \begin{proof}
        First observe that $\mathsf{FindWitness}$ is essentially the extractor $\cal{E}*$ for the Micali SNARK, with the only (but crucial) difference being that the database $D$ is not measured. Since $\mathsf{Extract}$ aborts if there exists a $u : D(x)=D(x')=u, x\neq x'$, by construction, this implies that, for a fixed database $D$,  $\mathsf{FindWitness}(D)$ has a unique solution. Thus, when all invalid solutions are aborted, it follows from the correctness of $\cal{E}*$ that queries are answered identically in $\textbf{Hybrid 2}$ and  $\textbf{Hybrid 3}$. 
    \end{proof}
\noindent    \textbf{Hybrid 4:} Same as \textbf{Hybrid 3}, but the abort condition for $\mathsf{FindWitness}$ outputting invalid witnesses is removed, and queries to $O[P]$ are answered by $\cal{S}$ as follows: \\
   \begin{itemize}
      \item On query $(ct', \pi)$ to $O[P]$, $\cal{S}$ performs the following operation on the state $\ket{ct'}\ket{\pi}\otimes \ket{D}$:
   \begin{itemize}
    \item Initilialize ancilla register $\ket{0}_b $. Run $\SNARK.\Verify^{H}(\pi)$ and XOR the output in register $b$.  If output is $0$, output  $\bot$.
       \item  Perform the map $\ket{ct'}\ket{\pi}\otimes \ket{D}\rightarrow \ket{ct'}\ket{\pi}\otimes \ket{D}\ket{ \mathsf{FindWitness}(D)}_{\cal{W}}$.
      
     \item Evaluate $P$ on the input from register $\cal{W}$, and XOR in the query output register.
     
    \end{itemize}
    \item Uncompute the $\cal{W}$ register by evaluating $\mathsf{FindWitness}(D)$ and XOR-ing in the $\cal{W}$ register.
    \end{itemize}
    \begin{claim}
        $\Big|\Pr[\textbf{Hybrid 3}=1]-\Pr[\textbf{Hybrid 4}=1]\Big|\leq \text{negl}(\lambda)$
    \end{claim}
    \begin{proof}
    The proof of this claim is identical to Claim \ref{claim:extfailcms}.
    \end{proof}
    \textbf{Hybrid 5}: This is the ideal world.
    \begin{itemize}
        \item Queries to $O[P]$ are answered as in the description of $\cal{S}$ as follows:
    
    \begin{itemize}
        \item Initilialize ancilla register $\ket{0}_b $. Run $\SNARK.\Verify^{H}(\pi)$ and XOR the output in register $b$.  If output is $0$, output  $\bot$.
       \item  Perform the map $\ket{ct'}\ket{\pi}\otimes \ket{D}\rightarrow \ket{ct'}\ket{\pi}\otimes \ket{D}\ket{ \mathsf{FindWitness}(D)}_{\cal{W}}$.
      
     \item Evaluate $P$ on the input from register $\cal{W}$, and XOR in the query output register.
   
   \end{itemize}
   \item Uncompute the $\cal{W}$ register by evaluating $\mathsf{FindWitness}(D)$ and XOR-ing in the $\cal{W}$ register.
   \item  Output $\text{pp} \leftarrow \FHE.\Enc(pk,0^{|P|})$.
    \end{itemize}
    \begin{claim}
        $\Big|\Pr[\textbf{Hybrid 5}=1]-\Pr[\textbf{Hybrid 4}=1]\Big|\leq \text{negl}(\lambda)$
    \end{claim}
    \begin{proof}
This follows directly from the security of the FHE scheme.         
    \end{proof}
\end{proof}
\section{Publicly-verifiable QFHE}\label{sec:pvQFHE}

We define publicly-verifiable QFHE in the oracle model. We consider both leveled and unleveled QFHE.

\subsection{Definition}
\begin{definition}[Publicly-verifiable QFHE]\label{def:PVQFHE}
A leveled publicly-verifiable quantum fully-homomorphic encryption scheme for pseudo-deterministic circuits consists of the following algorithms $(\Gen,\Enc,\Eval,\Ver,\Dec)$.
\begin{itemize}
    \item $\Gen(1^\lambda,D,N,S) \to (\pk, \sk,\PP)$: The key generation algorithm is a PPT algorithm that takes as input the security parameter $1^\secp$, an upper bound on circuit depth $D$, an upper bound on input length $N$, and an upper bound on circuit description size $S$, and returns a public key $\pk$, a secret key $\sk$, and the description of a classical deterministic circuit $\PP$.
    \item $\Enc(\pk, x) \to \ct$: The encryption algorithm is a PPT algorithm that takes as input the public key $\pk$ and a classical plaintext $x$, and outputs a ciphertext $\ct$.
    \item $\Eval^{\PP}(\ct, Q) \to (\widetilde{\ct},\pi)$: The evaluation algorithm is a QPT algorithm that has oracle access to $\PP$, takes as input a ciphertext $\ct$, a quantum circuit $Q$, and outputs an evaluated ciphertext $\widetilde{\ct}$ and proof $\pi$.
    \item $\Ver^{\PP}(\ct,P_Q,\widetilde{\ct},\pi) \to \{\top,\bot\}$: The verification algorithm is a PPT algorithm that has oracle access to $\PP$, takes an input ciphertext $\ct$, a quantum program $P_Q$ (i.e.\ a potentially succinct description of the quantum circuit $Q$), an output ciphertext $\widetilde{\ct}$ and a proof $\pi$, and accepts ($\top$) or rejects ($\bot$).
    \item $\Dec(\sk, \ct) \to x$: The decryption algorithm is a PPT algorithm that takes as input the secret key $\sk$ and a classical ciphertext $\ct$, and outputs plaintext $x$.
\end{itemize}

These algorithms should satisfy the following properties. Recall that pseudo-deterministic circuits $Q$ are defined to have a single bit of output, without loss of generality.

\begin{itemize}
    \item \textbf{Correctness.} For any polynomials $D = D(\secp),N = N(\secp), S = S(\secp)$, any pseudo-deterministic program $P_Q$ where $Q$ has circuit depth at most $D$ and $|P_Q| \leq S$, and any input $x$ with $|x| \leq N$,
    \begin{align*}
        \Pr\left[\begin{array}{l}\Ver^{\PP}(\ct,P_Q,\widetilde{\ct},\pi) = \top ~~ \wedge \\ \Dec(\sk,\widetilde{\ct}) = Q(x) \end{array}: \begin{array}{r}(\pk,\sk,\PP) \gets \Gen(1^\secp,D,N,S) \\ \ct \gets \Enc(\pk,x) \\ (\widetilde{\ct},\pi) \gets \Eval^{\PP}(\ct,Q) \end{array}\right] = 1-\negl(\secp).
    \end{align*}
    \item \textbf{Privacy.} For any QPT adversary $\calA = \{\calA_\secp\}_{\secp \in \bbN}$, polynomials $D=D(\secp),N = N(\secp),S = S(\secp)$, and messages $x_0, x_1$,
    \begin{align*}
    &\bigg| \Pr\left[\calA^\PP(\pk,\ct) = 1 : \begin{array}{r}(\pk,\sk,\PP) \gets \Gen(1^\secp,D,N,S) \\ \ct \gets \Enc(\pk,x_0)  \end{array}\right]\\ &- \Pr\left[\calA^\PP(\pk,\ct) = 1 : \begin{array}{r}(\pk,\sk,\PP) \gets \Gen(1^\secp,D,S) \\ \ct \gets \Enc(\pk,x_1)\end{array}\right]\bigg| = \negl(\secp).
    \end{align*}
    \item \textbf{Soundness.}  For any QPT adversary $\calA = \{\calA_\secp\}_{\secp \in \bbN}$, polynomials $D=D(\secp),N = N(\secp),S = S(\secp)$, pseudo-deterministic program $P_Q$ where $Q$ has circuit depth at most $D$ and $|P_Q| \leq S$, and input $x$ with $|x| \leq N$, 
    \[\Pr\left[\begin{array}{l} \Ver^{\PP}(\ct,P_Q,\widetilde{\ct},\pi) = \top ~~ \wedge \\ \Dec(\sk,\widetilde{\ct}) \neq Q(x)\end{array} : \begin{array}{r}(\pk,\sk,\PP) \gets \Gen(1^\secp,D,N,S) \\ \ct \gets \Enc(\pk,x) \\  (\widetilde{\ct},\pi) \gets \calA^{\PP}(\ct)\end{array}\right] = \negl(\secp).\] 
    \item \textbf{Compactness.} There exists a fixed polynomial $p$ such that the following holds. Fix any polynomials $D = D(\secp), N = N(\secp), S = S(\secp)$, any pseudo-deterministic progam $P_Q$ where $Q$ has circuit depth at most $D$ and $|P_Q| \leq S$, and any $x$ with $|x| \leq N$, and let $(\pk,\sk,\PP) \gets \Gen(1^\secp,D,N,S), \ct \gets \Enc(\pk,x)$, and $(\widetilde{\ct},\pi) \gets \Eval^\PP(\ct,Q)$. Then it holds that 
    \begin{itemize}
        \item $|\pk|,|\sk|,|\PP|,|\ct|,|\widetilde{\ct}|,|\pi| \leq p(\secp,D,N,S)$, and 
        \item the run-times of $\Gen(1^\secp,D,N,S),\Enc(\pk,x),$ $\Ver^\PP(\ct,P_Q,\widetilde{\ct},\pi)$, and $\Dec(\sk,\widetilde{\ct})$ are bounded by $p(\secp,D,N,S)$.
    \end{itemize}

\end{itemize}
\end{definition}

An \emph{unleveled} publicly-verifiable QFHE scheme is defined exactly as above, except that $\Gen$ no longer takes a depth parameter $D$ and all properties are defined without the parameter $D$.

\subsection{Upgradable privately-verifiable QFHE}\label{subsec:priv} 

We will construct publicly-verifiable QFHE by \emph{upgrading} a privately-verifiable scheme that satisfies certain properties. These properties are encapsulated in the following definition.

\begin{definition}[Upgradable privately-verifiable QFHE]\label{def:priv}
    A leveled privately-verifiable QFHE scheme that is amenable to being upgraded to public-verifiability via our compiler has the following syntax. Let $H : \{0,1\}^* \to \{0,1\}$ be a random oracle. Let $\ell(\secp,|Q|)$ be the ``state size'' parameter and $p(\secp,N,D)$ be the ``compactness'' parameter.

\begin{itemize}
    \item $\Gen(1^\secp,D) \to (\pk,\sk)$: The PPT key generation algorithm takes as input the security parameter $1^\secp$ and an upper bound on circuit depth $D$ and outputs a public key $\pk$ and secret key $\sk$.
    \item $\Enc(\pk,x) \to \ct$: The PPT encryption algorithm takes as input a public key $\sk$ and an input $x$ and outputs a ciphertext $\ct$.
    \item $\VerGen(1^\secp,\ct,Q) \to \pp,\sp$: The PPT verification parameter generation algorithm takes as input the security parameter $1^\secp$, a ciphertext $\ct$, and quantum circuit $Q$, and outputs public parameters $\pp$ and secret parameters $\sp$.
    \item $\Eval^H(\pp,\ct,Q) \to \pi$: The QPT evaluation algorithm operates as follows.
    \begin{itemize}
        \item Let $\ell = \ell(\secp,|Q|)$, initialize an $\ell$-qubit state $\ket{\psi_{\ct,Q}}$ on register $\calM = (\calM_1,\dots,\calM_\ell)$, and initialize two fresh registers $\calY,\calZ$. 
        \item Apply a measurement $M_{\pp}$ on $(\calM,\calZ,\calY)$ that is classically-controlled on $\calM$ to obtain a string $y$.
        \item Compute $T = H(y)$, where $T$ is parsed as a subset $T \subseteq [\ell]$.\footnote{In a slight abuse of notation, we define an $\ell$-bit output $H(y) \coloneqq H(1,y)\|\dots\|H(\ell,y)$.}
        \item Apply a measurement $M_{\pp,y,T}$ on $(\calM,\calZ)$ that is classically-controlled on $\calM$ to obtain a string $z$.
        \item For all $i \in [\ell]$, measure $\calM_i$ in the standard basis if $i \in T$ or in the Hadamard basis if $i \notin T$ to obtain a string $m \in \{0,1\}^\ell$.
        \item Return $\pi = (m,y,z)$.
    \end{itemize}
    \item $\Ver^H(\sp,\ct,Q,\pi) \to \{\bot,\widetilde{\ct}\}$: The PPT verification algorithm takes as input secret parameters $\sp$, a ciphertext $\ct$, a quantum circuit $Q$, and a proof $\pi$, and outputs either reject $(\bot)$ or a ciphertext $\widetilde{\ct}$.
    \item $\Dec(\sk,\ct) \to x$: The PPT decryption algorithm takes as input the secret key $\sk$ and a ciphertext $\ct$ and outputs a plaintext $x$.
\end{itemize}

It should satisfy the following properties. We leave probabilities taken over the sampling of the random oracle $H$ implicit.

\begin{itemize}
    \item \textbf{Privacy.} For any QPT adversary $\{\calA_\secp\}_{\secp}$, depth $D = D(\secp)$, and messages $x_0,x_1,$
    \begin{align*}
    &\bigg| \Pr\left[\calA(\pk,\ct) = 1 : \begin{array}{r}(\pk,\sk) \gets \Gen(1^\secp,D) \\ \ct \gets \Enc(\pk,x_0)  \end{array}\right]\\ &- \Pr\left[\calA(\pk,\ct) = 1 : \begin{array}{r}(\pk,\sk) \gets \Gen(1^\secp,D) \\ \ct \gets \Enc(\pk,x_1)\end{array}\right]\bigg| = \negl(\secp).
    \end{align*}
    \item \textbf{Completeness.} For any $(\pk,\sk) \in \Gen(1^\secp,D)$, $\ct \in \Enc(\pk,x)$, and $Q$,
    \[\Pr\left[\Dec(\sk,\Ver^H(\sp,\ct,Q,\pi)) = Q(x) : \begin{array}{r}\pp,\sp \gets \VerGen(\ct,Q) \\ \pi \gets \Eval^H(\pp,\ct,Q)\end{array}\right] = 1-\negl(\secp).\]
    \item \textbf{Soundness.} For any $(\pk,\sk) \in \Gen(1^\secp,D)$, $\ct \in \Enc(\pk,x)$, $Q$, and QPT adversary $\{\calA_\secp\}_\secp$, 
    \[\Pr\left[\widetilde{\ct} \neq \bot \wedge \Dec(\sk,\widetilde{\ct}) \neq Q(x) : \begin{array}{r} \pp,\sp \gets \VerGen(\ct,Q) \\ \pi \gets \calA^H(\pp) \\ \widetilde{\ct} \gets \Ver^H(\sp,\ct,Q,\pi)\end{array}\right] = \negl(\secp).\]
        \item \textbf{Incompatible standard-basis measurements.} First, we introduce some notation.

    \begin{itemize}
        \item Given a subset $I \subset [\ell]$, define $\overline{I} \coloneqq [\ell] \setminus I$.
        \item Given any $m \in \{0,1\}^\ell$ and a subset $I \subseteq [\ell]$, we define $m_{I \to *} \in \{0,1,*\}^\ell$ to be equal to $m_i$ at indices $i \in \overline{I}$ and equal to $*$ at indices $i \in I$.
        \ \item Each choice of $(\cdot,\sp) \in \VerGen(\ct,Q)$ defines\footnote{The way that $S$ is defined based on $\sp$ is specified in \cite[Section 5.3]{bartusek2023obfuscation}, though the details will not be important to us here.} a set of ``standard basis positions'' $S \subseteq [\ell]$, which we denote by $S = S[\sp]$ (not to be confused with the positions $T \subseteq [\ell]$ that are \emph{measured} in the standard basis). 
        \item We say that two sets of strings $M_0,M_1 \in \{0,1,*\}^{\ell}$ are \emph{incompatible} if for any $m_0 \in M_0, m_1 \in M_1$, there exists an $i \in [\ell]$ such that $m_{0,i},m_{1,i} \neq *$ and $m_{0,i} \neq m_{1,i}$. 
        \item Each choice of $(\cdot,\sp) \in \VerGen(\ct,Q)$ defines\footnote{As above, this definition is specified in \cite[Section 5.3]{bartusek2023obfuscation}, and the details will not be important to us here. } two incompatible sets of strings $M_0,M_1 \in \{0,1,*\}^{\ell}$.
    \end{itemize}

    Now, we need the following properties.

    \begin{itemize}
        \item For any $(\pk,\sk) \in \Gen(1^\secp,D)$, $\ct \in \Enc(\pk,x)$, Q, $(\pp,\sp) \in \VerGen(\ct,Q)$, and proof $\pi = (m,y,z)$, it holds that 
        \[\Ver^H(\sp,\ct,Q,(m,y,z)) = \Ver^H(\sp,\ct,Q,(m_{S \cap \overline{T} \to *},y,z)).\]  That is, $\Ver$ ignores all bits of $m$ that are measured in the Hadamard basis, but correspond to ``standard basis positions''.


        \item For any $(\pk,\sk) \in \Gen(1^\secp,D)$, $\ct \in \Enc(\pk,x)$, $Q$, and QPT adversary $\{\calA_\secp\}_\secp$, 
    \[\Pr\left[\widetilde{\ct} \neq \bot \wedge m_{\overline{S} \cup \overline{T} \to *} \notin M_0: \begin{array}{r} \pp,\sp \gets \VerGen(\ct,Q) \\ \pi = (m,y,z) \gets \calA^H(\pp) \\ T \coloneqq H(y) \\ \widetilde{\ct} \gets \Ver^H(\sp,\ct,Q,\pi)\end{array}\right] = \negl(\secp).\] 
    That is, it is hard to find an accepting proof $\pi$ such that $m_{\overline{S} \cap \overline{T} \to *}$ (where only the standard basis positions that are measured in the standard basis remain) is in $M_0$.
    \item For any $(\pk,\sk) \in \Gen(1^\secp,D)$, $\ct \in \Enc(\pk,x)$, $Q$, and QPT adversary $\{\calA_\secp\}_\secp$,
    \[\Pr\left[\widetilde{\ct} \neq \bot \wedge \Dec(\sk,\widetilde{\ct}) \neq Q(x) \wedge m_{\overline{S} \cup \overline{T} \to *} \notin M_1 : \begin{array}{r} \pp,\sp \gets \VerGen(\ct,Q) \\ \pi = (m,y,z) \gets \calA^H(\pp,\sp) \\ T \coloneqq H(y) \\ \widetilde{\ct} \gets \Ver^H(\sp,\ct,Q,\pi)\end{array}\right] = \negl(\secp).\]
    That is, it is hard to find an accepting but faulty proof $\pi$ such that $m_{\overline{S} \cap \overline{T} \to *}$ (where only the standard basis positions that are measured in the standard basis remain) is in $M_1$, \emph{even given the secret parameters} $\sp$.
    \end{itemize}

    \item \textbf{Parallelizable $\VerGen$.} There exists a fixed polynomial $k(\secp)$ and a deterministic classical algorithm $\ParVerGen(\ct,Q,\sp,i) \to \pp_i$, where $|\pp_i| = k(\secp)$, such that for any $\ct,Q$, the operation $\VerGen(1^\secp,\ct,Q)$ is equivalent to:
    \begin{itemize}
        \item Sample $\sp \gets \{0,1\}^\secp$
        \item Run $\pp_i \coloneqq \ParVerGen(\ct,Q,\sp,i)$ for $i \in [\ell]$, and set $\pp \coloneqq (\pp_1,\dots,\pp_\ell)$.
    \end{itemize}

    \item \textbf{Compactness.} Fix any polynomials $N = N(\secp), D = D(\secp)$, any input $x$ such that $|x| \leq N$, and any pseudo-deterministic quantum circuit $Q$ with depth at most $D$, and let $(\pk,\sk) \gets \Gen(1^\secp,D),\ct \gets \Enc(\pk,x), (\pp,\sp) \gets \VerGen(\ct,Q), \pi \gets \Eval^H(\pp,\ct,Q), \widetilde{\ct} \gets \Ver^H(\sp,\ct,Q,\pi)$, and $Q(x) \gets \Dec(\sk,\ct)$. Then it holds that 
    \begin{itemize}
        \item $|\pk|,|\sk|,|\ct|,|\widetilde{\ct}| \leq p(\secp,N,D)$, and 
        \item The run-times of $\Gen(1^\secp,D),\Enc(\pk,x),\Dec(\sk,\ct)$ are bounded by $p(\secp,N,D)$.
    \end{itemize}
  
\end{itemize}

\end{definition}

An \emph{unleveled} upgradable privately-verifiable QFHE scheme is defined exactly as above, except that $\Gen$ no longer takes the depth parameter $D$ and all properties are defined without the parameter $D$.

\begin{theorem}[\cite{bartusek2023obfuscation}, Section 5.3]
    Assuming LWE, there exists a leveled upgradable privately-verifiable QFHE in the QROM. Assuming LWE plus an appropriate circular-security assumption \cite{bra18}, there exists unleveled upgradable privately-verifiable QFHE in the QROM.
\end{theorem}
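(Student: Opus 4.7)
The plan is to instantiate the privately-verifiable QFHE from \cite{bartusek2023obfuscation}, Section 5.3, and verify that it fits Definition \ref{def:priv}. That construction combines three ingredients: Mahadev's LWE-based QFHE for classical messages \cite{mahadev2020classical}; the measurement-protocol-based verifier of \cite{bartusek2021secure} that certifies quantum evaluation on QFHE ciphertexts using noisy trapdoor claw-free functions (TCFs); and a Fiat-Shamir step with the random oracle $H$ that collapses the two-round measurement protocol (commit, basis-challenge, open) into the non-interactive $(M_{\pp}, T := H(y), M_{\pp,y,T})$ form specified by our definition. The leveled-versus-unleveled distinction follows from leveled versus circularly-secure FHE \cite{bra18} in the bootstrapping.

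Matching the syntax and easy properties is mostly bookkeeping. $\Gen, \Enc, \Dec$ are inherited directly from the QFHE; $\VerGen$ samples one independent TCF key pair per output qubit of the compiled evaluation, which yields the parallelizable $\ParVerGen$ with each $|\pp_i|$ a fixed $\poly(\secp)$. Honest $\Eval^H$ runs the QFHE homomorphic evaluation into register $\calM$, coherently computes the TCF commitments into a separate register that is then measured to produce $y$ (this is $M_{\pp}$), lets $T := H(y)$ select the basis pattern, performs the TCF opening operations $M_{\pp,y,T}$ on $(\calM,\calZ)$, and finally measures each $\calM_i$ in basis $T_i$ to yield $m$. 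Completeness follows from correctness of TCF commit/open, privacy reduces to QFHE semantic security, and compactness follows from the QFHE and TCF parameter bounds.

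The bulk of the proof, and the main obstacle, is the \emph{incompatible standard-basis measurements} property. Let $S \subseteq [\ell]$ be the positions whose standard-basis values jointly decode the output ciphertext in the honest protocol, and let $M_0, M_1 \subseteq \{0,1,*\}^\ell$ be the two sets of opening patterns on $S \cap T$ dictated by $\sp$ in the \cite{bartusek2021secure} measurement protocol; their incompatibility follows from the adaptive-hardcore / collapsing structure of the TCF commitment. The syntactic condition that $\Ver$ ignores $m$ on $S \cap \overline{T}$ holds by construction, as the verifier consults only the bits opened in the expected basis. For the two statistical hardness statements, I would invoke \Cref{thm:measure-and-reprogram} to reprogram $H$ at the adversary's commitment $y^{\ast}$, reducing Fiat-Shamir soundness to interactive soundness of the underlying measurement protocol at a polynomial loss. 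The $m_{\overline{S}\cup\overline{T}\to *} \notin M_0$ case (no $\sp$) then reduces to the collapsing / adaptive-hardcore property of TCFs, while the $m_{\overline{S}\cup\overline{T}\to *} \notin M_1$ case (with $\sp$) reduces to the stronger soundness of \cite{bartusek2021secure}, which already handles adversaries that know the per-qubit trapdoors at the opening phase.
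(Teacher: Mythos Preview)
The paper does not prove this theorem; it is stated with a bare citation to \cite{bartusek2023obfuscation}, Section 5.3, and imported as a black box. So there is no in-paper argument to compare against. Your proposal is a reasonable high-level reconstruction of what that prior work establishes: the combination of Mahadev's QFHE, the TCF-based measurement protocol of \cite{bartusek2021secure}, and Fiat--Shamir via $H$ is indeed the architecture, and your identification of $\VerGen$ as per-qubit TCF key sampling, of $M_{\pp}$ as the commit step and $M_{\pp,y,T}$ as the open step, and of $S$ as the positions whose standard-basis values encode the output ciphertext all match the intended instantiation.

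Two small cautions on your sketch, in case you intend to flesh it out. First, the two hardness clauses in the ``incompatible standard-basis measurements'' property are computational (against QPT adversaries), not statistical; your use of measure-and-reprogram is correct but the word ``statistical'' is not. Second, your reduction for the $M_1$ clause (adversary given $\sp$) is the delicate one: you assert that \cite{bartusek2021secure} ``already handles adversaries that know the per-qubit trapdoors at the opening phase,'' but what is actually needed is that the $M_1$ check is an \emph{information-theoretic} consistency condition on the standard-basis openings relative to the claw structure fixed by $y$, so that even knowing $\sp$ the adversary cannot produce an accepting-but-wrong $\pi$ whose $S\cap T$ bits land outside $M_1$. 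This is established in \cite{bartusek2023obfuscation}, but it is not simply the soundness theorem of \cite{bartusek2021secure}; it is a structural property of how $\Ver$ decodes the output ciphertext from $m$, and you would need to spell that out rather than cite interactive soundness.
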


\subsection{Construction}
\newpage

\begin{figure}[H]
\centering
\fbox{%
\parbox{0.96\textwidth}{%
\textbf{Publicly-verifiable QFHE}
\begin{itemize}
    \item $\Gen(1^\secp,N,D,S)$:
    \begin{itemize}
        \item Sample $\pk,\sk \gets \Priv.\Gen(1^\secp)$.
        \item Sample PRF keys $k_\PFC,k_\Priv,k_H \gets \{0,1\}^\secp$.
        \item Sample $\O \gets \OSS.\Setup(1^\secp).$
        \item Define $\CK(\pk_\OSS,i,v)$ as follows.
        \begin{itemize}
            \item Compute $(\CK_{\pk,i},\dk_{\pk,i}) \coloneqq \PFC.\Gen(1^\secp,F_{k_\PFC}(\pk_\OSS,i))$.
            \item Output $\CK_{\pk,i}(v)$. 
        \end{itemize} 
        \item Define $\PrivGen(\ct,Q,\pk_\OSS,c,\sigma,i)$ as follows.
        \begin{itemize}
            \item If $\OSS.\Ver^\O(\pk_\OSS,(\ct,Q,c),\sigma) = \top$, then continue, and otherwise return $\bot$.
            \item Define $\sp \coloneqq F_{k_\Priv}(\ct,Q,\pk_\OSS,c,\sigma)$.
            \item Output $\pp_i \coloneqq \Priv.\ParVerGen(\ct,Q,\sp,i)$.
        \end{itemize}
        \item Let $\H\coloneqq F_{k_H}$.
        \item Define $\PrivVer(\ct,Q,\pk_\OSS,c,\sigma,u,y,z)$ as follows.
        \begin{itemize}
            \item If $\OSS.\Ver^\O(\pk_\OSS,(\ct,Q,c),\sigma) = \top$, then set $\sp \coloneqq F_{k_\Priv}(\ct,Q,\pk_\OSS,c,\sigma)$, $S \coloneqq S[\sp]$, and otherwise return $\bot$.
            \item Parse $u = (u_1,\dots,u_\ell)$, $c = (c_1,\dots,c_\ell)$, and define $T \coloneqq \H(y)$.
            \item For each $i \in [\ell]$, compute $(\CK_{\pk,i},\dk_{\pk,i}) \coloneqq \PFC.\Gen(1^\secp,F_{k_\PFC}(\pk_\OSS,i))$.
            \item For each $i \in T$, compute $m_i \coloneqq \PFC.\DecZ(\dk_{\pk,i},c_i,u_i)$, and return $\bot$ if $m_i = \bot$.
            \item For each $i \in \overline{T} \setminus S$, compute $m_i \coloneqq \PFC.\DecX(\dk_{\pk,i},c_i,u_i)$, and return $\bot$ if $m_i = \bot$.
            \item For each $i \in \overline{T} \cap S$, set $m_i = *$.
            \item Let $m = (m_1,\dots,m_\ell)$, and output $\Priv.\Ver^\H(\sp,\ct,Q,(m,y,z))$.
        \end{itemize}
        \item Output $\pk,\sk,\PP \coloneqq \SObf.\Obf(1^\secp,(\O,\CK,\PrivGen,\H,\PrivVer),T)$, where $T$ is the size of the circuit required to implement the concatenation of programs $(\O,\CK,\PrivGen,\H,\PrivVer)$. 
    \end{itemize}
    \item $\Enc(\pk,x)$: Output $\ct \gets \Priv.\Enc(\pk,x)$.
    \item $\Eval^\PP(\ct,Q)$
    \begin{itemize}
        \item Let $\ell = \ell(\secp,|Q|)$ and sample $(\pk_\OSS,\ket{\sk_\OSS}) \gets \OSS.\Gen^\O(|\ct| + |Q| + \ell \cdot t(\secp))$.
        \item Initialize $\ket{\psi_{\ct,Q}}$ on register $\calM = (\calM_1,\dots,\calM_\ell)$.
        \item For each $i \in [\ell]$ apply $\PFC.\Com^{\CK(\pk_\OSS,i,\cdot)}: \calM_i \to \calM_i,\calU_i,c_i$, and set $c \coloneqq (c_1,\dots,c_\ell)$.
        \item Compute $\sigma \gets \OSS.\Sign^\O(\ket{\sk_\OSS},(\ct,Q,c))$ and, for $i \in [\ell]$, compute $\pp_i \gets \PrivGen(\pk_\OSS,\ct,Q,c,\sigma,i)$. Set $\pp \coloneqq (\pp_1,\dots,\pp_\ell)$.
        \item Initialize fresh registers $\calY,\calZ$, apply $M_\pp$ on $(\calM,\calY,\calZ)$ to obtain $y$, compute $T = \H(y)$, and apply $M_{\pp,y,T}$ on $(\calM,\calZ)$ to obtain $z$.
        \item For each $i \in T$, compute $u_i \gets \PFC.\OpenZ(\calM_i,\calU_i)$ and for each $i \in \overline{T}$, compute $u_i \gets \PFC.\OpenX(\calM_i,\calU_i)$. Set $u \coloneqq (u_1,\dots,u_\ell)$.
        \item Output $\widetilde{\ct} \coloneqq \PrivVer(\ct,Q,\pk_\OSS,c,\sigma,u,y,z)$ and $\pi \coloneqq (\pk_\OSS,c,\sigma,u,y,z)$. 
    \end{itemize}
    \item $\Ver^\PP(\ct,Q,\widetilde{\ct},\pi)$: Output $\top$ iff $\PrivVer(\ct,Q,\pi) = \widetilde{\ct}$.
    \item $\Dec(\sk,\ct)$: Output $\Priv.\Dec(\sk,\ct)$.
\end{itemize}
}
}
\caption{Our construction of publicly-verifiable QFHE in the classical oracle model.}\label{fig:QFHE}
\end{figure}

Ingredients:
\begin{itemize}
    \item PRF $F_k: \{0,1\}^* \to \{0,1\}^\secp$.
    \item OSS $(\OSS.\Setup, \OSS.\Gen,\OSS.\Sign, \OSS.\Ver)$ in the classical oracle model (Definition \ref{def:OSS}). 
    \item PFC $(\PFC.\Gen, \PFC.\Com,\PFC.\OpenZ,\PFC.\OpenZ,\PFC.\DecZ,\PFC.\DecX)$ in the classical oracle model (see Definition \ref{def:PFC}), where single-qubit commitments are size $t(\secp)$.
    \item Upgradable privately-verifiable QFHE $(\Priv.\Gen, \Priv.\Enc,\Priv.\VerGen,\Priv.\Eval^H,\Priv.\Ver^H,\Priv.\Dec)$, in the QROM with state size parameter $\ell(\secp,|Q|)$ and compactness parameter $p(\secp,N,D)$ (see Definition \ref{def:priv}).
    \item Post-quantum succinct ideal obfuscation $(\SObf.\Obf,\SObf.\Eval)$ in the classical oracle model (see Definition \ref{def:succinct-obf}). In an abuse of notation, we will apply $\SObf.\Obf$ to a program with multiple bits of output, meaning that we break the program into multiple single-bit output programs, and obfuscate each.
\end{itemize}

\subsection{Proofs}

\begin{theorem}\label{thm:QFHE-correctness}
    The QFHE protocol given in Figure \ref{fig:QFHE} satisfies correctness (see Definition \ref{def:PVQFHE}).
\end{theorem}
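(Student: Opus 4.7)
The plan is to verify both conditions of Definition \ref{def:PVQFHE}. The first condition, that $\Ver^\PP$ accepts the honest proof, is essentially immediate from the construction. $\Eval$ defines $\widetilde{\ct} \coloneqq \PrivVer(\ct,Q,\pk_\OSS,c,\sigma,u,y,z)$ and outputs $\pi = (\pk_\OSS,c,\sigma,u,y,z)$, while $\Ver^\PP(\ct,Q,\widetilde{\ct},\pi)$ simply recomputes $\PrivVer$ on these same inputs and checks equality. Since $\PrivVer$ is a deterministic classical procedure---all of its internal randomness is supplied by the PRFs $F_{k_\PFC}$ and $F_{k_\Priv}$ that are baked into the obfuscated circuit $\PP$---the recomputation reproduces $\widetilde{\ct}$ exactly, so $\Ver^\PP$ outputs $\top$.

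The substantive task is to show $\Dec(\sk,\widetilde{\ct}) = Q(x)$ with probability $1-\negl(\secp)$. My approach is to compare the execution of $\Eval$ followed by $\PrivVer$ to an honest execution of the underlying privately-verifiable scheme. Let $m^* \in \{0,1\}^\ell$ denote the hypothetical string obtained by measuring, for each $i\in[\ell]$, register $\calM_i$ in the standard basis when $i\in T$ and in the Hadamard basis when $i\in\overline{T}$, directly after applying $M_\pp$ and $M_{\pp,y,T}$ as in the description of $\Priv.\Eval$. Completeness of the upgradable privately-verifiable QFHE (Definition \ref{def:priv}) gives $\Dec(\sk,\Priv.\Ver^\H(\sp,\ct,Q,(m^*,y,z))) = Q(x)$ with overwhelming probability, where $\H \coloneqq F_{k_H}$ plays the role of the random oracle; this substitution preserves completeness up to negligible error in a single honest execution by post-quantum PRF security. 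Correctness of the OSS scheme ensures that the $\OSS.\Ver^\O$ check inside $\PrivVer$ passes with overwhelming probability, so the $\sp$ derived by $\PrivVer$ agrees with the one implicitly used by $\Eval$ through $\PrivGen$.

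It remains to relate the string $m$ actually decoded inside $\PrivVer$ to $m^*$. By PFC correctness (Definition \ref{def:pfc-correctness}), for each $i\in T$ the value $\PFC.\DecZ(\dk_{\pk,i},c_i,\PFC.\OpenZ(\calM_i,\calU_i))$ is statistically close to the standard basis measurement of $\calM_i$, and $\PFC.\DecX \circ \PFC.\OpenX$ reproduces the Hadamard basis measurement for $i\in\overline{T}\setminus S$. A union bound over the $\ell = \poly(\secp)$ indices yields $m_i = m^*_i$ for all $i \in T \cup (\overline{T}\setminus S)$ except with negligible probability, while $\PrivVer$ explicitly sets $m_i = *$ for $i \in \overline{T}\cap S$. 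Thus, with overwhelming probability, $m = m^*_{S\cap\overline{T}\to *}$. The first bullet of the ``incompatible standard-basis measurements'' property then gives $\Priv.\Ver^\H(\sp,\ct,Q,(m,y,z)) = \Priv.\Ver^\H(\sp,\ct,Q,(m^*,y,z))$, and hence $\widetilde{\ct}$ decrypts to $Q(x)$. The only routine obstacle is collecting the various negligible failure events---PFC decoding errors across $\ell$ positions, OSS signing/verification error, completeness error of the underlying privately-verifiable scheme, and the PRF-to-random-oracle switch for $\H$---via a single union bound; no single step is individually difficult.
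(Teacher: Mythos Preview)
Your proposal is correct and follows essentially the same approach as the paper, which simply defers to the completeness argument of \cite[Theorem~5.12]{bartusek2023obfuscation} with OSS correctness substituted for signature-token correctness. Your write-up is in fact a more explicit rendering of that adaptation; the only points you leave implicit (perfect correctness of $\SObf$ so that $\PP$ faithfully exposes $\O,\CK,\PrivGen,\H,\PrivVer$, and the PRF substitution for $\sp = F_{k_\Priv}(\cdots)$ in addition to $\H$) are equally routine and already covered by your final union-bound remark.
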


\begin{proof}
    This is a straightforward adaptation of the completeness part of \cite[Theorem 5.12]{bartusek2023obfuscation}, where we use the correctness of $\OSS$ (along with the other primitives) rather than the signature token.
\end{proof}

\begin{theorem}\label{thm:QFHE-privacy}
    The QFHE protocol given in Figure \ref{fig:QFHE} satisfies privacy (see Definition \ref{def:PVQFHE}).
\end{theorem}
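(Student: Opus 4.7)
The plan is to reduce privacy of the publicly-verifiable QFHE scheme to two ingredients in sequence: the ideal obfuscation security of $\SObf$ (Definition~\ref{def:succinct-obf}) and the privacy of the underlying upgradable privately-verifiable scheme $\Priv$ (Definition~\ref{def:priv}).

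The crucial structural observation enabling this reduction is that the program $\mathcal{P} \coloneqq (\O, \CK, \PrivGen, \H, \PrivVer)$ fed into $\SObf.\Obf$ to produce $\PP$ does not depend on the $\Priv$ secret key $\sk$. Indeed, $\CK$ uses only $\PFC.\Gen$ seeded by a PRF; $\PrivGen$ calls $\Priv.\ParVerGen(\ct,Q,\sp,i)$, which by Definition~\ref{def:priv} takes only the ciphertext, the circuit, and a pseudorandom seed; and $\PrivVer$ invokes $\Priv.\Ver^\H$, which by design outputs an evaluated ciphertext and never decrypts. Thus $\mathcal{P}$ is a polynomial-size classical program computable from $(\O, k_\PFC, k_\Priv, k_H, \pk)$ alone.

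With this in hand, I would argue privacy through four hybrids. Let $H_0$ be the real experiment with $\ct \gets \Priv.\Enc(\pk, x_0)$. In $H_1$, I replace the adversary's oracle access to $\PP$ by the simulator $\calS$ guaranteed by ideal obfuscation of $\SObf$, run on program $\mathcal{P}$ with auxiliary input $(\pk, \ct)$; by Definition~\ref{def:succinct-obf}, $H_0 \approx H_1$. In $H_2$, I switch the challenge ciphertext to $\ct \gets \Priv.\Enc(\pk, x_1)$ while still running the simulator. Indistinguishability $H_1 \approx H_2$ follows by a reduction to $\Priv$ privacy: on input $(\pk, \ct_b)$ from the $\Priv$ challenger, the reduction samples $\O$ and the three PRF keys itself, assembles $\mathcal{P}$, runs $\calS^{\mathcal{P}}((\pk, \ct_b))$, and forwards the adversary's output bit. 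This reduction succeeds precisely because $\mathcal{P}$ can be built without access to $\sk$, so every query $\calS$ makes to $\mathcal{P}$ can be answered internally by the reduction. Finally, $H_3$ is the real experiment with $\ct_1$, and $H_2 \approx H_3$ by another application of ideal obfuscation in the reverse direction.

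The main obstacle is really the structural verification in the second paragraph: one must check that every sub-procedure comprising $\mathcal{P}$ is genuinely independent of $\sk$, so that the middle reduction can internally answer all queries made by $\calS$. This is the key design feature of the construction --- the obfuscated program produces ciphertexts rather than plaintexts, with the final $\Dec$ performed externally by the legitimate key holder --- so nothing $\sk$-sensitive ever leaks through $\PP$. Once this observation is in place, the hybrid argument is entirely routine.
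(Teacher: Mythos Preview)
Your proposal is correct but takes a more elaborate route than the paper, whose proof is a single line: ``This follows directly from the privacy of the upgradable privately-verifiable QFHE.'' The reason the one-liner suffices is exactly the structural observation you already make --- the program $\mathcal{P} = (\O,\CK,\PrivGen,\H,\PrivVer)$ depends only on $(\pk,\O,k_\PFC,k_\Priv,k_H)$ and not on $\sk$ (nor on $x$, since $\Gen$ runs before $\Enc$). But then one need not invoke ideal obfuscation at all: $\SObf.\Obf$ is a public p.p.t.\ algorithm, so a reduction that receives $(\pk,\ct_b)$ from the $\Priv$ privacy challenger can sample $\O$ and the PRF keys itself, build $\mathcal{P}$, compute $\PP \gets \SObf.\Obf(\mathcal{P})$, and answer the adversary's oracle queries to $\PP$ directly. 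Your detour through the simulator $\calS$ in hybrids $H_1$ and $H_2$ is sound, but it uses the security of $\SObf$ where only its syntax is needed; the paper's direct reduction collapses your four hybrids to two.
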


\begin{proof}
    This follows directly from the privacy of the upgradable privately-verifiable QFHE.
\end{proof}

\begin{theorem}\label{thm:QFHE-privacy}
    The QFHE protocol given in Figure \ref{fig:QFHE} satisfies compactness (see Definition \ref{def:PVQFHE}).
\end{theorem}
    
\begin{proof}
    This follows directly from the compactness of the upgradable privately-verifiable QFHE and succinctness of the post-quantum succinct ideal obfuscator. In particular, note that the output size of the program obfuscated by $\Gen$ is bounded by a fixed polynomial, and so $\PP$ consists of a bounded polynomial number of obfuscated programs (each with bounded circuit size).
\end{proof}

\begin{theorem}\label{thm:QFHE-security}
    The QFHE protocol given in Figure \ref{fig:QFHE} satisfies soundness (see Definition \ref{def:PVQFHE}).
\end{theorem}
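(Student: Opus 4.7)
The plan is to prove soundness by a chain of hybrids that step-by-step peels off each ingredient of the construction and ultimately reduces to a combination of OSS strong unforgeability and PFC string-binding with public decodability, together with the ``incompatible standard-basis measurements'' property of the underlying upgradable privately-verifiable QFHE. The argument follows the structure of the proof of \cite[Theorem 5.12]{bartusek2023obfuscation}, but with quantum coset-state commitments replaced by classical PFC keys (\Cref{def:PFC}) and signature tokens replaced by one-shot signatures (\Cref{def:OSS}).

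I would first invoke the ideal obfuscation security of $\SObf$ (\Cref{def:succinct-obf}) to move from the obfuscated $\PP$ to direct oracle access to the underlying functionalities $(\O,\CK,\PrivGen,\H,\PrivVer)$, and then replace the PRFs $F_{k_\PFC},F_{k_\Priv},F_{k_H}$ with lazily-sampled random functions; in particular $\H$ becomes a quantum-accessible random oracle. Next I would apply the measure-and-reprogram theorem (\Cref{thm:measure-and-reprogram}) to $\H$, losing a $1/\poly$ factor in the success probability but gaining the ability to extract the adversary's intended $y^*$ in advance and reprogram $\H(y^*)$ to a fresh random $T^*$. This step simultaneously lets me pin down the classical components $(\ct^*,Q^*,\pk_\OSS^*,c^*,\sigma^*)$ of the eventual winning proof.

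Equipped with the extracted $\pk_\OSS^*$ and signed message, OSS strong unforgeability (applied through \Cref{lem:evasive-oracles}) lets me replace $\PrivVer$ by the restricted oracle $\PrivVer[\pk_\OSS^*,s,(\ct^*,Q^*,c^*,\sigma^*)]$ of \Cref{fig:functionalities}, where $s$ is sampled lazily and freshly (consistent with the random-function replacement of $F_{k_\Priv}$): any distinguishing input must contain a second valid OSS signature under $\pk_\OSS^*$, which contradicts unforgeability. The third ``incompatible standard-basis measurements'' property of \Cref{def:priv} then further lets me replace this oracle by the stricter $\PrivVer[\pk_\OSS^*,s,(\ct^*,Q^*,c^*,\sigma^*),1]$, which additionally rejects whenever $m_{\overline{S}\cup\overline{T}\to *}\notin M_1$; after this swap the adversary's success event collapses to the predicate $V[1]$ of \Cref{fig:functionalities}.

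The crux is a final reduction to PFC string-binding with public decodability (\Cref{def:string-binding}, equivalent to the single-bit version by \Cref{cor:string-single-equiv}). Honest standard-basis openings of the state committed by $c^*$ yield outcomes in $M_0$ on the index set $S\cap T^*$, whereas a winning adversary's openings $u^*$ PFC-decode on these positions to a string in $M_1$; since $M_0\cap M_1=\emptyset$, this is exactly a binding violation on the sub-commitments $\bc = (c_i^*)_{i\in S\cap T^*}$. The QFHE adversary, together with its access to $\CK$ and both $\DecZ[\dk],\DecX[\dk]$ oracles during commitment, plays the role of $\calA_1$; the second-stage $\calA_2$, given only $\DecZ[\dk]$, performs the opening and maps from the $W_0$-subspace to the $W_1$-subspace. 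The main obstacle is choreographing the lazy sampling from the PRF replacement with the reprogramming in the measure-and-reprogram step so that $\sp = s$ (and hence the set $S$) is only determined after $c^*$ is committed --- this is exactly what makes the binding reduction well-defined, since otherwise the commitment could depend on $S$ and the binding experiment would not apply.
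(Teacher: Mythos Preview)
Your overall architecture is close to the paper's, but there is a genuine gap in the measure-and-reprogram step: you apply \Cref{thm:measure-and-reprogram} to the oracle $\H$ (the one computing $T = \H(y)$), whereas the paper applies it to $\H_\Priv$, the random function replacing $F_{k_\Priv}$ that determines $\sp$. Extracting the query point to $\H$ gives you only $y^*$, not $(\ct^*,Q^*,\pk_\OSS^*,c^*,\sigma^*)$ as you claim; those are inputs to $\H_\Priv$, not to $\H$. Consequently your OSS step has no fixed $(\pk_\OSS^*,\sigma^*)$ to work with, and your assertion that ``$\sp=s$ is only determined after $c^*$ is committed'' does not follow from lazy sampling alone in the quantum-access setting --- the adversary's superposition queries to $\PrivGen,\PrivVer$ touch $\H_\Priv$ at the relevant point well before any classical commitment is measured. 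The paper's fix is precisely to measure-and-reprogram $\H_\Priv$ at input $a=(\ct^*,Q^*,\pk_\OSS^*,c^*,\sigma^*)$, which simultaneously (i) pins down the OSS tuple so strong unforgeability applies and (ii) makes $s$ fresh and independent of $c^*$, enabling the PFC binding reduction.

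A second, smaller gap is in your final paragraph. Between committing $c^*$ and producing the opening $u^*$, the adversary still has access to $\PrivVer$, which internally calls $\PFC.\DecX$ on the very commitment keys indexed by $\pk_\OSS^*$. Thus you cannot directly cast the adversary's second stage as the $\calA_2$ of \Cref{def:string-binding}, who is only allowed $\DecZ$. The paper handles this via a query-by-query hybrid sequence $\{\calH_\iota\}$ that, one $\PrivVer$ query at a time, replaces the response on $\pk_\OSS^*$ by $\bot$; each adjacent pair is shown close by first invoking the second ``incompatible standard-basis measurements'' property (to restrict the $\iota$-th query to $M_0$) and then PFC string binding. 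Only at $\calH_p$ is all $\DecX$ access on $\pk_\OSS^*$ eliminated, at which point soundness of the private scheme finishes the argument. Your sketch collapses this into a single binding invocation, which does not type-check against \Cref{def:string-binding}.
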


\begin{proof}
    We follow the soundness part of the proof of \cite[Theorem 5.12]{bartusek2023obfuscation}, but with a couple of tweaks.

    Suppose there exists a program $P_{Q^*}$, input $x$, and adversary $\calA_1^{\PP}$ that violates the soundness of Definition \ref{def:PVQFHE}, where we have dropped the indexing by $\secp$ for convenience. Our first step will be to appeal to the security of the succinct obfuscation (Definition \ref{def:succinct-obf}) to replace $\calA_1$ with an adversary $\calA_2^{\O,\CK,\PrivGen,\H,\PrivVer}$ that has direct access to the large-input functionalities. This only has a negligble affect on the output of $\calA_1$.

    Next, we replace the PRFs $F_{k_\PFC},F_{k_\Priv},F_{k_H}$ with random oracles $\H_\PFC,\H_\Priv,\H$. Since $\calA_2$ only has polynomially-bounded oracle access to these functionalities, this has a negligible effect on the output of $\calA_2$ \cite{6375347}. $\calA_2$ can then be used to define an oracle algorithm $\calA_3^{\H_\Priv}$ that operates as follows.

\begin{itemize}
    \item Sample $(\pk,\sk,\O,\CK,\PrivGen,\H,\PrivVer)$ as in $\Gen(1^\secp,N,D,S)$, except that $\H_\PFC$ and $\H$ are sampled as random oracles.
    \item Sample $\ct^* \gets \Priv.\Enc(\pk,x)$.
    \item Run $\calA_2^{\O,\CK,\PrivGen,\H,\PrivVer}(\ct^*)$, forwarding calls to $\H_\Priv$ (which occur as part of calls to $\PrivGen$ and $\PrivVer$) to an external random oracle $\H_\Priv$.
    \item Measure $\calA_2$'s output proof $\pi^*$, parse $\pi^*$ as $(\pk_\OSS^*,c^*,\sigma^*,m^*,y^*,z^*)$ and output $a \coloneqq (\ct^*,Q^*,\pk_\OSS^*,c^*,\sigma^*)$ and $\aux \coloneqq (m^*,y^*,z^*)$.
\end{itemize}

Note that $\calA_3$ makes $p = \poly(\secp)$ total queries to $\H_\Priv$. Now, define $V$ as in Figure \ref{fig:functionalities}. Then since $\calA_1$ breaks soundness, \[\Pr\left[V(a,\H_\Priv(a),\aux) = 1 : (a,\aux) \gets \calA_3^{\H_\Priv}\right] = \nonnegl(\secp).\]

Next, since $p = \poly(\secp)$, by Theorem \ref{thm:measure-and-reprogram} there exists an algorithm $\calA_4 \coloneqq \Sim[\calA_3]$ such that

\[\Pr\left[V((\ct^*,Q^*,\pk_\OSS^*,c^*,\sigma^*),s,(m^*,y^*,z^*)) = 1 : \begin{array}{r} ((\ct^*,Q^*,\pk_\OSS^*,c^*,\sigma^*),\state) \gets \calA_4 \\ s \gets \{0,1\}^\secp \\ (m^*,y^*,z^*) \gets \calA_4(s,\state)\end{array}\right] = \nonnegl(\secp).\]

Moreover, $\calA_4$ operates as follows.

\begin{itemize}
    \item Sample $\H_\Priv$ as a $2p$-wise independent function and $(i,d) \gets (\{0,\dots,p-1\} \times \{0,1\}) \cup \{(p,0)\}$.
    \item Run $\calA_3$ for $i$ oracle queries, answering each query using the function $\H_\Priv$. 
    \item When $\calA_3$ is about to make its $(i+1)$'th oracle query, measure its query register in the standard basis to obtain $a \coloneqq (\ct^*,Q^*,\pk_\OSS^*,c^*,\sigma^*)$. In the special case that $(i,d) = (p,0)$, just measure (part of) the final output register of $\calA_3$ to obtain $a$.
    \item Receive $s$ externally.
    \item If $d = 0$, answer $\calA_3$'s $(i+1)$'th query with $\H_\Priv$. If $d=1$, answer $\calA_3$'s $(i+1)$'th query instead with the re-programmed oracle $\H_\Priv[(\ct^*,Q^*,\pk_\OSS^*,c^*,\sigma^*) \to s]$.
    \item Run $\calA_3$ until it has made all $p$ queries to $\H_\Priv$. For queries $i+2$ through $p$, answer with the re-programmed oracle $\H_\Priv[(\ct^*,Q^*,\pk_\OSS^*,c^*,\sigma^*) \to s]$.
    \item Measure $\calA_3$'s output $\aux \coloneqq (u^*,y^*,z^*)$.
\end{itemize}

Recall that $\calA_3$ is internally running $\calA_2$, who expects oracle access to $\O,\CK,\PrivGen,\H,\PrivVer$. These oracle queries will be answered by $\calA_4$. Next, we define $\calA_5$ to be the same as $\calA_4$, except that after $(\ct^*,Q^*,\pk_\OSS^*,c^*,\sigma^*)$ is measured by $\calA_4$, $\calA_2$'s queries to $\PrivVer$ are answered instead with the functionality $\PrivVer[\pk_\OSS^*,s,(\ct^*,Q^*,\pk_\OSS^*,c^*,\sigma^*)]$ from Figure \ref{fig:functionalities}. 

\begin{claim}\label{claim:A4}
\[\Pr\left[V((\ct^*,Q^*,\pk_\OSS^*,c^*,\sigma^*),s,(m^*,y^*,z^*)) = 1 : \begin{array}{r} ((\ct^*,Q^*,\pk_\OSS^*,c^*,\sigma^*),\state) \gets \calA_5 \\ s \gets \{0,1\}^\secp \\ (m^*,y^*,z^*) \gets \calA_5(s,\state)\end{array}\right] = \nonnegl(\secp).\]
\end{claim}

\begin{proof}
We can condition on $\OSS.\Ver^\O(\pk_\OSS^*,(\ct^*,Q^*,c^*),\sigma^*) = \top$, since otherwise $V$ would output 0. Then, by the security of $\OSS$ (Definition \ref{def:OSS}), once $(\pk_\OSS^*,\ct^*,Q^*,c^*,\sigma^*)$ is measured, $\calA_2$ cannot produce any query that has noticeable amplitude on any $(\ct,Q,c,\sigma)$ such that \[(\ct,Q,c,\sigma) \neq (\ct^*,Q^*,c^*,\sigma^*) ~~ \text{and} ~~ \OSS.\Ver(\pk_\OSS^*,(\ct,Q,c),\sigma) = \top.\] But after $(\pk_\OSS^*,\ct^*,Q^*,c^*,\sigma^*)$ is measured and $s$ is sampled, $\PrivVer$ and $\PrivVer[\pk_\OSS^*,s,(\ct^*,Q^*,c^*,\sigma^*)]$ can only differ on $(\ct,Q,c,\sigma)$ such that \[(\ct,Q,c,\sigma) \neq (\ct^*,Q^*,c^*,\sigma^*) ~~ \text{and} ~~ \OSS.\Ver(\pk_\OSS^*,(\ct,Q,c),\sigma) = \top.\] Thus, since $\calA_2$ only has polynomially-many queries, changing the oracle in this way can only have a negligible effect on the final probability, which completes the proof.
\end{proof}

Next, we claim the following, where $V[1]$ is defined in Figure \ref{fig:functionalities}.

\begin{claim}\label{claim:hybrid0}
\[\Pr\left[V[1]((\ct^*,Q^*,\pk_\OSS^*,c^*,\sigma^*),s,(m^*,y^*,z^*)) = 1 : \begin{array}{r} ((\ct^*,Q^*,\pk_\OSS^*,c^*,\sigma^*),\state) \gets \calA_5 \\ s \gets \{0,1\}^\secp \\ (m^*,y^*,z^*) \gets \calA_5(s,\state)\end{array}\right] = \nonnegl(\secp).\]
\end{claim}

\begin{proof}
This is a straightforward adaptation of \cite[Claim 5.14]{bartusek2023obfuscation}, reducing to the ``Incompatible standard-basis measurements'' property of $\Priv$ (Definition \ref{def:priv}).


\end{proof}

Finally, we will define a sequence of hybrids $\{\calH_\iota\}_{\iota \in [0,p]}$ based on $\calA_5$. Hybrid $\calH_\iota$ is defined as follows.

\begin{itemize}
    \item Run $((\ct^*,Q^*,\pk_\OSS^*,c^*,\sigma^*),\state) \gets \calA_5$.
    \item Sample $s \gets \{0,1\}^\secp$.
    \item Run $(m^*,y^*,z^*) \gets \calA_5(s,\state)$ with the following difference. Recall that at some point, $\calA_5$ begins using the oracle $\H_\Priv[(\ct^*,Q^*,\pk_\OSS^*,c^*,\sigma^*) \to s]$ while answering $\calA_2$'s queries. For the first $\iota$ times that $\calA_2$ queries $\PrivVer[\pk_\OSS^*,s,(\ct^*,Q^*,c^*,\sigma^*)]$ after this point, respond using the oracle $\PrivVer[\pk_\OSS^* \to \bot]$ that outputs $\bot$ on every input that includes $\pk_\OSS^*$ (and behaves normally otherwise).
    \item Output $V[1]((\ct^*,Q^*,\pk_\OSS^*,c^*,\sigma^*),s,(m^*,y^*,z^*))$.
\end{itemize}

Note that Claim \ref{claim:hybrid0} is stating exactly that $\Pr[\calH_0 = 1] = \nonnegl(\secp)$. Next, we have the following claim.

\begin{claim}\label{claim:hybridq}
$\Pr[\calH_p = 1] = \negl(\secp)$.
\end{claim}

\begin{proof}
This is a straightforward adaptation of \cite[Claim 5.15]{bartusek2023obfuscation}, reducing to the soundness of $\Priv$ (Definition \ref{def:priv}).

\end{proof}

Finally, we prove the following Claim \ref{lemma:intermediate-hybrids}. Since $p = \poly(\secp)$, this contradicts Claim \ref{claim:hybrid0} and Claim \ref{claim:hybridq}, which completes the proof.

\end{proof}

\begin{claim}\label{lemma:intermediate-hybrids}
For any $\iota \in [p]$, $\Pr[\calH_\iota = 1] \geq \Pr[\calH_{\iota - 1} = 1] - \negl(\secp)$.
\end{claim}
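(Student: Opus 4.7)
The plan is to reduce any non-negligible gap $\Pr[\calH_{\iota-1} = 1] - \Pr[\calH_{\iota} = 1]$ to a violation of PFC string-binding (\Cref{def:string-binding}), leveraging the ``Incompatible standard-basis measurements'' property of $\Priv$ (\Cref{def:priv}). The argument closely parallels \cite[Claim 5.16]{bartusek2023obfuscation}; the main new ingredient is that in our construction all PFC instances are accessed through the classical oracle $\CK$, so the simulation of $\Priv.\Eval$'s commitments can be performed using purely classical oracles supplied by the PFC challenger.

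The two hybrids agree on every oracle response except for the $\iota$-th post-reprogramming query to $\PrivVer$, which is answered honestly in $\calH_{\iota-1}$ but forced to $\bot$ in $\calH_{\iota}$, and these responses differ only on inputs whose honest answer is non-$\bot$. Suppose for contradiction that the gap is at least some non-negligible $\epsilon$. Then with non-negligible amplitude in $\calH_{\iota-1}$ the $\iota$-th query lands on a ``non-$\bot$ branch'' and the run concludes with a $V[1]$-winning proof $(m^{*},y^{*},z^{*})$. On that branch, the incompatible standard-basis measurements property of $\Priv$ forces the $\iota$-th decoded proof to satisfy $m^{(\iota)}_{\overline{S}\cup\overline{T^{(\iota)}}\to *}\in M_0$ (else we directly violate \Cref{def:priv}), while $V[1]=1$ forces $m^{*}_{\overline{S}\cup\overline{T^{*}}\to *}\in M_1$ via the definition of $\PrivVer[\cdots,1]$ in \Cref{fig:functionalities}. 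Incompatibility of $M_0$ and $M_1$ then pins down a position $i \in S \cap T^{(\iota)}\cap T^{*}$ at which $m^{(\iota)}_i \neq m^{*}_i$; since $i \in T^{(\iota)}\cap T^{*}$, both decodings are produced by $\PFC.\DecZ$ on the same commitment $c_i^{*}$, directly contradicting \Cref{def:string-binding}.

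The formal reduction packages this into a two-stage PFC string-binding adversary: the first stage uses the challenger's oracles $\bCK,\DecZ[\bdk],\DecX[\bdk]$ to play the role of the per-index PFCs inside $\Priv.\Eval$, runs the simulation of $\calA_5$ up through the $\iota$-th post-reprogramming $\PrivVer$ query, and measures the query register to obtain the ``first'' opening $(u^{(\iota)},y^{(\iota)},z^{(\iota)})$; the second stage, which by \Cref{def:string-binding} only retains $\DecZ[\bdk]$, completes the simulation and extracts $m^{*}$, outputting the conflicting coordinate. The main obstacle will be justifying the mid-run measurement: the hybrids themselves do not measure the $\iota$-th query, so we must argue that projecting that query's input onto the ``non-$\bot$'' subspace does not change the winning probability of the continuation. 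I would handle this via the collapse-binding reformulation (\Cref{def:collapse-binding}), which by \Cref{cor:string-single-equiv} is equivalent to \Cref{def:binding}: replacing the honest $\iota$-th answer by $\bot$ is indistinguishable from coherently projecting the query register onto the subspace of inputs whose PFC openings decode, up to the binding error, so the gap $\epsilon$ concentrates on amplitude surviving this projection together with a $V[1]$-winning continuation, which our reduction then converts into a binding violation. A final subtlety is ensuring $S \cap T^{(\iota)} \cap T^{*}$ is non-empty with non-negligible probability; since $\H$ has been treated as a random oracle by this point in the proof, $T^{(\iota)}$ and $T^{*}$ are uniform (and coincide whenever $y^{(\iota)} = y^{*}$), so this reduces to $S \cap T$ being non-empty for a uniformly random $T \subseteq [\ell]$, which holds whenever $S$ has non-negligible density -- a property built into $\Priv.\VerGen$.
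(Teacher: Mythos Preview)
Your high-level strategy is the same as the paper's: use the ``incompatible standard-basis measurements'' property of $\Priv$ together with PFC string-binding to close the gap between $\calH_{\iota-1}$ and $\calH_\iota$. The paper, however, makes this precise via an intermediate hybrid $\calH_{\iota-1}'$ in which the $\iota$-th post-reprogramming $\PrivVer$ query is answered with $\PrivVer[\pk_\OSS^*,s,(\ct^*,Q^*,c^*,\sigma^*),0]$ (the version that additionally rejects unless $m_{\overline{S}\cup\overline{T}\to *}\in M_0$). This cleanly decomposes the argument into two independent steps: (i) $\calH_{\iota-1}\to\calH_{\iota-1}'$ is purely a statement about $\Priv$ and follows from the first bullet of ``incompatible standard-basis measurements'' (adapting \cite[Claim 5.17]{bartusek2023obfuscation}); (ii) $\calH_{\iota-1}'\to\calH_\iota$ is purely a PFC statement and follows from string-binding (adapting \cite[Claim 5.18]{bartusek2023obfuscation}). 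Your single combined reduction is morally the same but leaves you juggling both properties at once, which is where your ``mid-run measurement'' justification via collapse-binding becomes hand-wavy. In the paper's step (ii), the oracle $\PrivVer[\dots,0]$ already enforces the $M_0$ constraint at query $\iota$, so the string-binding challenger's projection $\Pi_{\bdk,\bc,W_0}$ lines up directly with the oracle's semantics---no separate argument about ``projecting onto the non-$\bot$ subspace'' is needed.

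Your final paragraph is a confused detour. Once you have $m^{(\iota)}_{\overline{S}\cup\overline{T^{(\iota)}}\to *}\in M_0$ and $m^{*}_{\overline{S}\cup\overline{T^{*}}\to *}\in M_1$, incompatibility of $M_0,M_1$ \emph{by definition} yields an index $i$ where both strings are non-$*$ and disagree; being non-$*$ in the first string forces $i\in S\cap T^{(\iota)}$ and in the second forces $i\in S\cap T^{*}$, so $i\in S\cap T^{(\iota)}\cap T^{*}$ automatically. No probabilistic argument about $T$ being uniform or about ``$S$ having non-negligible density'' (which is not a stated property of the scheme) is needed or correct. Drop that paragraph and instead introduce the intermediate hybrid; the rest of your outline then matches the paper.
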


\begin{proof}

Throughout this proof, when we refer to ``query $\iota$'' in some hybrid, we mean the $\iota$'th query that $\calA_2$ makes to $\PrivVer[\pk_\OSS^*,s,(\ct^*,Q^*,c^*,\sigma^*)]$ after $\calA_5$ has begun using the oracle $\H_\Priv[(\ct^*,Q^*,\pk_\OSS^*,c^*,\sigma^*) \to s]$ (if such a query exists).

Now, we introduce an intermediate hybrid $\calH_{\iota-1}'$ which is the same as $\calH_{\iota-1}$ except that query $\iota$ is answered with the functionality $\PrivVer[\pk_\OSS^*,s,(\ct^*,Q^*,c^*,\sigma^*),M_0]$ defined in Figure \ref{fig:functionalities}.

So, it suffices to show that 
\begin{itemize}
    \item $\Pr[\calH_{\iota-1}' = 1] \geq \Pr[\calH_{\iota-1} = 1] - \negl(\secp)$, and 
    \item $\Pr[\calH_{\iota} = 1] \geq \Pr[\calH_{\iota-1}' = 1] - \negl(\secp)$.
\end{itemize}

We note that the only difference between the three hybrids is how query $\iota$ is answered:
\begin{itemize}
    \item In $\calH_{\iota - 1}$, query $\iota$ is answered with $\PrivVer[\pk_\OSS^*,s,(\ct^*,Q^*,c^*,\sigma^*)]$.
    \item In $\calH_{\iota - 1}'$, query $\iota$ is answered with $\PrivVer[\pk_\OSS^*,s,(\ct^*,Q^*,c^*,\sigma^*),M_0]$.
    \item In $\calH_\iota$, query $\iota$ is answered with $\PrivVer[\pk^*_\OSS \to \bot]$.
\end{itemize}

Now, the proof is completed by appealing to the following two claims.
\end{proof}

\begin{claim}
$\Pr[\calH_{\iota-1}' = 1] \geq \Pr[\calH_{\iota-1} = 1] - \negl(\secp).$
\end{claim}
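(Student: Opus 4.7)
The plan is to reduce any non-negligible gap $\Pr[\calH_{\iota-1}=1]-\Pr[\calH_{\iota-1}'=1]$ to a violation of the first ``incompatible standard-basis measurements'' property of $\Priv$ (Definition~\ref{def:priv}). The two hybrids are identical except at query $\iota$, where $\calH_{\iota-1}'$ additionally returns $\bot$ whenever the decoded message $m$ satisfies $m_{\overline{S}\cup\overline{T}}\notin M_0$. A standard BBBV-style oracle-hybrid bound on the final output distance shows that this gap is at most (up to a polynomial factor from measuring $\calA_2$'s query register at query $\iota$) the probability that the measured query input $(\pk_\OSS,\ct,Q,c,\sigma,u,y,z)$ satisfies the following \emph{bad event}: (i) $(\pk_\OSS,\ct,Q,c,\sigma)=(\pk_\OSS^*,\ct^*,Q^*,c^*,\sigma^*)$; (ii) $\PrivVer$ with $\sp=s$ returns a non-$\bot$ ciphertext; and (iii) decoding $u$ via $\PFC.\DecZ/\DecX$ yields an $m$ with $m_{\overline{S}\cup\overline{T}}\notin M_0$.

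The reduction $\calB$ plays a $\Priv$-adversary on input $\pp^*$ (for $(\ct^*,Q^*)$ of $\calB$'s choice) and outputs a forbidden proof $(m,y,z)$. Concretely, $\calB$ samples $\pk,\sk,\ct^*$ itself and draws all PRFs as random oracles (in particular $\H_\Priv$), then simulates $\calA_5$'s first phase. Once the measure-and-reprogram step fixes the target tuple $(\ct^*,Q^*,\pk_\OSS^*,c^*,\sigma^*)$, $\calB$ queries its $\Priv$-challenger on $(\ct^*,Q^*)$ to obtain $\pp^*$, and implicitly identifies the challenger's unknown $\sp^*$ with the measure-and-reprogram's uniformly random $s$; since both are uniform and neither is exposed in $\calB$'s view up to this point, the identification is distributionally faithful. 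Subsequent target-tuple $\PrivGen$ queries are answered by returning $\pp^*_i$ directly. At query $\iota$, $\calB$ measures $\calA_2$'s query register, uses the PFC decoding keys it generated to compute $m$ from $u$ exactly as $\PrivVer$ would, and submits $(m,y,z)$ to its challenger.

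The crucial feature making this reduction go through is that $\calB$ never needs access to $\sp^*$. Before query $\iota$, the definition of $\calH_{\iota-1}$ already rewrites every target-tuple $\PrivVer$ query to return $\bot$ (the ``$\PrivVer[\pk_\OSS^*\to\bot]$'' substitution), which requires no secret information. After query $\iota$, $\calB$ terminates with its output. Non-target queries (with $\pk_\OSS\neq\pk_\OSS^*$ or $(c,\sigma)\neq(c^*,\sigma^*)$) are served with locally sampled $\sp$ values drawn via $\H_\Priv$, and OSS strong unforgeability (Definition~\ref{def:OSS}) guarantees that any ``in-target'' oracle call $\calB$ must handle genuinely carries $(c^*,\sigma^*)$ up to negligible probability.

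The main obstacle is justifying the BBBV-style step in the first paragraph. One defines a projector $\Pi_\mathrm{bad}$ onto tuples satisfying (i)--(iii) on $\calA_2$'s query register, observes that $\PrivVer[\pk_\OSS^*,s,(\ct^*,Q^*,c^*,\sigma^*)]$ and $\PrivVer[\pk_\OSS^*,s,(\ct^*,Q^*,c^*,\sigma^*),0]$ act identically outside the image of $\Pi_\mathrm{bad}$, and bounds the trace distance of the post-query states by $\|\Pi_\mathrm{bad}\ket{\psi}\|$, where $\ket{\psi}$ is the pre-query state. The construction of $\Pi_\mathrm{bad}$ is classical-deterministic in $(c,u,y,z,s,\sk)$, so the projector is well-defined and efficient; once the trace distance bound is in place, $\calB$'s measurement at query $\iota$ extracts a bad tuple with probability $\|\Pi_\mathrm{bad}\ket{\psi}\|^2$, which is non-negligible by assumption, contradicting the first incompatible-measurements property of $\Priv$.
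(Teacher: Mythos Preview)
Your proposal is correct and takes essentially the same approach as the paper, which simply cites the argument as ``a straightforward adaptation of [BKNY23, Claim 5.17], reducing to the `Incompatible standard-basis measurements' property of $\Priv$.'' Your write-up fleshes out precisely that adaptation: the two hybrids differ only at query $\iota$, a one-query BBBV/oracle-hybrid bound reduces the gap to the weight on inputs where $\PrivVer$ accepts but $m_{\overline{S}\cup\overline{T}\to *}\notin M_0$, and the reduction simulates $\calA_5$ without $\sp^*$ by using the challenger's $\pp^*$ for target-tuple $\PrivGen$ queries, $\bot$ for the first $\iota-1$ target $\PrivVer$ queries, and locally sampled $\H_\Priv$ values elsewhere---exactly the argument the paper is pointing to.
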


\begin{proof}

This is a straightforward adaptation of \cite[Claim 5.17]{bartusek2023obfuscation}, reducing to the ``Incompatible standard-basis measurements'' property of $\Priv$ (Definition \ref{def:priv}).


\end{proof}

\begin{claim}
$\Pr[\calH_{\iota} = 1] \geq \Pr[\calH_{\iota-1}' = 1] - \negl(\secp)$
\end{claim}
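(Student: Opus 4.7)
The plan is to reduce this claim to the string binding with public decodability property of the PFC (Definition \ref{def:string-binding}). Since $\calH_{\iota-1}'$ and $\calH_\iota$ answer every oracle query except query $\iota$ identically, and at query $\iota$ differ only on executions where $\calH_{\iota-1}'$ returns a non-$\bot$ response, we have
\[
\Pr[\calH_{\iota-1}' = 1] - \Pr[\calH_\iota = 1] \leq \Pr\bigl[V[1]=1 \ \wedge\ \text{query } \iota \text{ returns non-}\bot \text{ in } \calH_{\iota-1}'\bigr].
\]
It thus suffices to show that the joint event on the right occurs with negligible probability.

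This joint event implies both $m^\iota_{\overline{S}\cup\overline{T}_\iota \to *} \in M_0$ (from the non-$\bot$ condition of $\PrivVer[\pk_\OSS^*,s,(\ct^*,Q^*,c^*,\sigma^*),0]$) and $m^*_{\overline{S}\cup\overline{T}^* \to *} \in M_1$ (from the win condition of $V[1]$). Because the functionality $\PrivVer[\pk_\OSS^*,s,(\ct^*,Q^*,c^*,\sigma^*)]$ returns $\bot$ unless the input satisfies $(\ct,Q,c,\sigma) = (\ct^*,Q^*,c^*,\sigma^*)$, both openings must use the \emph{same} PFC commitments $c^* = (c_1^*,\dots,c_\ell^*)$. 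By incompatibility of $M_0$ and $M_1$, the non-$*$ supports of any $m_0 \in M_0$ and $m_1 \in M_1$ lie within $S \cap T_\iota$ and $S \cap T^*$ respectively, so there exists a position $i \in S \cap T_\iota \cap T^*$ at which the standard-basis decoded bits disagree. This is a direct violation of PFC binding at commitment $c_i^*$: the adversary has opened a single commitment in the standard basis to two distinct values.

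We formalize this as a reduction $\calB = (\calB_1,\calB_2)$ to string binding with parameter $m = \ell$. A preliminary hybrid replaces $F_{k_\PFC}$ by a random oracle and then, via the small-range distribution argument used in \Cref{clm:small-range}, reduces the PFC key space to a polynomial-size pool, allowing $\calB$ to plant the challenger's keys $\bCK^*$ at the pool index corresponding to the adaptively-chosen $\pk_\OSS^*$ at a $1/\poly$ multiplicative cost. $\calB_1$ simulates $\calH_{\iota-1}'$ coherently---deferring any measurement that would produce the classical opens $u^\iota$---up to the point where query $\iota$'s opens would be measured, and outputs the adversary's quantum state together with $\bc = c^*$ to the challenger. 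The challenger applies $\Pi_{\bdk^*, c^*, W_0}$, where $W_0 \subseteq \{0,1\}^\ell$ consists of those $\ell$-bit openings whose $\DecZ$-values at positions in $S \cap T_\iota$ extend some string in $M_0$ (other positions are unconstrained). $\calB_2$ continues the simulation using only $\DecZ$ access on the planted keys, extracts the final opens $u^*$, and applies $\Pi_{\bdk^*, c^*, W_1}$ defined analogously from $M_1$ and the adversarially-chosen $T^*$. By the disagreement at $i \in S \cap T_\iota \cap T^*$ we have $W_0 \cap W_1 = \emptyset$, so the resulting norm equals---up to the polynomial guessing factor---the probability of the joint bad event; by Definition~\ref{def:string-binding} it is negligible.

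The main obstacle is that subsequent $\PrivVer$ queries with $\pk_\OSS = \pk_\OSS^*$ between query $\iota$ and the final output use $\DecX$ on the planted commitments, yet $\calB_2$ has only $\DecZ$ access in the string binding game. I will resolve this by a further hybrid that sequentially replaces each such intervening query's response by $\bot$: producing a valid Hadamard opening on $c^*$ after the $\Pi_{W_0}$ projection would again require mapping between incompatible classical openings of the same commitment, so by a sub-reduction paralleling the main one, each replacement costs only a negligible amount and can be absorbed into the guessing factor.
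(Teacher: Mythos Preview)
Your high-level plan---reduce to the string-binding property of the PFC by exhibiting an adversary that, at query $\iota$, has a standard-basis opening landing in (an extension of) $M_0$ and, at the final output, one landing in $M_1$, both against the same commitment string $c^*$---is exactly the route the paper takes (it cites Claim~5.18 of \cite{bartusek2023obfuscation}). So the approach is right.

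There is, however, a genuine gap in your handling of the obstacle you yourself identify. Your proposed fix---a nested sequence of sub-hybrids that replaces each post-$\iota$ $\PrivVer$ response by $\bot$---is circular: each such replacement is of the same form as the transition $\calH_{\iota'} \to \calH_{\iota'+1}$ that you are in the middle of proving, and your justification (``producing a valid Hadamard opening after $\Pi_{W_0}$ would again require mapping between incompatible classical openings'') is not correct as stated, since a non-$\bot$ $\DecX$ value does not by itself imply any change in $\DecZ$ value. The actual resolution is structural and does not require sub-hybrids: observe from the definition of $\PrivVer$ that $\DecX$ is \emph{only} ever invoked at indices $i \in \overline{T}\setminus S$, i.e.\ at indices outside $S$. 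Since the reduction samples $s$ itself (it is uniform and independent of everything preceding it), it knows $S=S[s]$ \emph{before} any oracle queries are answered; it can therefore plant the challenger's PFC keys only at indices $i\in S$ and use its own, fully-known keys at indices $i\notin S$. With that split, $\calB_2$ can answer every post-$\iota$ $\PrivVer$ query honestly using only the challenger's $\DecZ[\bdk]$ (at $S$-indices) together with its own full decoding keys (at non-$S$ indices)---no $\DecX$ access on the planted keys is ever needed.

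A second, smaller issue: you let $W_0$ depend on $T_\iota$ and $W_1$ depend on $T^*$, but these are determined by the adversary's (superposed) inputs $y_\iota,y^*$, whereas $W_0,W_1$ must be fixed sets in Definition~\ref{def:string-binding}. Define them instead purely from $M_0,M_1$ and $S$: take $W_b$ to be all $|S|$-bit strings consistent with some element of (the restriction to $S$ of) $M_b$. Incompatibility of $M_0,M_1$ gives $W_0\cap W_1=\emptyset$, and the non-$\bot$ condition at query~$\iota$ (resp.\ the $V[1]$ condition at the end) implies membership in $W_0$ (resp.\ $W_1$) independently of which $T$ the adversary chose.
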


\begin{proof}

This is a straightforward adaptation of \cite[Claim 5.18]{bartusek2023obfuscation}, reducing to the string binding with public decodability of $\PFC$ (Definition \ref{def:string-binding}).

\end{proof}

\begin{figure}[H]
\centering
\fbox{%
\parbox{0.96\textwidth}{
\textbf{Functionalities used in the proof of Theorem \ref{thm:QFHE-security}}

\begin{itemize}

    \item $\PrivVer[\pk_\OSS^*,s](\ct,Q,\pk_\OSS,c,\sigma,u,y,z)$
    \begin{itemize}
        \item Same as $\PrivVer$ except that it sets $\sp = s$ if $\pk_\OSS = \pk_\OSS^*$.
    \end{itemize}
    
    \item $\PrivVer[\pk_\OSS^*,s,(\ct^*,Q^*,c^*,\sigma^*)](\ct,Q,\pk_\OSS,c,\sigma,u,y,z)$
    \begin{itemize}
        \item Same as $\PrivVer[\pk_\OSS^*,s]$ except output $\bot$ if $\pk_\OSS = \pk_\OSS^* \wedge (\ct,Q,c,\sigma) \neq (\ct^*,Q^*,c^*,\sigma^*).$
    \end{itemize}
    
    \item $\PrivVer[\pk_\OSS^*,s,(\ct^*,Q^*,c^*,\sigma^*),0](\ct,Q,\pk_\OSS,c,\sigma,u,y,z)$
    \begin{itemize}
        \item Same as $\PrivVer[\pk_\OSS^*,s,(\ct^*,Q^*,c^*,\sigma^*)]$ except output $\bot$ if $\pk_\OSS = \pk_\OSS^* \wedge m_{\overline{S} \cup \overline{T}} \notin M_0$, where $S,M_0$ are defined based on $\sp = s$ as in Definition \ref{def:priv}.
    \end{itemize}

    \item $\PrivVer[\pk_\OSS^*,s,(\ct^*,Q^*,c^*,\sigma^*),1](\ct,Q,\pk_\OSS,c,\sigma,u,y,z)$
    \begin{itemize}
        \item Same as $\PrivVer[\pk_\OSS^*,s,(\ct^*,Q^*,c^*,\sigma^*)]$ except output $\bot$ if $\pk_\OSS = \pk_\OSS^* \wedge m_{\overline{S} \cup \overline{T}} \notin M_1$, where $S,M_1$ are defined based on $\sp = s$ as in Definition \ref{def:priv}.
    \end{itemize}
    

    \item $V(a,s,\aux)$:
    \begin{itemize}
        \item Parse $a \coloneqq (\ct^*,Q^*,\pk_\OSS^*,c^*,\sigma^*)$ and $\aux \coloneqq (u^*,y^*,z^*)$.
        \item Compute $\widetilde{\ct} \coloneqq \PrivVer[\pk_\OSS^*,s](\ct^*,Q^*,\pk_\OSS^*,c^*,\sigma^*,u^*,y^*,z^*)$.
        \item Output 1 iff $\widetilde{\ct} \neq \bot$ and $\Priv.\Dec(\sk,\widetilde{\ct}) \neq Q^*(x)$.
    \end{itemize}
    
    \item $V[1](a,s,\aux)$:
    \begin{itemize}
        \item Parse $a \coloneqq (\ct^*,Q^*,\pk_\OSS^*,c^*,\sigma^*)$ and $\aux \coloneqq (u^*,y^*,z^*)$.
        \item Compute $\widetilde{\ct} \coloneqq \PrivVer[\pk_\OSS^*,s,(\ct^*,Q^*,c^*,\sigma^*),1](\ct^*,Q^*,(\pk_\OSS^*,c^*,\sigma^*,u^*,y^*,z^*))$.
         \item Output 1 iff $\widetilde{\ct} \neq \bot$ and $\Priv.\Dec(\sk,\widetilde{\ct}) \neq Q^*(x)$.
    \end{itemize}

\end{itemize}
}}
\caption{Helper functionalities for the proof of Theorem \ref{thm:QFHE-security}.}\label{fig:functionalities}
\end{figure}
\section{(Succinct) Classical Obfuscation of Pseudo-deterministic Quantum Circuits}\label{sec:obfuscation}

\begin{definition}[Classical Ideal obfuscation for Pseudodeterministic Quantum Circuits]\label{def:classical-obf-pd-circuits}
  A classical ideal obfuscator for a family of pseudo-deterministic quantum circuits is a pair of QPT algorithms $(\Obf,\Eval)$ with the following syntax.
  \begin{itemize}
      \item $\Obf(1^\secp, P_Q) \to P_{\widetilde{Q}}$: $\Obf$ is a classical p.p.t.~algorithm that takes as input the security parameter $1^\secp$ and the classical description $P_Q$ of a quantum circuit $Q$, and outputs the classical description of an obfuscated circuit $P_{\widetilde{Q}}$.
      \item $\Eval(P_{\widetilde{Q}},x) \to b$: $\Eval$ is a (quantum) polynomial-time algorithm that takes as input the description $P_{\widetilde{Q}}$ of an obfuscated circuit $\widetilde{Q}$ and an input $x$, and outputs a bit $b \in \{0,1\}$.
  \end{itemize}
  A VBB obfuscator should satisfy the following properties for any pseudo-deterministic family of circuits $Q = \{Q_\secp\}_{\secp \in \bbN}$ with input length $n = n(\secp)$.
  \begin{itemize}
      \item \textbf{Correctness}: It holds with probability $1-\negl(\secp)$ over $\widetilde{Q} \gets \Obf(1^\secp, P_Q)$ that for all $x \in \{0,1\}^n$, $\Pr[\Eval(P_{\widetilde{Q}},x) \to Q(x)] = 1-\negl(\secp)$.
      \item \textbf{Security}: There exists a QPT simulator $\{\calS_\secp\}_{\secp \in \bbN}$, such that for any QPT adversary $\{\calA_\secp\}_{\secp \in \bbN}$, pseudodeterministic programs $P_Q$,
\begin{align*}
    \left\{\calA_\secp\left(\Obf(1^\secp, P_Q)\right)\right\} \approx_c \calS_\secp^{O[Q]}
\end{align*}
where $O[Q]$ is the oracle that computes the map $x \to Q(x)$.
  \end{itemize}
  In addition, the ideal obfuscator could optionally satisfy a succinctness criterion.
  \begin{itemize}
      \item \textbf{Succinctness}: The size of the obfuscation $P_{\widetilde{Q}} \leftarrow \Obf(1^\secp, P_Q)$ is polynomially bounded by the size of the description $P_Q$ of the input circuit, rather than the size of the circuit (in terms of number of gates) itself. That is, $|P_{\widetilde{Q}}| \le \poly(\lambda, |P_Q|)$, rather than $|P_{\widetilde{Q}}| \le \poly(\lambda, |Q|)$.
  \end{itemize}
\end{definition}

\subsection{Construction}\label{sec:construct-obf}
\begin{itemize}
    \item A publicly-verifiable QFHE for pseudo-deterministic circuits in the oracle model $\pvQFHE = (\Gen,\allowbreak\Enc,\allowbreak\Eval,\allowbreak\Ver,\allowbreak\Dec)$ (\Cref{def:PVQFHE}).
    \item A post-quantum input-succinct obfuscator $(\SObf, \SEval)$, which we constructed in \Cref{sec:input-succinct}.
\end{itemize}

Given the above building blocks, we give the following construction of classical obfuscation for pseudo-deterministic quantum circuits.

\begin{figure}[H]
\centering
\fbox{%
\parbox{0.96\textwidth}{%
\textbf{Classical obfuscation of pseudo-deterministic quantum circuits}

Let $U$ be the universal quantum circuit that takes as input the description $P_Q$ of a circuit of input of size $N$, where $N$ is the length of an input to $Q$. Let $D$ be the depth of $U$, and let $S$ be a bound on the size of $U$. Let $U_x$ be the quantum circuit that takes as input the description of a circuit $P_Q$ and runs the universal quantum circuit $U$ on $P_Q \| x$.
\begin{itemize}
    \item $\Obf(1^\secp, P_Q)$:
    \begin{itemize}
        \item Sample $(\pk,\sk, \PP) \gets \pvQFHE.\Gen(1^\secp,D, |P_Q|, S)$, and $\ct \gets \pvQFHE.\Enc(\pk, P_Q)$.
        \item Let $\mathsf{DK}(x,\ct',\pi)$ be the following functionality. First, run $\pvQFHE.\Ver^{\mathsf{PP}}(\ct, U_x,\ct',\pi)$. If the output was $\bot$, then output $\bot$, and otherwise output $\pvQFHE.\Dec(\sk,\ct')$.
        \item Sample $\widetilde{\mathsf{DK}} \gets \SObf(1^\secp,\mathsf{DK})$.
        \item Output $\widetilde{Q} \coloneqq \left(\ct,\PP,\widetilde{\mathsf{DK}}\right)$. 
    \end{itemize}
    \item $\Eval(P_{\widetilde{Q}},x)$:
    \begin{itemize}
        \item Parse $P_{\widetilde{Q}}$ as $(\ct,\mathsf{PP},\widetilde{\mathsf{DK}})$.
        \item Compute $(\ct',\pi) \gets \Eval^{{\mathsf{PP}}}(\ct,U_x)$.
        \item Output $b \coloneqq \widetilde{\mathsf{DK}}(x,\ct',\pi)$.
    \end{itemize}
\end{itemize}}}
\caption{Succinct classical obfuscation of pseudo-deterministic quantum circuits}\label{fig:obf}
\end{figure}

\begin{theorem}
    The construction in \Cref{fig:obf} satisfies correctness, security and succinctness definitions in \Cref{def:classical-obf-pd-circuits}.
\end{theorem}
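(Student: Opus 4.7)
These follow by routine composition. For correctness, pvQFHE correctness guarantees that $\Eval^{\PP}(\ct, U_x)$ outputs $(\ct', \pi)$ such that $\Ver^{\PP}(\ct, U_x, \ct', \pi) = \top$ and $\Dec(\sk, \ct') = U_x(P_Q) = Q(x)$ with overwhelming probability, so $\mathsf{DK}(x, \ct', \pi) = Q(x)$, and $\SObf$ correctness extends this to $\widetilde{\mathsf{DK}}$. For succinctness, pvQFHE compactness bounds $|\ct|, |\PP|, |\sk|$ by a fixed polynomial in $(\secp, D, N, S)$, and $\SObf$ succinctness bounds $|\widetilde{\mathsf{DK}}| = \poly(\secp, |\mathsf{DK}|)$; since $\mathsf{DK}$ only hardcodes $\sk, \ct, \PP$ and the compact code for $\Ver$ and $\Dec$, every component of the obfuscation $P_{\widetilde{Q}}$ is polynomial in $(\secp, |P_Q|)$.

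\textbf{Security.} The simulator $\calS^{O[Q]}$ will generate fresh pvQFHE parameters $(\pk, \sk, \PP) \gets \pvQFHE.\Gen(1^\secp, D, N, S)$ and a dummy ciphertext $\ct \gets \pvQFHE.\Enc(\pk, 0^{|P_Q|})$, then invoke the $\SObf$ simulator $\calS_{\SObf}$ guaranteed by Definition \ref{def:succinct-obf} with auxiliary input $z = (\ct, \PP)$ and oracle access to a virtual functionality $\mathsf{DK}'(x, \ct', \pi)$ that runs $\Ver^{\PP}(\ct, U_x, \ct', \pi)$, returns $\bot$ if verification rejects, and otherwise queries the external oracle $O[Q]$ on $x$ and returns the result. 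I plan to establish indistinguishability from the real interaction with $\Obf(1^\secp, P_Q)$ via three hybrids: Hybrid 1 replaces $\widetilde{\mathsf{DK}} \gets \SObf(1^\secp, \mathsf{DK})$ together with $\calA$'s interaction with it by $\calS_{\SObf}^{\mathsf{DK}}(\ct, \PP)$, invoking ideal obfuscation of $\SObf$ with $\calA$ playing the role of the distinguisher. Hybrid 2 swaps the $\mathsf{DK}$ oracle for $\mathsf{DK}'$, where $\Dec(\sk, \cdot)$ is replaced by the function $Q(\cdot)$ computed from the plaintext $P_Q$ hardcoded in the reduction. Hybrid 3 finally replaces $\ct \gets \pvQFHE.\Enc(\pk, P_Q)$ by $\ct \gets \pvQFHE.\Enc(\pk, 0^{|P_Q|})$; since Hybrid 2 only uses $P_Q$ inside $\mathsf{DK}'$ to compute $Q(\cdot)$, and that value is exactly what the external oracle $O[Q]$ provides, this transition is a direct reduction to pvQFHE privacy, and Hybrid 3 is precisely the distribution produced by $\calS^{O[Q]}$.

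\textbf{The main obstacle} is the transition from Hybrid 1 to Hybrid 2. The two oracles $\mathsf{DK}$ and $\mathsf{DK}'$ agree on every input except ``forgery inputs'' $(x, \ct', \pi)$ where $\Ver^{\PP}$ accepts but $\Dec(\sk, \ct') \neq Q(x)$, which by pvQFHE soundness form an evasive set. Because $\calS_{\SObf}$ makes polynomially many \emph{quantum} queries, I will invoke a measure-and-reprogram argument in the spirit of Theorem \ref{thm:measure-and-reprogram}, or equivalently a One-Way-to-Hiding style quantum oracle indistinguishability argument, to bound the distinguishing advantage: the reduction hardcodes $P_Q$ non-uniformly (so that it can simulate $\mathsf{DK}'$ answers without the secret key $\sk$), guesses a query index at random (losing a polynomial factor), and measures the corresponding query register to extract a classical forgery candidate $(\ct^*, \pi^*)$ breaking pvQFHE soundness for the instance $(U_{x^*}, P_Q)$. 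Some care is required because the pvQFHE soundness definition fixes the evaluation circuit $U_{x^*}$ in advance while here $x^*$ is extracted from the measurement; this is handled by a further non-uniform hardcoding of $x^*$ together with a union bound over the polynomial-size input space of $Q$. The remaining hybrid transitions are black-box invocations of the $\SObf$ ideal obfuscation and pvQFHE privacy properties already established in the paper.
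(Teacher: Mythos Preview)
Your overall plan mirrors the paper's proof: same simulator (fresh pvQFHE keys, dummy ciphertext, invoke the $\SObf$ simulator with oracle access to a modified $\mathsf{DK}'$), same three-hybrid sequence, and the same reductions to $\SObf$ ideal security and pvQFHE privacy for the outer transitions. The paper's Hybrid~1 $\to$ Hybrid~2 argument is considerably more terse than yours: it just notes that the two oracles differ only on forgery inputs, measures a uniformly chosen query to extract a classical $(x,\widetilde{\ct},\pi)$, and asserts this violates pvQFHE soundness. No measure-and-reprogram machinery is used---Theorem~\ref{thm:measure-and-reprogram} is a tool for \emph{random} oracles being reprogrammed at a fresh point, whereas here $\mathsf{DK}$ and $\mathsf{DK}'$ are fixed functions that differ on an evasive set, so the right tool is the BBBV/O2H-style statement (Lemma~\ref{lem:evasive-oracles}), which is exactly the ``measure a random query'' argument you also describe.

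There is, however, a concrete error in the part you flagged as the main obstacle. You propose to reconcile the fixed-circuit form of pvQFHE soundness with the adaptive choice of $U_{x^*}$ via ``a union bound over the polynomial-size input space of $Q$.'' But the input space of $Q$ is $\{0,1\}^{n(\secp)}$, which has $2^n$ elements---exponentially many, not polynomially many. A union bound over $2^n$ terms is vacuous, and non-uniformly hardcoding a single $x^*$ only captures an exponentially small slice of the total forgery probability, so the reduction as you sketched it does not go through. (The paper sidesteps this entirely by not addressing the adaptivity at all.) The clean fix is not a union bound but to strengthen the pvQFHE soundness statement to an adaptive form in which the adversary outputs the evaluation circuit together with $(\widetilde{\ct},\pi)$; since in the construction the circuit $Q$ is part of the OSS-signed message $(\ct,Q,c)$, the extraction in the proof of Theorem~\ref{thm:QFHE-security} already recovers a committed $Q^*$, and the existing hybrid sequence there should accommodate the adaptive variant with only cosmetic changes.
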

The proofs of correctness and security are almost identical to the proof of Theorem 6.1 of \cite{bartusek2023obfuscation}.
\begin{proof}
First, correctness follows immediately from the correctness of the VBB obfuscator and the correctness of the publicly-verifiable QFHE scheme (\Cref{def:PVQFHE}). Note that even though the evaluation procedure may include measurements, an evaluator could run coherently, measure just the output bit $b$, and reverse. By the Gentle Measurement lemma, this implies the ability to run the obfuscated program on any $\poly(\secp)$ number of inputs.

Next, we show security. For any QPT adversary $\{\calA_\secp\}_{\secp \in \bbN}$, we define a simulator $\{\calS_\secp\}_{\secp \in \bbN}$ as follows, where $\{\widetilde{\calA}_\secp\}_{\secp \in \bbN}$ is the simulator for the classical obfuscation scheme $(\CObf,\CEval)$, defined based on $\{\calA_\secp\}_{\secp \in \bbN}$.
\begin{itemize}
    \item Sample $(\pk,\sk, \PP) \gets \pvQFHE.\Gen(1^\secp,D, |Q|, S)$, $\ct \gets \pvQFHE.\Enc(\pk,0^{|Q|})$.
    \item Run $\widetilde{\calA}_\secp^{\mathsf{PP},\mathsf{DK}}(\ct)$, answering $\mathsf{PP}$ calls honestly, and $\mathsf{DK}$ calls as follows.
    \begin{itemize}
        \item Take $(x,\ct',\pi)$ as input.
        \item Run $\Ver^{\mathsf{PP}}(x,\ct',\pi)$. If the output was $\bot$ then output $\bot$.
        \item Otherwise, forward $x$ to the external oracle $O[Q]$, and return the result $b = O[Q](x)$.
    \end{itemize}
    \item Output $\widetilde{\calA}_\secp$'s output.
\end{itemize}

Now, for any circuit $Q$, we define a sequence of hybrids.

\begin{itemize}
    \item Hybrid 0: Sample $\widetilde{Q} \gets \Obf(1^\secp,Q)$ and run $\calA_\secp(1^\secp,\widetilde{Q})$.
    \item Hybrid 1: Sample $(\ct,\mathsf{PP},\mathsf{DK})$ as in $\Obf(1^\secp,Q)$, and run $\widetilde{\calA}_\secp^{\mathsf{PP},\mathsf{DK}}(\ct)$.
    \item Hybrid 2: Same as hybrid 1, except that calls to $\mathsf{DK}$ are answered as in the description of $\calS_\secp$.
    \item Hybrid 3: Same as hybrid 2, except that we sample $\ct \gets \Enc(\pk,0^{|Q|})$. This is $\calS_\secp$.
\end{itemize}

We complete the proof by showing the following.

\begin{itemize}
    \item $|\Pr[1 \leftarrow \text{Hybrid }0] - \Pr[1 \leftarrow \text{Hybrid }1]| = \negl(\secp)$. This follows from the security of the classical obfuscation scheme $(\CObf,\CEval)$.
    \item $|\Pr[1 \leftarrow \text{Hybrid }1] - \Pr[1 \leftarrow \text{Hybrid }2]| = \negl(\secp)$. Suppose otherwise. Then there must exist some query made by $\widetilde{\calA}_\secp$ to $\mathsf{DK}$ with noticeable amplitude on $(x,\widetilde{\ct},\pi)$ such that $\mathsf{DK}$ does not return $\bot$ but $\Dec(\sk,\widetilde{\ct}) \neq Q(x)$. Thus, we can measure a random one of the $\poly(\secp)$ many queries made by $\widetilde{\calA}_\secp$ to obtain such an $(x,\widetilde{\ct},\pi)$, which violates the soundness of the publicly-verifiable QFHE scheme (\Cref{def:PVQFHE}).
    \item $|\Pr[1 \leftarrow \text{Hybrid }2] - \Pr[1 \leftarrow \text{Hybrid }3]| = \negl(\secp)$. Since $\sk$ is no longer used in hybrid 2 to respond to $\mathsf{DK}$ queries, this follows directly from the security of the publicly-verifiable QFHE scheme (\Cref{def:PVQFHE}).
\end{itemize}

Finally, the scheme is succinct. This follows from the compactness of the $\pvQFHE$ scheme and the input-succinct obfuscation $(\SObf, \SEval)$ scheme.
\end{proof}

\section{Acknowledgements} 
Aparna Gupte would like to thank Yael Kalai and Vinod Vaikuntanathan for valuable discussions. Aparna Gupte was supported in part by DARPA under Agreement No. HR00112020023, NSF CNS-2154149 and a Simons Investigator Award. This work was done in part while Aparna Gupte was at the Simons Institute and participating in the Challenge Institute for Quantum Computation at UC Berkeley, and while Omri Shmueli was a research fellow at the Simons Institute for the Theory of Computing at UC Berkeley.

\bibliographystyle{alpha}
\bibliography{refs}

@inproceedings{aaronson2012quantum,
  title={Quantum money from hidden subspaces},
  author={Aaronson, Scott and Christiano, Paul},
  booktitle={Proceedings of the forty-fourth annual ACM symposium on Theory of computing},
  pages={41--60},
  year={2012}
}

@inproceedings{coladangelo2021hidden,
  title={Hidden cosets and applications to unclonable cryptography},
  author={Coladangelo, Andrea and Liu, Jiahui and Liu, Qipeng and Zhandry, Mark},
  booktitle={Annual International Cryptology Conference},
  pages={556--584},
  year={2021},
  organization={Springer}
}

@article{zhandry2013note,
  title={A note on the quantum collision and set equality problems},
  author={Zhandry, Mark},
  journal={arXiv preprint arXiv:1312.1027},
  year={2013}
}

@article{shor1999polynomial,
  title={Polynomial-time algorithms for prime factorization and discrete logarithms on a quantum computer},
  author={Shor, Peter W},
  journal={SIAM review},
  volume={41},
  number={2},
  pages={303--332},
  year={1999},
  publisher={SIAM}
}

@inproceedings{bartusek2021secure,
  title={Secure quantum computation with classical communication},
  author={Bartusek, James},
  booktitle={Theory of Cryptography Conference},
  pages={1--30},
  year={2021},
  organization={Springer}
}

@inproceedings{shmueli2022public,
  title={Public-key quantum money with a classical bank},
  author={Shmueli, Omri},
  booktitle={Proceedings of the 54th Annual ACM SIGACT Symposium on Theory of Computing},
  pages={790--803},
  year={2022}
}

@inproceedings{shmueli2022semi,
  title={Semi-quantum tokenized signatures},
  author={Shmueli, Omri},
  booktitle={Annual International Cryptology Conference},
  pages={296--319},
  year={2022},
  organization={Springer}
}

@article{mahadev2020classical,
  title={Classical homomorphic encryption for quantum circuits},
  author={Mahadev, Urmila},
  journal={SIAM Journal on Computing},
  volume={52},
  number={6},
  pages={FOCS18--189},
  year={2020},
  publisher={SIAM}
}

@article{micalisnark,
  title={Computationally Sound Proofs},
  author={Micali, Silvio},
  volume={52},
  number={6},
  pages={FOCS},
  year={1994},
  publisher={SIAM}
}

@article{classicalsuccintobf,
  title={On Extractability (a.k.a. Differing-Inputs) Obfuscation},
  author={Boyle, Elette and Chung, Kai-Min and Pass, Rafael},
  pages={TCC},
  year={2014}
}

@article{huang2025obfuscation,
  title={Obfuscation of Unitary Quantum Programs},
  author={Huang, Mi-Ying and Tang, Er-Cheng},
  journal={arXiv preprint arXiv:2507.11970},
  year={2025}
}

@inproceedings{zhandrycollapse,
  title={New Constructions of Collapsing Hashes},
  author={Zhandry, Mark},
  booktitle={Annual international cryptology conference},
  pages={},
  year={2022},
  organization={Springer}
}

@inproceedings{barak2001possibility,
  title={On the (im) possibility of obfuscating programs},
  author={Barak, Boaz and Goldreich, Oded and Impagliazzo, Rusell and Rudich, Steven and Sahai, Amit and Vadhan, Salil and Yang, Ke},
  booktitle={Annual international cryptology conference},
  pages={1--18},
  year={2001},
  organization={Springer}
}

@inproceedings{goldwasser2007best,
  title={On best-possible obfuscation},
  author={Goldwasser, Shafi and Rothblum, Guy N},
  booktitle={Theory of Cryptography Conference},
  pages={194--213},
  year={2007},
  organization={Springer}
}

@inproceedings{wee2021candidate,
  title={Candidate obfuscation via oblivious LWE sampling},
  author={Wee, Hoeteck and Wichs, Daniel},
  booktitle={Annual International Conference on the Theory and Applications of Cryptographic Techniques},
  pages={127--156},
  year={2021},
  organization={Springer}
}

@article{brakerski2023candidate,
  title={Candidate iO from homomorphic encryption schemes},
  author={Brakerski, Zvika and D{\"o}ttling, Nico and Garg, Sanjam and Malavolta, Giulio},
  journal={Journal of Cryptology},
  volume={36},
  number={3},
  pages={27},
  year={2023},
  publisher={Springer}
}

@inproceedings{chen2018ggh15,
  title={GGH15 beyond permutation branching programs: proofs, attacks, and candidates},
  author={Chen, Yilei and Vaikuntanathan, Vinod and Wee, Hoeteck},
  booktitle={Annual International Cryptology Conference},
  pages={577--607},
  year={2018},
  organization={Springer}
}

@inproceedings{bartusek2018return,
  title={Return of GGH15: provable security against zeroizing attacks},
  author={Bartusek, James and Guan, Jiaxin and Ma, Fermi and Zhandry, Mark},
  booktitle={Theory of Cryptography Conference},
  pages={544--574},
  year={2018},
  organization={Springer}
}

@inproceedings{gentry2015graph,
  title={Graph-induced multilinear maps from lattices},
  author={Gentry, Craig and Gorbunov, Sergey and Halevi, Shai},
  booktitle={Theory of Cryptography Conference},
  pages={498--527},
  year={2015},
  organization={Springer}
}

@inproceedings{jain2021indistinguishability,
  title={Indistinguishability obfuscation from well-founded assumptions},
  author={Jain, Aayush and Lin, Huijia and Sahai, Amit},
  booktitle={Proceedings of the 53rd annual ACM SIGACT symposium on theory of computing},
  pages={60--73},
  year={2021}
}

@article{zhandry2021construct,
  title={How to construct quantum random functions},
  author={Zhandry, Mark},
  journal={Journal of the ACM (JACM)},
  volume={68},
  number={5},
  pages={1--43},
  year={2021},
  publisher={ACM New York, NY}
}

@article{garg2016candidate,
  title={Candidate indistinguishability obfuscation and functional encryption for all circuits},
  author={Garg, Sanjam and Gentry, Craig and Halevi, Shai and Raykova, Mariana and Sahai, Amit and Waters, Brent},
  journal={SIAM Journal on Computing},
  volume={45},
  number={3},
  pages={882--929},
  year={2016},
  publisher={SIAM}
}

@inproceedings{ananth2022pseudorandom,
  title={Pseudorandom (function-like) quantum state generators: New definitions and applications},
  author={Ananth, Prabhanjan and Gulati, Aditya and Qian, Luowen and Yuen, Henry},
  booktitle={Theory of Cryptography Conference},
  pages={237--265},
  year={2022},
  organization={Springer}
}

@inproceedings{zhandry2019record,
  title={How to record quantum queries, and applications to quantum indifferentiability},
  author={Zhandry, Mark},
  booktitle={Annual International Cryptology Conference},
  pages={239--268},
  year={2019},
  organization={Springer}
}

@incollection{diffie2022new,
  title={New directions in cryptography},
  author={Diffie, Whitfield and Hellman, Martin E},
  booktitle={Democratizing cryptography: the work of Whitfield Diffie and Martin Hellman},
  pages={365--390},
  year={2022}
}

@inproceedings{bartusek2024quantum,
  title={Quantum State Obfuscation from Classical Oracles},
  author={Bartusek, James and Brakerski, Zvika and Vaikuntanathan, Vinod},
  booktitle={Proceedings of the 56th Annual ACM Symposium on Theory of Computing},
  pages={1009--1017},
  year={2024}
}

@inproceedings{bartusek2023obfuscation,
  title={Obfuscation of pseudo-deterministic quantum circuits},
  author={Bartusek, James and Kitagawa, Fuyuki and Nishimaki, Ryo and Yamakawa, Takashi},
  booktitle={Proceedings of the 55th Annual ACM Symposium on Theory of Computing},
  pages={1567--1578},
  year={2023}
}

@article{shmueli2025one,
  title={On One-Shot Signatures, Quantum vs Classical Binding, and Obfuscating Permutations},
  author={Shmueli, Omri and Zhandry, Mark},
  journal={Cryptology ePrint Archive},
  year={2025}
}

@InProceedings{chiesa2019snark,
author="Chiesa, Alessandro
and Manohar, Peter
and Spooner, Nicholas",
editor="Hofheinz, Dennis
and Rosen, Alon",
title="Succinct Arguments in the Quantum Random Oracle Model",
booktitle="Theory of Cryptography",
year="2019",
publisher="Springer International Publishing",
address="Cham",
pages="1--29",
abstract="Succinct non-interactive arguments (SNARGs) are highly efficient certificates of membership in non-deterministic languages. Constructions of SNARGs in the random oracle model are widely believed to be post-quantum secure, provided the oracle is instantiated with a suitable post-quantum hash function. No formal evidence, however, supports this belief.",
isbn="978-3-030-36033-7"
}

@INPROCEEDINGS{6375347,
  author={Zhandry, Mark},
  booktitle={2012 IEEE 53rd Annual Symposium on Foundations of Computer Science}, 
  title={How to Construct Quantum Random Functions}, 
  year={2012},
  volume={},
  number={},
  pages={679-687},
  doi={10.1109/FOCS.2012.37}}

@INPROCEEDINGS{zhacomporacle,
  author={Zhandry, Mark},
  booktitle={Annual International Cryptology Conference}, 
  title={How to Record Quantum Queries, and
Applications to Quantum Indifferentiability}, 
  year={2019},
  volume={},
  number={},
  pages={},
  doi={}}

@inproceedings{amos2020one,
  title={One-shot signatures and applications to hybrid quantum/classical authentication},
  author={Amos, Ryan and Georgiou, Marios and Kiayias, Aggelos and Zhandry, Mark},
  booktitle={Proceedings of the 52nd Annual ACM SIGACT Symposium on Theory of Computing},
  pages={255--268},
  year={2020}
}

@InProceedings{bra18,
author="Brakerski, Zvika",
editor="Shacham, Hovav
and Boldyreva, Alexandra",
title="Quantum FHE (Almost) As Secure As Classical",
booktitle="Advances in Cryptology -- CRYPTO 2018",
year="2018",
publisher="Springer International Publishing",
address="Cham",
pages="67--95",
abstract="Fully homomorphic encryption schemes (FHE) allow to apply arbitrary efficient computation to encrypted data without decrypting it first. In Quantum FHE (QFHE) we may want to apply an arbitrary quantumly efficient computation to (classical or quantum) encrypted data.",
isbn="978-3-319-96878-0"
}

@INPROCEEDINGS{mah18,
  author={Mahadev, Urmila},
  booktitle={2018 IEEE 59th Annual Symposium on Foundations of Computer Science (FOCS)}, 
  title={Classical Verification of Quantum Computations}, 
  year={2018},
  volume={},
  number={},
  pages={259-267},
  keywords={Protocols;Force measurement;Standards;Cryptography;Force;Interactive Proofs;Delegating Quantum Computation},
  doi={10.1109/FOCS.2018.00033}}

@InProceedings{BM22,
  author =	{Bartusek, James and Malavolta, Giulio},
  title =	{{Indistinguishability Obfuscation of Null Quantum Circuits and Applications}},
  booktitle =	{13th Innovations in Theoretical Computer Science Conference (ITCS 2022)},
  pages =	{15:1--15:13},
  series =	{Leibniz International Proceedings in Informatics (LIPIcs)},
  ISBN =	{978-3-95977-217-4},
  ISSN =	{1868-8969},
  year =	{2022},
  volume =	{215},
  editor =	{Braverman, Mark},
  publisher =	{Schloss Dagstuhl -- Leibniz-Zentrum f{\"u}r Informatik},
  address =	{Dagstuhl, Germany},
  URL =		{https://drops.dagstuhl.de/entities/document/10.4230/LIPIcs.ITCS.2022.15},
  URN =		{urn:nbn:de:0030-drops-156115},
  doi =		{10.4230/LIPIcs.ITCS.2022.15},
  annote =	{Keywords: Obfuscation, Witness Encryption, Classical Verification of Quantum Computation}
}

@inproceedings{chung2022constant,
  title={Constant-round blind classical verification of quantum sampling},
  author={Chung, Kai-Min and Lee, Yi and Lin, Han-Hsuan and Wu, Xiaodi},
  booktitle={Annual International Conference on the Theory and Applications of Cryptographic Techniques},
  pages={707--736},
  year={2022},
  organization={Springer}
}

@inproceedings{bartusek2022succinct,
  title={Succinct classical verification of quantum computation},
  author={Bartusek, James and Kalai, Yael Tauman and Lombardi, Alex and Ma, Fermi and Malavolta, Giulio and Vaikuntanathan, Vinod and Vidick, Thomas and Yang, Lisa},
  booktitle={Annual International Cryptology Conference},
  pages={195--211},
  year={2022},
  organization={Springer}
}

@inproceedings{DFMS19,
author = {Don, Jelle and Fehr, Serge and Majenz, Christian and Schaffner, Christian},
title = {Security of the Fiat-Shamir Transformation in the Quantum Random-Oracle Model},
year = {2019},
isbn = {978-3-030-26950-0},
publisher = {Springer-Verlag},
address = {Berlin, Heidelberg},
url = {https://doi.org/10.1007/978-3-030-26951-7_13},
doi = {10.1007/978-3-030-26951-7_13},
abstract = {The famous Fiat-Shamir transformation turns any public-coin three-round interactive proof, i.e., any so-called , into a non-interactive proof in the random-oracle model. We study this transformation in the setting of a quantum adversary that in particular may query the random oracle in quantum superposition.Our main result is a generic reduction that transforms any quantum dishonest prover attacking the Fiat-Shamir transformation in the quantum random-oracle model into a similarly successful quantum dishonest prover attacking the underlying (in the standard model). Applied to the standard soundness and proof-of-knowledge definitions, our reduction implies that both these security properties, in both the computational and the statistical variant, are preserved under the Fiat-Shamir transformation even when allowing quantum attacks. Our result improves and completes the partial results that have been known so far, but it also proves wrong certain claims made in the literature.In the context of post-quantum secure signature schemes, our results imply that for any that is a proof-of-knowledge against quantum dishonest provers (and that satisfies some additional natural properties), the corresponding Fiat-Shamir signature scheme is secure in the quantum random-oracle model. For example, we can conclude that the non-optimized version of Fish, which is the bare Fiat-Shamir variant of the NIST candidate Picnic, is secure in the quantum random-oracle model.},
booktitle = {Advances in Cryptology – CRYPTO 2019: 39th Annual International Cryptology Conference, Santa Barbara, CA, USA, August 18–22, 2019, Proceedings, Part II},
pages = {356–383},
numpages = {28},
location = {Santa Barbara, CA, USA}
}

@inproceedings{DFM20,
author = {Don, Jelle and Fehr, Serge and Majenz, Christian},
title = {The Measure-and-Reprogram Technique 2.0: Multi-round Fiat-Shamir and More},
year = {2020},
isbn = {978-3-030-56876-4},
publisher = {Springer-Verlag},
address = {Berlin, Heidelberg},
url = {https://doi.org/10.1007/978-3-030-56877-1_21},
doi = {10.1007/978-3-030-56877-1_21},
abstract = {We revisit recent works by Don, Fehr, Majenz and Schaffner and by Liu and Zhandry on the security of the Fiat-Shamir (FS) transformation of -protocols in the quantum random oracle model (QROM). Two natural questions that arise in this context are: (1) whether the results extend to the FS transformation of multi-round interactive proofs, and (2) whether Don et al.’s loss in security is optimal.Firstly, we answer question (1) in the affirmative. As a byproduct of solving a technical difficulty in proving this result, we slightly improve the result of Don et al., equipping it with a cleaner bound and an even simpler proof. We apply our result to digital signature schemes showing that it can be used to prove strong security for schemes like MQDSS in the QROM. As another application we prove QROM-security of a non-interactive OR proof by Liu, Wei and Wong.As for question (2), we show via a Grover-search based attack that Don et al.’s quadratic security loss for the FS transformation of -protocols is optimal up to a small constant factor. This extends to our new multi-round result, proving it tight up to a factor depending on the number of rounds only, i.e. is constant for constant-round interactive proofs.},
booktitle = {Advances in Cryptology – CRYPTO 2020: 40th Annual International Cryptology Conference, CRYPTO 2020, Santa Barbara, CA, USA, August 17–21, 2020, Proceedings, Part III},
pages = {602–631},
numpages = {30},
location = {Santa Barbara, CA, USA}
}

@inproceedings{GKNV25,
author = {Gunn, Sam and Kalai, Yael and Natarajan, Anand and Vill\'{a}nyi, \'{A}gi},
title = {Classical Commitments to Quantum States},
year = {2025},
isbn = {9798400715105},
publisher = {Association for Computing Machinery},
address = {New York, NY, USA},
url = {https://doi.org/10.1145/3717823.3718264},
doi = {10.1145/3717823.3718264},
abstract = {We define the notion of a classical commitment scheme to quantum states, which allows a quantum prover to compute a classical commitment to a quantum state, and later open each qubit of the state in either the standard or the Hadamard basis. Our notion is a strengthening of the measurement protocol from Mahadev (STOC 2018). We construct such a commitment scheme from the post-quantum Learning With Errors (LWE) assumption, and more generally from any noisy trapdoor claw-free function family that has the distributional strong adaptive hardcore bit property (a property that we define in this work). Our scheme is succinct in the sense that the running time of the verifier in the commitment phase depends only on the security parameter (independent of the size of the committed state), and its running time in the opening phase grows only with the number of qubits that are being opened (and the security parameter). As a corollary we obtain a classical succinct argument system for QMA under the post-quantum LWE assumption. Previously, this was only known assuming post-quantum secure indistinguishability obfuscation. As an additional corollary we obtain a generic way of converting any X/Z quantum PCP into a succinct argument system under the quantum hardness of LWE.},
booktitle = {Proceedings of the 57th Annual ACM Symposium on Theory of Computing},
pages = {234–244},
numpages = {11},
keywords = {Succinct arguments, quantum Merlin-Arthur proofs, quantum commitment protocols, quantum interactive proofs},
location = {Prague, Czechia},
series = {STOC '25}
}

@INPROCEEDINGS {NZ23,
author = { Natarajan, Anand and Zhang, Tina },
booktitle = { 2023 IEEE 64th Annual Symposium on Foundations of Computer Science (FOCS) },
title = {{ Bounding the Quantum Value of Compiled Nonlocal Games: From CHSH to BQP Verification }},
year = {2023},
volume = {},
ISSN = {},
pages = {1342-1348},
abstract = { We present a step towards the goal of producing a general cryptographic ’compilation’ procedure which can translate any entangled nonlocal game into a single-prover interactive protocol while preserving quantum completeness and soundness, using cryptography to simulate the separation between the provers. A candidate for such a procedure was introduced by Kalai et al. (STOC ’23), who defined a black-box cryptographic compilation procedure that applies to any nonlocal game and showed that it preserves classical value. In this work, we make progress towards a full understanding of the quantum value of the single-prover protocols that result from applying the Kalai et al. compilation procedure to entangled games. For the special case of CHSH, we prove that the Tsirelson bound holds under the compilation procedure introduced by Kalai et al., and we also recover a strong version of the ’rigidity’ property that makes CHSH so useful. As an application, we give a single-prover cryptographically sound classical verification protocol for BQP, and we prove its soundness using our CHSH rigidity analysis. Our protocol replicates the functionality of Mahadev’s protocol (FOCS ’18) but with two advantages: (1) the protocol is conceptually intuitive and requires fewer bespoke ingredients, and the soundness analysis is simpler and directly follows the analysis of the nonlocal case, and (2) the soundness analysis does not explicitly use the assumption of a TCF or an adaptive hardcore bit, and only requires QFHE as a black box (though currently the only known constructions of QFHE use TCFs). },
keywords = {Computer science;Protocols;Quantum computing;Quantum entanglement;Closed box;Games;Rigidity},
doi = {10.1109/FOCS57990.2023.00081},
url = {https://doi.ieeecomputersociety.org/10.1109/FOCS57990.2023.00081},
publisher = {IEEE Computer Society},
address = {Los Alamitos, CA, USA},
month =Nov}

@INPROCEEDINGS {MNZ24,
author = { Metger, Tony and Natarajan, Anand and Zhang, Tina },
booktitle = { 2024 IEEE 65th Annual Symposium on Foundations of Computer Science (FOCS) },
title = {{ Succinct Arguments for QMA from Standard Assumptions via Compiled Nonlocal Games }},
year = {2024},
volume = {},
ISSN = {},
pages = {1193-1201},
abstract = { We construct a succinct classical argument system for QMA, the quantum analogue of NP, from generic and standard cryptographic assumptions. Previously, building on the prior work of Mahadev (FOCS '18), Bartusek et al. (CRYPTo ‘22) also constructed a succinct classical argument system for Q M A. However, their construction relied on post-quantumly secure indistinguishability obfuscation, a very strong primitive which is not known from standard cryptographic assumptions. In contrast, the primitives we use (namely, collapsing hash functions and a mild version of quantum homomorphic encryption) are much weaker and are implied by standard assumptions such as LWE. Our protocol is constructed using a general transformation which was designed by Kalai et al. (STOC '23) as a candidate method to compile any quantum nonlocal game into an argument system. Our main technical contribution is to analyze the soundness of this transformation when it is applied to a succinct self-test for Pauli measurements on maximally entangled states, the latter of which is a key component in the proof of MIP * = R E in Quantum complexity. },
keywords = {Hash functions;Computer science;Quantum computing;Protocols;Quantum entanglement;Games;Complexity theory;Cryptography;Homomorphic encryption;Standards},
doi = {10.1109/FOCS61266.2024.00078},
url = {https://doi.ieeecomputersociety.org/10.1109/FOCS61266.2024.00078},
publisher = {IEEE Computer Society},
address = {Los Alamitos, CA, USA},
month =Oct}

@inproceedings{BGLPT15,
author = {Bitansky, Nir and Garg, Sanjam and Lin, Huijia and Pass, Rafael and Telang, Sidharth},
title = {Succinct Randomized Encodings and their Applications},
year = {2015},
isbn = {9781450335362},
publisher = {Association for Computing Machinery},
address = {New York, NY, USA},
url = {https://doi.org/10.1145/2746539.2746574},
doi = {10.1145/2746539.2746574},
abstract = {A randomized encoding allows to express a "complex" computation, given by a function f and input x, by a "simple to compute" randomized representation f(x) whose distribution encodes f(x), while revealing nothing else regarding f and x. Existing randomized encodings, geared mostly to allow encoding with low parallel-complexity, have proven instrumental in various strong applications such as multiparty computation and parallel cryptography. This work focuses on another natural complexity measure: the time required to encode. We construct succinct randomized encodings where the time to encode a computation, given by a program Π and input x, is essentially independent of Π's time complexity, and only depends on its space complexity, as well as the size of its input, output, and description. The scheme guarantees computational privacy of (Π,x), and is based on indistinguishability obfuscation for a relatively simple circuit class, for which there exist instantiations based on polynomial hardness assumptions on multi-linear maps.We then invoke succinct randomized encodings to obtain several strong applications, including: Succinct indistinguishability obfuscation, where the obfuscated program IObf({Π}) computes the same function as Π for inputs x of apriori-bounded size. Obfuscating Π is roughly as fast as encoding the computation of Π on any such input x. Here we also require subexponentially-secure indistinguishability obfuscation for circuits. Succinct functional encryption, where a functional decryption key corresponding to Π allows decrypting Π(x) from encryptions of any plaintext x of apriori-bounded size. Key derivation is as fast as encoding the corresponding computation. Succinct reusable garbling, a stronger form of randomized encodings where any number of inputs x can be encoded separately of Π, independently of Π's time and space complexity. Publicly-verifiable 2-message delegation where verifying the result of a long computation given by Π and input x is as fast as encoding the corresponding computation. We also show how to transform any 2-message delegation scheme to an essentially non-interactive system where the verifier message is reusable.Previously, succinct randomized encodings or any of the above applications were only known based on various non-standard knowledge assumptions.At the heart of our techniques is a generic method of compressing a piecemeal garbled computation, without revealing anything about the secret randomness utilized for garbling.},
booktitle = {Proceedings of the Forty-Seventh Annual ACM Symposium on Theory of Computing},
pages = {439–448},
numpages = {10},
keywords = {RAM, delegation, functional encryption, garbling, obfuscation, randomized encodings},
location = {Portland, Oregon, USA},
series = {STOC '15}
}

@inproceedings{BK21,
author = {Broadbent, Anne and Kazmi, Raza Ali},
title = {Constructions for Quantum Indistinguishability Obfuscation},
year = {2021},
isbn = {978-3-030-88237-2},
publisher = {Springer-Verlag},
address = {Berlin, Heidelberg},
url = {https://doi.org/10.1007/978-3-030-88238-9_2},
doi = {10.1007/978-3-030-88238-9_2},
abstract = {An indistinguishability obfuscator is a polynomial-time probabilistic algorithm that takes a circuit as input and outputs a new circuit that has the same functionality as the input circuit, such that for any two circuits of the same size that compute the same function, the outputs of the indistinguishability obfuscator are indistinguishable. Here, we study schemes for indistinguishability obfuscation for quantum circuits. We present two definitions for indistinguishability obfuscation: in our first definition (qiO) the outputs of the obfuscator are required to be indistinguishable if the input circuits are perfectly equivalent, while in our second definition (qiOD), the outputs are required to be indistinguishable as long as the input circuits are approximately equivalent with respect to a pseudo-distance D. Our main results provide (1) a computationally-secure scheme for qiO where the size of the output of the obfuscator is exponential in the number of non-Clifford (T gates), which means that the construction is efficient as long as the number of T gates is logarithmic in the circuit size and (2) a statistically-secure qiOD, for circuits that are close to the kth level of the Gottesman-Chuang hierarchy (with respect to D); this construction is efficient as long as k is small and fixed.},
booktitle = {Progress in Cryptology – LATINCRYPT 2021: 7th International Conference on Cryptology and Information Security in Latin America, Bogot\'{a}, Colombia, October 6–8, 2021, Proceedings},
pages = {24–43},
numpages = {20},
location = {Bogot\'{a}, Colombia}
}

@inproceedings{SP20,
author = {Agrawal, Shweta and Pellet-Mary, Alice},
title = {Indistinguishability Obfuscation Without Maps: Attacks and Fixes for Noisy Linear FE},
year = {2020},
isbn = {978-3-030-45720-4},
publisher = {Springer-Verlag},
address = {Berlin, Heidelberg},
url = {https://doi.org/10.1007/978-3-030-45721-1_5},
doi = {10.1007/978-3-030-45721-1_5},
abstract = {Candidates of Indistinguishability Obfuscation () can be categorized as “direct” or “bootstrapping based”. Direct constructions rely on high degree multilinear maps [28, 29] and provide heuristic guarantees, while bootstrapping based constructions [2, 7, 33, 36, 38, 39] rely, in the best case, on bilinear maps as well as new variants of the Learning With Errors () assumption and pseudorandom generators. Recent times have seen exciting progress in the construction of indistinguishability obfuscation () from bilinear maps (along with other assumptions) [2, 7, 33, 38].As a notable exception, a recent work by Agrawal [2] provided a construction for without using any maps. This work identified a new primitive, called Noisy Linear Functional Encryption () that provably suffices for and gave a direct construction of from new assumptions on lattices. While a preliminary cryptanalysis for the new assumptions was provided in the original work, the author admitted the necessity of performing significantly more cryptanalysis before faith could be placed in the security of the scheme. Moreover, the author did not suggest concrete parameters for the construction.In this work, we fill this gap by undertaking the task of thorough cryptanalytic study of . We design two attacks that let the adversary completely break the security of the scheme. Our attacks are completely new and unrelated to attacks that were hitherto used to break other candidates of . To achieve this, we develop new cryptanalytic techniques which (we hope) will inform future designs of the primitive of .From the knowledge gained by our cryptanalytic study, we suggest modifications to the scheme. We provide a new scheme which overcomes the vulnerabilities identified before. We also provide a thorough analysis of all the security aspects of this scheme and argue why plausible attacks do not work. We additionally provide concrete parameters with which the scheme may be instantiated. We believe the security of stands on significantly firmer footing as a result of this work.},
booktitle = {Advances in Cryptology – EUROCRYPT 2020: 39th Annual International Conference on the Theory and Applications of Cryptographic Techniques, Zagreb, Croatia, May 10–14, 2020, Proceedings, Part I},
pages = {110–140},
numpages = {31},
location = {Zagreb, Croatia}
}
\appendix
\section{Remaining proofs from \Cref{sec:pfc}}\label{sec:appendix-proofs-pfc}

\subsection{Correctness}
\begin{proof}[Proof of \Cref{thm:pfc-correctness}]
This proof is almost identical to the proof of Theorem 4.6 of \cite{bartusek2023obfuscation}, but we repeat it here for completeness. The only difference is that we are using one-shot signatures instead of signature tokens; this is what allows us to make the commitment keys entirely classical.

We will assume that the one-shot signature schemes $\OSS, \overline{\OSS}$ are perfectly correct. If they are statistically correct, then the Pauli functional commitment scheme we obtain is still correct, because we allow for a negligible statistical distance.

We will first show that the map applied by $\Com^{\CK}(b)$ in the case that the measurement of the first qubit of $\calS'$ is $1-b$ successfully takes $\ket{S_{y,1-b}} \to \ket{S_{y,b}}$. Since we are assuming perfect correctness from $\overline{\OSS}$, it suffices to show that for any balanced affine subspace $\ket{S_y}$, \[H^{\otimes n}\mathsf{Phase}^{O[S_y^\bot]}H^{\otimes n}\ket{S_{y,1-b}} \to \ket{S_{y,b}},\] where $\mathsf{Phase}^{O[S_y^\bot]}$ is the map $\ket{s} \to (-1)^{O[S_y^\bot](s)}\ket{s}$, and $O[S_y^\bot]$ is the oracle that outputs 0 if $s \in S^\bot$ and 1 if $s \notin S^\bot$. This was also shown in \cite{amos2020one}.

Let $S_y^*$ be the subspace parallel to affine subspace $S_y$, and let $S_{y,b}^*$ be the coset of $S_y^*$ whose first coordinate is $b$.
We will use the facts that $S_{y,1}^* = S_{y,0}^* + w$ for some $w$, and that $S_{y, 0} = S^*_{y,0} + v_0$ and $S_{y, 1} = S_{y,1}^* + v_1$ for some $v_0,v_1$ such that $v_0 + v_1 = w$. Also note that for any $s \in S_y^\bot$, $s \cdot w = 0$, and for any $s \in ({S^*_{y,0}})^\bot \setminus (S^*_y)^\bot$, $s \cdot w = 1$.

\begin{align*}
    H^{\otimes n}&\mathsf{Ph}^{O[S_y^\bot]}H^{\otimes n}\ket{S_{y,1-b}} \\ 
    &= H^{\otimes n} \mathsf{Ph}^{O[S_y^\bot]}H^{\otimes n}\frac{1}{\sqrt{2^{n/2 - 1}}}\left(\sum_{s \in S^*_{y,0}} \ket{s + v_{1-b}}\right) \\ 
    &= H^{\otimes n}\mathsf{Ph}^{O[S_y^\bot]}\frac{1}{\sqrt{2^{n/2+1}}}\left(\sum_{s \in (S_{y,0}^*)^\bot}(-1)^{s \cdot v_{1-b}}\ket{s}\right) \\ 
    &= H^{\otimes n}\mathsf{Ph}^{O[S_y^\bot]}\frac{1}{\sqrt{2^{n/2+1}}}\left(\sum_{s \in {(S_y^*)}^\bot} (-1)^{s \cdot w + s \cdot v_{b}}\ket{s} + \sum_{s \in (S_{y,0}^*)^\bot \setminus (S_y^*)^\bot} (-1)^{s \cdot w + s \cdot v_{b}} \ket{s}\right) \\ 
    &= H^{\otimes n}\mathsf{Ph}^{O[S_y^\bot]}\frac{1}{\sqrt{2^{n/2+1}}}\left(\sum_{s \in (S_y^*)^\bot} (-1)^{s \cdot v_{b}}\ket{s} + \sum_{s \in (S_{y,0}^*)^\bot \setminus (S_y^*)^\bot} (-1)^{1 + s \cdot v_{b}} \ket{s}\right) \\ 
    &= H^{\otimes n} \frac{1}{\sqrt{2^{n/2+1}}}\left(\sum_{s \in S^\bot} (-1)^{s \cdot v_b}\ket{s} + \sum_{s \in (S_{y,0}^*)^\bot \setminus (S_y^*)^\bot} (-1)^{s \cdot v_b} \ket{s}\right) \\ 
    &= H^{\otimes n}\frac{1}{\sqrt{2^{n/2+1}}}\left(\sum_{s \in (S_{y,0}^*)^\bot}(-1)^{s \cdot v_b}\ket{s}\right) \\ 
    &= \ket{S_{y,b}}.
\end{align*}

Now consider applying $\Com^{\CK}$ to a pure state. By the correctness of $\OSS$, $\Com$ can produce a valid signature $\sigma_0$ of $0$, and is able to access the $\overline{\O} = (P, P^{-1}, D)$ oracles of $\OSS$ through $\CK_{P}(\vk, \cdot)$, $\CK_{P^{-1}}(\vk, \cdot, \cdot)$ and $\CK_D(\vk, \sigma_0)$. These oracles $\overline{O} = (P, P^{-1}, D)$ are an instance of $\overline{\OSS}$ generated using randomness determined by $\RO(\vk)$. By the correctness of $\overline{\OSS}.\Gen^{P, P^{-1}, D}$, $\Com$ is able to generate a $(\overline{vk}, \ket{\overline{\sk}})$ pair $(y, \ket{S_y})$. 

Applying $\Com^\CK$ to pure state $\ket{\psi} = \alpha \ket{0} + \beta\ket{1}$
produces (up to negligible trace distance) the state \[\ket{\psi_\Com} = \alpha_0\ket{0}\ket{S_{y,0}} + \alpha_1\ket{1}\ket{S_{y,1}},\] and a signature $c$ on the bit 1.

We claim that measuring and decoding $\ket{\psi_\Com}$ in the standard (resp. Hadamard) basis produces the same distribution as directly measuring $\ket{\psi}$ in the standard (resp. Hadamard) basis. As a mixed state is a probability distribution over pure states, this will complete the proof of correctness.

First, it is immediate that measuring $\ket{\psi_\Com}$ in the standard basis produces a bit $b$ with probability $|\alpha_b|^2$ along with a vector $s$ such that $s \in (S+v)_b$. 

Next, note that applying Hadamard to each qubit of $\ket{\psi_\Com}$ except the first results in the state 

\[\alpha_0 \ket{0} \left(\sum_{s \in (S_{y,0}^*)^\bot}(-1)^{s \cdot v_0}\ket{s}\right) + \alpha_1 \ket{1} \left(\sum_{s \in (S_{y,0}^*)^\bot}(-1)^{s \cdot v_1}\ket{s}\right),\] and thus, measuring each of these qubits (except the first) in the Hadamard basis produces a vector $s$ and a single-qubit state \[(-1)^{s \cdot v_0}\alpha_0\ket{0} + (-1)^{s \cdot v_1} \alpha_1\ket{1} = \alpha_0 \ket{0} + (-1)^{s \cdot w}\alpha_1 \ket{1}.\]
So, measuring this qubit in the Hadamard basis is equivalent to measuring $\ket{\psi}$ in the Hadamard basis and masking the result with $s \cdot w$. Recalling that $s \cdot w = 0$ if $s \in (S_{y}^*)^\bot$ and $s \cdot w = 1$ if $s \in (S_{y,0}^*)^\bot \setminus (S_y^*)^\bot$ completes the proof of correctness, since $S_y^\bot = (S_y^*)^\bot$ and $S_{y,0}^\bot = (S_{y,0}^*)^\bot$
\end{proof}

\subsection{Binding}
\begin{lemma}\label{lem:collapsing-single-bit}
    The collapse-binding definition of \Cref{def:collapse-binding} implies the single-bit binding definition of \Cref{def:binding}.
\end{lemma}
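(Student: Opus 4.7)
The plan is to prove the contrapositive. Suppose there exist oracle-aided algorithms $(\calA_1,\calA_2)$ making $\poly(\secp)$ queries and a bit $b \in \{0,1\}$ (without loss of generality $b = 0$) such that
$$\epsilon(\secp)\;:=\;\E\bigl[\|\Pi_{\dk,c,1}\,\calA_2^{\CK,\DecZ[\dk]}\,\Pi_{\dk,c,0}\,\ket{\psi}\|\bigr]$$
is non-negligible, where $(\ket{\psi},c)\gets \calA_1^{\CK,\DecZ[\dk],\DecX[\dk]}(1^\secp)$. I will construct an oracle-aided distinguisher $(\calB_1,\calB_2)$ that wins the collapse-binding experiment of \Cref{def:collapse-binding} with non-negligible advantage. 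The core idea is to use $\calA_2$ inside the committer $\calB_1$ to prepare a valid commitment state that is in a \emph{coherent superposition} over the single-qubit register $\calB$, and then have $\calB_2$ use $\calA_2^\dagger$ together with a Hadamard measurement on $\calB$ to detect the coherence.

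Concretely, $\calB_1$ first runs $\calA_1^{\CK,\DecZ[\dk],\DecX[\dk]}$ to obtain $(\ket{\psi},c)$ on register $\calU$; then it coherently measures $\{\Pi_{\dk,c,0},I-\Pi_{\dk,c,0}\}$ on $\calU$ using $\DecZ$, aborting on failure. Next $\calB_1$ introduces a fresh qubit $\calB$ in the state $\ket{+}$, applies $\calA_2$ to $\calU$ controlled on $\calB=1$, and finally (again using $\DecZ$) coherently measures $\{\Pi_{\mathsf{valid}},I-\Pi_{\mathsf{valid}}\}$, where $\Pi_{\mathsf{valid}}:=\ket{0}\!\bra{0}_\calB\otimes\Pi_{\dk,c,0}+\ket{1}\!\bra{1}_\calB\otimes\Pi_{\dk,c,1}$, again aborting on failure. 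Conditioned on both projections succeeding, the state on $(\calB,\calU)$ is
$$\tfrac{1}{\sqrt{1+p_{\mathsf{bad}}}}\bigl(\ket{0}_\calB\ket{\phi_0}_\calU+\sqrt{p_{\mathsf{bad}}}\,\ket{1}_\calB\ket{\phi_1}_\calU\bigr),$$
where $\ket{\phi_0}\propto\Pi_{\dk,c,0}\ket{\psi}$, $\ket{\phi_1}\propto\Pi_{\dk,c,1}\calA_2\ket{\phi_0}$, and $p_{\mathsf{bad}}:=\|\Pi_{\dk,c,1}\calA_2\ket{\phi_0}\|^2$. By construction this lies in the image of $\Pi_{\mathsf{valid}}$, so $\calB_1$ is a valid collapse-binding committer, and its two projections jointly succeed with probability $\|\Pi_{\dk,c,0}\ket{\psi}\|^2\cdot(1+p_{\mathsf{bad}})/2$.

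The distinguisher $\calB_2$ then receives $(\calB,\calU,\calR)$, applies $\calA_2^\dagger$ to $\calU$ controlled on $\calB=1$, coherently measures $\{\Pi_{\dk,c,0},I-\Pi_{\dk,c,0}\}$ on $\calU$ (outputting the guess $1$ on failure), applies a Hadamard to $\calB$, and outputs the measurement outcome $b_H$. To analyze this, I would decompose $\calA_2^\dagger\ket{\phi_1}=\sqrt{p_{\mathsf{bad}}}\ket{\phi_0}+\sqrt{1-p_{\mathsf{bad}}}\ket{\phi_\perp'}$ with $\ket{\phi_\perp'}\perp\ket{\phi_0}$, and set $\gamma^2:=\|\Pi_{\dk,c,0}\ket{\phi_\perp'}\|^2\in[0,1-p_{\mathsf{bad}}]$. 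A direct calculation will show that in the coherent case ($b_{\mathsf{chal}}=0$), the joint probability of the step-2 projection succeeding and measuring $b_H=0$ equals $\frac{(1+p_{\mathsf{bad}})^2+p_{\mathsf{bad}}(1-p_{\mathsf{bad}})\gamma^2}{2(1+p_{\mathsf{bad}})}\geq\frac{1+p_{\mathsf{bad}}}{2}$. In the measured case ($b_{\mathsf{chal}}=1$), the input to $\calB_2$ is a classical mixture in which $\calB$ is already in a definite basis state, so after any subsequent unitary on $\calU$ the Hadamard on $\calB$ yields a uniform bit and the probability of outputting $0$ is at most $1/2$.

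Putting the pieces together, the distinguishing advantage of $(\calB_1,\calB_2)$ is (up to a constant factor) at least
$$\E\bigl[\|\Pi_{\dk,c,0}\ket{\psi}\|^2\cdot\tfrac{1+p_{\mathsf{bad}}}{2}\cdot\tfrac{p_{\mathsf{bad}}}{2}\bigr]\;\geq\;\tfrac14\,\E\bigl[\|\Pi_{\dk,c,1}\calA_2\Pi_{\dk,c,0}\ket{\psi}\|^2\bigr]\;\geq\;\tfrac{\epsilon^2}{4},$$
where the last inequality is Jensen's applied to $f(x)=x^2$; this is non-negligible and contradicts collapse-binding. The most delicate step, and the one I expect to take the most care, is the coherent-case calculation above: after $\calA_2^\dagger$ and the $\Pi_{\dk,c,0}$ projection one must expand the post-projection state in the basis $(\ket{\phi_0},\ket{\phi_\perp'^{(0)}})$, where $\ket{\phi_\perp'^{(0)}}$ is the normalization of $\Pi_{\dk,c,0}\ket{\phi_\perp'}$, and verify that the off-diagonal contribution from $\ket{\phi_\perp'^{(0)}}$ can only \emph{enhance} the bias towards $b_H=0$ rather than cancel the desired signal.
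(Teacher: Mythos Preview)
Your direct reduction is essentially correct and is a genuinely different route from the paper's proof, which simply defers to \cite[Lemma~A.1]{bartusek2023obfuscation} (there the implication is established by passing through an intermediate notion of \emph{unique-message binding} and then string-binding, rather than by a direct distinguisher). Your approach is the classical Unruh-style argument: use the single-bit-binding adversary to prepare a coherent superposition $\ket{0}\ket{\phi_0}+\sqrt{p_{\mathsf{bad}}}\,\ket{1}\ket{\phi_1}$ over the $\calB$ register, and then detect coherence via a controlled-$\calA_2^\dagger$ followed by a Hadamard test. The calculation you outline is correct; in particular $\langle\phi_0|\calA_2^\dagger|\phi_1\rangle=\sqrt{p_{\mathsf{bad}}}$ is real and positive, $\ket{\phi_\perp'^{(0)}}$ is orthogonal to $\ket{\phi_0}$, and the $\gamma$-term can only increase $\Pr[b_H=0]$ in the coherent case. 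Compared to the paper's citation chain, your argument is self-contained and yields an explicit $\Omega(\epsilon^2)$ advantage bound.

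Two minor points to tighten. First, as written your $\calB_1$ is not a \emph{valid} adversary in the strict sense of \Cref{def:collapse-binding}: on the ``abort'' branches (when $\Pi_{\dk,c,0}$ or $\Pi_{\mathsf{valid}}$ fails) you have not specified an output that lies in the image of $\Pi_{\mathsf{valid}}$, and the image of $\Pi_{\dk,c,\bot}$ is in general nonempty. The standard fix is to have $\calB_1$ prepare in advance a dummy honest commitment via $\Com^{\CK}(0)$, project it onto $\Pi_{\dk,c_{\mathrm{dummy}},0}$ (which succeeds with probability $1-\negl$ by PFC correctness), and output that valid pair on any abort; this costs only a negligible additive term in the advantage. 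Second, your stated range $\gamma^2\in[0,1-p_{\mathsf{bad}}]$ should be $\gamma^2\in[0,1]$ (nothing forces $\Pi_{\dk,c,0}\ket{\phi_\perp'}$ to be small), but since you only use $\gamma^2\ge 0$ this does not affect the bound.
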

\begin{proof}
We will first define the following notion of binding, and then establish that the collapse-binding definition of  \Cref{def:collapse-binding} implies this collapse binding definition.
\begin{definition}[Single bit collapse binding]\label{def:singlebitcollapsebinding}
For any adversary $\mathcal{A} := \{(\calA_1, \calA_2)_\secp\}_{\secp \in \mathbb{N}}$, where each of $\calA_1$ and $\calA_2$ are oracle-aided quantum operations that make at most $\poly(\secp)$ oracle queries, define the single-bit collapse binding experiment $\mathsf{SingleBitCollapseBindingExpt}^{\mathcal{A}}(1^\secp)$ as follows.
\begin{itemize}
  \item Sample $(\CK, \dk) \leftarrow \mathsf{Gen}(1^\secp)$.

  \item Run $\calA_1^{\mathsf{CK}, \mathsf{DecZ}[\dk], \mathsf{DecX}[\dk]}(1^\lambda)$ until it outputs a commitment $c$ and a state on registers $(\mathcal{B}, \mathcal{U}, \mathcal{R})$. Here $\calB$ is a single-qubit register, $\calU$ is the opening register and $\calR$ holds the internal state of $\calA$.

  \item Sample $b \leftarrow \{0,1\}$. If $b = 0$, do nothing, and if $b=1$ measure $(\mathcal{B}, \mathcal{U})$ with $\{\Pi_{\dk, c,0}, \Pi_{\dk, c,1}\}$

  \item Run $\calA_2^{\mathsf{CK}, \mathsf{DecZ}[\dk]}(\mathcal{B}, \mathcal{U}, \mathcal{R})$ until it outputs a bit $b'$. 
  
  \item The experiment outputs 1 if $b = b'$.
\end{itemize}

We say that adversary $\mathcal{A}$ is \emph{valid} if the state on $(\mathcal{B}, \mathcal{U})$ output by $\calA_1$ is in the image of $|0\rangle\langle0| \otimes \Pi_{\dk, c,0} + |1\rangle\langle1| \otimes\Pi_{\dk, c,1}$. Then, we say that a Pauli functional commitment $(\mathsf{Gen}, \mathsf{Com}, \mathsf{OpenZ}, \mathsf{OpenX}, \mathsf{DecZ}, \mathsf{DecX})$ satisfies publicly-decodable single bit collapse-binding if it holds that for all valid adversaries $\mathcal{A}$,
\[
\left| \Pr\left[ \mathsf{SingleBitCollapseBindingExpt}^{\mathcal{A}}(1^\lambda) = 1 \right] - \frac{1}{2} \right| = \mathsf{negl}(\lambda).
\]
\end{definition}
\begin{lemma}
    The collapse binding definition of Definition \ref{def:collapse-binding} implies the single bit collapse binding definition of Definition \ref{def:singlebitcollapsebinding}.
\end{lemma}
\begin{proof}
We will establish the proof of this claim through a sequence of hybrids.
\paragraph{Hybrid 0:} This hybrid describes the following experiment 
\begin{itemize}
  \item Sample $(\CK, \dk) \leftarrow \mathsf{Gen}(1^\secp)$.

  \item Run $\calA_1^{\mathsf{CK}, \mathsf{DecZ}[\dk], \mathsf{DecX}[\dk]}(1^\lambda)$ until it outputs a commitment $c$ and a state on registers $(\mathcal{B}, \mathcal{U}, \mathcal{R})$. Here $\calB$ is a single-qubit register, $\calU$ is the opening register and $\calR$ holds the internal state of $\calA$.
\item Measure  $(\mathcal{B}, \mathcal{U})$ with $\{\Pi_{dk,c,0}, \Pi_{dk,c,1}\}$.

  \item Run $\calA_2^{\mathsf{CK}, \mathsf{DecZ}[\dk]}(\mathcal{B}, \mathcal{U}, \mathcal{R})$ until it outputs a bit $b'$.

\end{itemize}
\paragraph{Hybrid 1}
This hybrid is exactly the same as the previous hybrid, except that the computational basis measurement is performed instead of the single bit measurement $\{\Pi_{dk,c,0}, \Pi_{dk,c,1}\}$
\begin{itemize}
  \item Sample $(\CK, \dk) \leftarrow \mathsf{Gen}(1^\secp)$.

  \item Run $\calA_1^{\mathsf{CK}, \mathsf{DecZ}[\dk], \mathsf{DecX}[\dk]}(1^\lambda)$ until it outputs a commitment $c$ and a state on registers $(\mathcal{B}, \mathcal{U}, \mathcal{R})$. Here $\calB$ is a single-qubit register, $\calU$ is the opening register and $\calR$ holds the internal state of $\calA$.
\item Measure  $(\mathcal{B}, \mathcal{U})$ in the computational basis.

  \item Run $\calA_2^{\mathsf{CK}, \mathsf{DecZ}[\dk]}(\mathcal{B}, \mathcal{U}, \mathcal{R})$ until it outputs a bit $b'$.

\end{itemize}
\begin{claim}
If $(\mathsf{Gen}, \mathsf{Com}, \mathsf{OpenZ}, \mathsf{OpenX}, \mathsf{DecZ}, \mathsf{DecX})$ satisfies the collapse-binding definition of Definition \ref{def:collapse-binding}, then \[
\Big|\Pr[1\leftarrow \textbf{Hybrid 0}]- \Pr[1\leftarrow \textbf{Hybrid 1}]
  \Big|\leq \text{negl}(
  \lambda)\]  
\end{claim}
\begin{proof}
    Assume for contradiction that there exists an adversary $(\mathcal{A}_1, \mathcal{A}_2)$ such that $\Big|\Pr[1\leftarrow \textbf{Hybrid 1}]-\Pr[1\leftarrow \textbf{Hybrid 0}]
  \Big|\geq  \epsilon(
  \lambda)$, where $\epsilon(.)$ is some non negligible function. Then, we can construct an adversary $(\mathcal{B}_1, \mathcal{B}_2)$ that breaks the collapse binding property of  $(\mathsf{Gen}, \mathsf{Com}, \mathsf{OpenZ}, \mathsf{OpenX}, \mathsf{DecZ}, \mathsf{DecX})$. 
  \paragraph{Description of $(\mathcal{B}_1, \mathcal{B}_2)$:}
  \begin{itemize}
    \item $\calB_1$ runs $\calA_1^{\mathsf{CK}, \mathsf{DecZ}[\dk], \mathsf{DecX}[\dk]}(1^\lambda)$ until it outputs a commitment $c$ and a state on registers $(\mathcal{B}, \mathcal{U}, \mathcal{R})$. Here $\calB$ is a single-qubit register, $\calU$ is the opening register and $\calR$ holds the internal state of $\calA$.
\item $\mathcal{B}_1$ measures  $(\mathcal{B}, \mathcal{U})$ with $\{\Pi_{dk,c,0}, \Pi_{dk,c,1}\}$. 
\item Sample $b \leftarrow \{0,1\}$. If $b=0$ do nothing, and if $b=1$, measure $(\mathcal{B}, \mathcal{U})$ in the computational basis. 
\item  $\calB_2$ runs $\calA_2^{\mathsf{CK}, \mathsf{DecZ}[\dk]}(\mathcal{B}, \mathcal{U}, \mathcal{R})$ until it outputs $b'$.
\item  $\calB_2$ outputs $b'$.
\item The experiment outputs 1 if $b'=b$.
  
  \end{itemize}
  Observe that when $b=0$, this experiment is identical to $\textbf{Hybrid 0}$, and when $b=1$, the experiment is identical to $\textbf{Hybrid 1}$. Thus, $(\mathcal{B}_1, \mathcal{B}_2)$ breaks collapse binding (Definition \ref{def:collapse-binding}) with non negligible advantage. 
\end{proof}
\paragraph{Hybrid 2:} This is exactly the same as the previous hybrid, except no measurement is performed.
\begin{itemize}
    \item Sample $(\CK, \dk) \leftarrow \mathsf{Gen}(1^\secp)$.

  \item Run $\calA_1^{\mathsf{CK}, \mathsf{DecZ}[\dk], \mathsf{DecX}[\dk]}(1^\lambda)$ until it outputs a commitment $c$ and a state on registers $(\mathcal{B}, \mathcal{U}, \mathcal{R})$. Here $\calB$ is a single-qubit register, $\calU$ is the opening register and $\calR$ holds the internal state of $\calA$.

  \item Run $\calA_2^{\mathsf{CK}, \mathsf{DecZ}[\dk]}(\mathcal{B}, \mathcal{U}, \mathcal{R})$ until it outputs a bit $b'$. 
  
\end{itemize}
\begin{claim}
If $(\mathsf{Gen}, \mathsf{Com}, \mathsf{OpenZ}, \mathsf{OpenX}, \mathsf{DecZ}, \mathsf{DecX})$ satisfies the collapse-binding definition of Definition \ref{def:collapse-binding}, then \[
\Big|\Pr[1\leftarrow \textbf{Hybrid 2}]- \Pr[1\leftarrow \textbf{Hybrid 1}]
  \Big|\leq \text{negl}(
  \lambda)\]  
\end{claim}
\begin{proof}
  This follows directly from the collapse binding definition of Definition \ref{def:collapse-binding}. 
\end{proof}
\end{proof}
Now, Lemma A.1 of \cite{bartusek2023obfuscation} shows that single-bit collapse binding implies a notion of binding called ``unique-message binding'', and also shows that this notion implies string-binding. String-binding implies single-bit binding, and therefore we are done.

\end{proof}

\end{document}